\title{On the parameterized complexity of computing good edge-labelings} 
\titlerunning{On the parameterized complexity of computing good edge-labelings} 
\author{Davi de Andrade}{Departamento de Ci\^encias da Computaç\~ao,  Universidade Federal do Cear\'a, Fortaleza, Brazil}
{daviandradeiacono@gmail.com}{https://orcid.org/0000-0002-4690-1432}{}
\author{J\'ulio Ara\'ujo}{Departamento de Matem\'atica,  Universidade Federal do Cear\'a, Fortaleza, Brazil}
{julio@mat.ufc.br}{https://orcid.org/0000-0001-7074-2753}{CNPq (Brazil) projects 313153/2021-3 and 404613/2023-3, and Inria Associated Team CANOE (Brazil-France).}
\author{Laure Morelle}{LIRMM, CNRS, Universit\'e de Montpellier, Montpellier, France}
{laure.morelle@lirmm.fr}{https://orcid.org/0009-0000-1001-1801}{}
\author{Ignasi Sau}{LIRMM, CNRS, Universit\'e de Montpellier, Montpellier, France}
{ignasi.sau@lirmm.fr}{https://orcid.org/0000-0002-8981-9287}{ANR French project ELIT (ANR-20-CE48-0008-01).}
\author{Ana Silva}{Departamento de Matem\'atica,  Universidade Federal do Cear\'a, Fortaleza, Brazil, and Università degli Studi di Firenze, Italy}
{anasilva@mat.ufc.br}{https://orcid.org/0000-0001-8917-0564}{CNPq (Brazil) 303803/2020-7 and 404479/2023-5,
COFECUB (Brazil-France) 88887.712023/2022-00,
FUNCAP (Brazil) MLC-0191-00056.01.00/22, and
MUR (Italy) PRIN 2022ME9Z78-NextGRAAL and PRIN PNRR P2022NZPJA-DLT-FRUIT.
}
\authorrunning{D. de Andrade, J. Araújo, L. Morelle, I. Sau, and A. Silva} 
\keywords{Good edge-labelings, parameterized complexity, structural parameters, treewidth, polynomial kernel.} 
\newtheorem{question}{Question}
\newtheorem{redrule}{Rule}[section]
\newcommand{\gel}{{\sf gel}\xspace}
\newcommand{\GEL}{{\sc GEL}\xspace}
\newcommand{\gels}{{\sf gels}\xspace}
\newcommand{\nd}{{\sf nd}\xspace}
\newcommand{\vc}{{\sf vc}\xspace}
\newcommand{\sfm}{{\sf sfm}\xspace}
\newcommand{\fvs}{{\sf fvs}\xspace}
\newcommand{\tw}{{\sf tw}\xspace}
\renewcommand{\NP}{{\sf NP}\xspace}
\renewcommand{\FPT}{{\sf FPT}\xspace}
\newcommand{\Tcal}{\mathcal{T}}
\newcommand{\Ocal}{{\mathcal O}\xspace}
\newcommand{\no}{{\sf no}\xspace}
\newcommand{\sig}{{\sf sig}\xspace}
\renewcommand{\SAT}{{\sf SAT}\xspace}
\newcommand{\NAE}{{\sf NAE}\xspace}
\newcommand{\MSOL}{{\sf MSOL}\xspace}
\newcommand{\T}{`{\sf true}'\xspace}
\newcommand{\F}{`{\sf false}'\xspace}
\definecolor{medium-blue}{rgb}{0,0,0.5}
\definecolor{dark-red}{rgb}{0.7,0.15,0.15}
\definecolor{dark-blue}{rgb}{0.15,0.15,0.4}
\newcommand{\defproblem}[3]{\par
 \vspace{1mm}
\noindent
\begin{center}
\fbox{
 \begin{minipage}{0.7\textwidth}
 \begin{tabular*}{\textwidth}{@{\extracolsep{\fill}}lr} #1 &  \vspace{1mm} \\ \end{tabular*}
{\textbf{Input:}} #2
  \vspace{1mm}\\%
 {\textbf{Question:}} #3
 \end{minipage}
 }
 \end{center}
 \vspace{1mm}\par
}
\begin{document}

\maketitle

\begin{abstract}
   A \emph{good edge-labeling} (\gel for short) of a graph $G$ is a function $\lambda: E(G) \to \mathbb{R}$ such that, for any ordered pair of vertices $(x, y)$ of $G$, there do not exist two distinct increasing paths from $x$ to $y$, where ``increasing'' means that the sequence of labels is non-decreasing. This notion was introduced by Bermond et al.~[Theor. Comput. Sci. 2013] motivated by practical applications arising from routing and wavelength assignment problems in optical networks. Prompted by the lack of algorithmic results about the problem of deciding whether an input graph admits a \gel, called \GEL, we initiate its study from the viewpoint of parameterized complexity. We first introduce the natural version of \GEL where one wants to use at most $c$ distinct labels, which we call $c$-GEL, and we prove that it is \NP-complete for every $c \geq 2$ on very restricted instances.

   We then provide several  positive results, starting with simple polynomial kernels for \GEL and $c$-\GEL parameterized by neighborhood diversity or vertex cover. As one of our main technical contributions, we present an \FPT algorithm for \GEL parameterized by the size of a modulator to a forest of stars, based on a novel approach via a 2-\SAT formulation which we believe to be of independent interest. We also present \FPT algorithms based on dynamic programming for $c$-\GEL parameterized by treewidth and $c$, and for \GEL parameterized by treewidth and the maximum degree. Finally, we answer positively a question of Bermond et al.~[Theor. Comput. Sci. 2013] by proving the \NP-completeness of a problem strongly related to \GEL, namely that of deciding whether an input graph admits a so-called UPP-orientation.
\end{abstract}

\newpage

\setcounter{page}{1}
\section{Introduction}
\label{sec:intro}

An \emph{edge-labeling} of a graph $G$ is a function $\lambda: E(G) \to \mathbb{R}$. Given a graph $G$ and an edge-labeling $\lambda$ of $G$, following the notation presented in~\cite{AraujoCGH12}, we say that a path in $G$ is \emph{increasing} if the sequence of edge labels in the path is non-decreasing. An edge-labeling is \emph{good} if, for any ordered pair of
vertices $(x, y)$ of $G$, there do not exist two distinct increasing paths from $x$ to $y$. In this work, we abbreviate ``good edge-labeling'' as $\gel$. If $G$ has a \gel, we say that $G$ is \emph{good}, otherwise it is \emph{bad}.

This notion has been introduced by Bermond et al.~\cite{BermondCP13}, motivated by the well-known \textsc{Routing and Wavelength Assignment} problem in optical networks, but their contributions could also be applied to other contexts such as parallel computing, as discussed in~\cite{BermondCP13}. More precisely, they were interested in the case of acyclic directed networks, thus modeled by a directed acyclic graph (DAG for short) $D$. A family of requests that require a wavelength to be sent is then represented by a family $\mathcal{P}$ of directed paths in $D$. Two paths in $\mathcal{P}$ (i.e., requests) sharing an arc must be assigned distinct colors (i.e., wavelengths). They denoted by $\omega(D,\mathcal{P})$ the minimum number of colors needed to color $\mathcal{P}$ under such constraint, and by $\pi(D,\mathcal{P})$ as the maximum \emph{load} of an arc of $D$, meaning the maximum number of paths of $\mathcal{P}$ sharing the same arc of $D$. 
A directed graph $D$ (also called digraph) satisfies the \emph{Unique Path Property} (UPP) if for any two vertices $u,v\in V(D)$ there is at most one directed $(u,v)$-path in $D$ (note that a $(v,u)$-path is also allowed); in such case, it is called a UPP-digraph or a UPP-DAG if it is additionally acyclic.
Bermond et al.~\cite{BermondCP13} used the notion of \gel and the well-known result by Erd\H{o}s about the existence of graphs with arbitrarily large girth and chromatic number to prove that
there exists a UPP-DAG $D$ and a family of dipaths $\mathcal{P}$ with load $\pi(D, \mathcal{P}) = 2$ and arbitrarily large $\omega(D, \mathcal{P})$.


It is easy to see that $K_3$ and $K_{2,3}$ are bad. Indeed, for any edge-labeling $\lambda$ of $K_3$, any edge $uv\in E(K_3)$ is an increasing $(u,v)$-path and also an increasing $(v,u)$-path; and any $(u,v)$-path of length two is either an increasing $(u,v)$-path
or an increasing $(v,u)$-path (possibly both, if the labels are equal). A similar argument applies to $K_{2,3}$, since it can be seen as three paths of length two linking two vertices. Bermond et al.~\cite{BermondCP13} asked whether any graph not containing  $K_3$ or $K_{2,3}$ as a subgraph is good. Araújo et al.~\cite{AraujoCGH12} answered this question negatively, by constructing an infinite family of incomparable bad graphs, with respect to the subgraph relation, none of them containing $K_3$ or $K_{2,3}$ as a subgraph. \autoref{fig:flower} shows an example of a bad graph not containing $K_3$ or $K_{2,3}$.

In this article we are interested in the following problem:

\defproblem{{\sc Good Edge-Labeling} ({\sc GEL} for short)}
{A graph $G$.}
{Does $G$ admit a \gel?}

\begin{figure}[t]
\center
\vspace{-.35cm}
\includegraphics[scale=0.5]{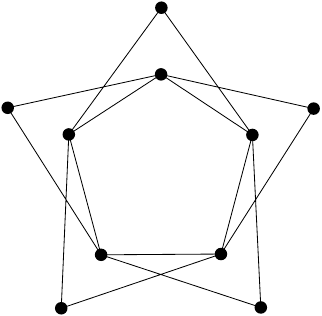}
\caption{A bad graph: for any edge-labeling, in the central 5-cycle there are three adjacent edges $uv,vw,wx$ forming an increasing path $P_1$. But then, there are two other internally-disjoint $(u,x)$-paths $P_2$ and $P_3$ of length two. Given that a 2-path is either increasing or decreasing, two of $P_1,P_2,P_3$ are either increasing or decreasing paths.}
\label{fig:flower}
\end{figure}

Araújo et al.~\cite{AraujoCGH12}  proved
that \GEL is $\NP$-complete even if the input graph is bipartite, and showed some particular classes of good and bad graphs.
It is worth mentioning that most of the classes of good graphs presented by Araújo et al.~\cite{AraujoCGH12} relied on the existence of matching cuts (see \autoref{sec:defs} for the definition) in these families, as minimally (with respect to the subgraph relation) bad graphs  cannot have such edge cuts (see \autoref{lem:matching_cut}).

Extremal combinatorial properties related to the notion of good edge-labeling have recently attracted some interest. Namely, Mehrabian~\cite{Mehrabian12} proved that a good graph $G$ on $n$ vertices such that its maximum degree is within a constant factor of its average degree has at most $n^{1+o(1)}$ edges. From this, Mehrabian deduced that there are bad graphs with arbitrarily large girth. The author also proved that for any $\Delta$, there is a
$g$ such that any graph with maximum degree at most $\Delta$ and girth at least $g$ is good. Mehrabian et al.~\cite{MehrabianMP13} proved that any good graph on
$n$ vertices has at most $n \log_2(n)/2$ edges and that this bound is tight for infinitely many values of $n$, improving previous results. There is also an unpublished work online by Bode et al.~\cite{BFT11} studying some related questions.

\subparagraph*{Our contribution and overview of techniques.} All previous articles dealing with the \GEL problem~\cite{BermondCP13,AraujoCGH12, BFT11,MehrabianMP13} just focused on deciding whether an input graph is good, regardless of the number of distinct labels used by a \gel. In fact, it was commonly assumed in previous work that any \gel is injective, that is, that there are no two edges with the same label (see \autoref{obs:injective}). In this article we introduce the natural variant of the {\sc GEL} problem in which one wants to use few distinct labels in a \gel. Formally, for a non-negative integer $c$, we say that an edge-labeling $\lambda$ of a graph $G$ is a \textit{$c$-edge-labeling} if the function $\lambda$ takes at most $c$ distinct values. A good $c$-edge-labeling is abbreviated as $c$-\gel. A graph admitting a $c$-\gel is called \textit{$c$-good}, otherwise it is called \textit{$c$-bad}.
 The corresponding decision problem is defined as follows, for a fixed non-negative integer $c$.

\defproblem{{\sc Good $c$-Edge-Labeling} ({\sc $c$-GEL} for short)}
{A graph $G$.}
{Does $G$ admit a $c$-\gel?}


Bounding the number of labels is also of interest in the case of temporal graphs, as this concept is known as the \emph{lifetime} of the corresponding temporal graph~\cite{HMNR20,KlobasMMS24}.



In this paper, we study both the \GEL and $c$-\GEL problems from the parameterized complexity point of view. While it was already known that \GEL is in \NP \cite[Theorem 5]{AraujoCGH12}, it is not clear that the same holds for $c$-\GEL for every $c \geq 2$ (note that the case $c=1$ is trivial, as it amounts to testing whether the input graph is acyclic). We prove that it is indeed the case (cf. \autoref{thm:NP}) by generalizing the proof of \cite[Theorem 5]{AraujoCGH12}. We then prove the \NP-hardness of $c$-\GEL for every $c \geq 2$ for very restricted inputs. Namely, by a reduction from \NAE 3-\SAT, we first prove the \NP-hardness of $2$-\GEL for bipartite input graphs of bounded degree that are known to admit a 3-\gel (cf. \autoref{prop:NP_2_GEL}). Then we present a reduction from $2$-\GEL to $c$-\GEL for any $c \geq 2$, proving its \NP-hardness for input graphs of bounded degree that are known to admit a $(c+1)$-\gel (cf. \autoref{thm:c-GEL-NPh}).

The above hardness results provide motivation to consider parameterizations of \GEL and $c$-\GEL in order to obtain positive results, and we do so by considering several structural parameters. We first identify two parameters that easily yield polynomial kernels by just applying simple reduction rules. Specifically, we provide a linear kernel for neighborhood diversity (cf. \autoref{lem:kernel_nd}) and a quadratic kernel for vertex cover (cf. \autoref{lem:kernel_vc}). Considering stronger parameters seems to be really challenging. For instance, we still do not know whether \GEL is \FPT parameterized by feedback vertex set or by treewidth.

To narrow this gap, we do manage to prove that \GEL is \FPT parameterized by the size of a modulator to a forest of stars (cf. \autoref{thm:kernel_sfm}).
Note that this parameter, which we call \sfm (for \textit{star-forest modulator}), is sandwiched between vertex cover and feedback vertex set (cf. \autoref{eq:parameters}) and has recently gained popularity within the parameterized complexity community~\cite{Garvardt-KomusiewiczIPEC24,DasEKMORY19,BodlaenderJK12}. While the parameter \sfm is somehow esoteric, we think that the techniques that we use to prove \autoref{thm:kernel_sfm} constitute one of our main contributions (spanning more than 12 pages), and are a proof of concept that we hope will trigger further positive results for the \GEL and $c$-\GEL problems.
In a nutshell, the algorithm starts by applying exhaustively the same simple reduction rules as for the above kernels. 
Then, all the stars in $G \setminus X$, where $X$ is the star-forest modulator of size at most $k$, are what we call \textit{well-behaved}, which allows us to classify them into three different types.
Two of these types of stars are easy to deal with, and it turns out that the third type, namely those where the center of the star and all leaves have exactly one neighbor in $X$, are much harder to handle.
To do so, we reformulate the problem in terms of \textit{labeling relations} (cf. \autoref{claim:good-labeling-relation}), in particular restricting ourselves to \textit{standard} ones (cf. \autoref{claim:always-standard}).
Intuitively, a labeling relation captures the order relation of the labels associated with every pair of edges, which allows us to guess it in time \FPT when restricted to a set of edges of size bounded by a function of $k$.
After doing so, we show that each of the resulting subproblems is equivalent to the satisfiability of an appropriately constructed 2-SAT formula, which can be decided in polynomial time \cite{Krom67}; see \autoref{alg:FPT-star-modulator}. 

We next move to treewidth, denoted by $\tw$.
Since we still do not know whether \GEL is \FPT by treewidth, we consider additional parameters.
The first natural one is $c$, the number of distinct labels.
It is easy to see that $c$-\GEL can be formulated in monadic second-order logic (\MSOL), where the size of the formula depends on $c$, hence by Courcelle's theorem~\cite{Courcelle90} it is \FPT parameterized by $\tw + c$.
In order to obtain a reasonable dependence on $\tw$ and $c$, we present an explicit dynamic programming algorithm running in time $c^{\Ocal(\tw^2)}\cdot n$ on $n$-vertex graphs (cf. \autoref{th:tw+c}).
The main idea of the algorithm is to store in the tables, for every pair of vertices in a bag (hence the term $\tw^2$ in the exponent of the running time), the existence of a few different types of paths that we prove to be enough for solving the problem.
An important ingredient of the algorithm is the definition of a \textit{partial order} on these paths, which allows us to store only the existence of paths that are {\sl minimal} with respect to that order (cf. \autoref{obs:type}).

We also consider the maximum degree $\Delta$ of the input graph as an additional parameter on top of $\tw$. Note that $\tw+c$ and $\tw + \Delta$ are a priori incomparable parameterizations. We show that \GEL can be solved in time $2^{\Ocal(\tw\Delta^2+\tw^2\log\Delta)}\cdot n$ on $n$-vertex graphs (cf. \autoref{th:tw+D}). In fact, this algorithm can find, within the same running time, either a \gel that minimizes the number of labels, or a report that the input graph is bad. The corresponding dynamic programming algorithm is similar to the one discussed above for $\tw+c$, but we need new ingredients to cope with the fact that the number of required labels may be unbounded. In particular, we use a partial orientation of the line graph of the input graph, which is reminiscent in spirit of the notion of labeling relation used in the proof of \autoref{thm:kernel_sfm}.

We would like to observe that our results constitute the first ``non-trivial'' positive algorithmic results for \GEL and $c$-\GEL, in the sense that the only positive existing so far for \GEL by Araújo et al.~\cite{AraujoCGH12} consisted in proving that, if a graph belongs to some particular graph class (such as planar graphs with high girth), then it is always good.

Finally, we answer positively a question raised by Bermond et al.~\cite{BermondCP13} by proving that deciding whether an input graph can be oriented to obtain a UPP-digraph is \NP-hard (cf. \autoref{thm:UPP-hard}), as they believed\footnote{\label{footnote1}Very recently, Dohnalová et al.~\cite{another-proof-UPP} independently proved the \NP-hardness of a related problem, namely deciding whether a graph admits a so-called \emph{KT orientation}, defined as an orientation such that there is at most one directed path (in any direction) between any pair of vertices. Note that both problems are different, as in a UPP-orientation we allow one path {\sl in each direction}. In particular, digraphs with a KT orientation are acyclic, while it is not always the case of UPP-orientations (such as the directed cycle of length three).}. The proof consists again of a reduction from \NAE 3-\SAT.

\subparagraph*{Further research.} Our work is a first systematic study of the (parameterized) complexity of the \GEL and $c$-\GEL problems, and leaves a number of interesting open questions. As mentioned above, we do not know whether \GEL is \FPT parameterized by feedback vertex set or by treewidth (when the number of labels $c$ is {\sl not} considered as a parameter). In fact, we do not even know whether they are in \XP. In view of \autoref{th:tw+c}, a positive answer to the following question would yield an \FPT algorithm for treewidth.

\begin{question}\label{question:bound-labels-tw}
Does there exist a function $f: \mathbb{N} \to \mathbb{N}$ such that, for any graph $G$, if $G$ is good then $G$ is $f(\tw)$-good, where \tw is the treewidth of $G$?
\end{question}

A question related to \autoref{question:bound-labels-tw} is the following one, that  if answered positively, would again yield an \FPT algorithm for treewidth by \autoref{th:tw+D}.

\begin{question}\label{question:bound-labels-degree}
Does there exist a function $f: \mathbb{N} \to \mathbb{N}$ such that, for any graph $G$, if $G$ is good then $G$ is $f(\Delta)$-good, where $\Delta$ is the maximum degree of $G$?
\end{question}
We think that the answer to \autoref{question:bound-labels-degree} is positive even for $f$ being the identity function.

In view of our \FPT algorithm parameterized by \sfm (\autoref{thm:kernel_sfm}), it seems natural to consider the size of a modulator $X$ to graphs other than stars. If the components of $G \setminus X$ are paths with at most three vertices, it is easy to see that the reduction rules presented in this paper are enough to provide a polynomial kernel (in particular, the problem is \FPT). But if we increase the size of the paths to four, the problem seems to become much more complicated. Thus, a first natural concrete problem to play with is the following.

\begin{question}\label{question:modulator-to-P4}
Are  the \GEL or $c$-\GEL problems \FPT parameterized by the size of a modulator to paths with at most four vertices?
\end{question}
Note that a positive answer to \autoref{question:bound-labels-tw} (resp. \autoref{question:bound-labels-degree}) would imply a positive answer to \autoref{question:modulator-to-P4} by \autoref{th:tw+c} (resp. \autoref{th:tw+D}). If a positive answer to \autoref{question:modulator-to-P4} is found, it would make sense to consider as the parameter the \textit{vertex integrity} of the input graph~\cite{abs-2402-09971,LampisM21}. It is worth mentioning that the \FPT algorithm of \autoref{thm:kernel_sfm} is for the \GEL problem, but we do not know whether it can be generalized to $c$-\GEL. We do not know either whether \GEL admits a polynomial kernel parameterized by $\sfm$. It is easy to see that a trivial AND-composition (see~\cite{CyganFKLMPPS15}) shows that \GEL is unlikely to admit polynomial kernels parameterized by $\tw +c$ or by $\tw + \Delta$.

\subparagraph*{Relation to temporal graphs.} The notion of increasing paths has also been of great interest to the community studying temporal graphs (see e.g.~\cite{KKK.00} and the survey~\cite{Michail2015}).
For this type of graphs, an edge-labeling of the edges is given, representing their availability in time. It thus makes sense to consider only non-decreasing paths, also called \emph{temporal paths}.
There is an active recent line of research addressing classical graph theory problems in the context of temporal graphs~\cite{CasteigtsCS24,CostaLMS24, IS24,MS23,HMNR20}. Another approach that has attracted interest in the past years is that of temporal graph \emph{design}, where one wants to produce a temporal graph satisfying certain constraints, usually related to temporal connectivity~\cite{MMS19,KlobasMMS24,CSS.24} or to the existence of a so-called temporal cycle~\cite{abs-2503-02694}. In this sense, the {\sc GEL} problem can also be seen as temporal graph design problem, where one wants to obtain a \emph{temporization} of at most one time per edge such that each pair of vertices is linked by at most one temporal path. Hence, our results may find applications in the setting of temporal graphs.


In the temporal framework, usually {\sl strict} journeys are used, that is, it is required that the time strictly increases along a temporal path~\cite{MMS19,KlobasMMS24,CSS.24}.
Note that, in the context of \GEL and $c$-\GEL, both problems become trivial as all graphs would be $1$-good: just assign the same label to all edges, and then clearly there is at most one strictly increasing path between any pair of vertices (namely, there is one path if and only if they are adjacent).


Another typical constraint in temporal graphs is to require that the time slots define a {\sl proper} edge coloring of the underlying graph~\cite{KKK.00,Michail2015}. In case we would define a variation of the \GEL problem in which the edges incident to each vertex must have distinct labels,  the positive and negative results for the \GEL problem directly transfer to this variant, since we can assume that all labels are distinct (cf. \autoref{obs:injective}). However, in the case of $c$-\GEL, the situation changes. Our hardness results  for $c$-\GEL may be adapted to this variant, possibly by introducing an extra constant number of labels to satisfy the new constraint.

Also, making a bridge with temporal connectivity, it would make sense to ask that there exists {\sl exactly one} non-decreasing path between any pair of vertices. Note that this would result in a different problem (since in the \GEL problem we ask that {\sl at most one} such a path exists), which may be worth studying.


\subparagraph*{Organization.} In \autoref{sec:defs} we  present basic definitions and preliminary results that we use throughout this work. In
\autoref{sec:NP-complete} we prove that the $c$-\GEL problem is \NP-complete for every $c \geq 2$. In \autoref{sec:kernels} we present simple reduction rules and kernels for \GEL and $c$-\GEL, and in \autoref{sec:FPT-stars} we present the \FPT algorithm for \GEL parameterized by \sfm. Our dynamic programming algorithms for $\tw+c$ and $\tw + \Delta$ are described in \autoref{sec:FPT}. Finally, the \NP-completeness of finding a UPP-orientation can be found in  \autoref{sec:UPP-NPh}.

\section{Definitions and preliminary results}
\label{sec:defs}


We start with some basic definitions that will be used throughout the paper.

\subparagraph*{Graphs.} We use standard graph-theoretic notation, and we refer the reader to~\cite{Diestel16} for any undefined terms.
All graphs we consider are finite and undirected,  except in \autoref{sec:UPP-NPh} where we consider digraphs.
A graph~$G$ has vertex set~$V(G)$ and edge set~$E(G)$. An edge between two vertices $u,v$ is denoted by $uv$. For a graph $G$ and a vertex set $S \subseteq V(G)$, the graph $G[S]$ has vertex set $S$ and edge set $\{uv \mid u,v \in S \text{ and }uv \in E(G)\}$.
 We use the shorthand $G \setminus S$ to denote $G[V(G) \setminus S]$. For a single vertex~$v \in V(G)$, we use~$G \setminus v$ as a shorthand for~$G \setminus \{v\}$. Similarly, for a set of edges~$F \subseteq E(G)$ we denote by~$G \setminus F$ the graph on vertex set~$V(G)$ with edge set~$E(G) \setminus F$. For two sets of vertices $S_1,S_2\subseteq V(G)$, we denote by $E(S_1,S_2)$ the subset of $E(G)$ containing all edges with one endpoint in $S_1$ and the other one in $S_2$. A cycle on three vertices is called a \emph{triangle}. For two positive integers $i,j$ with $i \leq j$, we denote by $[i,j]$ the set of all integers $\ell$ such that $i \leq \ell \leq j$, and by $[i]$ the set $[1,i]$. Given $v \in V(G)$, we denote $N_G(v)=\{u \mid uv \in E(G)\}$, $d_G(v)=|N_G(v)|$
and, given $X \subseteq V(G)$, we denote $N_G(X)=\bigcup_{v \in X}N_G(v) \setminus X$.
Given $X,Y \subseteq V(G)$, we denote by $N_G^Y(X)=N_G(X) \cap Y$. We may omit the subscript~$G$ when it is clear from the context. A path from a vertex $u$ to a vertex $v$ is called a \textit{$(u, v)$-path}. Note that, when dealing with edge-labelings, a pair of consecutive labels in a path is increasing, decreasing, or equal depending on the direction in which the path is traversed.
A subset $S\subseteq V(G)$ is a \textit{clique} (resp. \textit{independent set}) if all its vertices are pairwise adjacent (resp. pairwise non-adjacent) in $G$. A graph $G$ is \emph{complete} if $V(G)$ is itself a clique. The complete graph (resp. cycle) on $p$ vertices is denoted by $K_p$ (resp. $C_p$). A bipartite graph on parts $A,B$ is said to be \emph{complete bipartite} if its edge set is equal to all edges between $A$ and $B$. The complete bipartite graph with parts of sizes $p$ and $q$ is denoted by $K_{p,q}$.

A \textit{cut-vertex} in a connected graph $G$ is a vertex $v \in V(G)$ such that $G \setminus v$ is disconnected. A \textit{separation} of a graph $G$ is a pair $(A,B)$ such that $A,B \subseteq V(G)$, $A \cup B = V(G)$, and there are no edges in $G$ between the sets $A \setminus B$ and $B \setminus A$. The \textit{order} of a separation $(A,B)$ is defined as $|A \cap B|$. A \textit{matching} in a graph $G$ is a set of pairwise disjoint edges. An \textit{edge cut} in a graph $G$ is the set of edges between a set $S \subseteq V(G)$ and its complement $\bar{S}$, assuming that there is at least one such edge, and it is denoted $[S, \bar{S}]$. An edge set $F \subseteq E(G)$ is a \textit{matching cut} if it is both a matching and an edge cut.

\subparagraph*{Parameterized complexity.}
A \emph{parameterized problem} is a language $L \subseteq \Sigma^* \times \mathbb{N}$, for some finite alphabet $\Sigma$.  For an instance $(x,k) \in \Sigma^* \times \mathbb{N}$, the value~$k$ is called the \emph{parameter}. For a computable function~$g \colon \mathbb{N} \to \mathbb{N}$, a \emph{kernelization algorithm} (or simply a \emph{kernel}) for a parameterized problem $L$ of \emph{size} $g$ is an algorithm $A$ that given any instance $(x,k)$ of $L$, runs in polynomial time and returns an instance $(x',k')$ such that $(x,k) \in L \Leftrightarrow (x', k') \in L$ with $|x'|$, $k' \le g(k)$. The function $g(k)$ is called the \textit{size} of the kernel, and a kernel is \textit{polynomial} (resp. \textit{linear, quadratic}) if $g(k)$ is a polynomial (resp. linear, quadratic) function. Consult~\cite{CyganFKLMPPS15,Niedermeier06,FlumG06,DoFe13,FominLSZ19} for background on parameterized complexity.

\subparagraph*{Graph parameters.}
We proceed to define the graph parameters that will be considered in the polynomial kernels presented in \autoref{sec:kernels} and in the \FPT algorithms presented in \autoref{sec:FPT-stars} and \autoref{sec:FPT}. A vertex set $S$ of a graph $G$ is a \textit{vertex cover} (resp. \textit{feedback vertex set}, \textit{star-forest modulator}) if $G \setminus S$ is an edgeless graph  
(resp. forest, forest of stars). For a graph $G$, we denote by $\vc(G)$ (resp. $\fvs(G)$, $\sfm(G)$) the minimum size of a  vertex cover (resp. feedback vertex set, star-forest modulator) of $G$. Clearly, for any graph $G$ it holds that $\fvs(G) \leq \sfm(G) \leq \vc(G)$.

Two vertices $u,v$ of a graph $G$ have the same \textit{type} if $N(u) \setminus \{v\} = N(v) \setminus \{u\}$. Two vertices with the same type are \textit{true twins} (resp. \textit{false twins}) if they are adjacent (resp. non-adjacent).
The \textit{neighborhood diversity} of a graph $G$, denoted by $\nd(G)$, as defined by Lampis~\cite{Lampis12}, is the minimum integer~$w$ such that $V(G)$ can be partitioned into $w$ sets such that all the vertices in each set have the same type.
Note that the property of having the same type is an equivalence relation, and that all the vertices in a given type are either true or false twins, hence defining either a clique or an independent set.
It is worth mentioning that, as proved by Lampis~\cite{Lampis12}, the parameter $\nd(G)$ can be computed in polynomial time. If a graph $G$ has a vertex cover $S$ of size $k$, $V(G)$ can be easily partitioned into at most $2^k + k$   equivalence classes of types, thus implying that, for any graph $G$,
\begin{equation}\label{eq:nd-vc}
\nd(G) \leq 2^{\vc(G)} + \vc(G).
\end{equation}

A \textit{tree decomposition} of a graph $G$ is a pair $(T, \{B_t \mid t \in V(T)\})$, where $T$ is a tree and each set $B_t$, called a \textit{bag}, is a subset of $V(G)$, satisfying the following properties:
\begin{enumerate}
    \item $\bigcup_{t \in V(T)}B_t = V(G)$,
    \item for every edge $uv \in E(G)$, there exists a bag $B_t$ with $u,v \in B_t$, and
    \item for every vertex $v \in V(G)$, the set $\{t \in V(T) \mid v \in B_t\}$ induces a connected subgraph of~$T$.
\end{enumerate}
The \textit{width} of a tree decomposition $(T, \{B_t \mid t \in V(T)\})$ is defined as $\max_{t \in V(T)}|B_t|-1$, and the \textit{treewidth} of a graph $G$, denoted by $\tw(G)$, is the minimum width of a tree decomposition of $G$. Since any forest has treewidth at most one, for any graph $G$ it holds that $\tw(G) \leq \fvs(G) + 1$.

A \textit{nice tree decomposition} of a graph $G$ is a tree decomposition of $G$ with one special bag $B_r$ called the \textit{root} that defines an ancestor/descendant relation in $T$, and in which each other bag is of one of the following types: 
\begin{itemize}
    \item \textit{Leaf bag}: a leaf $x$ of $T$ with $B_x = \emptyset$.
    \item \textit{Introduce bag}: an internal vertex $x$ of $T$ with exactly one child $y$ such that $B_x = B_y \cup \{v\}$ for some $v \notin B_y$.
    \item \textit{Forget bag}: an internal vertex $x$ of $T$ with exactly one child $y$ such that $B_x = B_y \setminus \{v\}$ for some $v \in B_y$.
    \item \textit{Join bag}: an internal vertex $x$ of $T$ with exactly two children $y_1$ and $y_2$ such that $B_x = B_{y_1} \cup B_{y_2}$.
\end{itemize}
Additionally, we can, and will, assume that $B_r=\emptyset$.
We can easily do so by forgetting one by one each vertex of the root.

Given $x\in V(T)$, we denote by $T_x$ the maximal subtree of $T$ rooted at $x$, and we set $V_x:=\{v\in V(G)\mid \exists y\in V(T_x), v\in B_y\setminus B_x\}$ and $G_x:=G[V_x]$.
Let us stress that $G_x$ does not contain the vertices of $B_x$ here, contrary to commonly used notations.

As discussed in~\cite{Kloks94}, any given tree decomposition of a graph $G$ can be transformed in polynomial time into a {\sl nice} tree decomposition of $G$ with the same width. Hence, we will assume in \autoref{sec:FPT} that we are given a nice tree decomposition of the input graph $G$. We refer the reader to the recent results of Korhonen~\cite{Korhonen21} for approximating optimal tree decompositions in \FPT time.

Summarizing the above discussion, for any graph $G$ it holds that
\begin{equation}\label{eq:parameters}
    \tw(G) -1 \leq \fvs(G) \leq \sfm(G) \leq \vc(G),
\end{equation}
while $\nd$, which satisfies $\nd(G) \leq 2^{\vc(G)} + \vc(G)$ (cf. \autoref{eq:nd-vc}), is easily seen to be incomparable to any of $\tw, \fvs$, or $\sfm$ (cf. for instance~\cite{Ganian12}).


\medskip
We now state several observations and preliminary results that will be used in the next sections. The following observations follow easily from the definitions of  \gel and $c$-\gel.

\begin{observation}[Araújo et al.~\cite{AraujoCGH12}]\label{obs:injective}
A graph $G$ admits a \gel if and only if it admits an injective \gel, that is, a good edge-labeling $\lambda: E(G) \to \mathds{R}$ such that for any two distinct edges $e,f \in E(G)$, $\lambda(e) \neq \lambda(f)$.
\end{observation}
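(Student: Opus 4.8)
The forward ``if'' direction is immediate, since an injective \gel is in particular a \gel, so the plan is to prove the converse by a simple perturbation argument. Starting from an arbitrary \gel $\lambda$ of $G$, I would build an edge-labeling $\lambda'$ that is injective and \emph{refines the strict order} of $\lambda$, meaning that $\lambda(e) < \lambda(f)$ implies $\lambda'(e) < \lambda'(f)$ for all $e,f \in E(G)$. Concretely, since $E(G)$ is finite I would let $\delta > 0$ be the smallest positive difference between two distinct values taken by $\lambda$ (and set $\delta := 1$ if $\lambda$ is constant), enumerate $E(G) = \{e_1, \dots, e_m\}$, and define $\lambda'(e_i) := \lambda(e_i) + i \cdot \frac{\delta}{2m}$. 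If $\lambda(e_i) = \lambda(e_j)$ with $i \neq j$, then $\lambda'(e_i) \neq \lambda'(e_j)$, so $\lambda'$ is injective; and if $\lambda(e_i) < \lambda(e_j)$, then $\lambda'(e_i) - \lambda'(e_j) \leq -\delta + \frac{\delta}{2} < 0$, so the strict order of $\lambda$ is preserved.

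It then remains to check that $\lambda'$ is a \gel. The key observation is that \emph{every path that is increasing under $\lambda'$, traversed in the same direction, is also increasing under $\lambda$}: if $e,f$ are consecutive edges of such a path with $\lambda'(e) \leq \lambda'(f)$, then $\lambda(e) > \lambda(f)$ would force $\lambda'(e) > \lambda'(f)$ by order-preservation, a contradiction, so $\lambda(e) \leq \lambda(f)$; since this holds for every consecutive pair, the whole path is increasing under $\lambda$. Hence, if for some ordered pair $(x,y)$ there were two distinct increasing $(x,y)$-paths under $\lambda'$, the very same two paths would be distinct increasing $(x,y)$-paths under $\lambda$, contradicting the fact that $\lambda$ is a \gel. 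Therefore $\lambda'$ is an injective \gel of $G$.

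There is no genuine obstacle in this argument; the only point that requires a little care is that the perturbation must not \emph{create} new increasing paths, and this is precisely what the order-preservation property guarantees --- breaking ties among equal labels can only \emph{forbid} certain traversals through stretches of equal labels, never enable new increasing ones.
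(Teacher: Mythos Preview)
Your argument is correct. The perturbation $\lambda'(e_i) := \lambda(e_i) + i\cdot \tfrac{\delta}{2m}$ is injective and refines the strict order of $\lambda$, and the contrapositive of order-refinement (namely $\lambda'(e)\le\lambda'(f)\Rightarrow\lambda(e)\le\lambda(f)$) is exactly what you need to show that any $\lambda'$-increasing path is $\lambda$-increasing; the conclusion then follows immediately.

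As for comparison with the paper: the paper does not actually prove this statement. It is listed as an observation attributed to Ara\'ujo et al.~\cite{AraujoCGH12} and used freely thereafter. Your perturbation argument is the standard way to justify it and is entirely in line with how the paper treats labelings elsewhere (e.g., in the discussion around labeling relations in Section~\ref{sec:FPT-stars}, where only the relative order of labels matters). There is nothing missing.
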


\begin{observation}\label{obs:restrict_label_gel}
If a graph $G$ admits a \gel using at most $c$ distinct labels, then it admits a \gel $\lambda$ such that $\lambda :E(G)\to[c]$.
\end{observation}

\begin{observation}\label{obs:connectivity}
A graph $G$ admits a \gel (resp. $c$-\gel) if and only if every connected component of $G$ admits a \gel (resp. $c$-\gel).
\end{observation}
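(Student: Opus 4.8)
The plan is to exploit the trivial but crucial fact that every path of $G$ is contained in a single connected component of $G$; in particular, for two vertices $x,y$ lying in different components there is no $(x,y)$-path at all, and hence no increasing one. All the work is then just to translate a (good) edge-labeling of $G$ into labelings of its components and back.

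For the forward direction, suppose $\lambda$ is a \gel of $G$ and let $C$ be a connected component of $G$. Consider the restriction $\lambda_C := \lambda|_{E(C)}$. If $\lambda_C$ were not good, there would be an ordered pair $(x,y)$ of vertices of $C$ admitting two distinct increasing $(x,y)$-paths in $C$; since $C$ is a subgraph of $G$ and the edge labels are unchanged, these would be two distinct increasing $(x,y)$-paths in $G$ under $\lambda$, contradicting that $\lambda$ is a \gel. If moreover $\lambda$ uses at most $c$ distinct labels, then so does its restriction $\lambda_C$, so $C$ admits a $c$-\gel.

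For the backward direction, assume every connected component $C$ of $G$ admits a \gel $\lambda_C$ (resp. a $c$-\gel, in which case by \autoref{obs:restrict_label_gel} we may assume $\lambda_C : E(C) \to [c]$). Define an edge-labeling $\lambda$ of $G$ by $\lambda(e) := \lambda_C(e)$ for the unique component $C$ with $e \in E(C)$. Take any ordered pair $(x,y)$ of vertices of $G$ and suppose $P_1,P_2$ are two distinct increasing $(x,y)$-paths under $\lambda$. Since a path joining $x$ and $y$ forces $x$ and $y$ to be in the same component $C$, both $P_1$ and $P_2$ lie in $C$, and on $E(C)$ the labeling $\lambda$ agrees with $\lambda_C$; hence $P_1,P_2$ are two distinct increasing $(x,y)$-paths under $\lambda_C$, contradicting that $\lambda_C$ is a \gel. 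Thus $\lambda$ is a \gel of $G$, and in the $c$-\gel case it takes values in $[c]$, so it uses at most $c$ distinct labels.

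There is essentially no hard step here; the only minor point worth care is that, when combining $c$-\gels of the components into a single one, we must ensure the union does not accumulate more than $c$ labels across components, which is exactly what invoking \autoref{obs:restrict_label_gel} to normalize each $\lambda_C$ onto the common label set $[c]$ takes care of.
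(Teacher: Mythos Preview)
Your proof is correct and is exactly the natural argument; the paper itself does not give a proof for this observation, merely stating that it ``follows easily from the definitions of \gel and $c$-\gel.'' Your only nontrivial care point---normalizing each component's labeling to $[c]$ via \autoref{obs:restrict_label_gel} so that the union still uses at most $c$ labels---is precisely the detail one needs to handle the $c$-\gel case.
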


\begin{observation}\label{obs:good-subgraph}
If a graph $G$ admits a \gel (resp. $c$-\gel), then any subgraph of $G$ also admits a \gel (resp. $c$-\gel).
\end{observation}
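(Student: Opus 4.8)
The plan is to prove this by simple restriction of the labeling. Let $H$ be a subgraph of $G$, so $V(H) \subseteq V(G)$ and $E(H) \subseteq E(G)$, and suppose $\lambda \colon E(G) \to \mathds{R}$ is a \gel of $G$. First I would define $\mu := \lambda|_{E(H)}$, the restriction of $\lambda$ to $E(H)$, and claim that $\mu$ is a \gel of $H$. The crucial elementary remark is that any path $P$ in $H$ is also a path in $G$ with exactly the same sequence of edges, and hence, for any traversal direction, the sequence of $\mu$-labels along $P$ coincides with the sequence of $\lambda$-labels along $P$; in particular, $P$ is an increasing $(x,y)$-path with respect to $\mu$ if and only if it is an increasing $(x,y)$-path with respect to $\lambda$.

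From this it follows immediately that if some ordered pair $(x,y)$ of vertices of $H$ admitted two distinct increasing $(x,y)$-paths under $\mu$, then these same two paths would be two distinct increasing $(x,y)$-paths under $\lambda$ in $G$, contradicting the assumption that $\lambda$ is good. Hence $\mu$ is a \gel of $H$. For the $c$-\gel version, I would additionally note that the set of values taken by $\mu$ is a subset of the set of values taken by $\lambda$, so if $\lambda$ uses at most $c$ distinct labels then so does $\mu$; combined with the previous paragraph, this shows $H$ admits a $c$-\gel. I do not expect any real obstacle here: the only thing to be slightly careful about is making explicit that ``increasing'' is a property of a labeled path that is preserved verbatim under taking subgraphs, which is immediate from the definition.
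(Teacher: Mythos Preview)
Your argument is correct and is exactly the natural restriction argument one would expect; the paper in fact treats this statement as an easy observation following directly from the definitions and gives no separate proof, so your write-up matches the intended reasoning.
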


\begin{observation}
\label{obs:C4}
An edge-labeling of $C_4$ with values in $\{1,2\}$ is good if and only if its edges take alternatively values one and two in a cyclic order.
\end{observation}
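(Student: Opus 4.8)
The plan is to prove \autoref{obs:C4} by a short case analysis on the edge-labeling of the $4$-cycle. Label the vertices of $C_4$ as $v_1,v_2,v_3,v_4$ in cyclic order, with edges $e_1=v_1v_2$, $e_2=v_2v_3$, $e_3=v_3v_4$, $e_4=v_4v_1$, and let $\lambda(e_i)\in\{1,2\}$. The only ordered pairs of vertices that are joined by two distinct paths are the two diagonal pairs $(v_1,v_3)$ and $(v_2,v_4)$; for any adjacent pair the only second path has length $3$ and is strictly ``longer'' but this does not matter, what matters is only whether two increasing paths coexist. So the labeling is good if and only if neither of the two internally-disjoint $(v_1,v_3)$-paths (namely $e_1e_2$ and $e_4e_3$, suitably oriented) are both increasing, and likewise for $(v_2,v_4)$.

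First I would record the elementary fact that a path of length two with labels $(a,b)$ traversed in one direction is increasing iff $a\le b$, and traversed in the reverse direction is increasing iff $b\le a$; in particular it is increasing in \emph{both} directions iff $a=b$. Then I would argue the forward direction: suppose the labels do not alternate in cyclic order. With values in $\{1,2\}$, non-alternation means some two cyclically consecutive edges receive the same label, say $\lambda(e_i)=\lambda(e_{i+1})$ (indices mod $4$). Then the $2$-path formed by $e_i,e_{i+1}$ is increasing in both directions; call its endpoints $x$ and $z$, which form a diagonal pair. The complementary $2$-path between $x$ and $z$ (through the opposite vertex) is, like every $2$-path, increasing in at least one of its two directions. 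Hence there are two distinct increasing paths between $x$ and $z$ in some common direction, so $\lambda$ is not good. This proves the contrapositive of one implication.

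For the converse, suppose the labels alternate, i.e.\ $\{\lambda(e_1),\lambda(e_3)\}$ is, say, $\{1\}$ and $\{\lambda(e_2),\lambda(e_4)\}=\{2\}$ (the other alternating pattern is symmetric). Consider the diagonal pair $(v_1,v_3)$: one $2$-path has labels $(1,2)$ reading $v_1\to v_2\to v_3$, which is increasing only in the direction $v_1\to v_3$; the other has labels $(2,1)$ reading $v_1\to v_4\to v_3$, which is increasing only in the direction $v_3\to v_1$. So the unique increasing $(v_1,v_3)$-path and the unique increasing $(v_3,v_1)$-path are different $2$-paths, and no ordered pair among the diagonals is joined by two increasing paths. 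The adjacent pairs contribute nothing since the length-$3$ alternative path between two adjacent vertices has labels that, together with the single connecting edge, cannot form a second increasing path going the same way — but more simply, I can note that deleting any single edge leaves a path $P_4$, which is trivially good, and any increasing path between two adjacent vertices either is the connecting edge or uses all three remaining edges; checking the three-edge path $(2,1,2)$-type sequences shows it is never monotone. Hence $\lambda$ is good.

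The main obstacle, such as it is, is just bookkeeping the orientations carefully: ``increasing'' is direction-dependent, so I must be consistent about which direction each $2$-path is increasing in, and make sure the comparison is between paths increasing \emph{in the same ordered direction} (the definition quantifies over ordered pairs $(x,y)$). Once that is set up, everything reduces to the two observations that a monochromatic $2$-path is bidirectionally increasing and that in an alternating labeling the two $2$-paths across a diagonal are increasing in opposite directions; the rest is immediate. I would also briefly invoke \autoref{obs:good-subgraph} (or just the triviality of $P_4$) to dispose of the adjacent-vertex pairs without a separate computation.
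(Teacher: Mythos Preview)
Your proof is correct. The paper states \autoref{obs:C4} as an observation without proof, so there is no argument to compare against; your direct case analysis via the two diagonal $2$-paths is the natural verification.

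Two small wobbles worth tightening. First, your opening claim that ``the only ordered pairs of vertices that are joined by two distinct paths are the two diagonal pairs'' is false --- adjacent vertices are also joined by two paths (of lengths $1$ and $3$) --- though you immediately recognize this and handle adjacent pairs later. Second, the aside that ``deleting any single edge leaves a path $P_4$, which is trivially good'' does not by itself dispose of the adjacent case: goodness of the $P_4$ subgraph says nothing about coexistence of two increasing paths in $C_4$ between adjacent vertices. What actually does the work is your subsequent remark that in an alternating labeling the length-$3$ path has label sequence $1,2,1$ or $2,1,2$ and hence is never non-decreasing in either direction. With those cosmetic fixes the argument is clean.
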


The next two results come from~\cite[Lemma 7]{AraujoCGH12} and~\cite[Lemma 10]{AraujoCGH12}, respectively. They were written for \gels instead of $c$-\gels, but the statements below follow from the same proofs.


\begin{lemma}[Araújo et al.~\cite{AraujoCGH12}]
\label{lem:cut_vertex}
Let $G$ be a graph,  $v$ be a cut-vertex in $G$, $C_1,\ldots,C_p$ the vertex sets of the connected components of $G \setminus v$, and $G_i = G[C_i \cup \{v\}]$ for $i \in [p]$. Then, for any non-negative integer $c$, $G$ admits a $c$-\gel if and only if every $G_i$ admits a $c$-\gel for $i \in [p]$.
\end{lemma}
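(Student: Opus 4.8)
The plan is to prove the two implications separately, the forward one being essentially free and the backward one requiring a short gluing argument together with a case analysis at the cut-vertex.

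For the forward direction, if $G$ admits a $c$-\gel, then since each $G_i$ is a subgraph of $G$, \autoref{obs:good-subgraph} immediately yields that each $G_i$ admits a $c$-\gel. For the converse, I would start from $c$-\gels $\lambda_i : E(G_i) \to [c]$ of the graphs $G_i$ (invoking \autoref{obs:restrict_label_gel} to put the labels in $[c]$) and glue them into a single labeling $\lambda$ of $G$: since $v$ is a cut-vertex, the sets $E(G_1), \dots, E(G_p)$ form a partition of $E(G)$, so setting $\lambda(e) := \lambda_i(e)$ for $e \in E(G_i)$ is well-defined, and $\lambda$ uses at most $c$ distinct labels. It then remains to check that $\lambda$ is good.

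The argument for goodness relies on one structural observation about the cut-vertex: for any $x, y \in V(G_i) = C_i \cup \{v\}$, every $(x,y)$-path of $G$ is entirely contained in $G_i$ — otherwise such a path would visit a vertex outside $C_i \cup \{v\}$ and would therefore have to traverse $v$ both before and after that vertex (since deleting $v$ separates $C_i$ from the rest of $G$), contradicting the simplicity of the path; the cases $x = v$ or $y = v$ are covered since then $v$ is already an endpoint. Combined with the obvious fact that any subpath of an increasing path, read in the same direction, is increasing, the verification splits into two cases according to whether a hypothetical pair of distinct increasing $(x,y)$-paths $Q \neq Q'$ under $\lambda$ has both endpoints inside a common $V(G_i)$, or has $x \in C_i$ and $y \in C_j$ with $i \neq j$. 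In the first case, both $Q$ and $Q'$ lie in $G_i$ by the structural observation, so they would be two distinct increasing $(x,y)$-paths under $\lambda_i$, contradicting that $\lambda_i$ is good. In the second case, both $Q$ and $Q'$ must pass through $v$, and exactly once each, so each decomposes as an increasing $(x,v)$-path in $G_i$ followed by an increasing $(v,y)$-path in $G_j$; goodness of $\lambda_i$ forces the two $(x,v)$-halves to coincide and goodness of $\lambda_j$ forces the two $(v,y)$-halves to coincide, whence $Q = Q'$, a contradiction. Therefore $\lambda$ is a $c$-\gel of $G$.

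I do not anticipate a genuine obstacle: the entire content is the gluing construction plus the cut-vertex separation property, and the only point demanding a little care is keeping track of the traversal direction when decomposing an increasing path at $v$ (the preliminaries already flag that ``increasing'' is direction-sensitive). One could alternatively phrase the argument inductively by first reducing to the case $p = 2$, but the direct version above is equally short.
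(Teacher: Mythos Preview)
Your proof is correct. The paper does not actually give its own proof of this lemma: it attributes the result to \cite[Lemma~7]{AraujoCGH12} and merely remarks that the same proof goes through for $c$-\gels; your argument is precisely the natural gluing-plus-case-analysis proof one would expect, and nothing more is needed.
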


\begin{lemma}[Araújo et al.~\cite{AraujoCGH12}]
\label{lem:2_separation}
Let $c\in\mathbb{N}$, $G$ be a graph, $\lambda$ be a $c$-edge-labeling of $G$, and $(A,B)$ be a separation of $G$ of order two such that $G[A\cap B]$ is an edge.
If $\lambda$ restricted to $G[A]$ and $\lambda$ restricted to $G[B]$ are $c$-\gels, then $\lambda$ is a $c$-\gel of $G$. Moreover, if both $G[A]$ and $G[B]$ are good, then $G$ is also good.
\end{lemma}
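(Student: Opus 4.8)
The plan is to prove the first (and main) statement by contraposition, and to obtain the ``moreover'' part by a direct glueing. Write $A\cap B=\{u,v\}$, so that $uv\in E(G)$ and $E(G)=E(G[A])\cup E(G[B])$, the two sets sharing only the edge $uv$. Suppose $\lambda$ is \emph{not} a $c$-\gel of $G$: there exist two distinct increasing $(x,y)$-paths $P_1,P_2$ of $G$ for some ordered pair $(x,y)$, and the aim is to exhibit two distinct increasing paths between a common ordered pair inside $G[A]$ or inside $G[B]$. The starting point is the structural fact: \emph{if an increasing path $R$ of $G$ contains both $u$ and $v$, then $uv$ is an edge of $R$}. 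Indeed, the subpath $R'$ of $R$ between $u$ and $v$ is increasing, and since there is no edge between $A\setminus B$ and $B\setminus A$, all internal vertices of $R'$ lie in $A\setminus B$, or all lie in $B\setminus A$; if $R'$ had length at least two, then $R'$ and the edge $uv$ would be two distinct increasing paths between a common ordered pair inside $G[A]$ or inside $G[B]$, contradicting that $\lambda$ restricted to both is good.

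Next I would distinguish cases according to where $x$ and $y$ lie. If $x,y\in A$ (the case $x,y\in B$ being symmetric), then every increasing $(x,y)$-path $R$ is contained in $G[A]$: otherwise $R$ visits $B\setminus A$, which forces $R$ to contain both $u$ and $v$ with some vertex of $B\setminus A$ lying strictly between them on $R$, contradicting the structural fact (which makes $u$ and $v$ consecutive on $R$). Then $P_1,P_2$ are two distinct increasing $(x,y)$-paths of $G[A]$, the desired contradiction. It remains to treat $x\in A\setminus B$ and $y\in B\setminus A$ (and its mirror image). Now no increasing $(x,y)$-path can lie in $G[A]$ or in $G[B]$; taking $w$ to be the first vertex of such a path $R$ belonging to $\{u,v\}$, the structural fact shows that $R=R^{A}_{w}\cdot R^{B}_{w}$, where $R^{A}_{w}$ is an increasing $(x,w)$-path of $G[A]$ and $R^{B}_{w}$ is an increasing $(w,y)$-path of $G[B]$. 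Since $\lambda$ is good on $G[A]$ and on $G[B]$, for each $w\in\{u,v\}$ there is at most one such path; hence $G$ has at most two increasing $(x,y)$-paths.

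The crux — and the step I expect to be the main obstacle — is to rule out that both of these paths exist, say $R=R^{A}_{u}\cdot R^{B}_{u}$ and $R'=R^{A}_{v}\cdot R^{B}_{v}$ with $R\neq R'$; here the edge $uv$, with $\mu:=\lambda(uv)$, enters the game. Using goodness of $G[A]$, one first checks that if $v$ lies on $R^{A}_{u}$, then the portion of $R^{A}_{u}$ from $v$ to $u$ must be the edge $uv$ (otherwise it and $uv$ would be two distinct increasing $(v,u)$-paths of $G[A]$) and its prefix is the unique increasing $(x,v)$-path of $G[A]$, so that $R^{A}_{u}=R^{A}_{v}\cdot(vu)$; symmetrically for $u$ on $R^{A}_{v}$, and these cannot both occur. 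In the case $v\in R^{A}_{u}$, the path $(vu)\cdot R^{B}_{u}$ is a subpath of the increasing path $R$ lying in $G[B]$, hence an increasing $(v,y)$-path of $G[B]$, so it equals $R^{B}_{v}$ by uniqueness, whence $R'=R^{A}_{v}\cdot R^{B}_{v}=R^{A}_{v}\cdot(vu)\cdot R^{B}_{u}=R^{A}_{u}\cdot R^{B}_{u}=R$, a contradiction (the case $u\in R^{A}_{v}$ is symmetric). Otherwise $v\notin R^{A}_{u}$ and $u\notin R^{A}_{v}$, and a short case analysis on whether the last label of $R^{A}_{u}$, respectively of $R^{A}_{v}$, is at most $\mu$ — exploiting that $R$ and $R'$ are increasing in $G$ — yields one of $R^{A}_{u}\cdot(uv)$ or $R^{A}_{v}\cdot(vu)$ as an increasing path of $G[A]$ that, together with $R^{A}_{v}$ or $R^{A}_{u}$ respectively, gives two distinct increasing paths between a common ordered pair of $G[A]$. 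This settles the first statement.

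For the ``moreover'' part, let $\lambda_A$ and $\lambda_B$ be good edge-labelings of $G[A]$ and $G[B]$. Since adding a common constant to all labels preserves being a \gel, I would replace $\lambda_B$ by $\lambda_B+(\lambda_A(uv)-\lambda_B(uv))$ so that it agrees with $\lambda_A$ on $uv$, and then let $\lambda$ be $\lambda_A$ on $E(G[A])$ and the modified $\lambda_B$ on $E(G[B])$; this is well defined because $uv$ is the only common edge. As $\lambda$ restricted to $G[A]$ and to $G[B]$ are good, the first statement gives that $\lambda$ is a \gel of $G$, so $G$ is good.
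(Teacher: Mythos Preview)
The paper itself does not give a proof; it cites \cite[Lemma~10]{AraujoCGH12} and remarks that the same argument carries over to $c$-\gels. Your direct approach via pairs of increasing paths is natural and mostly correct, and the ``moreover'' part is handled cleanly. There are, however, two issues in the crux paragraph---one cosmetic and one a genuine gap.

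First, the subcases ``$v\in R^{A}_{u}$'' and ``$u\in R^{A}_{v}$'' are vacuous. By your own definition, $w$ is the \emph{first} vertex of the path in $\{u,v\}$, so the prefix $R^{A}_{u}$ of $R$ (where $w(R)=u$) cannot contain $v$, and likewise $u\notin R^{A}_{v}$. These subcases can simply be deleted; you are always in the ``otherwise'' situation.

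Second, in that remaining situation you assert that comparing the last labels $\alpha,\beta$ of $R^{A}_{u},R^{A}_{v}$ with $\mu=\lambda(uv)$ always makes one of $R^{A}_{u}\cdot(uv)$ or $R^{A}_{v}\cdot(vu)$ increasing in $G[A]$. This fails when $\alpha>\mu$ and $\beta>\mu$, and such configurations do arise with $\lambda|_{G[A]}$ good: take $G[A]$ to be the $5$-cycle $x,a,u,v,b$ with $\lambda(xa)=\lambda(xb)=1$, $\lambda(au)=\lambda(bv)=2$, $\lambda(uv)=0$; here $R^{A}_{u}=(x,a,u)$, $R^{A}_{v}=(x,b,v)$, $\alpha=\beta=2>0=\mu$, and neither extension by $uv$ is increasing. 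The fix is to switch to the $B$-side in this case. Since $R$ and $R'$ are increasing, the first labels $\gamma,\delta$ of $R^{B}_{u},R^{B}_{v}$ satisfy $\gamma\ge\alpha>\mu$ and $\delta\ge\beta>\mu$; in particular $v\notin R^{B}_{u}$ and $u\notin R^{B}_{v}$ (otherwise your structural fact forces $R^{B}_{u}$, respectively $R^{B}_{v}$, to start with the edge $uv$, giving $\gamma=\mu$ or $\delta=\mu$). Then $(vu)\cdot R^{B}_{u}$ and $R^{B}_{v}$ are two distinct increasing $(v,y)$-paths in $G[B]$, contradicting goodness of $\lambda|_{G[B]}$. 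With this amendment, your proof goes through.
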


The following result comes from~\cite[Lemma 11]{AraujoCGH12}. Again, it was proved for \gels, but taking into account \autoref{obs:restrict_label_gel} and the fact that, in the proof of~\cite[Lemma 11]{AraujoCGH12}, a new large label is given to the edges of the matching cut while preserving the labels of both sides of the cut, the following holds.

\begin{lemma}[Araújo et al.~\cite{AraujoCGH12}]
\label{lem:matching_cut}
Let $c\in\mathbb{N}$.
Let $G$ be a graph and $[S,\bar{S}]$ be a matching cut in $G$.
If both $G[S]$ and $G[\bar{S}]$ admit a $c$-\gel, for some non-negative integer $c$, then $G$ admits a $(c+1)$-\gel.
\end{lemma}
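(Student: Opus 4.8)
The plan is to construct an explicit $(c+1)$-edge-labeling of $G$ by gluing together good $c$-edge-labelings of the two sides and reserving a fresh, \emph{maximal} label for the matching cut. First I would use \autoref{obs:restrict_label_gel} to fix $c$-\gels $\lambda_S\colon E(G[S])\to[c]$ of $G[S]$ and $\lambda_{\bar S}\colon E(G[\bar S])\to[c]$ of $G[\bar S]$, and then define $\lambda\colon E(G)\to[c+1]$ by $\lambda(e)=\lambda_S(e)$ for $e\in E(G[S])$, $\lambda(e)=\lambda_{\bar S}(e)$ for $e\in E(G[\bar S])$, and $\lambda(e)=c+1$ for every $e\in[S,\bar S]$. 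Since an edge cut is nonempty by definition, $\lambda$ is a genuine $(c+1)$-edge-labeling, and the whole content of the lemma is to show that it is good.

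The key step—which I would isolate as the crux of the argument—is the following structural claim about increasing paths in $(G,\lambda)$: every increasing path uses at most one edge of $[S,\bar S]$, and if it uses one, that edge is the last edge of the path. Indeed, the cut edges all carry the maximum label $c+1$; so along an increasing (i.e.\ non-decreasing) path, once a cut edge is traversed, every later edge also has label $c+1$ and is therefore itself a cut edge, which is impossible because consecutive edges of a path share a vertex while the edges of $[S,\bar S]$ are pairwise disjoint (it is a matching). Consequently, an increasing $(x,y)$-path is of exactly one of three shapes: it stays entirely inside $G[S]$, or entirely inside $G[\bar S]$, or it is an increasing path inside one side followed by a single cut edge incident to $y$ landing on the other side.

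With this claim in hand, I would argue by contradiction: suppose $P$ and $P'$ are two distinct increasing $(x,y)$-paths. If $x,y$ lie on the same side, say both in $S$, then neither path can use a cut edge (its last edge would land in $\bar S\ne\{y\}$'s side), so both stay inside $G[S]$ and are increasing for $\lambda_S$, contradicting that $\lambda_S$ is good; the case $x,y\in\bar S$ is symmetric. If $x\in S$ and $y\in\bar S$ (up to symmetry), each of $P,P'$ must traverse $[S,\bar S]$ (it is an edge cut) and hence, by the crux claim, does so exactly once, via its last edge, which is incident to $y$; as $[S,\bar S]$ is a matching, $y$ lies on at most one cut edge, so $P$ and $P'$ share the same last edge $uy$. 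Deleting it leaves two distinct increasing $(x,u)$-paths inside $G[S]$—they are paths in $G[S]$ because a path crossing the cut only on its last edge has all earlier vertices in $S$, and they are increasing because a prefix of a non-decreasing sequence is non-decreasing—again contradicting that $\lambda_S$ is good.

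I do not anticipate a serious obstacle here: once the crux claim is pinned down, everything else is bookkeeping. The one point that genuinely needs care, and where I would be most explicit, is the last deletion step—checking that the "prefix" subpaths obtained after removing the common cut edge really are subgraphs of $G[S]$ and really remain increasing when read from $x$ to $u$—since both facts rely precisely on the crux claim forcing the unique crossing to be the final edge.
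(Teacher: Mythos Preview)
Your proof is correct and follows exactly the approach the paper attributes to Ara\'ujo et al.: assign the fresh maximal label $c+1$ to the matching-cut edges while keeping the two $c$-\gels on $G[S]$ and $G[\bar S]$, then use the matching property to force any increasing path to cross at most once, necessarily on its final edge. The paper does not spell out the argument itself (it just cites \cite[Lemma~11]{AraujoCGH12} and notes that the same proof works with the extra label), so your write-up is in fact more detailed than what appears there.
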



\autoref{lem:matching_cut} motivates the following reduction rule.

\begin{redrule}\label{rule:matching-cut}
If a graph $G$ contains a matching cut $[S,\bar{S}]$, delete all edges between $S$ and  $\bar{S}$.
\end{redrule}


The safeness of \autoref{rule:matching-cut} for the sake of admitting a \gel is justified by \autoref{lem:matching_cut}: $G$ admits a \gel if and only if both $G[S]$ and $G[\bar{S}]$ admit a \gel.

\smallskip
We now state a useful equivalence for verifying that an edge-labeling is good. Let $G$ be a graph and $\lambda$ be an edge-labeling of $G$.
Let $C$ be an elementary cycle of $G$. A \textit{local minimum} (resp. \textit{local maximum}) of $C$, with respect to $\lambda$, is a subpath $P$ of $C$ consisting of edges with the same label,
such that the edges in $E(C) \setminus E(P)$ incident to the endpoints of $P$ have labels strictly larger (resp. smaller) than that of~$P$.


\begin{observation}[Bode et al.~\cite{BFT11}]\label{obs:local_minima}
 An edge-labeling of a graph $G$ is good if and only if every elementary cycle of $G$ admits at least two local minima, or at least two local maxima.
\end{observation}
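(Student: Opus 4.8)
The statement is an equivalence, so I would prove both directions by contraposition, passing through the characterization of badness via two distinct increasing paths. The key reformulation is: an edge-labeling $\lambda$ of $G$ is \emph{bad} if and only if there is a cycle $C$ of $G$ with \emph{at most one} local minimum and \emph{at most one} local maximum (note that every nonempty cycle has at least one local minimum and at least one local maximum, so ``at most one'' really means ``exactly one''). So the plan is to show: $\lambda$ is bad $\iff$ some cycle has a unique local minimum and a unique local maximum.

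For the forward direction, suppose $\lambda$ is bad, so there exist two distinct increasing $(x,y)$-paths $P_1, P_2$. First reduce to the case where $P_1$ and $P_2$ are internally disjoint: if they share an internal vertex $z$, then one of the two ``sub-portions'' obtained by splitting at $z$ already gives two distinct increasing paths with fewer edges between some pair of vertices (here I would be a little careful about the direction of traversal, but since a subpath of a non-decreasing path is non-decreasing, this goes through), so by induction we may assume $P_1, P_2$ are internally disjoint and hence their union is a cycle $C$. Now both $P_1$ and $P_2$ are increasing from $x$ to $y$; I claim $C$ has exactly one local minimum and one local maximum. Indeed, reading $C$ starting at $x$ and going along $P_1$ to $y$, the labels are non-decreasing, and then continuing from $y$ back along $P_2$ (traversed in reverse, i.e., from $y$ to $x$) the labels are non-increasing. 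So along $C$ the label sequence goes ``up then down'': the minimum value is attained exactly on a subpath containing the edge(s) at $x$ (more precisely the unimodal structure forces a single block achieving the minimum), and the maximum on a single block around $y$. One has to handle plateaus (equal consecutive labels) carefully, which is exactly why the definitions of local min/max are phrased in terms of maximal constant subpaths with strictly larger/smaller neighbors; but the ``up then down'' picture yields precisely one local minimum and one local maximum.

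For the converse, suppose some cycle $C$ of $G$ has exactly one local minimum $P_{\min}$ and exactly one local maximum $P_{\max}$. Removing from $C$ the interiors of $P_{\min}$ and $P_{\max}$ splits $C$ into (at most) two arcs; along each arc the sequence of labels must be monotone, for otherwise it would contain an extra local extremum, contradicting uniqueness. Concretely, pick an endpoint $x$ of $P_{\min}$ and an endpoint $y$ of $P_{\max}$, chosen so that the two $x$--$y$ arcs $Q_1, Q_2$ of $C$ are such that going from $x$ to $y$ the labels are non-decreasing along each (this is where the ``unimodal'' shape of $C$ is used; one may need to include the constant block $P_{\min}$ on the appropriate side). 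Then $Q_1$ and $Q_2$ are two distinct increasing $(x,y)$-paths, so $\lambda$ is bad. The only subtlety is the degenerate possibility that $C$ has length small or that $P_{\min}$ and $P_{\max}$ share an endpoint (e.g. the whole cycle is constant, which does give a bad labeling), and the case analysis should confirm that in all such situations one still extracts two distinct increasing paths between some ordered pair.

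The main obstacle, and the only place real care is needed, is the bookkeeping around \emph{plateaus}: constant-label subpaths that are neither local minima nor local maxima. These force the ``monotone arc'' arguments to be phrased with non-strict inequalities and require checking that a plateau never secretly creates or destroys an extremum in a way that breaks the unimodality argument; once the definitions of local minimum/maximum (maximal constant subpath with strictly larger, resp.\ smaller, incident edges) are used consistently, everything goes through, but the reduction to internally disjoint paths in the forward direction also needs a clean induction on the number of edges to avoid hand-waving about shared internal vertices.
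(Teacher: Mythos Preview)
The paper does not prove this observation; it is stated with attribution to Bode et al.\ and used as a black box throughout, so there is no proof in the paper to compare against.

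Your argument is correct in both directions. One remark that tidies the bookkeeping you flag as the main obstacle: in any cycle, local minima and local maxima alternate. Indeed, pass to the cyclic sequence of maximal constant runs; consecutive runs differ, so the cyclic sequence of signs of the differences is a string of $+$'s and $-$'s, a run is a local maximum exactly at a $(+,-)$ transition and a local minimum exactly at a $(-,+)$ transition, and these alternate. Hence the number of local minima always equals the number of local maxima, and the negation of the statement is precisely ``some cycle has exactly one local minimum and exactly one local maximum'', as you use. With this in hand the converse becomes clean: if $C$ has a unique $P_{\min}$ and a unique $P_{\max}$, the run-labels along each arc from $P_{\min}$ to $P_{\max}$ are strictly increasing (any failure would produce a second sign change), so for any vertex $x$ of $P_{\min}$ and $y$ of $P_{\max}$ the two $(x,y)$-arcs of $C$ are non-decreasing; the degenerate cases (shared endpoint, constant cycle) fall out as you indicate. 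In the forward direction, the induction reducing to internally vertex-disjoint increasing $(x,y)$-paths is standard and works exactly as you say (if $z$ is a shared internal vertex, at least one of the two pairs of sub-paths at $z$ is a strictly smaller witnessing pair), and then the ``up then down'' shape of the label sequence around $P_1\cup P_2$ forces a single local minimum at $x$; by the alternation remark there is then automatically a single local maximum, so no separate plateau case analysis is needed.
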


\section{\NP-completeness of $c$-\GEL for every $c \geq 2$}
\label{sec:NP-complete}


In this section we prove that the  $c$-\GEL problem is \NP-complete for every $c \geq 2$. To do so, we first prove in \autoref{sec:NP} that $c$-\GEL is in \NP for every $c \geq 2$. We then prove in \autoref{sec:NPhard} the \NP-hardness of  {\sc 2-GEL}, and we use it in \autoref{sec:NP-hard-c-gel} to prove the \NP-hardness of  {\sc $c$-GEL} for every $c \geq 2$.

\subsection{The {\sc $c$-GEL} problem is in \NP for every $c \geq 2$}
\label{sec:NP}

In \cite[Theorem 5]{AraujoCGH12}, the authors prove that deciding whether an \emph{injective} edge-labeling is good is polynomial-time solvable. (It is worth mentioning that this result can also be easily deduced from much earlier work in the context of temporal graphs~\cite{B96}.)
By \autoref{obs:injective}, this is enough to prove that {\sc GEL} is in \NP.
However, to prove that {\sc $c$-GEL} is in \NP for some fixed $c$, we need to prove that deciding whether \emph{any} given edge-labeling with at most $c$ distinct values is good is polynomial-time solvable.
This is what we prove in the following result by generalizing the proof of \cite[Theorem 5]{AraujoCGH12} to the not-necessarily injective case.

\begin{theorem}\label{thm:NP}
    Given a graph $G$ and an edge-labeling $\lambda$ of $G$, there is an algorithm deciding whether $\lambda$ is good in polynomial time.
\end{theorem}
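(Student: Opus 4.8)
The plan is to reduce the question ``is $\lambda$ good?'' to a search for a certain short certificate of badness, and then show this certificate can be detected in polynomial time. By \autoref{obs:local_minima}, $\lambda$ fails to be good precisely when there is some cycle $C$ of $G$ having exactly one local minimum \emph{and} exactly one local maximum with respect to $\lambda$; equivalently, by the very definition of ``good'', there is an ordered pair $(x,y)$ with two distinct increasing $(x,y)$-paths. I would work with the latter, path-based formulation, since it localizes the obstruction to a pair of paths rather than a whole cycle. The key reduction is: $\lambda$ is bad if and only if there exist vertices $x,y$ and two internally disjoint increasing $(x,y)$-paths $P_1, P_2$. (If two increasing $(x,y)$-paths share an internal vertex $z$, then the prefixes $x \leadsto z$ along each are still increasing, and they are shorter; iterating, one obtains a pair of internally disjoint increasing paths between some pair of vertices — this is the same contraction argument used implicitly in the literature.) So it suffices to decide the existence of two internally disjoint increasing paths between a common pair of endpoints.

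First I would make the labeling essentially injective by a perturbation/refinement trick: replace each label value $v$, which is carried by some set $E_v$ of edges, by distinct values in a tiny interval around $v$, chosen consistently across $G$. The subtlety is that an increasing path in $\lambda$ traverses a block of equal-labeled edges $E_v$ as a subpath with a \emph{free internal order}, so we cannot simply break ties arbitrarily on a per-path basis. Following the idea of \cite[Theorem 5]{AraujoCGH12}, the right move is: for the non-injective case, handle the equal-label classes separately. For each label value $v$, the subgraph $G_v := (V(G), E_v)$ must itself be a forest (else a cycle inside $E_v$ is monochromatic and gives two increasing paths immediately), and moreover — this is exactly \autoref{obs:C4}-type reasoning pushed further — one can check locally that each $G_v$, together with how it attaches to the strictly-smaller and strictly-larger labels, does not already create a short obstruction. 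After these local checks (all doable in polynomial time by inspecting cycles through at most two label classes, using \autoref{obs:local_minima}), the remaining obstructions necessarily involve at least two \emph{distinct} label values along the ``ascending'' part, and for those we may safely contract each monochromatic block to a single edge and pretend the labeling is strictly increasing along every relevant path.

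Once we are in the injective (strictly increasing) regime, the standard technique applies: orient each edge $uv$ from the lower label to the higher label, obtaining a DAG $D$; then an increasing $(x,y)$-path in $G$ corresponds to a directed $(x,y)$-path in $D$, and ``two distinct increasing $(x,y)$-paths'' becomes ``two distinct directed $(x,y)$-paths in $D$'', which by Menger's theorem reduces to checking, for each pair $(x,y)$, whether the maximum number of internally vertex-disjoint directed $(x,y)$-paths is at least $2$ — a polynomial-time flow computation. (Equivalently, $D$ has the Unique Path Property iff for no vertex $x$ does the BFS/DFS out-tree from $x$ in $D$ ``merge'', which can be tested directly in $O(nm)$ time; this is the connection to UPP-DAGs highlighted in the introduction, and is essentially the observation from \cite{B96} referenced in the text.) Summing over all $O(n^2)$ endpoint pairs, or doing it smartly with one traversal per source, gives the claimed polynomial running time.

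\textbf{Main obstacle.} The delicate point is entirely the non-injective case: converting between ``good edge-labeling'' and ``UPP-orientation'' is clean only when labels are distinct, and with repeated labels an increasing path has freedom in how it routes through an equal-label block, so a naive orientation loses information. The crux of the argument is therefore to prove that, after the local per-label-class checks (cycles using at most two distinct label values, which are polynomially many and checkable via \autoref{obs:local_minima}), every remaining bad cycle uses at least two distinct labels on its ascending arc and at least two on its descending arc, so contracting monochromatic blocks is safe and reduces the problem to the injective case without introducing or destroying obstructions. Making that contraction/safeness claim precise — specifying exactly which local patterns to forbid and checking they are detectable in polynomial time — is where the real work lies; the rest is a routine reachability/flow computation.
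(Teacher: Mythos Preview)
Your plan correctly identifies the crux — the non-injective case — and two of your preliminary observations are right: each monochromatic subgraph $G_v$ must be a forest, and if any pair of distinct increasing $(x,y)$-paths exists then an internally-disjoint pair exists between some pair of vertices. But the central move, ``after local checks on cycles using at most two label classes, contract each monochromatic block to a single edge and reduce to the injective case,'' is not a proof, and you concede as much in your final paragraph. Concretely: a monochromatic connected component is a tree, generally with more than two vertices, so ``contract to a single edge'' is undefined; contracting it to a vertex instead is also problematic, because one vertex of $G$ can lie in monochromatic components of several different labels, so the contractions interfere. Even granting some sensible quotient, you give no argument that the $\le 2$-label-class local checks together with the injective-case flow test on the quotient are jointly complete — you have relocated the difficulty into an unproved ``safeness of contraction'' claim rather than dissolved it.

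The paper's proof bypasses contraction entirely and never reduces to the injective case. For each source $v$ it grows a tree $G'$ of vertices reachable from $v$ by increasing paths, processing labels $i=1,\ldots,c$ in order. At level $i$, each monochromatic tree $G_{i,j}$ is handled as a unit: if it meets the current $V(G')$ in no vertex, skip it; in exactly one vertex, glue the whole tree on ($G'$ stays a tree); in two vertices $x\ne y$, take their least common ancestor $u$ in $G'$ and observe that the $G'$-path from $u$ to $x$ and the concatenation of the $G'$-path from $u$ to $y$ with the $G_{i,j}$-path from $y$ to $x$ are two increasing $(u,x)$-paths — report bad. If $G'$ survives all levels for every choice of $v$, the labeling is good. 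This handles equal labels natively in a single layered sweep, with no flow computation and no quotient graph.
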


\begin{proof}
For each vertex $v\in V(G)$ we will check whether there are two increasing paths in $G$ beginning in $v$ and ending in the same vertex $u$.
If we find such a pair of increasing paths, we conclude that $\lambda$ is a bad edge-labeling.
Otherwise, if we find no such a pair for each $v\in V(G)$, we conclude that $\lambda$ is a good edge-labeling.
Let $c\leq |E(m)|$ be the number of labels of $\lambda$.
By \autoref{obs:restrict_label_gel}, we may assume that $\lambda:E(G)\to[c]$.

For each $i\in[c]$, let $G_{i,1},\dots,G_{i,{r_i}}$ be the maximal connected subgraphs of $G$ whose edges are all labeled $i$.
Note that if $G_{i,j}$ contains a cycle, then we can immediately conclude that $\lambda$ is a bad edge-labeling.
Hence, we can assume that $G_{i,j}$ is a tree for $j\in[r_i]$.

Let $v\in V(G).$ Let $V'=\{v\}$ and $E'=\emptyset$.
Let $G'=(V',E')$. Note that $G'$ is trivially a tree.
For $i\in[c]$ in increasing order, we do the following, while maintaining the property that $G'$ is a tree.

For $j\in[r_i]$, if $G_{i,j}$ contains exactly one vertex of $V'$, then we add $V(G_{i,j})$ to $V'$ and $E(G_{i,j})$ to $E'$.
Note that, given that $G_{i,j}$ is a tree, $G'$ remains a tree.

If $G_{i,j}$ contains at least two distinct vertices $x$ and $y$ of $V'$, then we claim that $\lambda$ is a bad edge-labeling.
Indeed, let $u$ be the least common ancestor of $x$ and $y$ in the tree $G'$ rooted at $v$.
Without loss of generality, we may assume that $u\ne x$.
Let $P$ be the path from $u$ to $x$ in $G'$, and let $P'$ be the path going from $u$ to $y$ in $G'$ and from $y$ to $x$ in $G_{i,j}$.
$P$ and $P'$ are two disjoint increasing paths from $u$ to $x$, hence proving the claim.

Note that after step $i$, all increasing paths of $G$ beginning in $v$ and finishing with an edge of label at most $i$ are present in $G'$.
Hence, after step $c$, if $G'$ is a tree, then there are no two increasing paths beginning in $v$ and finishing at the same vertex $u$.

Finally, it can be easily verified that the presented algorithm runs in time $\Ocal (c \cdot |E(G)| \cdot |V(G)|)$, where $c$ is the number of distinct labels used by $\lambda$.
\end{proof}

\subsection{\NP-hardness of {\sc 2-GEL}}
\label{sec:NPhard}

In this subsection we prove that {\sc 2-GEL} is \NP-hard.
We do so by reducing from the {\NAE 3-\SAT} problem, that is, the variation of 3-\SAT where each clause needs to be satisfied, but the assignments where all literals in a clause are all true or all false are forbidden.
The {\NAE 3-\SAT} problem is known to be \NP-complete even if each clause contains exactly three literals and each variable occurs exactly four times~\cite{DarmannD20}.



We hence need to find gadgets that will represent clauses and variables. We first define a gadget that will link variable gadgets to clause gadgets and propagate a label.

\subparagraph{Propagation gadget.}
The \emph{propagation gadget}, denoted by $P$, is the graph pictured in \autoref{fig:propagation_gadget}.
We call the \emph{bones} of the propagation gadget the edges $u_1u_2$ and $v_1v_2$.

\begin{figure}[h]
\centering
\vspace{-.25cm}
\includegraphics[scale=1]{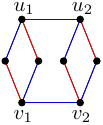}
\caption{The propagation gadget $P$ along with a 2-\gel: blue edges have label one and red edges have label two.}
\label{fig:propagation_gadget}
\end{figure}

\begin{lemma}\label{prop:propagation_gadget}
For any 2-\gel $\lambda$ of the propagation gadget $P$, the bones of $P$ have the same label.
\end{lemma}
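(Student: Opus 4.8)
The plan is to first reduce to the case where $\lambda$ takes its values in $\{1,2\}$, and then to exploit the fact that $P$, as drawn in \autoref{fig:propagation_gadget}, is assembled from $4$-cycles on each of which \autoref{obs:C4} pins down the labeling up to a global swap. If $\lambda$ is constant on $E(P)$ there is nothing to prove, so assume it uses exactly two distinct real values $a<b$. Since the definition of a \gel depends only on the relative order of the labels encountered along paths, replacing $a$ by $1$ and $b$ by $2$ turns $\lambda$ into a $2$-\gel of $P$ with image $\{1,2\}$, and the two bones receive the same label under $\lambda$ if and only if they do under the relabeled version (in the spirit of \autoref{obs:restrict_label_gel}). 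So from now on one may assume $\lambda\colon E(P)\to\{1,2\}$.

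Next I would use the combinatorial structure of $P$. The gadget is a concatenation of $4$-cycles $C^{1},\dots,C^{t}$ with $u_1u_2\in E(C^{1})$, $v_1v_2\in E(C^{t})$, and consecutive cycles $C^{j},C^{j+1}$ sharing exactly one edge, arranged so that walking along this chain one travels from the bone $u_1u_2$ to the bone $v_1v_2$. For each $j$ the restriction of $\lambda$ to the subgraph $C^{j}$ is still good, since any two distinct increasing paths with the same endpoints inside $C^{j}$ would also be two such paths in $P$; hence by \autoref{obs:C4} the labels of $C^{j}$ alternate between $1$ and $2$ in cyclic order, so any two non-adjacent edges of the $4$-cycle $C^{j}$ receive the same label. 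Chaining this across the cycles — at each step the edge shared with the previous cycle and the edge passed on to the next cycle form the pair of opposite edges of $C^{j}$, hence are equally labeled — yields a sequence of equalities $\lambda(u_1u_2)=\lambda(e_1)=\lambda(e_2)=\dots=\lambda(v_1v_2)$, which is exactly the statement of \autoref{prop:propagation_gadget}.

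The step I expect to be the crux is precisely this bookkeeping: one must read off from \autoref{fig:propagation_gadget} the exact list of $4$-cycles of $P$ and check, for every consecutive pair, that the ``incoming'' and ``outgoing'' edges really are the two opposite edges of a common $4$-cycle — if two consecutive cycles met in a single vertex, or if the relevant edges happened to be adjacent within some $C_4$, the chain of equalities would collapse. In case $P$ is not literally a chain of $4$-cycles, a safe fallback, feasible because $P$ is small, is to argue directly via \autoref{obs:local_minima}: enumerate the cycles of $P$ and verify that every edge-labeling $\lambda\colon E(P)\to\{1,2\}$ that assigns distinct labels to the two bones produces some cycle with exactly one local minimum and exactly one local maximum, contradicting goodness. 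This is a finite case analysis that can be completed by inspection.
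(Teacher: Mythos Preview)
Your primary argument rests on a structural claim about $P$ that is incorrect: the gadget is not a chain of $4$-cycles in which consecutive cycles share an edge and the bones sit at the two ends. In fact, neither bone $u_1u_2$ nor $v_1v_2$ lies in any $4$-cycle of $P$ at all. The actual structure (readable from the figure and implicit in the paper's proof) is: for each $i\in\{1,2\}$ there are two internally disjoint $2$-paths from $u_i$ to $v_i$, so $u_i$ and $v_i$ are opposite corners of a $4$-cycle $C_i$; the two $C_4$'s $C_1,C_2$ are vertex-disjoint and are linked only by the two bones. Your ``chain of equalities via opposite edges'' therefore cannot even start, since $u_1u_2$ has no opposite edge in any $C_4$ of $P$.

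The paper's argument does use \autoref{obs:C4}, but only locally on each $C_i$, and then closes with a direct construction of two increasing paths. By alternation in $C_i$, exactly one of the two $2$-paths from $u_i$ to $v_i$ is strictly increasing; write it as $u_i w_i v_i$ with labels $1,2$. Assuming $\lambda(u_1u_2)=1$ and, toward a contradiction, $\lambda(v_1v_2)=2$, one obtains two distinct increasing $(u_1,v_1)$-paths: $u_1w_1v_1$ (labels $1,2$) and $u_1u_2w_2v_2v_1$ (labels $1,1,2,2$). That contradicts goodness, so $\lambda(v_1v_2)=1$. Your fallback of a brute-force cycle check is valid in principle, but as written it is only a promise of a proof; the two-line construction above is what actually closes the gap.
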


\begin{proof}
Let $u_1u_2$ and $v_1v_2$ be the two bones of $P$ such that there are two 2-paths from $u_i$ to $v_i$ for $i\in[2]$.
By symmetry of the labels in a 2-\gel, we can assume that $\lambda(u_1u_2)=1$.
For $i\in[2]$, given that there are two 2-paths from $u_i$ to $v_i$, one of these two is strictly decreasing, and the other is strictly increasing (see \autoref{obs:C4}).
Hence, there is a 2-path $u_iw_iv_i$ such that $\lambda(u_iw_i)=1$ and $\lambda(w_iv_i)=2$.
If $\lambda(v_1v_2)=2$, then we would have two increasing path from $u_1$ to $v_1$: the path $u_1w_1v_1$ and the path $u_1u_2w_2v_2v_1$. This contradicts the fact that $\lambda$ is a 2-\gel.
Therefore, $\lambda(v_1v_2)=1$.
\end{proof}

We can now prove that {\sc 2-GEL} is \NP-hard.

\begin{theorem}\label{prop:NP_2_GEL}
The {\sc 2-GEL} problem is \NP-hard even on bipartite instances that admit a 3-\gel and with maximum degree at most ten.
\end{theorem}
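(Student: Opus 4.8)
The plan is to reduce from \NAE 3-\SAT restricted to instances in which every clause has exactly three literals and every variable occurs exactly four times, which is \NP-complete by~\cite{DarmannD20}. Given such a formula $\varphi$ with variables $x_1,\dots,x_n$ and clauses $C_1,\dots,C_m$, I would build a bipartite graph $G_\varphi$ out of two types of gadgets: a \emph{variable gadget} for each $x_i$, and a \emph{clause gadget} for each $C_j$, glued together through copies of the propagation gadget $P$ of \autoref{fig:propagation_gadget}. The key property provided by \autoref{prop:propagation_gadget} is that a copy of $P$ forces its two bones to receive the same label in any $2$-\gel; by chaining copies of $P$ (identifying a bone of one copy with a bone of the next) I can propagate a single Boolean value (encoded as ``label $1$'' versus ``label $2$'' on a designated edge, up to the global symmetry of $2$-\gels) from the variable gadget to each of the (at most four) clause gadgets where the variable appears, and also to its negation within the variable gadget so that $x_i$ and $\overline{x_i}$ carry opposite labels.

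The heart of the construction is the clause gadget. For a clause $C_j$ with three literals I want a small bipartite graph with three ``input'' edges (to be identified with bones of propagation gadgets coming from the three variables) such that an assignment of labels in $\{1,2\}$ to these three input edges extends to a $2$-\gel of the gadget if and only if the three labels are \emph{not all equal} — this is exactly the ``not-all-equal'' constraint. The natural candidate here is a short cycle: by \autoref{obs:C4}, a $C_4$ labeled with $\{1,2\}$ is good iff the labels alternate, and more generally \autoref{obs:local_minima} tells us that a cycle is good iff it has two local minima or two local maxima, which for $\{1,2\}$-labelings of short even cycles translates into a parity-type condition on the multiset of labels. I would therefore design the clause gadget so that its three input edges, together with a constant-size bipartite ``connector'' of edges of both labels, lie on one or two short cycles whose goodness is equivalent to ``the three inputs are not all $1$ and not all $2$''; one verifies the ``if'' direction by exhibiting explicit good extensions for each of the six allowed input patterns, and the ``only if'' direction via \autoref{obs:local_minima}. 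Throughout, I keep every gadget bipartite and of bounded size, so $G_\varphi$ is bipartite with maximum degree bounded by an absolute constant; a careful accounting of how many propagation-gadget bones can meet at a single vertex (at most four occurrences per variable, plus the internal structure of the gadgets) is what pins the degree bound down to ten.

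Having built $G_\varphi$, the correctness argument splits into two implications. If $\varphi$ has a not-all-equal satisfying assignment, I label each variable's designated edge according to its truth value, propagate consistently through all copies of $P$ (legitimate by the proof of \autoref{prop:propagation_gadget}, which only constrains the bones), extend inside each clause gadget using the explicit good extension for the corresponding allowed pattern, and finally invoke \autoref{lem:2_separation} (or \autoref{lem:cut_vertex}) to argue that, since the gadgets are glued along edges or cut-vertices and each piece is individually $2$-good under the chosen labels, the whole labeling of $G_\varphi$ is a $2$-\gel. Conversely, from a $2$-\gel of $G_\varphi$, \autoref{prop:propagation_gadget} forces each variable's value to be propagated consistently (so the encoding is well-defined up to the global $1\leftrightarrow 2$ swap), and the clause gadget's characterization forbids the all-equal patterns, yielding a not-all-equal satisfying assignment of $\varphi$.

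\textbf{Main obstacle.} The delicate part is designing the clause gadget and proving its characterization: I need a bipartite, bounded-degree graph of constant size whose goodness, as a function of the labels placed on three prescribed edges, is \emph{exactly} the not-all-equal predicate — no more, no less — and I must also make sure that, once several such gadgets and propagation gadgets are glued together, no \emph{new} bad configuration (a bad cycle straddling two gadgets) is created. Controlling these cross-gadget cycles is precisely why I want the interfaces to be single edges or cut-vertices, so that \autoref{lem:2_separation} and \autoref{lem:cut_vertex} can be applied to reduce goodness of $G_\varphi$ to goodness of the individual pieces. The other point requiring care is the claim that these instances \emph{admit a $3$-\gel}: here I would exhibit a global $3$-\gel of $G_\varphi$ directly — for instance, by repeatedly applying \autoref{lem:matching_cut} to the edges joining consecutive propagation gadgets and to the gadget interfaces, which form matching cuts, so that $G_\varphi$ decomposes into constant-size good (indeed $2$-good) pieces and hence is $3$-good — together with the degree computation showing $\Delta(G_\varphi)\le 10$.
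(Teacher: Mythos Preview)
Your high-level strategy coincides with the paper's: reduce from \NAE 3-\SAT with exactly four occurrences per variable, use a $C_4$ variable gadget so that $e_i$ and $\bar e_i$ receive opposite labels, use propagation gadgets to transmit labels, and use a short-cycle clause gadget whose $\{1,2\}$-goodness is exactly the not-all-equal predicate on three designated edges. (The paper's clause gadget is simply a $C_5$ with three consecutive input edges $f_{j,1},f_{j,2},f_{j,3}$, replaced at the end by a small bipartite variant; the degree bound ten is attained at the vertex of $X_i$ shared by $e_i$ and $\bar e_i$.)

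Two steps of your plan have genuine gaps.

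\smallskip
\textbf{The gluing step.} You want to certify that the assembled labeling is good by invoking \autoref{lem:2_separation} (or \autoref{lem:cut_vertex}) along the interface bones. This does not go through: the endpoints of a bone are in general \emph{not} a vertex cut of $G_\varphi$, because the variable/clause incidence graph of $\varphi$ has cycles (there are $3m$ propagation gadgets for only $n+m$ variable and clause gadgets). For instance, if $x_1$ occurs positively in $C_1$ and negatively in $C_2$, and $x_2$ occurs in both $C_1$ and $C_2$, then removing the two endpoints of $e_1$ still leaves the rest of $G_\varphi$ connected through $\bar e_1 \to C_2 \to X_2 \to C_1$, so no order-two separation along $e_1$ exists. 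The paper handles this direction by a direct cycle analysis via \autoref{obs:local_minima}: letting $S$ be the set of internal vertices of propagation gadgets, every $s\in S$ has degree two with one incident edge of each label, so each visit of a cycle to $S$ produces a local extremum; a short case split on $|V(C)\cap S|\in\{0,2,\ge 4\}$ then yields two local minima.

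\smallskip
\textbf{The $3$-\gel claim.} Your proposal to iterate \autoref{lem:matching_cut} does not give a $3$-\gel. First, the gadgets are glued by \emph{identifying} bones, not by adding bridge edges, so there are no evident matching cuts at the interfaces. Second, each application of \autoref{lem:matching_cut} increases the label count by one, so a decomposition into several pieces would overshoot three. The paper instead constructs a $3$-\gel explicitly: label everything with $\{1,3\}$ as if every variable were true, and for each clause whose three input edges then coincide, relabel the middle one $f_{j,2}$ with $2$; the same degree-two observation about $S$ shows this labeling is good.
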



\begin{proof}
We present a reduction from the restriction of the \NAE 3-\SAT problem where each clause contains exactly three literals and each variable occurs exactly four times, which is known to be \NP-complete~\cite{DarmannD20}. 
Let $\varphi$ be a formula on variables $\{x_1,\ldots,x_n\}$ and clauses $\{c_1,\ldots,c_m\}$.
We construct a graph $G$ as follows:
for each variable $x_i$, we create a 4-cycle $X_i$ and distinguish two consecutive edges of $X_i$ that we call $e_i$ and $\bar{e}_i$ (first picture of \autoref{fig:gadgets_2_gel}).
For each clause $c_j$, we create a 5-cycle $C_j$ and distinguish three consecutive edges of $C_j$ that we call $f_{j,1},f_{j,2},$ and $f_{j,3}$ (second picture of \autoref{fig:gadgets_2_gel}).
For each $j\in[m]$, let $l_{j,1},l_{j,2}$, and $l_{j,3}$ be the three literals in $c_j$.
For $a\in[3]$, if $l_{j,a}=x_i$ (resp. $l_{j,a}=\overline{{x}_i}$),  we add a propagation gadget $P_{i,j,a}$ with one bone identified with $l_i=e_i$ (resp. $l_i=\bar{e}_i$), and the other with $f_{j,a}$ (third picture of \autoref{fig:gadgets_2_gel}).
This completes the construction of $G$.

\begin{figure}[h]
\centering
\vspace{-.15cm}
\includegraphics{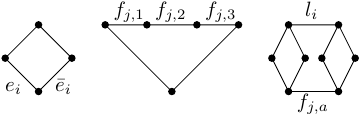}
\caption{
Gadgets for the reduction to {\sc 2-GEL}: from left to right,  $X_i$, $C_j$, and $P_{i,j,a}$.}
\label{fig:gadgets_2_gel}
\end{figure}

To get some intuition, note that the constructed graph $G$ closely follows the ``structure'' of the incidence graph of the formula $\varphi$ , with the vertices corresponding to variables and clauses being replaced by the gadgets $X_i$ and $C_j$, respectively, and the edges connecting variables and clauses being replaced by propagation gadgets (hence, propagating the same label between the corresponding bones). Note also that, since each variable occurs exactly four times in $\varphi$, $G$ has maximum degree ten, achieved at the vertices in the gadgets $X_i$ incident with $e_i$ and $\bar{e}_i$.

\begin{claim}
If $G$ admits a 2-\gel, then $\varphi$ has a \NAE satisfying assignment.
\end{claim}
\begin{proof}
Let $\lambda:E(G)\to[2]$ be a 2-\gel of $G$.
Then for each $i\in[n]$, if $e_i$ has label one, then we assign \T to $x_i$, and if $e_i$ has label two, then we assign \F to $x_i$.
We claim that this corresponds to a \NAE satisfying assignment of $\varphi$.
By \autoref{obs:C4}, if $e_i$ has label $b$ for $b\in[2]$, then $\bar{e_i}$ has label $3-b$ since $e_i$ and $\bar{e}_i$ are consecutive edges of a $C_4$ (the gadget variable).
Additionally, for $j\in[m]$ and $a\in[3]$,
if $l_{j,a}=x_i$ (resp. $l_{j,a}=\overline{{x}_i}$), then there is a propagation gadget $P_{i,a,j}$ with one bone identified with $e_i$ (resp. $\bar{e}_i$), and the other with $f_{j,a}$.
Therefore, by \autoref{prop:propagation_gadget}, $f_{j,a}$ and $e_i$ (resp. $\bar{e}_i$) have the same label.

Assume for a contradiction that this is not a \NAE satisfying assignment of $\varphi$.
Then there is a clause $c_j$ whose literals are all assigned to either \T or \F.
Hence, $f_{j,1}$, $f_{j,2},$ and $f_{j,3}$ have the same label.
But then there are two increasing paths in the $C_5$ gadget corresponding to $c_j$: one using edges $f_{j,1}$, $f_{j,2}$, and $f_{j,3}$, and the other using the two other edges.
This contradicts the fact that $\lambda$ is a \gel.
\end{proof}

\begin{claim}\label{claim:NAE_2_gel}
If $\varphi$ has a \NAE satisfying assignment, then $G$ admits a 2-\gel.
\end{claim}
\begin{proof}
For each $i\in[n]$, if $x_i$ is assigned \T (resp. \F), then we give label one (resp. two) to $e_i$.
Then, by \autoref{obs:C4}, there is only one way to label the $C_4$ containing $e_i$ in a 2-\gel.
Moreover, by \autoref{prop:propagation_gadget}, both bones of a propagation gadget must have the same label, and the rest of the propagation gadget are two $C_4$ that are labeled by alternating one and two in a cyclic ordering of the edges.
Given that the literals of a clause $c_j$ are not assigned all \T or all \F, we deduce that for every $j\in[m]$, $f_{j,1}$, $f_{j,2},$ and $f_{j,3}$ do not have the same label.
Hence, we can always assign to the two unlabeled edges neighboring $f_{j,1}$ and $f_{j,3}$ label one for one of them and label two for the other such that the labeling is a 2-\gel of the clause gadget.

Let us show that this 2-labeling is a \gel.
Let $A$ be the set of vertices of $G$ that are part of a variable gadget, $B$ be the set of vertices that are part of a clause gadget, and $S=V(G)\setminus (A\cup B)$. Hence, $S$ is the set of vertices of propagation gadgets that are not endpoints of bones.
By construction, $A\cap B=\emptyset$, $N_G(A)=N_G(B)=S$, and $S$ is an independent set.
Let $C$ be a cycle of $G$.
Observe that $C$ intersects $S$ an even number of times.
If $V(C)\cap S=\emptyset$, then $C$ is contained in either a variable or a clause gadget and thus has two local minima.
If $|V(C)\cap S|\ge4$, then given that every vertex in $S$ is incident to two edges, one labeled one and the other labeled two, $C$ immediately has two local minima.
Finally, if $|V(C)\cap S|=2$, then $C$ intersects exactly one propagation gadget $P_{i,j,a}$ and thus the associated variable gadget $X_i$ and clause gadget $C_{j,a}$.
By construction of the labeling, both bones $l_i$ and $f_{j,a}$ of the propagation gadget have same label $b$, and this label is also present in $X_i-l_i$ and in $C_{j,a}-f_{j,a}$.
Additionally, there is an edge of $C$ adjacent to $s$, and an edge adjacent to $s'$, both with label $3-b$, where $V(C)\cap S=\{s,s'\}$.
Therefore, no matter what is the path taken by $C$ in the variable and in the clause gadgets, $C$ has two local minima.
Hence, the labeling is indeed good.
\end{proof}

\begin{claim}
$G$ admits a 3-\gel.
\end{claim}
\begin{proof}
Let us define a first labeling with labels in $\{1,3\}$, that we will then modify to make it a 3-\gel.
We assign label one to each $e_i$, and complete $X_i$ and each $P_{i,j,a}$ with labels one and three so that the labeling is good on each $X_i$ and $P_{i,j,a}$.
If there is a $C_j$ such that $f_{j,1}$, $f_{j,2},$ and $f_{j,3}$ have the same label, then we replace the label of $f_{j,2}$ with label two.
Finally, we complete the labeling of each $C_j$ by labeling one unlabeled edge with label one and the other with label three such that the labeling is good on $C_j$ (such a labeling is always possible given that the $f_{j,a}$'s do not have the same label for $a\in[3]$).

Let us prove that this labeling is good.
We define $A,B,$ and $S$ as in \autoref{claim:NAE_2_gel}.
If $V(C)\cap S=\emptyset$, then $C$ is contained in either a variable or a clause gadget and thus have two local minima.
If $|V(C)\cap S|\ge4$, then given that every vertex in $S$ is incident to two edges, one labeled one and the other labeled three, $C$ immediately has two local minima.
Finally, if $|V(C)\cap S|=2$, then $C$ intersects exactly one propagation gadget $P_{i,j,a}$ and thus the associated variable gadget $X_i$ and clause gadget $C_{j,a}$.
Let $b\in\{1,3\}$ be the label of $l_i$.
By construction, the label of $f_{j,a}$ is either two or $b$, and the label $b$ is also present in $X_i-l_i$ and in $C_{j,a}-f_{j,a}$.
Additionally, there is an edge of $C$ adjacent to $s$, and an edge adjacent to $s'$, with label $3-b$, where $V(C)\cap S=\{s,s'\}$.
Therefore, no matter what is the path taken by $C$ in the variable and in the clause gadgets, $C$ has two local minima (one of them may be with label two).
Hence, the labeling is indeed good by \autoref{obs:local_minima}.
\end{proof}

Note that $G$ can be made bipartite by replacing each clause gadget with the one depicted in \autoref{fig:clause_gadget}. The proof works essentially the same way, and note that the maximum vertex degree in the constructed graph is still ten.
\begin{figure}[h]
\centering
\includegraphics[scale=1]{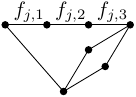}
\caption{Bipartite variation of the clause gadget.}
\label{fig:clause_gadget}
\end{figure}
\end{proof}

\subsection{\NP-hardness of {\sc $c$-GEL}}
\label{sec:NP-hard-c-gel}

We now prove that {\sc $c$-GEL} is \NP-hard for every $c \geq 3$ by reducing from {\sc $2$-GEL}.
To do so, we define two gadgets: the first one allows us to restrict the possible labels for an edge $e$, while the second one is a graph that admits a $c$-\gel but no $(c-1)$-\gel.

\subparagraph{Extremal gadget.}
We define the first gadget, called \emph{extremal gadget} and denoted by $X$, to be the graph represented in \autoref{fig:extremal_gadget}; the edge labeled $e$ is called the \emph{bone} of the extremal gadget.

\begin{figure}[h]
\centering
\includegraphics[scale=1]{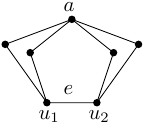}
\caption{The extremal gadget $X$.}
\label{fig:extremal_gadget}
\end{figure}

\begin{lemma}\label{prop:extremal_gadget}
The extremal gadget $X$ admits a 3-\gel.
Additionally, for any $c$-\gel of $X$, for some $c\in\mathbb{N}$, $c \geq 3$, the label of the bone $e$ of $X$ is in $[2,c-1]$.
\end{lemma}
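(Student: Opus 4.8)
The plan is to analyze the extremal gadget $X$ directly from its picture (\autoref{fig:extremal_gadget}), exploiting the structure of short cycles through the bone $e$ together with \autoref{obs:local_minima} and \autoref{obs:C4}. First I would exhibit an explicit $3$-\gel of $X$: since $X$ is built from a small number of $4$-cycles glued along edges, I would label these $4$-cycles by alternating two of the three labels in a cyclic order (as justified by \autoref{obs:C4}), choosing a third label only where two such cycles share an edge and a conflict would otherwise arise; \autoref{obs:local_minima} then certifies that every cycle of $X$ has two local minima, so the labeling is good. This gives the first part of the statement.

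For the second part, suppose $\lambda$ is a $c$-\gel of $X$ with $c > 3$; I want to show $\lambda(e) \in [2, c-1]$, i.e. $\lambda(e)$ is neither the smallest nor the largest label used on $X$. By the symmetry of \gels under reversing the label order (replacing every label $\ell$ by $c+1-\ell$, which preserves being good), it suffices to rule out $\lambda(e)$ being the unique global minimum value, say $\lambda(e) = 1$; the case $\lambda(e) = c$ follows by this symmetry. The key structural feature I expect from the gadget is that the bone $e = xy$ lies in two internally disjoint $(x,y)$-paths $Q_1, Q_2$ of length two (forming, together with $e$, two triangles — or, in the bipartite-friendly version, two short even paths arranged so that the same forcing works), so that $e$ together with $Q_1 \cup Q_2$ forms a $\theta$-graph. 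If $\lambda(e)$ were strictly smaller than both labels on the edges of $Q_1$ and both labels on the edges of $Q_2$, then traversing $Q_i$ from $x$ to $y$ we would get at least one increasing $(x,y)$-path among $Q_1, Q_2$ in each of two directions, or more carefully: $e$ is an increasing $(x,y)$-path; a length-two path $Q_i$ is increasing either from $x$ to $y$ or from $y$ to $x$, and if $\lambda(e)$ is below every label on $Q_i$, we can arrange that $Q_i$ is increasing from $x$ to $y$, producing a second increasing $(x,y)$-path and contradicting goodness. To push $\lambda(e) = 1$ to a contradiction I would chase this through the full gadget: the constraint forces small labels on the neighboring edges, and then a further short cycle in $X$ — again analyzed via \autoref{obs:local_minima} — is left with at most one local minimum, the contradiction.

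Concretely, the argument I would write does a short case analysis on the labels of the two edges incident to $e$ inside each triangle (or near-triangle) of the gadget, using at each step that a $2$-path is monotone in exactly one direction unless its two labels are equal, and that equal labels on two consecutive edges create an increasing path usable in both directions. Assuming $\lambda(e) = 1$, in each of the two triangles at least one of the two other edges must carry label $1$ as well (otherwise we immediately get two increasing $(x,y)$-paths as above), and tracking these forced label-$1$ edges around the rest of $X$ produces a monochromatic subpath on a cycle whose remaining edges cannot supply a second local minimum, contradicting \autoref{obs:local_minima}. Hence $\lambda(e) \ne 1$, and symmetrically $\lambda(e) \ne c$, so $\lambda(e) \in [2, c-1]$.

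The main obstacle I anticipate is purely one of bookkeeping: the argument depends on the exact edge structure of \autoref{fig:extremal_gadget}, which the excerpt does not spell out in text, so the delicate part is identifying the right small cycles through $e$ and verifying that the forced label-$1$ propagation really does leave some cycle with a single local minimum, rather than accidentally admitting a valid labeling. Once the gadget is fixed, each individual step is a routine application of \autoref{obs:C4} and \autoref{obs:local_minima}; the care is in making sure the case analysis is exhaustive and that the $3$-\gel exhibited in the first part is consistent with the picture (in particular that the bone genuinely receives a middle label there, which it must, since $[2,c-1] = \{2\}$ when $c = 3$).
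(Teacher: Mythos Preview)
Your proposal rests on a mistaken guess about the structure of the extremal gadget, and the argument you sketch does not transfer to the actual graph. You assume the bone $e=u_1u_2$ sits inside two internally disjoint $(u_1,u_2)$-paths of length two (so $e$ would lie in two triangles, or in a $\theta$-graph on $u_1,u_2$). That cannot be right: the paper has already observed that $K_3$ and $K_{2,3}$ are bad, so no gadget admitting a \gel can contain either configuration. In the actual gadget there is a single vertex $a$ of degree four, two internally disjoint $2$-paths from $a$ to $u_1$ (forming one $C_4$), two internally disjoint $2$-paths from $a$ to $u_2$ (forming a second $C_4$, vertex-disjoint from the first except at $a$), and finally the bone $e=u_1u_2$. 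The bone is incident to no $4$-cycle and to no triangle; the only short cycles are the two $C_4$'s through $a$.

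With the correct structure the paper's argument is short and different in spirit from yours. For the $3$-\gel, set $\lambda(e)=2$ and label each of the four $2$-paths from $a$ so that one $(a,u_i)$-path reads $1,3$ and the other reads $3,1$; then every increasing $2$-path out of $a$ ends with label $3$ and every decreasing one ends with label $1$, so neither can be extended across $e$ (which has label $2$), and each $C_4$ is good by the alternating pattern. For the bound on $\lambda(e)$, fix any $c$-\gel $\lambda$. In each $C_4$ the two $(a,u_i)$-$2$-paths cannot both be non-decreasing (that would give two increasing $(a,u_i)$-paths), so exactly one is increasing and one is decreasing from $a$ to $u_i$. Hence there is an increasing $(a,u_1)$-path and an increasing $(a,u_2)$-path. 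If $\lambda(e)=c$, append $e$ to the increasing $(a,u_2)$-path to obtain a second increasing $(a,u_1)$-path, a contradiction; symmetrically $\lambda(e)\neq 1$.

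Your ``propagate label $1$ around the gadget until some cycle has a single local minimum'' strategy has no foothold here: the bone is not on any short cycle, so assuming $\lambda(e)=1$ forces nothing on the adjacent edges, and the case analysis you outline never gets started. The missing idea is to work from the central vertex $a$, not from $e$.
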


\begin{proof}
Let $a$ be the vertex of $X$ of degree 4 and let $u_1$ and $u_2$ be the endpoints of $e$.

We first construct a 3-\gel $\lambda$ of $X$ as follows:
let $\lambda(e)=2$ and, for each $i\in[2]$, give labels $1,3$ to a path from $a$ to $u_i$ and labels $3,1$ to the other, in this order. Note that $\lambda$ is in fact a 3-\gel of $X$, since any increasing 2-path leaving from $a$ ends with label $3$ and any decreasing 2-path leaving from $a$ ends with label $1$, therefore they cannot be extended using $e$.

Now, we prove that the label of $e$ is in $[2, c-1]$ for every $c$-\gel of $X$ with $c \geq 3$.
Let $\lambda$ be a $c$-\gel of $X$.
Note that, for $i\in[2]$, there are two 2-paths from $a$ to $u_i$, and by \autoref{obs:C4} one of the 2-paths from $a$ to $u_i$ is increasing and the other is decreasing. Hence, there is one increasing path from $a$ to $u_1$ and one increasing path from $a$ to $u_2$.

We now show that we must have $\lambda(e)<c$: indeed, if $\lambda(e)=c$, the increasing 2-path from $a$ to $u_2$ can be extended to an increasing 3-path from $a$ to $u_1$ using $e$,  creating a second increasing path from $a$ to $u_1$ and contradicting the fact that $\lambda$ is a $c$-\gel.
Similarly, since there is one decreasing path from $a$ to $u_1$ and one decreasing path from $a$ to $u_2$, we conclude that $\lambda(e)>1$.

\end{proof}

\subparagraph{Color gadget.}
We now define the second gadget. Given $c\in\mathbb{N}$, the \emph{$c$-color gadget}, denoted by $D_c$, is defined as follows (see \autoref{fig:color_gadget} for an illustration); its vertex set is  $\{v,v_1,\dots,v_c\}\cup\{v_{i,j}\mid i,j\in[1,c], i< j\}$ and its edge set is $\{vv_i\mid i\in[1,c]\}\cup\{v_iv_{i,j}\mid i,j\in[1,c],i< j\}\cup\{v_jv_{i,j}\mid i,j\in[1,c],i< j\}$.
Note that $D_c$ has $c(c+1)/2+1$ vertices and has maximum degree $c$. 

\begin{figure}[h]
\centering
\includegraphics{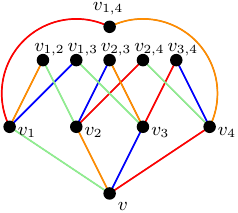}
\caption{The $c$-color gadget $D_c$ for $c=4$ with a $c$-\gel where each color represents a label.
}\label{fig:color_gadget}
\end{figure}

\begin{lemma}\label{prop:color_gadget_gel}
    For every $c\in\mathbb{N}_{\ge 1}$, $D_c$ admits a $c$-\gel.
\end{lemma}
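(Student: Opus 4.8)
The statement to prove is that the $c$-color gadget $D_c$ admits a $c$-\gel. The natural strategy is to exhibit an explicit edge-labeling $\lambda : E(D_c) \to [c]$ and verify it is good, using \autoref{obs:local_minima}: it suffices to check that every cycle of $D_c$ has at least two local minima (equivalently, at least two local maxima). So the first step is to understand the cycle structure of $D_c$. Observe that $D_c$ is bipartite with parts $\{v\} \cup \{v_{i,j} \mid i<j\}$ on one side and $\{v_1,\dots,v_c\}$ on the other; in fact $D_c$ is essentially a subdivision-like structure where each pair $(v_i,v_j)$ is joined by the length-2 path $v_i v_{i,j} v_j$ through a private degree-2 vertex, and additionally each $v_i$ hangs off the apex $v$. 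Hence every cycle of $D_c$ decomposes into segments, each segment being either a length-2 path $v_i v_{i,j} v_j$ between two ``hubs'' $v_i, v_j$, or a length-2 path $v_i v v_j$ through the apex. In particular the vertices $v_{i,j}$ and $v$ all have degree $2$ on any cycle through them (well, $v_{i,j}$ has degree $2$ in $D_c$), so a cycle is determined by the cyclic sequence of hubs it visits together with which connector it uses at the apex.

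The second step is to pick the labeling. The idea is to label $vv_i$ with $i$ for each $i \in [c]$, and to label the two edges $v_i v_{i,j}$ and $v_{i,j} v_j$ of each connector path (with $i<j$) by the values $i$ and $j$ — specifically $\lambda(v_i v_{i,j}) = j$ and $\lambda(v_{i,j} v_j) = i$, i.e. the ``far'' endpoint's index on each edge. (This matches the pattern suggested in \autoref{fig:color_gadget}.) Now one checks goodness via local minima. Consider any cycle $C$. At the unique degree-2 vertex $v_{i,j}$ of a connector used by $C$, the two incident labels are $i$ and $j$ with $i\neq j$, so $v_{i,j}$ is neither a local min nor a local max by itself, but it forces a strict change. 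At a hub $v_i$ visited by $C$: the two edges of $C$ at $v_i$ are either of the form $v_i v_{i,j}$ (label $j > i$) and $v_i v_{i,k}$ (label $k$), or $v_i v$ (label $i$) and a connector edge, etc. The key observation is that in every connector $v_i v_{i,j} v_j$ the smaller of the two labels, namely $i$, sits on the edge incident to $v_j$, and the larger, $j$, on the edge incident to $v_i$; so traversing the connector from $v_i$ to $v_j$ the label goes from $j$ down to $i$ — it is a ``local min of value $i$ at the $v_j$-side and local max of value $j$ at the $v_i$-side'' relative to whatever comes next. By a short case analysis on how the two connectors meeting at a hub $v_i$ compare their labels to $i$, one shows that every maximal constant subpath which is a local minimum of $C$ corresponds to passing ``through'' a hub $v_\ell$ where $C$ enters and leaves via connectors $v_\ell v_{\ell,j}$, $v_\ell v_{\ell,k}$, and that $C$ must contain at least two such events — intuitively because the hub of globally smallest index appearing on $C$ is always a local minimum, and one can locate a second one by a parity / counting argument (a cycle alternates between rising and falling, and with all interior connector vertices forcing strict changes there must be at least two descents to local minima).

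The third step — and I expect this to be the main obstacle — is carrying out that case analysis cleanly, because a cycle in $D_c$ can be fairly intricate: it may or may not use the apex $v$, it may use several connectors, and one has to be careful that a local minimum is a \emph{maximal} constant subpath (here all edges have distinct ``roles'' so constant subpaths have length $1$, which simplifies matters), and that the two required local minima are genuinely distinct subpaths. A clean way to organize it: (i) handle cycles avoiding $v$, which live in the ``connector graph'' on hubs $\{v_1,\dots,v_c\}$ where each hub-edge $v_iv_j$ is realized by the two labels $\{i,j\}$ placed so the minimum is at the $\max$-index endpoint — here one argues that if $v_{i_0}$ has the minimum hub-index on $C$, both connectors at $v_{i_0}$ carry labels $> i_0$ on their $v_{i_0}$-incident edges... wait, that makes $v_{i_0}$ a local *maximum* actually; the correct read is that the edge of a connector incident to the \emph{larger}-index hub has the \emph{smaller} label, so the minimum-index hub $v_{i_0}$ receives label $i_0$ on \emph{both} its connector edges (since $i_0 < $ the other index in each), making a length-$\geq 1$ local minimum at $v_{i_0}$ — and then a second local minimum is forced by the cycle alternation; (ii) handle cycles through $v$: the two edges $vv_a, vv_b$ at $v$ have labels $a,b$, and $v$ together with the connectors behaves the same way, with the second-smallest-index hub (or $v$ itself if its labels are small) providing the second local minimum. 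I would write (i) carefully and then note (ii) follows by treating $v$ as an extra ``hub'' whose two edges on $C$ have its neighbors' indices as labels, which is exactly the connector pattern, so the same argument applies verbatim. The conclusion is that $\lambda$ is a $c$-\gel of $D_c$, proving the lemma.
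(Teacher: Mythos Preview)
Your approach is genuinely different from the paper's and is salvageable, but the verification step contains a concrete error.

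\textbf{Comparison with the paper.} The paper does not verify an explicit labeling of $D_c$ at all. Instead, it observes that $D_c$ is an induced subgraph of the hypercube $H_c$ (identifying $v$ with $0^c$, $v_i$ with the unit vector $e_i$, and $v_{i,j}$ with $e_i+e_j$), proves by induction that $H_c$ has a $c$-\gel (the matching cut $E_c$ along one coordinate splits $H_c$ into two copies of $H_{c-1}$, so \autoref{lem:matching_cut} applies), and then invokes \autoref{obs:good-subgraph}. This is shorter and yields the stronger statement about $H_c$. Your route---write down the labeling and check cycles directly---is more hands-on; interestingly, the labeling you chose is exactly the restriction of the hypercube ``direction'' labeling, so the two proofs are secretly about the same $\lambda$.

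\textbf{The error.} In step (i) you write that ``the minimum-index hub $v_{i_0}$ receives label $i_0$ on both its connector edges''. This is backwards. With your labeling, the edge of the connector between $v_p$ and $v_q$ that is incident to $v_p$ carries label $q$, not $p$. Hence at the minimum-index hub $v_{i_0}$, the two incident cycle-edges carry the labels of the \emph{neighboring} hubs, both strictly greater than $i_0$. The two edges that \emph{do} carry label $i_0$ sit at the far ends of those two connectors, incident to the neighboring hubs, not to $v_{i_0}$. So your ``local minimum at $v_{i_0}$'' does not exist as stated; the local minima live one edge further out. The same confusion propagates to case (ii): the edge $vv_a$ has label $a$, which is $v_a$'s own index, so $v$ does not behave quite like a hub under your description, and the ``same argument verbatim'' claim needs reworking.

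\textbf{A clean fix.} Once you notice your labeling is the hypercube direction labeling, the verification becomes a two-line parity argument that handles all cycles uniformly (through $v$ or not): (1) no two adjacent edges of $D_c$ share a label (check at $v$, at each $v_i$, and at each $v_{i,j}$ separately---all trivial), and (2) in any cycle of $D_c\subseteq H_c$, each label $\ell$ appears an even number of times, since a closed walk in $H_c$ must flip coordinate $\ell$ evenly often. Taking $\ell$ to be the smallest label on the cycle, (2) gives at least two edges labeled $\ell$, and (1) ensures each is its own single-edge local minimum. Done, by \autoref{obs:local_minima}. This replaces your case analysis entirely.
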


\begin{proof}
The \emph{hypercube} of dimension $c$, denoted by $H_c$, is the graph with vertex set $[0,1]^c$ and such that two vertices are adjacent if and only if they differ on exactly one coordinate.
Remark also that $D_c$ is an induced subgraph of $H_c$ where $v=[0]^c$ and $v_i$ (resp. $v_{i,j}$) is the vertex of $H_c$ with exactly one (resp. two) one(s) at coordinate $i$ (resp. $i$ and $j$).

Recall that if $H_c$ admits a $c$-\gel, then so does $D_c$ by \autoref{obs:good-subgraph}. Thus, it suffices to prove that $H_c$ admits a $c$-gel, which we do by induction.
$H_1$ is an edge, which trivially admits a 1-\gel.
Suppose that $H_{c-1}$ admits a $(c-1)$-\gel.
Observe that the set $E_c$ of edges of $H_c$ whose endpoints differ only by their last coordinate (or any other set of all edges in the same direction) form a matching cut of $H_c$ such that $H_c-E_c$ is the disjoint union of two copies of $H_{c-1}$.
Therefore, by induction, \autoref{lem:matching_cut} and \autoref{obs:connectivity}, $H_c$ admits a $c$-\gel.
%
\end{proof}

We now prove that $D_c$ admits no $(c-1)$-\gel, given any $c$.

\begin{lemma}\label{prop:color_gadget_no_GEL}
    For every $c\in\mathbb{N}_{\ge 2}$, $D_c$ does not admit a $(c-1)$-\gel.
    More generally, there are no $i\ne j\in[1,c]$ such that $\lambda(vv_i)=\lambda(vv_j)$ for any \gel $\lambda$ of $D_c.$
\end{lemma}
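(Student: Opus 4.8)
The plan is to establish the stronger ``pointwise'' assertion first---that no \gel $\lambda$ of $D_c$ can have $\lambda(vv_i)=\lambda(vv_j)$ for two distinct indices $i,j$---and then to obtain the non-existence of a $(c-1)$-\gel as an immediate pigeonhole consequence. So first I would fix an arbitrary \gel $\lambda$ of $D_c$ and suppose, towards a contradiction, that $\lambda(vv_i)=\lambda(vv_j)=:a$ for some $i\neq j$; after relabeling we may assume $i<j$, so that the vertex $v_{i,j}$ exists and is adjacent to both $v_i$ and $v_j$.

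The crux is the $4$-cycle $C$ on the vertices $v, v_i, v_{i,j}, v_j$, whose edges in cyclic order are $vv_i$, $v_iv_{i,j}$, $v_{i,j}v_j$, $v_jv$. Write $b:=\lambda(v_iv_{i,j})$ and $d:=\lambda(v_{i,j}v_j)$. Then $C$ contains two internally disjoint $(v_i,v_j)$-paths, both of length two: the path $Q_1=v_i\,v\,v_j$, whose two edge labels are both equal to $a$, and the path $Q_2=v_i\,v_{i,j}\,v_j$, whose edge labels, read from $v_i$ to $v_j$, are $(b,d)$. Since all labels of $Q_1$ are equal, $Q_1$ is an increasing path in \emph{both} directions. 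Since $Q_2$ has only two edges, it is increasing in at least one direction: from $v_i$ to $v_j$ if $b\leq d$, and from $v_j$ to $v_i$ if $d\leq b$ (and in both directions if $b=d$). In whichever direction $Q_2$ is increasing, $Q_1$ is increasing as well, so we obtain two distinct increasing paths between the same ordered pair of vertices, contradicting the definition of a \gel. (Alternatively, one could phrase this via \autoref{obs:local_minima}, observing that $C$ then admits at most one local minimum and at most one local maximum.) This proves the ``more generally'' part of the statement.

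For the first claim, suppose $c\geq 2$ and that $D_c$ admits a $(c-1)$-\gel $\lambda$; by definition $\lambda$ takes at most $c-1$ distinct values. The $c$ edges $vv_1,\dots,vv_c$ incident to $v$ are then colored with at most $c-1$ labels, so by the pigeonhole principle two of them, say $vv_i$ and $vv_j$ with $i\neq j$, receive the same label---contradicting the assertion just proved. Hence $D_c$ has no $(c-1)$-\gel. I do not anticipate a genuine obstacle here, as the argument is short; the only points needing a little care are checking that $Q_1$ and $Q_2$ are honestly distinct paths (they are, being internally disjoint paths of length two) and handling the boundary cases $b=d$, $b=a$, or $d=a$ uniformly, which the ``increasing in at least one direction'' phrasing does automatically.
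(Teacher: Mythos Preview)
Your proof is correct and follows essentially the same approach as the paper's: both argue that if two edges $vv_i,vv_j$ share a label then the path $v_i\,v\,v_j$ is increasing in either direction, while the $2$-path $v_i\,v_{i,j}\,v_j$ is increasing in at least one direction, yielding two distinct increasing paths; the $(c-1)$-\gel claim then follows by pigeonhole. The only cosmetic difference is that you make the ``more generally'' part explicit first and derive the main statement from it, whereas the paper presents them together.
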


\begin{proof}
Let $\lambda$ be an edge-labeling of $D_c$ on $c-1$ colors.
Given that $v$ has degree $c$, there are $i\ne j\in[1,c]$ such that $\lambda(vv_i)=\lambda(vv_j)$.
By symmetry, we can assume that the 2-path $v_iv_{i,j}v_j$ is an increasing path.
But then there two increasing paths from $v_i$ to $v_j$.
So $\lambda$ is not a \gel.
\end{proof}

By combining the two previous results,
we create a new gadget that forces an edge to have the greatest label in the labeling.

\subparagraph{Forced gadget.}
Let the \emph{$c$-forced gadget}, denoted by $F_c$ be the graph constructed as follows: take a $(c-1)$-color gadget $D_{c-1}$ and, for each $i\in[2,c-1]$, identify the edge $vv_i$ with the bone of an extremal gadget $X_i$.
We call the edge $vv_1$ the \emph{bone} of $F_c$.
Note that $F_c$ has maximum degree $3c-5$. Hence, from \autoref{prop:extremal_gadget} and \autoref{prop:color_gadget_gel}, we deduce the following result.

\begin{lemma}\label{prop:forced_gadget}
    For any $c\ge 3$, $F_c$ admits a $c$-\gel, and for any $c$-\gel of $F_c$, the bone of $F_c$ has label in $\{1,c\}$.
\end{lemma}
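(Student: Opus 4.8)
The plan is to combine the two structural lemmas we have just proved about the building blocks of $F_c$. Recall that $F_c$ is obtained from a $(c-1)$-color gadget $D_{c-1}$ by identifying, for each $i \in [2,c-1]$, the edge $vv_i$ with the bone of an extremal gadget $X_i$; the distinguished edge $vv_1$ is the bone of $F_c$. So $F_c$ is glued together along the edges $vv_2,\ldots,vv_{c-1}$, and each such edge is a $2$-separator whose intersection induces a single edge. This is exactly the situation handled by \autoref{lem:2_separation}.

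For the existence of a $c$-\gel, I would first take a $c$-\gel $\lambda_0$ of $D_{c-1}$, which exists by \autoref{prop:color_gadget_gel} (note $D_{c-1}$ admits even a $(c-1)$-\gel, hence a $c$-\gel); moreover, by \autoref{prop:color_gadget_no_GEL} the labels $\lambda_0(vv_1),\ldots,\lambda_0(vv_{c-1})$ are pairwise distinct, so in particular each $\lambda_0(vv_i)$ for $i\in[2,c-1]$ lies in $[2,c-1]$ after relabeling the color classes appropriately — more precisely, I would relabel the colors of $\lambda_0$ (a \gel is preserved under any injection of the label set into $\Real$, in particular under a permutation of $[c]$) so that $\lambda_0(vv_1) \in \{1,c\}$ and $\lambda_0(vv_i) \in [2,c-1]$ for all $i \in [2,c-1]$; this is possible since these $c-1$ values are distinct. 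Then for each $i \in [2,c-1]$, by \autoref{prop:extremal_gadget} the extremal gadget $X_i$ admits a $c$-\gel in which its bone receives any prescribed label in $[2,c-1]$ — here I should double-check that the $3$-\gel construction in the proof of \autoref{prop:extremal_gadget} extends to give, for any target label $t \in [2,c-1]$, a $c$-\gel with $\lambda(e)=t$; indeed, setting $\lambda(e)=t$ and giving the four path-edges labels avoiding $t$ appropriately (say $t-1$ and $t+1$, or $1$ and $c$) makes every increasing or decreasing $2$-path leaving the degree-$4$ vertex non-extendable through $e$, exactly as in the written proof. Gluing these $c$-\gels of the $X_i$'s onto $\lambda_0$ along the shared edges and applying \autoref{lem:2_separation} iteratively (once per gluing) yields a $c$-\gel of $F_c$.

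For the second part, let $\lambda$ be an arbitrary $c$-\gel of $F_c$. By \autoref{obs:good-subgraph}, $\lambda$ restricted to each subgraph is a $c$-\gel; in particular $\lambda$ restricted to $D_{c-1}$ is a $c$-\gel of $D_{c-1}$, so by \autoref{prop:color_gadget_no_GEL} the values $\lambda(vv_1),\ldots,\lambda(vv_{c-1})$ are pairwise distinct. Simultaneously, for each $i \in [2,c-1]$, $\lambda$ restricted to the extremal gadget $X_i$ is a $c$-\gel of $X_i$ whose bone is $vv_i$, so \autoref{prop:extremal_gadget} gives $\lambda(vv_i) \in [2,c-1]$. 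Thus $\lambda(vv_2),\ldots,\lambda(vv_{c-1})$ are $c-2$ pairwise distinct values all lying in the $(c-2)$-element set $[2,c-1]$, hence they occupy all of $[2,c-1]$. Since $\lambda(vv_1)$ must be distinct from all of them, $\lambda(vv_1) \notin [2,c-1]$, and as $\lambda$ uses labels in $[c]$ (by \autoref{obs:restrict_label_gel}) we conclude $\lambda(vv_1) \in \{1,c\}$, as claimed.

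I expect the main obstacle to be the bookkeeping in the forward direction, namely verifying that the extremal gadget's \gel can realize \emph{every} intermediate label value $t \in [2,c-1]$ on its bone (not just the value $2$ exhibited in the proof of \autoref{prop:extremal_gadget}) and that these choices can be made consistently with a single fixed coloring of $D_{c-1}$; this amounts to a careful but routine adaptation of the construction in \autoref{prop:extremal_gadget}, plus invoking \autoref{lem:2_separation} along each of the $c-2$ separators of order two. The backward direction is a clean counting argument once \autoref{prop:color_gadget_no_GEL} and \autoref{prop:extremal_gadget} are in hand. The bound on the maximum degree is already asserted in the statement of the lemma (each $v_i$ with $i\in[2,c-1]$ has its original degree $c-1$ in $D_{c-1}$ augmented by the neighbors inside $X_i$, and the degree-$4$ vertex of an $X_i$ is internal to that gadget), so no extra work is needed there.
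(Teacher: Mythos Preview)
Your backward direction is correct and matches the paper's proof exactly: restrict $\lambda$ to $D_{c-1}$ to get pairwise distinctness of the $\lambda(vv_i)$ via \autoref{prop:color_gadget_no_GEL}, restrict to each $X_i$ to pin $\lambda(vv_i)\in[2,c-1]$ for $i\ge 2$ via \autoref{prop:extremal_gadget}, and count.

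The forward direction follows the same plan as the paper (fix a good labeling of $D_{c-1}$, adapt each extremal gadget to match its bone label, glue via \autoref{lem:2_separation}), but contains one genuine slip. You assert that ``a \gel is preserved under any injection of the label set into $\Real$, in particular under a permutation of $[c]$'', and use this to force $\lambda_0(vv_1)\in\{1,c\}$ while the remaining $\lambda_0(vv_i)$ land in $[2,c-1]$. This is false: goodness depends on the \emph{order} of the labels, so it is preserved under order-preserving maps but not under arbitrary permutations. If in your $(c-1)$-\gel of $D_{c-1}$ the edge $vv_1$ happens to receive, say, the median label, no order-preserving relabeling will push it to an extreme value.

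The fix is what the paper does: exploit the symmetry of the \emph{graph} $D_{c-1}$ rather than of the labels. Any permutation of $\{v_1,\ldots,v_{c-1}\}$ extends to an automorphism of $D_{c-1}$ (permute the $v_{i,j}$ accordingly), so after composing the $(c-1)$-\gel with a suitable automorphism and then relabeling order-preservingly into $[c-1]$, one may assume $\lambda_0(vv_i)=i$ for all $i\in[c-1]$. Equivalently, since the choice of which $v_i$ is ``$v_1$'' in the construction of $F_c$ is free, pick as bone the edge with extremal label. With this repair your argument goes through; your shift of the $3$-\gel of $X_i$ to labels $\{t-1,t,t+1\}\subseteq[c]$ for $t=\lambda_0(vv_i)$ is exactly the paper's ``colors shifted from $[1,3]$ to $[i-1,i+1]$''.
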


\begin{proof}
Let us first construct a $c$-\gel of $F_c$.
By \autoref{prop:color_gadget_gel}, $D_{c-1}$ has a $(c-1)$-\gel $\lambda$.
By \autoref{prop:color_gadget_no_GEL} and by symmetry, for each $i\in[2,c-1]$, we may assume that $\lambda(vv_i)=i$.
Additionally, for each extremal gadget $X_i$, define the same labeling as the one defined in the proof of \autoref{prop:extremal_gadget}, but with colors shifted from $[1,3]$ to $[i-1,i+1]$, so that the bone of $X_i$ (which is identified with $vv_i$) has label $i$.
Then, by \autoref{lem:2_separation}, the defined edge-labeling of $F_c$ is a \gel, and since it uses colors in $[c]$, it is a $c$-\gel.
Note that in this $c$-\gel, $vv_1$ has label $1$, but by symmetry, there is a $c$-\gel such that $vv_1$ has label $c$.

Let $\lambda$ be a $c$-\gel of $F_c$.
Let us prove that $vv_1$ has either label $1$ or label $c$.
Given the extremal gadget on each $vv_i$ for $i\in[2,c-1]$, by \autoref{prop:extremal_gadget}, we conclude that $\lambda(vv_i)\in[2,c-1]$.
Moreover, by \autoref{prop:color_gadget_no_GEL}, we know that each $vv_i$ has a different label.
Hence, all the labels in $[2,c-1]$ are taken by the edges $vv_i$ for $i\in[2,c-1]$, and $vv_1$ must have one of the leftover labels, hence in $\{1,c\}$.
\end{proof}

Finally, we make a reduction from {\sc 2-GEL} to {\sc $c$-GEL} using the $c$-forced gadgets. Note that, in contrast to \autoref{prop:NP_2_GEL}, we do not assume the input graph to be bipartite anymore, since the $c$-forced gadgets are not bipartite.



\begin{theorem}\label{thm:c-GEL-NPh}
    For any $c\ge 2$, the {\sc $c$-GEL} problem is \NP-hard even on instances that admit a $(c+1)$-\gel and with maximum degree at most $10(3c-5)$.
\end{theorem}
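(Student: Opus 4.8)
The plan is to reduce from {\sc 2-GEL}, which is \NP-hard by \autoref{prop:NP_2_GEL}, using the $c$-forced gadget $F_c$ as the key ingredient. Given an instance $G$ of {\sc 2-GEL} (which, by \autoref{prop:NP_2_GEL}, we may assume to be bipartite, to admit a 3-\gel, and to have maximum degree at most ten), I would construct a graph $G'$ as follows: first relabel the two ``colors'' $1,2$ available in $G$ as $1$ and $c$, which is only a cosmetic change; then, for every edge $e=uv$ of $G$, attach a copy of the $c$-forced gadget $F_c$ by identifying its bone $vv_1$ with $e$. By \autoref{prop:forced_gadget}, each such copy contributes maximum degree $3c-5$ at the endpoints of $e$, so a vertex of $G$ of degree at most ten receives at most $10(3c-5)$ extra incident edges from the attached gadgets, and the resulting maximum degree of $G'$ is at most $10(3c-5)$, as required.

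The correctness argument proceeds in two directions. For the forward direction, suppose $G$ admits a 2-\gel; I would use its values in $\{1,c\}$ on the edges of $G$, and then extend it over each $F_c$-copy by the $c$-\gel of $F_c$ guaranteed by \autoref{prop:forced_gadget} whose bone gets whichever of the labels $1$ or $c$ was assigned to the corresponding edge of $G$ (both options are available by the symmetry noted in the proof of \autoref{prop:forced_gadget}). Since each edge of $G$ is precisely the intersection $G[A\cap B]$ of the natural separation of $G'$ of order two into ``the rest of $G'$'' versus ``one $F_c$-copy,'' repeated application of \autoref{lem:2_separation} over all edges of $G$ glues these partial \gels into a $c$-\gel of $G'$. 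For the reverse direction, suppose $G'$ admits a $c$-\gel $\lambda$; by \autoref{prop:forced_gadget}, the restriction of $\lambda$ to the bone of each $F_c$-copy, i.e.\ to each edge of $G$, lies in $\{1,c\}$. Hence $\lambda$ restricted to $E(G)$ is a 2-edge-labeling, and it is good: any cycle of $G$ is also a cycle of $G'$ (the gadgets $F_c$ attach in a pendant-like fashion along edges, so they add no new cycles through two distinct edges of $G$), so by \autoref{obs:local_minima} this restriction is a 2-\gel of $G$.

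Finally, for the ``admits a $(c+1)$-\gel'' clause: by hypothesis $G$ admits a 3-\gel, using labels $\{1,2,3\}$ say. I would shift these to $\{1,2,c+1\}$ (a cosmetic relabeling, still a 3-\gel of $G$) and then extend over each $F_c$-copy. Here one has to be slightly careful, since $F_c$ itself is only guaranteed to be $c$-good; but each $F_c$-copy is attached along the separation of order two whose cut edge $G[A\cap B]$ is an edge of $G$, so as long as that edge carries a label that is ``extremal enough'' on the gadget side, \autoref{lem:2_separation} applies. Concretely: on each edge $e$ of $G$ used with label $\ell\in\{1,2,c+1\}$, I would embed on the attached $F_c$ a suitable affine copy of the $c$-\gel of \autoref{prop:forced_gadget} so that the bone carries label $\ell$ and the gadget uses labels in $[c+1]$; when $\ell=c+1$ the gadget's labels lie in the top window and when $\ell=1$ they lie in the bottom, and in the one intermediate case $\ell=2$ a direct check (mirroring the proof of \autoref{prop:forced_gadget}) shows a compatible $(c+1)$-\gel of $F_c$ with the bone labeled $2$. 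Gluing over all edges via \autoref{lem:2_separation} yields a $(c+1)$-\gel of $G'$.

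The main obstacle I anticipate is the last point: verifying that the attached gadgets, which are only $c$-good in isolation, can be simultaneously extended to a global $(c+1)$-\gel when the edges of $G$ take three distinct labels. The clean way around it is exactly the one sketched above: keep the gadget attachment \emph{local} via order-two separations so that \autoref{lem:2_separation} decouples the verification into one gadget at a time, and then observe that inside a single $F_c$ one has enough freedom — by rescaling the labels used in the proof of \autoref{prop:forced_gadget} and using the symmetry $i \mapsto (c+1)-i$ — to realize the bone label prescribed by the 3-\gel of $G$ while staying within $[c+1]$ labels overall. Everything else is bookkeeping on degrees and a routine invocation of \autoref{obs:local_minima} for the soundness direction.
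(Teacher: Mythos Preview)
Your proposal is correct and follows the same reduction as the paper: attach a copy of $F_c$ along every edge of a {\sc 2-GEL} instance, use \autoref{prop:forced_gadget} for both directions of the equivalence, and glue via \autoref{lem:2_separation}. The only cosmetic difference is that the paper takes the 3-\gel of $G$ with labels in $\{1,c,c+1\}$ rather than your $\{1,2,c+1\}$, which makes the $(c{+}1)$-\gel claim a one-liner (only the bone label $c{+}1$ needs a translation of the gadget's $c$-\gel); your choice also works, and in fact for bone label $2$ the $+1$ shift of the $c$-\gel with bone label $1$ already lands in $[2,c{+}1]$, so no separate ``direct check'' is needed.
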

\begin{proof}
We reduce from {\sc 2-GEL} as follows.
Let $G$ be an instance of {\sc 2-GEL}.
By \autoref{prop:NP_2_GEL}, $G$ can be assumed to admit a $3$-\gel and to have maximum degree at most ten.
Let $G'$ be the graph obtained from $G$ by identifying every edge $e$ of $G$ with the bone of a $c$-forced gadget $F_c^e$. We now split the proof into three claims.

\begin{claim}
    If $G$ admits a 2-\gel, then $G'$ admits a $c$-\gel.
\end{claim}
\begin{proof}
Let $\lambda$ be a 2-\gel of $G$, which we can clearly assume to take values in $\{1,c\}$.
By \autoref{prop:forced_gadget}, each $F_c^e$ admits a $c$-\gel, and, by symmetry, we can choose the label of its bone to be any $\lambda(e)\in\{1,c\}$.
Finally, by \autoref{lem:2_separation}, this is a $c$-\gel of $G'$.
\end{proof}

\begin{claim}
    If $G'$ admits a $c$-\gel, then $G$ admits a $2$-\gel.
\end{claim}
\begin{proof}
Let $\lambda$ be a $c$-\gel of $G'$.
By \autoref{prop:forced_gadget}, every edge of $G$, that is a bone of a $c$-forced gadget, has a label in $\{1,c\}$.
Then the restriction of $\lambda$ to $G$ immediately gives a $2$-\gel.
\end{proof}

\begin{claim}
    $G'$ admits a $(c+1)$-\gel.
\end{claim}
\begin{proof}
Let $\lambda$ be a 3-\gel of $G$ with labels in 
$\{1,c,c+1\}$.
By \autoref{prop:forced_gadget}, each $F_c^e$ admits a $c$-\gel, and, by symmetry and translation, we can choose the label of its bone to be any $\lambda(e)\in\{1,c,c+1\}$.
Finally, by \autoref{lem:2_separation}, the combination of these labelings gives a $(c+1)$-\gel of $G'$.
\end{proof}

In the first two claims, we prove that $G$ admits a 2-\gel if and only if $G'$ admits a $c$-\gel, while in the third claim we prove that $G'$ also admits a $(c+1)$-gel.
Finally, note that, since $G$ has maximum degree ten and each edge of $G$ has been identified with the bone of a $c$-forced gadget, the maximum degree of $G'$ is indeed $10(3c-5)$.
\end{proof}

\section{Simple reduction rules and polynomial kernels for \GEL and $c$-\GEL}
\label{sec:kernels}

In this section we provide simple reduction rules and polynomial kernels for {\sc GEL} and {\sc $c$-GEL} parameterized by neighborhood diversity and vertex cover. These reduction rules will also be used by the \FPT algorithm presented in \autoref{sec:FPT-stars}. Our kernels are based on exhaustive application of the following three simple reduction rules to the input graph $G$, which can be applied in polynomial time. We would like to stress that these three reduction rules are safe for both {\sc GEL} and {\sc $c$-GEL} for any $c \geq 2$.

\begin{redrule}\label{rule:1}
    If $G$ contains $K_3$ or $K_{2,3}$ as a subgraph, report a \no-instance.
\end{redrule}


The safeness of \autoref{rule:1} is justified by \autoref{obs:good-subgraph} and the fact that $K_3$ or $K_{2,3}$ are bad graphs~\cite{BermondCP13}.

\begin{redrule}\label{rule:2}
   If $G$ is disconnected, consider each connected component separately.
\end{redrule}

The safeness of \autoref{rule:2} is justified by \autoref{obs:connectivity}. Note that, when applying this rule for obtaining a kernel, we should be careful about how the parameter is split among the different connected components. Indeed, if we just kept the same parameter for each component and kernelize each of them separately, the total size of the kernelized graph may depend on the number of components. Fortunately, the two parameters considered in this section, namely $\nd$ and $\vc$, satisfy that if a graph $G$ has connected components $C_1,\ldots,C_x$, then ${\sf p}(G)= \sum_{i \in x}{\sf p}(G_i)$, for ${\sf p}$ being $\nd$ or $\vc$. The fact that we can split the parameter appropriately among the connected components in polynomial time follows 
from the fact that the neighborhood diversity can be computed in polynomial time~\cite{Lampis12}
and the hypothesis that a vertex cover is given in \autoref{lem:kernel_vc}.

\begin{redrule}\label{rule:3}
  Let $v$ be a cut-vertex in $G$ and let $C$ be the vertex set of a connected component of $G \setminus v$. If $G[C \cup \{v\}]$ is good (or $c$-good if we deal with $c$-\GEL), delete $C$ from $G$.
\end{redrule}

Observe that \autoref{rule:3} allows, in particular, to remove vertices of degree one. The safeness of this rule is justified by \autoref{lem:cut_vertex}. Note that the application of \autoref{rule:3} can be done in polynomial time as far as we can decide whether $G[C \cup \{v\}]$ is good in polynomial time. In all the applications of \autoref{rule:3} discussed below this will indeed be the case (usually, because $C$ is of bounded size), and we will only need to apply \autoref{rule:3} to the considered configurations to obtain the claimed kernels.

We are now ready to present our polynomial kernels.

\begin{lemma}\label{lem:kernel_nd}
The {\sc GEL} and $c$-\GEL problems parameterized by the neighborhood diversity of the input graph admit a kernel of size at most $2k$.
\end{lemma}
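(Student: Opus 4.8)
The plan is to use the structure imposed by neighborhood diversity to bound the number of vertices, combining the reduction rules with the twin structure. Recall that a graph with $\nd(G) = k$ partitions $V(G)$ into at most $k$ types, each of which is a clique or an independent set, and the adjacency between two types is either complete or empty. I would first apply \autoref{rule:1} exhaustively, so we may assume $G$ contains no $K_3$ or $K_{2,3}$; in particular no type that is a clique can have more than two vertices (three mutually adjacent vertices form a $K_3$), and no two distinct types can both be adjacent to a common third type that is ``large'' without creating a $K_{2,3}$ or $K_3$. The upshot is that the $K_3$- and $K_{2,3}$-free condition forces all but a bounded-per-type number of vertices to be \emph{simplicial-like} leaves or to belong to very restricted configurations.

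The key step is then to observe that, after \autoref{rule:1}, any type $V_i$ that is an independent set and has at least, say, three vertices must have very small neighborhood: if $|V_i| \geq 3$ and $|N(V_i)| \geq 2$, then picking two neighbors $a,b \in N(V_i)$ and three vertices of $V_i$ yields a $K_{2,3}$ subgraph (using that $N(V_i)$ is the same for all vertices of $V_i$, since they have the same type). Hence such a type has $|N(V_i)| \leq 1$, so its vertices of degree one are pendant vertices attached to a single vertex $v$; the component $G[\{v\} \cup (V_i \setminus \{v\})]$ restricted around these pendants is a star, which is trivially good (and $c$-good for $c \geq 2$), so \autoref{rule:3} lets us delete all but one such pendant per cut-vertex $v$. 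A symmetric, slightly more careful argument handles types that are cliques (bounded by size $2$ after \autoref{rule:1}) and the interaction between a small type and a large one. Putting these together, after exhaustive application of \autoref{rule:1}--\autoref{rule:3}, each of the at most $k$ types contributes at most a constant (in fact, the counting can be tuned to give at most $2$ vertices per type: at most one ``representative'' plus one extra needed so that the relevant small configurations survive), so $|V(G)| \leq 2k$, giving the claimed kernel of size at most $2k$.

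For $c$-\GEL the argument is identical, since \autoref{rule:1}--\autoref{rule:3} are safe for $c$-\GEL as well (a star is $c$-good for every $c \geq 1$, so the deletions in \autoref{rule:3} are valid), and the bound $|V(G)| \leq 2k$ follows verbatim. One has to be slightly careful about \autoref{rule:2}: when $G$ is disconnected we kernelize each component separately, and since $\nd(G) = \sum_i \nd(C_i)$ as noted just before the lemma, the sizes add up correctly to at most $2k$ overall; neighborhood diversity being computable in polynomial time \cite{Lampis12} ensures we can split the parameter among components efficiently.

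The main obstacle I anticipate is the bookkeeping to get the clean constant $2$ rather than some larger constant: one must check, for every shape of type (clique of size $1$ or $2$, large independent set, small independent set) and every possible adjacency pattern to the at most $k-1$ other types, that after the rules only one or two vertices of that type can remain. In particular, when a type $V_i$ is an independent set with $|N(V_i)| = 1$, say $N(V_i) = \{v\}$, the vertex $v$ is a cut-vertex (assuming $|V_i| \geq 2$ and $G$ has other vertices), each pendant component is a single edge, hence good, and \autoref{rule:3} removes all but one — this is the crucial case and it is where the constant $2$ really comes from. Verifying that no ``large'' independent-set type can coexist with nontrivial neighborhood without triggering \autoref{rule:1} is the part that requires the most care, but it is exactly the $K_{2,3}$-freeness that does the work.
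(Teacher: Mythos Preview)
Your approach is correct and essentially identical to the paper's: Rule~\ref{rule:1} bounds clique types to size at most two, and for any independent-set type $V_i$ with $|V_i|\ge 3$, $K_{2,3}$-freeness forces $|N(V_i)|\le 1$, whereupon Rule~\ref{rule:3} removes the pendants. The paper's presentation is slightly cleaner in that it frames this last step as a contradiction (after exhaustive application of the rules no type can have three or more vertices), which sidesteps your hedging about ``all but one pendant'' and ``one representative plus one extra'' --- in fact Rule~\ref{rule:3} deletes \emph{all} such pendants (each is its own component of $G\setminus v$), so no further bookkeeping is needed to obtain the bound $2$ per type.
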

\begin{proof}
We apply exhaustively \autoref{rule:1}, \autoref{rule:2}, and \autoref{rule:3} to the input graph $G$. We then compute in polynomial time, using the algorithm of Lampis~\cite{Lampis12}, an optimal partition of $V(G)$ into equivalence classes of types $V_1,\ldots,V_k$. We want to prove that each $V_i$ has size at most two. Note that, by \autoref{rule:1}, we may assume that all the $V_i$'s that induce cliques have size at most two. Suppose for contradiction that there exists a set $V_i$ inducing an independent set with $|V_i| \geq 3$. By \autoref{rule:2}, necessarily $V_i$ is adjacent to another set $V_j$. If $V_i$ is adjacent to another set $V_{\ell}$ with $j \neq \ell$, then $G[V_i \cup V_j \cup V_{\ell}]$ contains a $K_{2,3}$, which is impossible by \autoref{rule:1}. Thus, $V_i$ is only adjacent to $V_j$. If $|V_j| \geq 2$, then $G[V_i \cup V_j]$ contains a $K_{2,3}$, which is again impossible by \autoref{rule:1}. So necessarily $|V_j|=1$. But then $G[V_i \cup V_j]$ is a star, which is a good graph, and therefore $V_i$ should have been deleted by \autoref{rule:3}. Therefore, for any set $V_i$, it holds that $|V_i| \leq 2$, hence $|V(G)|\leq 2k$.
\end{proof}

Let us see that the analysis of the kernel size in \autoref{lem:kernel_nd} is asymptotically tight assuming that only Rules~\ref{rule:1},~\ref{rule:2}, and~\ref{rule:3} are applied.
For this, let $K_n^+$ be the graph obtained from $K_n$, the complete graph on $n$ vertices, by replacing every edge with a $C_4$. That is, for every edge $uv \in E(K_n)$, we delete it and we add two new vertices $u',v'$ and the edges $uu',u'v,vv',v'u$. It can be  verified that $|V(K_n^+)|=n^2$, that $\nd(K_n^+)=n + {n \choose 2}=: k$, and that none of the three rules can be applied to it. Simple calculations show that $|V(K_n^+)|= k\left(2 - \frac{1}{\Omega(\sqrt{k})}\right)$, which tends to $2k$ as $k$ grows. Hence, to improve the kernel size in \autoref{lem:kernel_nd}, new reduction rules would be needed.

\begin{lemma}\label{lem:kernel_vc}
The \GEL and $c$-\GEL problems parameterized by the size of a given vertex cover of the input graph admit a kernel of size at most $k^2$.
\end{lemma}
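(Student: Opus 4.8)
Let $G$ be the input graph and $S$ a given vertex cover of size $k$, so that $I := V(G) \setminus S$ is an independent set. First I would apply Rules~\ref{rule:1},~\ref{rule:2}, and~\ref{rule:3} exhaustively in polynomial time; by \autoref{obs:good-subgraph}, \autoref{obs:connectivity}, and \autoref{lem:cut_vertex} these are safe, and they do not increase $\vc$, so $S$ remains a vertex cover of the reduced graph (we may assume without loss of generality, by \autoref{rule:2}, that $G$ is connected). The goal is to bound $|I|$ by a quadratic function of $k$; since $|S| = k$, this gives $|V(G)| \le k^2$ after possibly rescaling constants.

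\textbf{Key step: bounding $|I|$.} Every vertex $u \in I$ has $N(u) \subseteq S$. If $d_G(u) = 1$, say $N(u) = \{s\}$, then $u$ is a leaf and $s$ is a cut-vertex with $G[\{u,s\}]$ a single edge, which is good; hence \autoref{rule:3} would have deleted $u$. So after reduction every $u \in I$ has $d_G(u) \ge 2$. Next, I claim that for any pair $\{s,s'\} \subseteq S$, there are at most two vertices $u \in I$ with $\{s,s'\} \subseteq N(u)$: if there were three such vertices $u_1,u_2,u_3$, then $G[\{s,s',u_1,u_2,u_3\}]$ would contain $K_{2,3}$ as a subgraph, contradicting \autoref{rule:1}. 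Now partition $I$ according to the neighborhood in $S$. Vertices with $|N(u)| \ge 2$ can be charged to an (unordered) pair of their neighbors in $S$; by the previous claim each of the $\binom{k}{2}$ pairs receives at most two vertices, so there are at most $2\binom{k}{2} = k(k-1)$ such vertices. Combined with $|S| = k$, this already yields $|V(G)| \le k + k(k-1) = k^2$, exactly the claimed bound.

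\textbf{The main obstacle.} The only subtlety is handling \autoref{rule:2} together with the parameter split, exactly as flagged in the text: after disconnecting, one must argue that $\vc$ is additive over components and that a vertex cover for each component is obtainable in polynomial time from the given one (simply restrict $S$ to each component), so that kernelizing components separately does not blow up the total size. I would dispatch this in one sentence. A second minor point is that, strictly speaking, the degree-$1$ elimination via \autoref{rule:3} must be iterated, since deleting a leaf may create new cut-vertices; but since each application of \autoref{rule:3} deletes at least one vertex, the process terminates in polynomial time, after which the degree-$\ge 2$ property holds for all of $I$ (this last fact is not even needed for the count above, but it is reassuring that no isolated configurations survive). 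Everything else is the routine double-counting displayed above, so the proof is short.

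\begin{proof}
Let $S$ be the given vertex cover of $G$ with $|S| = k$, and set $I := V(G) \setminus S$; by definition of vertex cover, $I$ is an independent set. Apply Rules~\ref{rule:1},~\ref{rule:2}, and~\ref{rule:3} exhaustively; this takes polynomial time, the rules are safe by \autoref{obs:good-subgraph}, \autoref{obs:connectivity}, and \autoref{lem:cut_vertex}, and they do not increase the size of a vertex cover, so $S$ (restricted to each resulting component) remains a vertex cover. Since both $\vc$ and $|V(\cdot)|$ are additive over connected components and restricting $S$ to a component yields a vertex cover of it in polynomial time, it suffices to bound the size of each component; so we may assume $G$ is connected with vertex cover $S$ of size $k' \le k$, and it is enough to show $|V(G)| \le (k')^2 \le k^2$.

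Consider $u \in I$. If $d_G(u) = 1$ with $N(u) = \{s\}$, then $s$ is a cut-vertex of $G$ (or $G$ is the single edge $us$, which is good and thus handled directly), and the component $\{u\}$ of $G \setminus s$ satisfies that $G[\{u,s\}]$ is a single edge, which is good; hence \autoref{rule:3} applies, a contradiction. Therefore every $u \in I$ has $d_G(u) \ge 2$, i.e.\ $|N(u)| \ge 2$ since $N(u) \subseteq S$.

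We charge each $u \in I$ to an unordered pair $\{s,s'\}$ of distinct vertices of $N(u) \subseteq S$; this is possible since $|N(u)| \ge 2$. Fix a pair $\{s,s'\} \subseteq S$. If three distinct vertices $u_1, u_2, u_3 \in I$ were all charged to $\{s,s'\}$, then $\{s,s'\} \subseteq N(u_i)$ for each $i$, so $G[\{s,s',u_1,u_2,u_3\}]$ contains $K_{2,3}$ as a subgraph, contradicting \autoref{rule:1}. Hence each of the $\binom{k'}{2}$ pairs is charged by at most two vertices of $I$, giving $|I| \le 2\binom{k'}{2} = k'(k'-1)$. Therefore $|V(G)| = |S| + |I| \le k' + k'(k'-1) = (k')^2 \le k^2$, as claimed.
\end{proof}
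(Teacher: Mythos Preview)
Your proof is correct and follows essentially the same approach as the paper: after exhaustively applying Rules~\ref{rule:1}--\ref{rule:3}, every vertex outside the cover has at least two neighbors in it, and the $K_{2,3}$-freeness from \autoref{rule:1} bounds the number of common neighbors of any pair in the cover by two, yielding $|V(G)|\le k+2\binom{k}{2}=k^2$. Your version is slightly more careful about the additivity of the parameter over components, but the core double-counting argument is identical.
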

\begin{proof}
Let $X$ be a vertex cover of the input graph $G$ of size at most $k$, and let $v$ be a vertex in $V(G) \setminus X$. We apply exhaustively \autoref{rule:1}, \autoref{rule:2}, and \autoref{rule:3}. By \autoref{rule:2}, $v$ has at least one neighbor, and all its neighbors are in $X$. If $v$ has exactly one neighbor, then $v$ should have been deleted by \autoref{rule:3}. Thus, each vertex not in $X$ has at least two neighbors in $X$. This implies that $|V(G) \setminus X|$ is upper-bounded by the sum, over all pairs of vertices $u_1,u_2 \in X$, of the number of common neighbors of $u_1$ and $u_2$ in $V(G) \setminus X$. Now consider one pair of  vertices $u_1,u_2 \in X$. If $u_1$ and $u_2$ had at least three common neighbors in $V(G) \setminus X$, $u_1$ and $u_2$ together with their common neighborhood in $V(G) \setminus X$ would contain a $K_{2,3}$, which is impossible by \autoref{rule:1}. Therefore, any two vertices $u_1,u_2 \in X$ have at most two common neighbors in $V(G) \setminus X$, which implies that $|V(G)| \leq k + 2{k \choose 2} = k^2$.
\end{proof}

Again, it is easy to see that analysis of the kernel size of \autoref{lem:kernel_vc} is tight assuming that only Rules~\ref{rule:1},~\ref{rule:2}, and~\ref{rule:3} are applied. Indeed, consider the graph $K_k^+$ defined above with a given (minimum) vertex cover consisting of the $k$ original vertices of the clique. Then none of the reduction rules can be applied to $K_k^+$, which has $k^2$ vertices.


\section{\FPT algorithm for \GEL by the size of a star-forest modulator}
\label{sec:FPT-stars}

In this section we prove the following theorem.


\begin{theorem}\label{thm:kernel_sfm}
The {\sc GEL} problem parameterized by the size $k$ of a given star-forest modulator of the $n$-vertex input graph can be solved in time $2^{\Ocal(k^4 \log k)} \cdot n^{\Ocal(1)}$.
\end{theorem}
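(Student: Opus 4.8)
The plan is to run the simple reduction rules first, then perform a careful case analysis on the structure of the stars in the forest $G\setminus X$, reducing the hard case to 2-\SAT after guessing a bounded amount of ``combinatorial'' information. More precisely, I would proceed as follows.

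\textbf{Step 1: Preprocessing.} Apply Rules~\ref{rule:1},~\ref{rule:2},~\ref{rule:3} exhaustively in polynomial time. After this, $G$ is connected, $K_3$- and $K_{2,3}$-free, and has no cut-vertex $v$ such that some component of $G\setminus v$ together with $v$ is good; in particular, by the argument in \autoref{lem:kernel_vc}, no pendant vertex survives, so every vertex outside $X$ has at least two neighbors, and any two vertices of $X$ have at most two common neighbors outside $X$. Now consider a star $T$ of $G\setminus X$ with center $c_T$ and leaves $\ell_1,\dots,\ell_t$. Since $T$ has no cut-vertex reduction available and $G$ is $K_{2,3}$-free, I would argue that each leaf $\ell_i$ has exactly one neighbor in $X$ (two neighbors in $X$ plus $c_T$ would, together with $c_T$'s neighborhood, quickly create a $K_{2,3}$ or a short increasing-path obstruction), and that the center $c_T$ has a bounded-or-structured neighborhood in $X$. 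This lets me classify the surviving stars into three types: (a) stars where $c_T$ has $\geq 2$ neighbors in $X$; (b) stars where $c_T$ has no neighbor in $X$ (so $T$ is a connected component, handled by \autoref{obs:connectivity} — it must itself be good, which stars always are); and (c) the hard type, where $c_T$ and every leaf have exactly one neighbor in $X$. Types (a) and (b) are few in number or trivially good: for type (b) we just check; for type (a), each such center is ``pinned'' to a pair/triple of $X$-vertices and a counting argument bounds either their number or their interaction, so we can either delete them or fold them into $X$, increasing $k$ by only a polynomial factor.

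\textbf{Step 2: Encode labelings by relations.} For the hard type (c) stars, the key conceptual move (cf.\ \autoref{claim:good-labeling-relation}) is to stop reasoning about real-valued labels and instead reason about the \emph{labeling relation}: the total preorder on $E(G)$ recording, for each ordered pair of edges, whether the first has a strictly smaller, strictly larger, or equal label. By \autoref{obs:injective} we may assume the \gel is injective, so this is a total order; a total order on $E(G)$ is ``realizable'' as a \gel iff it contains no bad configuration, which by \autoref{obs:local_minima} is a purely order-theoretic condition on cycles (every cycle must have two local minima or two local maxima). I would then show (cf.\ \autoref{claim:always-standard}) that we may restrict attention to \emph{standard} relations, meaning that the relative order of edges within each type-(c) star and between that star and its unique $X$-attachment is constrained to a constant number of patterns — intuitively, because the only cycles through a leaf $\ell_i$ go $\ell_i\,c_T\,\ell_j$ out to $X$ and back, or $\ell_i\,c_T$ out to two vertices of $X$, so only a bounded amount of local order information about the star's edges is relevant to whether those cycles have the required local extrema.

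\textbf{Step 3: Guess the global skeleton, then solve 2-\SAT.} The edges incident to $X$ form a set of size at most $\Ocal(k^2)$ (each pair of $X$-vertices contributes $\le 2$ common neighbors, plus at most $|X|$ further neighbors per vertex inside the ``boundary''), so I would guess the restriction of the labeling relation to this $\Ocal(k^2)$-edge boundary set; there are $2^{\Ocal(k^2\log k)}$ such orders, well within the claimed bound, and for each of the at most $\Ocal(k^4\log k)$-bit descriptions we spend polynomial time. Fixing such a boundary order, the remaining freedom is, for each type-(c) star, a choice among a constant number of standard patterns for how that star's edges interleave with its one boundary edge; crucially I would show that the global consistency constraints — ``every cycle has two local minima or two local maxima'' — decompose, once the boundary order is fixed, into \emph{implications between pairs of such per-star binary/bounded choices}, so the whole thing becomes a 2-\SAT instance (after splitting a constant-size domain into $\Ocal(1)$ Boolean variables and clauses per star), solvable in polynomial time by the classical algorithm~\cite{Krom67}. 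Putting it together, \autoref{alg:FPT-star-modulator} iterates over the $2^{\Ocal(k^2\log k)}$ boundary orders, for each builds and solves the 2-\SAT formula in polynomial time, and answers \yes iff some branch succeeds; the total running time is $2^{\Ocal(k^4\log k)}\cdot n^{\Ocal(1)}$, with the extra $k^4$ absorbing the polynomial overhead per branch and the more generous counting of the boundary edge set.

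\textbf{Main obstacle.} The delicate part is Step~3: proving that, with the boundary order fixed, the local-extrema conditions on \emph{all} cycles of $G$ (which can be long and weave through many stars and many $X$-vertices) are equivalent to a conjunction of 2-clauses over the per-star pattern choices. A long cycle visits a sequence of type-(c) stars and $X$-vertices, and a priori the existence of two local minima could depend on the choices at arbitrarily many stars simultaneously, which would not be 2-\SAT. The resolution is to use the matching-cut absence (\autoref{lem:matching_cut} and \autoref{rule:matching-cut}, which we may also assume applied) together with the injectivity normalization to show that every relevant cycle either lies within a single ``block'' whose local extrema are already determined by the boundary order, or passes through at most two stars whose choices then enter a single 2-clause — essentially because a third independent choice point on a cycle would force a matching cut or a \gel-violating $K_{2,3}$-like configuration, both already eliminated. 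Verifying this reduction carefully, and checking that the ``standard relation'' normalization really does cut the per-star domain down to $\Ocal(1)$, is where the bulk of the 12-page argument goes; everything else is bookkeeping over a bounded-size guessed object.
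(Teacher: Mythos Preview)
Your proposal has the right high-level architecture---preprocess, classify stars, guess a bounded object, finish with 2-\SAT---but several load-bearing technical steps are wrong as stated, and they are precisely the places where the paper does real work.

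\textbf{The guessed edge set is unbounded.} You claim in Step~3 that ``the edges incident to $X$ form a set of size at most $\Ocal(k^2)$'' and guess the order on those. This is false: after preprocessing there can be arbitrarily many $1$-interesting stars, and each contributes an edge from its center to $X$ and edges from its leaves to $X$. The paper does \emph{not} guess the order on all edges touching $X$. It lets $B$ be the vertices in \emph{boring} stars (stars containing some vertex with $\geq 2$ neighbors in $X$), bounds $|E(G[X\cup B])|=\Ocal(k^4)$ via \autoref{claim:few-stars} and the $2k$-leaf bound, and guesses only the linear order $\rho$ on \emph{that} edge set---hence the $2^{\Ocal(k^4\log k)}$ in the statement, which is not ``polynomial overhead'' but the actual number of branches. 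The edges $sz$ from centers of $1$-interesting stars to $X$ remain \emph{unknown} after the guess; determining where each such edge sits relative to the guessed order is exactly what the 2-\SAT is for.

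\textbf{The $0$-interesting case is mis-handled.} Your type (b)---center with no neighbor in $X$---is not a connected component of $G$: its leaves still have neighbors in $X$. The paper's \autoref{claim:0interesting-stars} is a genuine argument (give new extreme labels $0$ and $c{+}1$ to the star's edges so that every cycle through the center picks up two local minima), not an application of \autoref{obs:connectivity}.

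\textbf{The 2-\SAT reduction is not per-star, and the cycle argument is different.} You propose a constant number of patterns per star and then argue, via matching cuts, that any cycle touches at most two stars so the constraints become 2-clauses. Neither matches the paper. A $1$-interesting star can have up to $2k$ leaves, so there is no constant-size per-star choice. The paper's variables are $x_{i,j}$ for ordered pairs of edges in $E(G)\setminus L$; the key unknowns are the comparisons of each center-edge $sz$ with the last edges of the (at most one) increasing and (at most one) decreasing $(u,z)$-path in $G[X\cup B]$, where $u\in X$ is the neighbor of a leaf. Type-$2$ leaves yield unit clauses; type-$1$ leaves yield the disjunctive constraint of \autoref{eq:1interesting-3}, which is a genuine 2-clause because two of the three involved edges lie in $E(G[X\cup B])$ and are already fixed by $\rho$. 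The reason long cycles do not spoil the reduction is not a matching-cut argument but the \emph{standardization} of \autoref{claim:always-standard}: once $L$ is split into $L_{\sf small}$ and $L_{\sf big}$, any cycle meeting two $1$-interesting stars (or two leaves of one star with distinct $X$-neighbors) automatically has two local minima from the $L_{\sf small}/L_{\sf big}$ alternation, so only cycles through a single leaf and the edge $sz$ need to be controlled---and those are controlled leaf-by-leaf.

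In short, you have the skeleton but not the right bounded object to guess, not the right 2-\SAT variables, and not the right reason the cycle constraints localize.
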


We stress that the above \FPT algorithm works for the \GEL problem, but not necessarily for $c$-\GEL for a fixed $c > 0$, since we make no attempt to minimize the number of labels.  The remainder of this section is devoted to prove \autoref{thm:kernel_sfm}, for which we need a number of definitions and intermediate results. For the sake of readability, we structure the proof of the theorem into several parts.

It is worth mentioning here that, at the end of this section, we explain how the presented algorithm can be simplified. We decided to present the more involved formulation because we would like our algorithm to be considered as a proof of concept of our technique, which in our opinion is more relevant than the particular result of \autoref{thm:kernel_sfm} itself.

\medskip

Let $X$ be a given star-forest modulator of the input graph $G$ of size $k$. For the sake of flexibility of our setting, we assume henceforth that the parameter $k$ is an upper bound on the size of the given modulator, i.e., that  $|X|\leq k$. Our goal is to decide whether $G$ admits a \gel, without aiming at optimizing the number of distinct labels. Note that every connected component $S$ of $G \setminus X$, seen as a subgraph, is a star composed of its center and its leaves.

\subparagraph*{Taming the stars.} We first show that we may assume that every star in $G \setminus X$ satisfies some simple properties, formalized as follows. We say that a  connected component $S$ of $G \setminus X$ is \textit{well-behaved} if the following hold:
    \begin{enumerate}
        \item Every leaf of $S$ has at least one neighbor in $X$.
        \item No three leaves of $S$ share a neighbor in $X$.
        \item $S$ contains at most $2k$ leaves, that is, $|S| \leq 2k + 1$.
    \end{enumerate}
We note that the third item above is in fact a consequence of the first two (cf. the proof of \autoref{claim:star-assumptions} below), but we prefer to keep it in the definition because it will be important for the algorithm.
\begin{claim}\label{claim:star-assumptions}
    We can assume that every connected component $S$ of $G \setminus X$ is well-behaved.
\end{claim}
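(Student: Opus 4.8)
The goal is to reduce, via the already-established safe reduction rules (Rules~\ref{rule:1},~\ref{rule:2},~\ref{rule:3} and possibly \ref{rule:matching-cut}), any instance $(G,X)$ to one in which every connected component of $G \setminus X$ is well-behaved, while keeping the modulator size at most $k$. I would handle the three conditions one at a time, in the order in which they are listed, since later reductions rely on earlier ones already holding.

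First, for condition~1: if a leaf $\ell$ of a star $S$ in $G \setminus X$ has no neighbor in $X$, then $\ell$ has degree exactly one in $G$ (its only neighbor is the center of $S$), so $\ell$ is attached to $G$ through the cut-vertex $c(S)$, and $G[\{\ell, c(S)\}]$ is a single edge, which is trivially good. Hence \autoref{rule:3} deletes $\ell$. Applying this exhaustively, every surviving leaf has at least one neighbor in $X$; note this does not change $X$ or increase $k$. Second, for condition~2: suppose three leaves $\ell_1, \ell_2, \ell_3$ of $S$ share a common neighbor $x \in X$. Then $\{x, c(S)\}$ together with $\ell_1, \ell_2, \ell_3$ induces (as a subgraph) a $K_{2,3}$ — the two ``sides'' being $\{x, c(S)\}$ and $\{\ell_1,\ell_2,\ell_3\}$, using the edges $c(S)\ell_i$ and $x\ell_i$ for $i \in [3]$ — so by \autoref{obs:good-subgraph} the graph is bad and \autoref{rule:1} reports a \no-instance. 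Thus after exhaustive application of these rules no such configuration remains.

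Third, for condition~3: once conditions~1 and~2 hold, I would argue that $|X| \le k$ forces each star to have few leaves. Each leaf of $S$ has at least one neighbor in $X$ (condition~1), and no vertex of $X$ is a common neighbor of three leaves of $S$ (condition~2), so each $x \in X$ is adjacent to at most two leaves of $S$. Counting (leaf, $X$-neighbor) incidences: the number of leaves of $S$ is at most the number of such incidences, which is at most $2|X| \le 2k$. This gives condition~3 as a consequence, explaining the remark in the statement that item~3 follows from items~1 and~2. I would also remark that all of this takes polynomial time: finding $K_3$/$K_{2,3}$ subgraphs in these restricted configurations, detecting degree-one leaves, and checking the counting bound are all polynomial, and Rules~\ref{rule:1} and~\ref{rule:3} were already shown to be safe for \GEL (and indeed for $c$-\GEL).

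The only mild subtlety — and the one place I would be careful — is the interaction between \autoref{rule:3} and the star structure: deleting a degree-one leaf is obviously fine, but I should make sure that no application of the earlier rules destroys the property that $G \setminus X$ is a forest of stars (it cannot, since all these rules only delete vertices/edges, and $X$ is unchanged), and that the cut-vertex used in \autoref{rule:3} is genuinely a cut-vertex (a degree-one leaf's neighbor always is, unless $G$ is itself a single edge, which is a trivial good instance handled directly). There is no real obstacle here; the claim is essentially a bookkeeping exercise on top of the reduction rules already proven safe. I expect the bulk of the genuine difficulty to lie in the \emph{later} parts of the proof of \autoref{thm:kernel_sfm} — the classification of well-behaved stars into three types and the $2$-\SAT formulation — rather than in establishing this normalization claim.
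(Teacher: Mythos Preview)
Your proof is correct and follows essentially the same approach as the paper: use \autoref{rule:3} to delete leaves with no neighbor in $X$, use \autoref{rule:1} to detect the $K_{2,3}$ formed by three leaves with a common $X$-neighbor and the star's center, and then derive the $2k$ bound by counting. Your treatment is slightly more detailed (explicitly checking that the center is a genuine cut-vertex and that deletions preserve the star-forest structure of $G\setminus X$), but the substance is identical.
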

\begin{proof}
    Let $s$ be the center of $S$.

    For the first item of the definition of well-behaved,  suppose that $v$ is a leaf of $S$ with no neighbor in $X$. Then $v$ constitutes a connected component of $G \setminus \{s\}$, and thus it can be safely deleted by \autoref{rule:3}.

    For the second item, if three leaves $v_1,v_2,v_3$ share a neighbor $u$ in $X$, then $G[\{s,v_1,v_2,v_3,u\}]$ contains a $K_{2,3}$, hence we can safely report a \no-instance by \autoref{rule:1}.

    For the third item, if $S$ contains at least $2k+1$ leaves, since by the first item we may assume that every leaf of $S$ has at least one neighbor in $X$, the fact that $|X| \leq k$ and the pigeonhole principle imply that $S$ contains three leaves  sharing a neighbor in $X$, contradicting the second item.
\end{proof}

By \autoref{claim:star-assumptions}, we assume henceforth that all the stars in $G \setminus X$ are well-behaved. We now partition the stars in $G \setminus X$ into three types, and analyze each of them separately. A connected component $S$ of $G \setminus X$ is
\begin{itemize}
    \item \textit{boring} if it contains a vertex with at least two neighbors in $X$;
    \item \textit{$0$-interesting} if it is not boring and the center of $S$ has no neighbor in $X$;
    \item \textit{$1$-interesting} if it is not boring and the center of $S$ has exactly one neighbor in $X$.
\end{itemize}

Let $B \subseteq V(G)$ be the set of vertices belonging to a boring star of $G \setminus X$. Note that, by definition, every leaf in a $0$-interesting or $1$-interesting star has exactly one neighbor in $X$. Each type of star as defined above behaves quite differently. Indeed, it is easy to show (cf. \autoref{claim:few-stars}) that the number of boring stars is bounded by a function of $k$, hence we will be able to use a brute-force approach on them. On the other hand, it is safe to just delete all $0$-interesting stars (cf. \autoref{claim:0interesting-stars}). Hence, it will only remain to deal with $1$-interesting stars, which turn out to be much more complicated, and for which we will create \FPT-many (that is, a function of $k$) appropriate 2-\SAT formulas whose satisfiability can be checked in polynomial time.

\subparagraph*{Boring stars are few, and $0$-interesting stars are easy.} We start by proving that there are not many boring stars.

\begin{claim}\label{claim:few-stars}
    We may assume that the number of boring stars in $G \setminus X$ is at most $k^2$.
\end{claim}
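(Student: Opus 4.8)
The plan is to bound the number of boring stars by a function of $k$ using the defining property of ``boring'' together with the reduction rules we have already established, in particular Rule~1 (forbidding $K_{2,3}$) and Rule~3 (deleting components hanging off a cut-vertex that are good on their own).

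First I would recall that a boring star $S$ of $G \setminus X$, by definition, contains a vertex $v_S$ with at least two neighbors in $X$. Fix for each boring star $S$ such a vertex $v_S$ together with two of its neighbors $a_S, b_S \in X$, so we get an (ordered or unordered) pair $\{a_S, b_S\} \in \binom{X}{2}$ associated with $S$. Since $|X| \le k$, there are at most $\binom{k}{2} < k^2$ such pairs. The key step is then to argue that no two boring stars can be assigned the same pair $\{a,b\}$: if $S \ne S'$ were both assigned $\{a,b\}$, then $v_S \ne v_{S'}$ (they lie in distinct components of $G \setminus X$), and $\{a, b, v_S, v_{S'}\}$ together with the edges $a v_S, b v_S, a v_{S'}, b v_{S'}$ forms a $K_{2,3}$ subgraph of $G$ (with parts $\{v_S, v_{S'}\}$ — wait, that is only $K_{2,2}$). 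So a single pair is not quite enough; I would instead argue that a pair $\{a,b\}$ can be assigned to at most one boring star \emph{after} observing that if it were assigned to two, we would actually need a third vertex. The cleaner route: a pair $\{a,b\}$ receives at most \emph{two} boring stars before producing a $K_{2,3}$ (with parts $\{a,b\}$ and three of the $v_S$'s), and then a small additional argument — using Rule~3 applied to the center of the second star, or re-choosing which vertex of $X$ plays the role of $a_S$ — removes the remaining factor of $2$, or else we simply accept the bound $2\binom{k}{2} \le k^2$ directly, which already matches the claimed bound $k^2$ for $k \ge 2$ (and small $k$ is trivial). I expect the write-up to take the latter, slicker option: at most two boring stars per pair by Rule~1, hence at most $2\binom{k}{2} = k^2 - k \le k^2$ boring stars.

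The main obstacle I anticipate is being careful about what ``the number of boring stars'' really counts and ensuring the $K_{2,3}$ we extract is genuinely a subgraph (not just a minor): we need the two neighbors $a_S, b_S$ to be honest-to-goodness neighbors of $v_S$ in $G$, which is guaranteed by the choice of $v_S$, and we need the three vertices $v_S, v_{S'}, v_{S''}$ from three distinct stars to be pairwise non-adjacent and distinct, which holds because distinct components of $G \setminus X$ have no edges between them and $\{a,b\} \subseteq X$ is disjoint from all of them. A secondary subtlety is that a boring star might contain several vertices with $\ge 2$ neighbors in $X$; this only helps us (more pairs to choose from), and fixing one witness per star suffices, so it is not a real difficulty. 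Once the per-pair bound is in place, summing over all $\binom{k}{2}$ pairs gives the bound $k^2$, completing the claim.

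Note that, just as with Rules~1--3 above, this bounding step is constructive: in polynomial time we can identify all boring stars, check the $K_{2,3}$ condition, and if more than $k^2$ boring stars remain, report a \no-instance (or, more precisely, we never reach that situation, since Rule~1 would already have fired). Hence we may assume from now on that there are at most $k^2$ boring stars in $G \setminus X$, which is what the algorithm needs in order to afford a brute-force enumeration over the labels of the (boundedly many) relevant edges incident to boring stars.
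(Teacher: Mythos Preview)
Your final argument is correct and is exactly the paper's: assign to each boring star a witness vertex with two fixed neighbors $\{a,b\}$ in $X$, observe that three witnesses sharing a pair would yield a $K_{2,3}$ (forbidden by Rule~1), so each of the $\binom{k}{2}$ pairs absorbs at most two boring stars, giving $2\binom{k}{2}=k^2-k<k^2$. The paper skips your detour through the $K_{2,2}$ miscount and the speculative ``remove the factor of 2'' idea, going straight to the per-pair bound of two common neighbors; you should do the same in a cleaned-up write-up.
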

\begin{proof}
    The proof of this claim follows closely that of \autoref{lem:kernel_vc}. Let $x$ be the number of (well-behaved) boring stars in $G \setminus X$. Then, $x$ is upper-bounded by the sum, over all pairs of vertices $u_1,u_2 \in X$, of the number of common neighbors of $u_1$ and $u_2$ in $V(G) \setminus X$. Now consider one pair of  vertices $u_1,u_2 \in X$. If $u_1$ and $u_2$ had at least three common neighbors in $V(G) \setminus X$, $u_1$ and $u_2$ together with their common neighborhood in $V(G) \setminus X$ would contain a $K_{2,3}$, which is impossible by \autoref{rule:1}. Therefore, any two vertices $u_1,u_2 \in X$ have at most two common neighbors in $V(G) \setminus X$, which implies that $x \leq   2{k \choose 2} < k^2$.
\end{proof}


We now show that we can get rid of $0$-interesting stars. Before that, we need one more definition that will also be useful to deal with $1$-interesting stars. Let $S$ be a $0$-interesting or $1$-interesting star in $G \setminus X$, let $v$ be a leaf of $S$, and let $u$ be its unique neighbor in $X$. We say that $v$ is a \textit{type-2} leaf if there exists another leaf $v'$ of $S$ adjacent to $u$. Otherwise, we say that $v$ is a \textit{type-1} leaf. Note that, since we assume that all stars are well-behaved, there are no three leaves of the same star sharing a neighbor in $X$, although a vertex of $X$ can be adjacent to centers or leaves of arbitrarily many stars of $G \setminus X$.

\begin{claim}\label{claim:0interesting-stars}
Let $S$ be a $0$-interesting star in $G \setminus X$. Then $G$ is good  if and only if $G \setminus V(S)$ is good.
\end{claim}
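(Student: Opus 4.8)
The statement to prove is \autoref{claim:0interesting-stars}: if $S$ is a $0$-interesting star in $G \setminus X$, then $G$ is good if and only if $G \setminus V(S)$ is good. The ``only if'' direction is immediate from \autoref{obs:good-subgraph}, since $G \setminus V(S)$ is a subgraph of $G$. So the real content is the ``if'' direction: assuming $G \setminus V(S)$ admits a \gel, we must extend it to a \gel of $G$. The plan is to take an injective \gel $\lambda$ of $G \setminus V(S)$ (which exists by \autoref{obs:injective}) and assign labels to the edges incident to $V(S)$ that are so large, and so carefully ordered, that no new pair of distinct increasing paths can be created.

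\textbf{Key structural observations.} Recall that $S$ is $0$-interesting, so it is not boring (every vertex of $S$ has at most one neighbor in $X$) and the center $s$ of $S$ has \emph{no} neighbor in $X$. Hence the only edges of $G$ incident with $V(S)$ are: the star edges $s v$ for each leaf $v$ of $S$, and, for each leaf $v$, exactly one edge $v u_v$ with $u_v \in X$ (using well-behavedness, item 1, every leaf has at least one neighbor in $X$, and non-boringness forces exactly one). In particular, $S$ together with these pendant edges to $X$ forms a structure where every cycle of $G$ through a vertex of $V(S)$ must enter and leave $V(S)$ through two leaves $v, v'$ with $v u_v, s v, s v', v' u_{v'}$ on the cycle (the center $s$ has degree only towards leaves, so any cycle visiting $s$ uses exactly two star edges; a cycle visiting a leaf but not $s$ would have to use two edges at that leaf, but the leaf has degree $2$ with its only non-$S$ neighbor in $X$, so it uses $sv$ and $vu_v$, again routing through $s$).

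\textbf{The extension.} Let $M$ be larger than every label used by $\lambda$ on $G \setminus V(S)$. I would label the edges incident with $V(S)$ using distinct values all exceeding $M$, chosen so that on any path $v\, s\, v'$ through the center the edge incident to one leaf is strictly smaller and the edge incident to the other is strictly larger than the star edges — concretely, order the leaves $v_1, \dots, v_p$ of $S$ and assign to $s v_i$ a value, and to $v_i u_{v_i}$ a value, so that for every $i < j$ exactly one of the two sub-paths $v_i s v_j$ is increasing in the ``leaf edge, star edge, star edge, leaf edge'' reading and the relevant extremal edges force a local minimum and a local maximum; the cleanest choice is to make every $2$-path $sv_is v_j$... — more simply, assign all star edges equal to $M+1$ except force, as in the proof of \autoref{prop:extremal_gadget}, that on each path from $s$ out through a leaf the last edge to $X$ alternates high/low relative to $M+1$, i.e.\ give $v_i u_{v_i}$ label $M + 1 + (-1)^i\epsilon_i$ with distinct small perturbations. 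Then I would verify goodness via \autoref{obs:local_minima}: any cycle $C$ of $G$ is either entirely inside $G \setminus V(S)$ (good, since $\lambda$ is a \gel there) or it passes through $V(S)$, hence through $s$ and two leaves $v, v'$; on $C$, the two star edges $sv, sv'$ carry the maximum label $M+1$ among $C$'s edges except possibly one leaf-to-$X$ edge, and by the alternation one of $v u_v, v' u_{v'}$ exceeds $M+1$ and the other is below it, producing two local maxima of $C$ from this portion alone. Hence every cycle has two local maxima, so by \autoref{obs:local_minima} the labeling is good.

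\textbf{Main obstacle.} The delicate point is not the existence of ``large enough'' labels but ensuring that \emph{two} local maxima (or minima) are guaranteed on every cycle through $V(S)$ \emph{regardless of which route the cycle takes inside $G \setminus V(S)$} between $u_v$ and $u_{v'}$, and also handling the degenerate cases: $u_v = u_{v'}$ (forbidden: that would be three leaves... no, two leaves sharing a neighbor — allowed by well-behavedness item 2, which only forbids three, so I must handle $u_v = u_{v'}$, giving a cycle $v\,s\,v'\,u_v\,v$ of length $4$ that lies ``almost'' inside the $S$-gadget), and cycles that wrap through $V(S)$ more than once. For a cycle using only $V(S) \cup X$-edges of this local type, I must check directly that my chosen four labels on $v u_v, sv, sv', v'u_{v'}$ form a good $C_4$-like labeling, i.e.\ two local maxima; the alternation trick (one leaf-edge above $M+1$, one below) handles exactly this. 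Once that single base case is checked, the global argument via \autoref{obs:local_minima} goes through uniformly, and the running-time bookkeeping is trivial since the rule is applied to one star at a time. I would therefore spend the bulk of the write-up on the local $C_4$/cycle-through-$s$ case and then invoke \autoref{obs:local_minima} for everything else.
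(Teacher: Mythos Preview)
Your plan has a genuine gap: the concrete labeling you propose does not produce a \gel.

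You set every star edge $sv_i$ to $M+1$ and the leaf-to-$X$ edge $v_i u_{v_i}$ to $M+1+(-1)^i\epsilon_i$. Your claim that ``by the alternation one of $vu_v,v'u_{v'}$ exceeds $M+1$ and the other is below it'' is simply false: a cycle may use two leaves $v_i,v_j$ with $i,j$ of the \emph{same} parity. Concretely, take $i=1$, $j=3$, assume $u_{v_1}\ne u_{v_3}$ and $u_{v_1}u_{v_3}\in E(G[X])$. The cycle $s,v_1,u_{v_1},u_{v_3},v_3,s$ has label sequence $M{+}1,\,M{+}1{-}\epsilon_1,\,\lambda(u_{v_1}u_{v_3}),\,M{+}1{-}\epsilon_3,\,M{+}1$ with $\lambda(u_{v_1}u_{v_3})\le M$. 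This has exactly one local maximum (the $M{+}1$ plateau at $s$) and one local minimum (the single edge in $X$), so by \autoref{obs:local_minima} the labeling is \emph{not} good. Worse, even the $C_4$ case you flag as the ``main obstacle'' fails: if $v_i,v_j$ share $w\in X$, the $4$-cycle $s,v_i,w,v_j,s$ carries labels $M{+}1,\,M{+}1{+}\alpha,\,M{+}1{+}\beta,\,M{+}1$; the two star edges form a non-decreasing $2$-path $v_j\,s\,v_i$, and the other $2$-path $v_j\,w\,v_i$ is increasing whenever $\beta<\alpha$, giving two increasing $(v_j,v_i)$-paths. So the scheme breaks already on the base case you singled out.

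The underlying mistake is using only labels \emph{above} $M$. The paper's extension uses two new labels $0$ and $c{+}1$, one below and one above all labels of $H=G\setminus V(S)$, and arranges that \emph{every} passage through a leaf contributes one edge labeled $0$ and one labeled $c{+}1$: for a type-$1$ leaf $v$ set $\lambda'(sv)=c{+}1$, $\lambda'(vu)=0$; for a type-$2$ pair $v_1,v_2$ with common neighbor $w$ set $\lambda'(sv_1)=\lambda'(v_2w)=c{+}1$ and $\lambda'(sv_2)=\lambda'(v_1w)=0$ (crisscross). Then any cycle through $s$ picks up two label-$0$ edges; when these are non-adjacent they are two local minima, and when they are adjacent (both incident to $s$) the two label-$(c{+}1)$ edges are non-adjacent local maxima. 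Either way \autoref{obs:local_minima} applies. The crisscross on type-$2$ pairs is precisely what makes the $C_4$ work, and it cannot be replaced by an index-parity trick.
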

\begin{proof}
If $G$ is good, then $G \setminus X$ is also good by \autoref{obs:good-subgraph}. Assume now that $H := G \setminus V(S)$ is good and let $\lambda: E(H) \to \mathds{R}$ be a \gel of $H$. By \autoref{obs:restrict_label_gel} we can assume that $\lambda: E(H) \to [c]$ for some positive integer $c$. Let $s$ be the center of $S$. We extend $\lambda$ into an edge-labeling $\lambda'$ of $G$, by using two new labels $0$ and $c+1$ for the edges in $E(G) \setminus E(H)$, as follows; see \autoref{fig:01interesting-stars}(a) for an illustration. Namely, if $v$ is a type-1 leaf of $S$ and $u$ is its neighbor in $X$, we set $\lambda'(sv)=c+1$ and $\lambda'(vu)=0$. On the other hand, if $v_1,v_2$ are two type-2 leaves of $S$ with common neighbor $w \in X$, we set $\lambda'(sv_1)=\lambda'(v_2w)=c+1$ and $\lambda'(sv_2)=\lambda'(v_1w)=0$. We claim that $\lambda'$ is a \gel of $G$.

By \autoref{obs:local_minima}, this is equivalent to verifying that every cycle admits at least two local minima. Since $\lambda'(e)=\lambda(e)$ for every edge $e \in E(H)$, it is enough to consider a cycle $C$ intersecting $S$, hence containing its center $s$. Then, by the topology of $S$ and the choice of $\lambda'$, necessarily $C$ contains two edges with label $0$ whose all incident edges in $C$ have a strictly greater label (in the cycle $C$ depicted in \autoref{fig:01interesting-stars}(a) with thick red edges, these two edges with label 0 are $vu$ and $sv_2$). Thus, $C$ admits two local minima and the claim follows.
\end{proof}

\begin{figure}[htb]
\begin{center}
\vspace{-.25cm}
\includegraphics[scale=1.00]{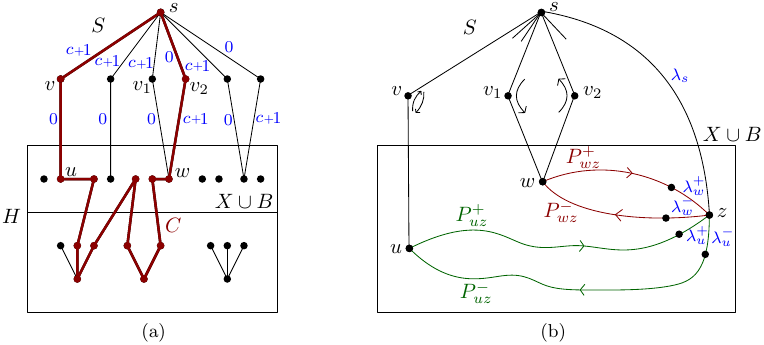}
\end{center}
\caption{(a) Illustration of the proof of \autoref{claim:0interesting-stars}, where a $0$-interesting star $S$ in $G \setminus X$ is depicted. The labels of the edges in $E(G) \setminus E(H)$ given by $\lambda'$ are depicted in blue. A cycle $C$ of $G$ containing the center $s$ of $S$ is depicted with thick red edges. (b) Interaction of a $1$-interesting star $S$ with the set $X \cup B$; other stars in $G \setminus X$ are not shown. The arrows in the paths indicate the direction in which the labels increase. Some labels are depicted in blue.} \label{fig:01interesting-stars}
\end{figure}

\autoref{claim:0interesting-stars} justifies the safeness of the following reduction rule, which can clearly be applied in polynomial time.


\begin{redrule}\label{rule:4}
If $G \setminus X$ contains a $0$-interesting star $S$, delete all the vertices of $V(S)$ from $G$.
\end{redrule}

After applying \autoref{rule:4} exhaustively, we can assume henceforth that all the stars in $G \setminus X$ are well-behaved and either boring or $1$-interesting.

\subparagraph*{$1$-interesting stars are hard.} Before proceeding with the algorithm, let us first give some intuition about why $1$-interesting stars are inherently more complicated than $0$-interesting stars, which will allow us to convey the underlying idea of our approach.
Consider the example in \autoref{fig:01interesting-stars}(b), where a $1$-interesting star $S$ with center $s$ is depicted.

For the sake of simplicity, consider only the interaction of $S$ with the modulator $X$ and the set of vertices $B$ belonging to boring stars; in fact, we will see later that keeping track of this interaction is enough, in the sense that we can easily get rid of ``problematic cycles'' (that is, those having less than two local minima) intersecting more than one $1$-interesting star by using an appropriate type of ``standard''  labelings. This type of labelings will also guarantee that all ``problematic cycles'' intersecting a $1$-interesting star contain the edge between its center and $X$ (cf. \autoref{claim:always-standard} for the details).

Suppose, similarly to the proof of \autoref{claim:0interesting-stars}, that a \gel $\lambda$ of $G[X \cup B]$ has been already found. Let $z$ be the neighbor of $s$ in $X$, and let $\lambda_s$ be the label that we need to choose for the edge $sz$.

Consider first a pair of type-2 leaves $v_1,v_2$ of $S$ with common neighbor $w \in X$. Regardless of which labels we choose for the 4-cycle induced by $\{s,v_1,v_2,w\}$, necessarily one of the paths from $s$ to $w$ will get increasing labels, and the other one will get decreasing labels; in \autoref{fig:01interesting-stars}(b), the arrows in the paths indicate the direction along which the labels increase. Since $\lambda$ is a \gel of $G[X \cup B]$, there is at most one increasing path $P^+_{wz}$ in $G[X \cup B]$ from $w$ to $z$, and at most one decreasing path $P^-_{wz}$ (which may intersect); see the red paths in \autoref{fig:01interesting-stars}(b). Then, since there is already an increasing path from $w$ to the center $s$ of $S$ within the 4-cycle induced by $\{s,v_1,v_2,w\}$, we need to be careful that, when we concatenate the increasing path $P^+_{wz}$ with the edge $zs$, a second increasing path from $w$ to $s$ does not appear. To prevent this, if we let $\lambda_w^+$ be the label of the edge of $P^+_{wz}$ incident with $z$, we need that \begin{equation}\label{eq:1interesting-1}
\lambda_w^+ > \lambda_s.
\end{equation}
The above equation imposes a constraint on the choice of $\lambda_s$ and already shows, in contrast to how we dealt with $0$-interesting stars in \autoref{claim:0interesting-stars}, that the labels of the edges incident with a $1$-interesting star {\sl cannot} be chosen obliviously to the labels of the rest of the graph. Symmetrically, since there is already a decreasing path from $w$ to the center $s$ of $S$ within the 4-cycle induced by $\{s,v_1,v_2,w\}$, we need to be careful that, when we concatenate the decreasing path $P^-_{wz}$ with the edge $zs$, a second decreasing path from $w$ to $s$ does not appear. To prevent this, if we let $\lambda_w^-$ be the label of the edge of $P^-_{wz}$ incident with $z$, we need that \begin{equation}\label{eq:1interesting-2}
\lambda_w^- < \lambda_s.
\end{equation}
Note that, if the increasing path $P^+_{wz}$ or the decreasing path  $P^-_{wz}$ does not exist in $G[X \cup B]$, then the corresponding constraint is void.

If there were only type-2 leaves in all the $1$-interesting stars in $G \setminus X$ (which can be arbitrarily many, not bounded by any function of $k$), the choice of the labels for $1$-interesting stars seems a manageable task, assuming that the labels of $G[X \cup B]$ have been already chosen: for every $1$-interesting star $S$ with center $s$, and for every neighbor $w \in X$ of a pair of type-2 leaves of $S$, add the constraints given by \autoref{eq:1interesting-1} and \autoref{eq:1interesting-2} about $\lambda_s$ to a system of linear inequalities of size polynomial in the size of $G$, whose feasibility can be checked in polynomial time using, for instance, Gaussian elimination. However, the presence of type-1 leaves makes our task more complicated, as we proceed to discuss.

Consider now a type-1 leaf $v$ of the star $S$ depicted in \autoref{fig:01interesting-stars}(b), and let $u$ be its neighbor in $X$ (note that $u \neq z$, as otherwise $\{s,v,u\}$ would induce a triangle and \autoref{rule:1} could be applied). Similarly as above, in $G[X \cup B]$ there is at most one increasing path $P_{uz}^+$ from $u$ to $z$, and at most one decreasing path $P_{uz}^-$. Let $\lambda_u^{+}$ (resp. $\lambda_u^{-}$) be the label of the last edge of $P_{uz}^+$ (resp. $P_{uz}^-$), which we assume to be known. Now consider the 2-edge path induced by $\{s,v,u\}$. Depending on the labels that we choose for the edges $sv$ and $vu$, this path will be increasing or decreasing from $s$ to $u$, but not both. In the former case, namely if the 2-edge path increases from $s$ to $u$, similarly to the above discussion we need to satisfy the constraint $\lambda_u^+ > \lambda_s$. In the latter case, namely if the 2-edge path decreases from $s$ to $u$,  we need to satisfy the constraint $\lambda_u^- < \lambda_s$. But in this case it is {\sl not} true anymore that we need to satisfy {\sl both} constraints, as which one needs to be satisfied depends on the {\sl choice of the direction of growth} of the 2-edge path between $s$ and $u$. Thus, it appears that, for this type of leaf, we need to satisfy the disjunctive constraint given by
 \begin{equation}\label{eq:1interesting-3}
\lambda_u^+ > \lambda_s\ \text{ or }\ \lambda_u^- < \lambda_s.
\end{equation}
To deal with the set of disjunctive constraints  of the form of \autoref{eq:1interesting-3}, one may try to use the existing literature on solving systems of linear equations with disjunctive constraints, such as~\cite{Koubarakis01}. Unfortunately, the existing results do not seem to be directly applicable to our setting.

We circumvent this by reformulating the problem in terms of what we call a \textit{labeling relation}, defined below. This reinterpretation allows, on the one hand, to guess in time \FPT all the labelings restricted to $G[X \cup B]$ and, more importantly, once the labeling of $G[X \cup B]$  has been fixed, it allows to formulate the problem in terms of the satisfiability of a $2$-\SAT formula, which can be decided in polynomial time. We now present the corresponding definitions and formally present the algorithm.

Before that, let us just observe that, while type-1 leaves seem to be more complicated to deal with than type-2 leaves,
we could get rid (even if we do not need it in our algorithm) of every $1$-interesting star $S$ having only leaves of type 1:
indeed, in that case, the edges linking the leaves of $S$ to $X$, together with the edge from the center of $S$ to $X$,
constitute the set of edges of a matching cut of $G$, and thus could be removed by \autoref{rule:matching-cut}. Then $V(S)$ could be removed by combining \autoref{rule:2} and the fact that a star is clearly good.


\subparagraph*{Reinterpretation of the problem with labeling relations.} In order to decide whether $G$ admits a \gel, we first observe the following. For the sake of an edge-labeling $\lambda:E(G) \to \mathds{R}$ being good, the actual values taken by $\lambda$ do not really matter: what matters is, for every pair of edges $e,f \in E(G)$, the {\sl relation} between $\lambda(e)$ and $\lambda(f)$, that is, whether $\lambda(e) > \lambda(f)$, $\lambda(e) < \lambda(f)$, or $\lambda(e) = \lambda(f)$. To formalize this point of view, we say that a function ${\sf rel}: E(G)\times E(G) \to \{0,1,2\}$ is a \textit{labeling relation} if it satisfies the following properties:
\begin{itemize}
    \item (Reflexivity) For every edge $e \in E(G)$, ${\sf rel}(e,e)=0$.
    \item (Symmetry) For every two distinct edges $e_1,e_2 \in E(G)$,
    \begin{itemize}
    \item ${\sf rel}(e_1,e_2)=0$ if and only if ${\sf rel}(e_2,e_1)=0$; and
    \item ${\sf rel}(e_1,e_2)=1$ if and only if ${\sf rel}(e_2,e_1)=2$.
    \end{itemize}
\item (Transitivity) For every three distinct edges $e_1,e_2,e_3 \in E(G)$,
    \begin{itemize}
    \item if ${\sf rel}(e_1,e_2) = {\sf rel}(e_2,e_3) =1 $, then ${\sf rel}(e_1,e_3)=1$;
    \item if ${\sf rel}(e_1,e_2) = {\sf rel}(e_2,e_3) =0 $, then ${\sf rel}(e_1,e_3)=0$; and
    \item if ${\sf rel}(e_1,e_2) + {\sf rel}(e_2,e_3) = 1$,  then ${\sf rel}(e_1,e_3) = 1$.
    \end{itemize}
\end{itemize}
To get some intuition, the fact that ${\sf rel}(e_1,e_2)=1$ (resp. ${\sf rel}(e_1,e_2)=2$) should be interpreted as `the label of $e_1$ is strictly greater (resp. smaller) than the label of $e_2$'. The fact that ${\sf rel}(e_1,e_2)=0$ should be interpreted as `$e_1$ and $e_2$ have the same label'. It is worth mentioning that a labeling relation of $G$ could also be viewed as a partial orientation of the edges of the line graph of $G$ satisfying the corresponding transitivity properties; we will not use this viewpoint in this section, but we will use it in the dynamic programming algorithm of \autoref{sec:DPdegree}.


To get some further intuition and relate edge-labelings to labeling relations, observe that if $\lambda:E(G) \to \mathds{R}$ is an edge-labeling of $G$, then the function ${\sf rel}_{\lambda}: E(G)\times E(G) \to \{0,1,2\}$ defined, for every $e,f \in E(G)$, by ${\sf rel}_{\lambda}(e,f)=0$ if and only if $\lambda(e)=\lambda(f)$ (which includes the case where $e=f$) and ${\sf rel}_{\lambda}(e,f)=1$ if and only if $\lambda(e) > \lambda(f)$, is easily seen to be a labeling relation. Conversely, given a labeling relation ${\sf rel}: E(G)\times E(G) \to \{0,1,2\}$, we can define an edge-labeling $\lambda_{{\sf rel}}:E(G) \to \mathds{R}$ by the following inductive procedure.

Let $e \in E(G)$ be such that there is no edge $f \in E(G)$ with ${\sf rel}(e,f)=1$ (note that such an edge exists by definition of a labeling relation). Then we set $\lambda_{{\sf rel}}(e)=0$ and $\lambda_{{\sf rel}}(g)=0$ for every edge $g \in E(G)$ such that ${\sf rel}(e,g)=0$. Assume inductively that the algorithm has attributed label $c$ for some integer $c \geq 0$, let $M \subseteq E(G)$ be the set of edges unlabeled so far, and let $E_c = \{e \in E(G) \mid \lambda_{{\sf rel}}(e) = c\}$. Then we set $\lambda_{{\sf rel}}(e)=c+1$ for every edge $e \in M$ such that
\begin{itemize}
    \item there exists $f \in E_c$ such that ${\sf rel}(e,f)=1$; and
    \item there do not exist  $f \in E_c$ and $g \in M$ such that ${\sf rel}(e,g)={\sf rel}(g,f)=1$.
\end{itemize}
We can naturally speak of an \textit{increasing} or \textit{decreasing} path with respect to a labeling relation ${\sf rel}$, by considering the corresponding increase or decrease in the labels given by the edge-labeling $\lambda_{{\sf rel}}$ defined above.

We say that a labeling relation ${\sf rel}: E(G)\times E(G) \to \{0,1,2\}$ is \textit{good} if the edge-labeling $\lambda_{{\sf rel}}:E(G) \to \mathds{R}$ defined above is good.

\begin{lemma}\label{claim:good-labeling-relation}
A graph $G$ admits a \gel if and only if it admits a good labeling relation.
\end{lemma}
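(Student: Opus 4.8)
The plan is to prove both directions of the equivalence between admitting a \gel and admitting a good labeling relation, leveraging the two constructions already described in the excerpt: from an edge-labeling $\lambda$ to the labeling relation ${\sf rel}_\lambda$, and from a labeling relation ${\sf rel}$ to the edge-labeling $\lambda_{{\sf rel}}$ given by the inductive procedure.

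\textbf{Forward direction.} Suppose $G$ admits a \gel $\lambda$. I would first verify that ${\sf rel}_\lambda$ is a labeling relation (reflexivity, symmetry, transitivity), which is immediate from the total-order structure of $\mathds{R}$ — in fact the excerpt already asserts this. The remaining point is that ${\sf rel}_\lambda$ is \emph{good}, i.e.\ that $\lambda_{{\sf rel}_\lambda}$ is a \gel. For this I would show that $\lambda$ and $\lambda_{{\sf rel}_\lambda}$ are \emph{order-equivalent}: for all edges $e,f$, $\lambda(e) < \lambda(f)$ iff $\lambda_{{\sf rel}_\lambda}(e) < \lambda_{{\sf rel}_\lambda}(f)$, and $\lambda(e)=\lambda(f)$ iff $\lambda_{{\sf rel}_\lambda}(e)=\lambda_{{\sf rel}_\lambda}(f)$. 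Indeed, one checks by induction on the value $c$ assigned by the inductive procedure that $\lambda_{{\sf rel}_\lambda}$ assigns the same label to two edges exactly when $\lambda$ does, and respects the strict order of $\lambda$ — this is essentially the statement that the procedure produces the "rank" of each edge in the order induced by $\lambda$ (compressed to consecutive integers). Since whether a path is increasing/decreasing depends only on the order type of the labels along it, $\lambda$ and $\lambda_{{\sf rel}_\lambda}$ have exactly the same increasing paths between every ordered pair of vertices; hence $\lambda$ is good iff $\lambda_{{\sf rel}_\lambda}$ is good. As $\lambda$ is good, so is $\lambda_{{\sf rel}_\lambda}$, which by definition means ${\sf rel}_\lambda$ is a good labeling relation.

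\textbf{Backward direction.} Suppose $G$ admits a good labeling relation ${\sf rel}$. By the definition of "good labeling relation", the edge-labeling $\lambda_{{\sf rel}}$ is good, so $G$ trivially admits a \gel, namely $\lambda_{{\sf rel}}$ itself. The only subtlety here is to make sure the inductive procedure defining $\lambda_{{\sf rel}}$ actually \emph{terminates} and labels every edge of $G$; I would include a short argument that at each step at least one new edge receives a label (using that ${\sf rel}$ has no infinite ascending chain since $G$ is finite, so there is always an edge that is maximal among the currently unlabeled ones and whose "predecessors in $M$" are empty), so after finitely many rounds $M = \emptyset$. This termination remark is arguably the one place requiring a little care, but it is routine given finiteness of $E(G)$.

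\textbf{Main obstacle.} The crux is the order-equivalence claim in the forward direction — formally proving that $\lambda_{{\sf rel}_\lambda}$ and $\lambda$ induce the same strict order and the same equality relation on edges, i.e.\ that the inductive procedure is a faithful "rank compression" of $\lambda$. This requires a clean induction on $c$ with the right invariant: after round $c$, the set of edges labeled $\le c$ by $\lambda_{{\sf rel}_\lambda}$ is exactly an initial segment of the $\lambda$-order, and within it the relative order matches that of $\lambda$. Everything else (the properties of ${\sf rel}_\lambda$, termination of the procedure, the reduction of "good" to "same increasing paths") is either already granted in the excerpt or immediate. I would present the proof compactly, citing ${\sf rel}_\lambda$ being a labeling relation from the excerpt, and spending the bulk of the argument on this order-equivalence, concluding that since goodness of an edge-labeling depends only on the combinatorial type of the labels along paths, $G$ admits a \gel if and only if it admits a good labeling relation.
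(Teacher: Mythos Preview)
Your proposal is correct and follows essentially the same approach as the paper: one direction is immediate from the definition of ``good labeling relation'' (namely $\lambda_{{\sf rel}}$ is good), and the other direction amounts to showing that $\lambda$ and $\lambda_{{\sf rel}_\lambda}$ are order-equivalent, which is exactly what the paper argues. Your additional remark on termination of the inductive procedure is a nice touch that the paper leaves implicit.
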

\begin{proof}
Let first ${\sf rel}: E(G)\times E(G) \to \{0,1,2\}$ be a good labeling relation of $G$. Then, by definition,  the edge-labeling $\lambda_{{\sf rel}}:E(G) \to \mathds{R}$ is good, certifying that $G$ admits a \gel.

Conversely, let $\lambda:E(G) \to \mathds{R}$ be a \gel of $G$, and we claim that the labeling relation ${\sf rel}_{\lambda}$ defined above if good. That is, if we let $\hat{\lambda}:= \lambda_{{\sf rel}_{\lambda}}$, we need to prove that $\hat{\lambda}$ is a \gel of $G$. Since $\lambda$ is a \gel of $G$, it is enough to prove that, for any two edges $e,f \in E(G)$, $\lambda(e) > \lambda(f)$ if and only if $\hat{\lambda}(e) > \hat{\lambda}(f)$. Suppose first that $\lambda(e) > \lambda(f)$. Then, by the definition of ${\sf rel}_{\lambda}$, ${\sf rel}_{\lambda}(e,f)=1$, which implies by the definition of $\lambda_{{\sf rel}_{\lambda}}$ that $\hat{\lambda}(e) > \hat{\lambda}(f)$. On the other hand, if $\hat{\lambda}(e) > \hat{\lambda}(f)$, then necessarily ${\sf rel}_{\lambda}(e,f)=1$, implying in turn that $\lambda(e) > \lambda(f)$.
\end{proof}

By \autoref{claim:good-labeling-relation}, we can now focus on deciding whether $G$ admits a good labeling relation. The advantage of dealing with good labeling relations, with respect to good edge-labelings, is that if we have at hand a set of edges $F \subseteq E(G)$ of size bounded by $f(k)$ for some function $f$, then we can guess in time \FPT all possible good labeling relations restricted to pairs of edges in $F$. Note that, a priori, it is not clear how to guess such a restriction of a good edge-labeling in time \FPT.

Recall that, at this point, we can assume that every connected component of $G \setminus X$ is a well-behaved star that is either boring or $1$-interesting. \autoref{claim:few-stars} and the fact that all stars are well-behaved imply that $|B| \leq (2k+1)k^2$ and that $|E(B,B)| \leq 2k \cdot k^2$. Hence, since $|X| \leq k$,
\begin{equation}\label{eq:few-boring-edges}
|E(G[X \cup B])| \leq {k \choose 2} + 2k \cdot k^2 + (2k+1)k^2\cdot k = \Ocal(k^4).
\end{equation}


We now show that we can restrict ourselves to a particular type of good labeling relations that will simplify our task. To this end, let $L \subseteq E(G)$ be the set of edges incident with a leaf of a $1$-interesting star (that is, $L$ does not contain the unique edge between the center of a star and $X$). We say that a labeling relation ${\sf rel}: E(G)\times E(G) \to \{0,1,2\}$ is \textit{standard} if
\begin{itemize}
    \item for every two distinct edges $e,f \in E(G) \setminus L$, ${\sf rel}(e,f)\neq 0$;
    \item the edges in $L$ can be partitioned into two sets $L_{\sf small}$ and $L_{\sf big}$ such that
    \begin{itemize}
        \item if $e,f \in L_{\sf small}$ or $e,f \in L_{\sf big}$, then ${\sf rel}(e,f)=0$;
        \item if $e \in L_{\sf small}$ and $f \in E(G) \setminus L_{\sf small}$, then ${\sf rel}(e,f)=2$; and
        \item if $e \in L_{\sf big}$ and $f \in E(G) \setminus L_{\sf big}$, then ${\sf rel}(e,f)=1$.
    \end{itemize}
\end{itemize}

Translating the above definition to edge-labelings, a labeling relation is standard if the set $L \subseteq E(G)$ can be partitioned into two sets of ``very light'' and ``very heavy'' edges, namely $L_{\sf small}$ and $L_{\sf big}$, and all other edges of $G$ are linearly ordered, in the sense that there are no two edges in $E(G) \setminus L$ with the same label.

In the next lemma we prove that it is enough to look for a good {\sl standard} labeling relation. Before that, let us adapt \autoref{obs:local_minima} to the context of labeling relations. Let $C$ be a cycle of $G$ and let ${\sf rel}: E(G)\times E(G) \to \{0,1,2\}$ be a labeling relation. A \textit{local minimum} (resp. \textit{local maximum}) of $C$, with respect to ${\sf rel}$, is a subpath $P$ of $C$ such that for every $e,f \in E(P)$ it holds that ${\sf rel}(e,f)=0$ and, for every $g \in E(C) \setminus E(P)$ incident to an edge $e \in E(C)$, it holds that ${\sf rel}(e,g)=2$ (resp. ${\sf rel}(e,g)=1$). We can now reformulate \autoref{obs:local_minima} as follows.
 \begin{observation}\label{obs:local_minima_for_relations}
 A labeling relation {\sf rel} of a graph $G$ is good if and only if every cycle of $G$ admits at least two local minima, or at least two local maxima, with respect to {\sf rel}.
\end{observation}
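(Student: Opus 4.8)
The plan is to reduce \autoref{obs:local_minima_for_relations} to \autoref{obs:local_minima}. By definition, a labeling relation ${\sf rel}$ of $G$ is good precisely when the edge-labeling $\lambda_{\sf rel}\colon E(G)\to\mathds{R}$ constructed above is good. Hence \autoref{obs:local_minima}, applied to $\lambda_{\sf rel}$, already tells us that ${\sf rel}$ is good if and only if every cycle of $G$ admits at least two local minima, or at least two local maxima, \emph{with respect to $\lambda_{\sf rel}$}. So the only remaining point is to check that, for every cycle $C$ of $G$, a subpath $P$ of $C$ is a local minimum (resp.\ local maximum) of $C$ with respect to ${\sf rel}$ if and only if it is a local minimum (resp.\ local maximum) of $C$ with respect to $\lambda_{\sf rel}$.

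For that, I would first isolate the \emph{consistency} property linking ${\sf rel}$ and $\lambda_{\sf rel}$: for all $e,f\in E(G)$, one has ${\sf rel}(e,f)=0$ if and only if $\lambda_{\sf rel}(e)=\lambda_{\sf rel}(f)$, and ${\sf rel}(e,f)=1$ if and only if $\lambda_{\sf rel}(e)>\lambda_{\sf rel}(f)$ (and therefore ${\sf rel}(e,f)=2$ if and only if $\lambda_{\sf rel}(e)<\lambda_{\sf rel}(f)$). This is exactly the property implicitly invoked in the proof of \autoref{claim:good-labeling-relation}, and I would prove it by induction on the labels produced by the inductive construction of $\lambda_{\sf rel}$. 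The base step handles the edges that have no strictly larger edge; for the step from label $c$ to label $c+1$, the two bullets in the construction — combined with the transitivity axioms — guarantee both that the process terminates having labeled every edge (at each step, pick a minimal edge of the set $M$ of still-unlabeled edges with respect to the strict partial order induced by ${\sf rel}$, which exists by symmetry/antisymmetry of ${\sf rel}$, and verify it gets labeled) and that the label assigned to an edge $e$ is strictly above the label of every $f$ with ${\sf rel}(e,f)=1$ and equal to the label of every $f$ with ${\sf rel}(e,f)=0$; here one uses that any two edges related by $0$ are always processed at the same step.

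Granting the consistency property, the two notions of local minimum (resp.\ maximum) coincide by merely unfolding definitions: ``all edges of $P$ are pairwise related by $0$ under ${\sf rel}$'' becomes ``all edges of $P$ carry the same $\lambda_{\sf rel}$-label'', and ``every edge of $E(C)\setminus E(P)$ incident with an endpoint-edge $e$ of $P$ satisfies ${\sf rel}(e,\cdot)=2$ (resp.\ $=1$)'' becomes ``that edge has strictly larger (resp.\ strictly smaller) $\lambda_{\sf rel}$-label than $e$''. Chaining the three equivalences — definition of a good labeling relation, \autoref{obs:local_minima} applied to $\lambda_{\sf rel}$, and the coincidence of local extrema — gives the claim.

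The main obstacle is the consistency property itself, and more precisely the two facts it bundles together: that the inductive construction of $\lambda_{\sf rel}$ terminates with every edge labeled, and that the resulting labeling respects every pairwise relation encoded by ${\sf rel}$. Both rest essentially on the transitivity axioms (and on the second bullet of the construction step, which is what forbids giving an edge a label that is ``too small''), and the ${\sf rel}(e,f)=0$ case additionally needs the small observation that $0$-related edges are labeled simultaneously. Once these are settled, the remainder of the proof is routine bookkeeping.
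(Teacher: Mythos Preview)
Your proposal is correct and follows exactly the route the paper implicitly takes: the paper states \autoref{obs:local_minima_for_relations} as an immediate reformulation of \autoref{obs:local_minima} without proof, relying on the definition that ${\sf rel}$ is good precisely when $\lambda_{\sf rel}$ is good and on the (implicitly assumed) consistency between ${\sf rel}$ and $\lambda_{\sf rel}$. You are simply spelling out the details the paper omits, in particular the consistency property, which is indeed the only nontrivial point and whose justification you outline correctly via the transitivity axioms and the inductive construction of $\lambda_{\sf rel}$.
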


\begin{lemma}\label{claim:always-standard}
$G$ admits a good labeling relation if and only if $G$ admits a good standard labeling relation.
\end{lemma}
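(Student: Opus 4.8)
The plan is to prove the nontrivial direction: if $G$ admits a good labeling relation, then it admits a good \emph{standard} one. The backward direction is immediate, since a standard labeling relation is in particular a labeling relation. So let ${\sf rel}$ be a good labeling relation of $G$; by \autoref{claim:good-labeling-relation} (or directly from the definition) we may work with the associated edge-labeling $\lambda := \lambda_{{\sf rel}}$, which is a \gel of $G$. First I would massage $\lambda$ outside $L$: apply \autoref{obs:injective} on the subgraph $G \setminus L$ (or a direct perturbation argument) to make $\lambda$ injective on $E(G) \setminus L$, while keeping the relative order of these labels with respect to the labels on $L$ unchanged; this ensures the first bullet of the definition of standard, namely ${\sf rel}(e,f)\neq 0$ for distinct $e,f \in E(G)\setminus L$. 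I must check that this perturbation preserves goodness: since it only breaks ties among edges of $E(G)\setminus L$, by \autoref{obs:local_minima} a cycle that had two local minima (resp. maxima) still does, because refining an equality into a strict inequality can only split a flat local extremum into two, and cannot destroy either of the two required extrema.

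The heart of the argument is to reassign the labels of the edges in $L$, i.e.\ the edges incident to leaves of $1$-interesting stars, so that they become either ``very small'' or ``very big''. For each $1$-interesting star $S$ with center $s$ and for each leaf $v$ of $S$ with neighbor $u\in X$, the relevant local structure is either a $2$-edge path $s$–$v$–$u$ (type-1 leaf) or, together with a twin leaf $v'$, a $4$-cycle $s$–$v_1$–$w$–$v_2$–$s$ (type-2 leaf). The key observation is that in $\lambda$, one of the two edges at $v$ (say $vu$) has a label that is a \emph{local minimum} among the edges of any cycle through $v$ that uses both edges at $v$, and the other edge ($sv$) plays the symmetric role relative to the rest of that cycle. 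More precisely, within the star $S$, for each leaf $v$ exactly one of the two incident edges should be pushed into $L_{\sf small}$ and the other into $L_{\sf big}$, in a way that is \emph{consistent with the direction} in which $\lambda$ traverses the $2$-edge path or the $4$-cycle at $v$: if in $\lambda$ the path $s$–$v$–$u$ increases from $s$ to $u$ (i.e.\ $\lambda(sv)<\lambda(vu)$), put $sv \in L_{\sf small}$ and $vu \in L_{\sf big}$; if it decreases, do the opposite; and analogously for the two paths of a type-2 $4$-cycle, so that the cyclic pattern around the $4$-cycle is preserved. After doing this for all $1$-interesting stars, set all edges in $L_{\sf small}$ to a common value smaller than every label on $E(G)\setminus L$, and all edges in $L_{\sf big}$ to a common value larger than every such label; this yields a standard labeling relation ${\sf rel}'$ by construction.

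It remains to verify that ${\sf rel}'$ is good, for which by \autoref{obs:local_minima_for_relations} I would check that every cycle $C$ of $G$ has two local minima or two local maxima with respect to ${\sf rel}'$. Here I would use the structural fact, to be established from the definitions (and alluded to in the paragraph preceding \autoref{claim:good-labeling-relation} and in \autoref{claim:always-standard}'s surrounding discussion), that any cycle of $G$ intersecting a $1$-interesting star $S$ enters and leaves $S$ through a $2$-edge path or a $4$-cycle of the kind above, and that because we preserved the local increase/decrease directions, each such passage contributes a local minimum (an edge of $L_{\sf small}$) flanked by strictly larger edges. Then the case analysis is: if $C$ avoids all $1$-interesting stars, then $C$ lies in $G[X\cup B]$ (together with possibly boring stars) and ${\sf rel}'$ agrees with the injective-outside-$L$ version of $\lambda$ there, which is good; if $C$ passes through exactly one $1$-interesting star, the single passage already yields (by the twin-edge structure) two distinct local minima of value in $L_{\sf small}$; and if $C$ passes through two or more $1$-interesting stars, it collects at least two edges of $L_{\sf small}$ along distinct passages, again giving two local minima. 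The main obstacle I anticipate is precisely the bookkeeping in this last step — making sure that ``entering a $1$-interesting star'' always forces at least one $L_{\sf small}$-edge with strictly larger neighbours in $C$, i.e.\ that the leaf a cycle uses never lies entirely inside $L_{\sf big}$ on both incident edges — which is exactly why we chose the partition of each star's edges to respect the traversal direction of $\lambda$ rather than arbitrarily; the $4$-cycle case for type-2 leaves needs a small extra check that the cyclic alternation forced by \autoref{obs:C4} is compatible with the global ``small vs.\ big'' split.
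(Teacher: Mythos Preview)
Your approach is essentially the paper's: break ties on $E(G)\setminus L$, push each leaf's two incident edges into $L_{\sf small}/L_{\sf big}$ according to the direction of the original labeling, then verify goodness via \autoref{obs:local_minima_for_relations}. However, your final case analysis has a genuine gap.

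The problematic case is a cycle $C$ that intersects exactly one $1$-interesting star $S$ through \emph{a single leaf} $v$ together with the edge $sz$ (where $z$ is the neighbour of the center $s$ in $X$). You write that ``the single passage already yields (by the twin-edge structure) two distinct local minima of value in $L_{\sf small}$'', but this is false here: such a $C$ contains only \emph{one} edge of $L_{\sf small}$ (either $sv$ or $vu$), since $sz\notin L$. The second local minimum must therefore be found in the $G[X\cup B]$-portion of $C$, and for that you need to invoke goodness of the \emph{original} labeling: since ${\sf rel}$ was good and you preserved the direction of the $2$-path $s\text{--}v\text{--}u$, the subpath of $C$ from $u$ to $s$ avoiding $v$ cannot be monotone (otherwise $C$ would have had only one local minimum under ${\sf rel}$), so it contains a strict decrease, yielding a second local minimum under ${\sf rel}'$. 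Your plan never isolates this case (your description ``enters and leaves $S$ through a $2$-edge path or a $4$-cycle'' tacitly ignores the possibility of leaving $S$ through $sz$) and so never uses the crucial hypothesis that ${\sf rel}$ was good on cycles through $sz$.

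A second, smaller gap: when $C$ uses two leaves $v_1,v_2$ of the same star with distinct neighbours in $X$, the two $L_{\sf small}$-edges you obtain may both be of the form $sv_i$ and hence \emph{adjacent at $s$}, forming a single local-minimum subpath rather than two. The paper handles this by switching to the two $L_{\sf big}$-edges $v_1u_1,v_2u_2$, which are then vertex-disjoint and give two local \emph{maxima}. Your obstacle paragraph worries about the wrong thing (an $L_{\sf big}$-only passage cannot occur, since every leaf's two incident edges are split one-small/one-big by construction); the real bookkeeping is the adjacency issue above and the single-leaf-plus-$sz$ case.
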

\begin{proof}
    The only non-trivial implication is that if $G$ admits a good labeling relation ${\sf rel}$, then $G$ admits a good standard labeling relation ${\sf rel}'$. Let us build ${\sf rel}'$ from ${\sf rel}$ starting with ${\sf rel}'(e,f)={\sf rel}(e,f)$ for every $e,f \in E(G)$.

    The first property, namely that for every two distinct edges $e,f \in E(G) \setminus L$, ${\sf rel}'(e,f)\neq 0$, is easy to achieve. Indeed, let $F \subseteq E(G) \setminus L$ be an inclusion-wise maximal set of edges such that, for any two edges $e,f \in F$, ${\sf rel}(e,f) = 0$ (that is, all the edges in $F$ ``have the same label''). Then we just order $F$ arbitrarily as $e_1,\ldots,e_p$ and for $e_i,e_j \in F$ with $i\neq j$, we redefine ${\sf rel}'(e_i,e_j) = 1$ if and only if $i > j$. After applying this operation exhaustively, it is easy to check that, since ${\sf rel}$ is good, the current labeling relation ${\sf rel}'$ is also good.

    Let us now focus on the second property concerning the sets $L_{\sf small}$ and $L_{\sf big}$ that partition $L$. To this end, we further modify ${\sf rel}'$ as follows. Starting with $L_{\sf small}=L_{\sf big}=\emptyset$, we proceed to grow these two sets until they partition $L$ and, once this is done, we just redefine ${\sf rel}'$ so that it satisfies the three conditions in the definition of standard labeling relation concerning edges in $L$.

    To this end, let $S$ be a $1$-interesting star in $G \setminus X$ with center $s$; see \autoref{fig:01interesting-stars}(b). Let $v_1,v_2$ be a pair of type-2 leaves of $S$ with common neighbor $w \in X$. Since ${\sf rel}$ is good, necessarily ${\sf rel}(sv_1,v_1w)=1$ and ${\sf rel}(sv_2,v_2w)=2$, or ${\sf rel}(sv_1,v_1w)=2$ and ${\sf rel}(sv_2,v_2w)=1$. Suppose without loss of generality that the former holds. Then we
    add $sv_2$ and $v_1w$  to $L_{\sf small}$, and $sv_1$ and $v_2w$  to $L_{\sf big}$. Let now $v$ be a type-1 leaf of $S$ with neighbor $u \in X$. If ${\sf rel}(sv,vu) \leq 1$, we add $vu$ to $L_{\sf small}$ and $sv$ to $L_{\sf big}$, and otherwise (that is, if ${\sf rel}(sv,vu) = 2$), we add $sv$ to $L_{\sf small}$ and $vu$ to $L_{\sf big}$.

    It is easy to verify that the obtained ${\sf rel}'$ is still a labeling relation of $G$, and it is standard by construction. It just remains to verify that it is good. By \autoref{obs:local_minima_for_relations}, this is equivalent to verifying that every
 cycle $C$ of $G$ admits two local minima (or two local maxima) with respect to ${\sf rel}'$. For an edge $e \in L_{\sf small}$ in a cycle $C$, we denote by $S^C_e$ the maximal subpath of $C$ containing $e$ such that all its edges are in $L_{\sf small}$. Consider an arbitrary cycle $C$ of $G$. If $C$ does not intersect any $1$-interesting star, it admits two local minima because ${\sf rel}$ is good and the changes in ${\sf rel}'$ with respect to ${\sf rel}$ in $C$ can only increase the number of minima. Hence, we can assume that $C$ intersects a $1$-interesting star $S$, and thus it contains its center $s$. Let $z$ be the neighbor of $s$ in $X$.
 We distinguish two cases, which are clearly exhaustive:
 \begin{itemize}
     \item $C$ contains two leaves $v_1,v_2$ of $S$. If $v_1,v_2$ share a neighbor $w$ in $C$ other than $s$, then necessarily $w \in X$ and $C$ is a 4-cycle, which must admit two local minima because ${\sf rel}$ is good (and ${\sf rel}'$ coincides with ${\sf rel}$ for those edges). Otherwise (in this case, $C$ looks like the red cycle in \autoref{fig:01interesting-stars}(a)), by construction of the sets $L_{\sf small}$ and $L_{\sf big}$, necessarily $C$ contains two edges $e_1,e_2 \in L_{\sf small}$. We distinguish two cases. If at least one of $e_1$ or $e_2$ is {\sl not} incident with $s$, then the associated subpaths $S^C_{e_1}$ and $S^C_{e_2}$ of $C$ are vertex-disjoint, hence defining two local minima of $C$. Otherwise, that is, if both $e_1$ and $e_2$ are incident with $s$, let $e_1'$ (resp. $e_2'$) be the edge of $C$ incident with $e_1$ (resp. $e_2$) not containing $s$. Then both $e_1',e_2' \in L_{\sf big}$, and they are contained in two vertex-disjoint subpaths of $C$ that define two local maxima.
     \item $C$ contains only one leaf $v$ of $S$ (such as vertex $v$ depicted in \autoref{fig:01interesting-stars}(b)). In that case, necessarily $C$ contains the edge $sz$. Let $u$ be the neighbor of $v$ in $X$.
     Assume without loss of generality that $uv \in L_{\sf small}$ and $vs \in L_{\sf big}$, the other case being totally symmetric. Then, since $C$ admits two local minima with respect to ${\sf rel}$, the subpath $P$ from $u$ to $s$ in $C$ not containing $v$ cannot be increasing with respect to the edge-labeling associated with ${\sf rel}$, and this property is clearly maintained by ${\sf rel}'$. Hence, following $P$ starting from $u$, necessarily there are two consecutive edges $e_1,e_2 \in E(C)$ such that ${\sf rel}'(e_1,e_2)=1$. Then, the subpaths $S^C_{uv}$ and $S^C_{e_2}$ of $C$ are vertex-disjoint and define two local minima of $C$, concluding the proof.
 \end{itemize}\vspace{-.5cm}
\end{proof}

\subparagraph*{2-\SAT formulation.} We are now ready to present our algorithm to decide whether $G$ admits a \gel. By combining \autoref{claim:good-labeling-relation} and \autoref{claim:always-standard}, the problem is equivalent to deciding whether $G$ admits a good standard labeling relation. Let ${\sf rel}:E(G) \times E(G) \to \{0,1,2\}$ be the standard labeling relation we are looking for. Recall that $X \subseteq V(G)$ is the modulator to a star-forest with $|X| \leq k$, $B \subseteq V(G)$ is the set of vertices belonging to a boring star of $G \setminus X$, and $L \subseteq E(G)$ is the set of edges incident with a leaf of a $1$-interesting star. Let $F \subseteq E(G)$ be the set of edges joining a center of a $1$-interesting star with its neighbor in $X$. Then note that $E(G)$ can be partitioned into $E(G)= E(G[X \cup B]) \uplus L \uplus F$. We start by guessing ${\sf rel}$ restricted to pairs of edges in $E(G[X \cup B])$. Since ${\sf rel}$ is standard, this is equivalent to guessing a linear ordering of the edges in $E(G[X \cup B])$. Since by \autoref{eq:few-boring-edges} we have that $|E(G[X \cup B])| = \Ocal(k^4)$, we have $|E(G[X \cup B])|! = 2^{\Ocal(k^4 \log k)}$ many choices for this linear ordering. We stress that this is the only step of the algorithm that does not run in polynomial time. Clearly, we can discard any guess $\rho$ that is not good restricted to $G[X \cup B]$.

Assume henceforth that we have fixed the restriction $\rho$ of ${\sf rel}$ for pairs of edges in $E(G[X \cup B])$, and now our task is to decide whether there exists a good standard labeling relation ${\sf rel}_{\rho}:E(G) \to \{0,1,2\}$ that extends $\rho$. To complete the definition of ${\sf rel}_{\rho}$, it remains to define it for pairs where at least one edge belongs to $L \uplus F$. Since ${\sf rel}_{\rho}$ is standard, its definition for pairs containing an edge in $L$ is given by providing a partition of $L$ into the sets $L_{\sf small}$ and $L_{\sf big}$. To define ${\sf rel}_{\rho}$ for pairs containing an edge in $F$ we have much more freedom: we can insert every edge in $F$ anywhere within the total order of $E(G[X \cup B])$ guessed by $\rho$. To make all these choices, we resort to a formulation of the problem as a 2-\SAT formula.

To this end, fix an arbitrary ordering $e_1,\ldots,e_m$ of $E(G) \setminus L$. Note that this ordering is arbitrary and has nothing to do with $\rho$ and its extension to $L$, we will just use it for our formulation. For every fixed $\rho$, we proceed to define a 2-\SAT formula $\varphi_{\rho}$, and we will prove (cf. \autoref{claim:2SAT-equivalence}) that $G$ admits a good standard labeling relation ${\sf rel}_{\rho}$ that extends $\rho$ if and only if  $\varphi_{\rho}$ is satisfiable.

For every two indices $i,j \in [m]$ with $i < j$ (hence, for
 ${m \choose 2}$ pairs), introduce a binary variable $x_{i,j}$. For convenience, in this section we will use `1' (resp. `0') for a \T (resp. \F) assignment of a variable.
 We will interpret $x_{i,j}=1$  as ${\sf rel}_{\rho}(e_i,e_j)=1$ (that is, the label of $e_i$ is strictly greater than that of $e_j$), and $x_{i,j}=0$ as ${\sf rel}_{\rho}(e_i,e_j)=2$. Since $e_i,e_j \in E(G) \setminus L$ and the desired labeling relation ${\sf rel}_{\rho}$ is standard, we can safely disconsider the possibility that ${\sf rel}_{\rho}(e_i,e_j)=0$. The formula consists of the following clauses:
 \begin{itemize}
     \item For every pair of edges $e_i,e_j \in E(G[X \cup B])$ with $i < j$ (that is, pairs of edges for which ${\sf rel}_{\rho}$ is already fixed), if ${\sf rel}_{\rho}(e_i,e_j)=1$ (resp. ${\sf rel}_{\rho}(e_i,e_j)=2$), then add to $\varphi_{\rho}$ the clause containing only the literal $x_{i,j}$ (resp. $\overline{{x_{i,j}}}$).
     \item For every $1$-interesting star $S$ of $G \setminus X$ with center $s$ and neighbor $z \in X$, suppose that, for some index $i \in [m]$, $e_i$ is the edge between $s$ and $z$. The clauses defined in what follows are inspired by the previous discussion concerning \autoref{fig:01interesting-stars}(b) and \autoref{eq:1interesting-1}, \autoref{eq:1interesting-2}, and \autoref{eq:1interesting-3}, translated to the formalism of label relations and the corresponding literals:
     \begin{itemize}
         \item For every pair of type-2 leaves $v_1,v_2$ of $S$ with common neighbor $w \in X$:
         \begin{itemize}
             \item If $G[X \cup B]$ contains an increasing path (with respect to ${\sf rel}_{\rho}$) from $w$ to $z$, let $e_j$ be the last edge of this path. If $i>j$ (resp. $i<j$), add to $\varphi_{\rho}$ the clause containing only the literal $\overline{x_{i,j}}$ (resp. $x_{i,j}$). This clause plays the role of the constraint imposed by \autoref{eq:1interesting-1}.
             \item If $G[X \cup B]$ contains a decreasing path (with respect to ${\sf rel}_{\rho}$) from $w$ to $z$, let $e_j$ be the last edge of this path. If $i>j$ (resp. $i<j$), add to $\varphi_{\rho}$ the clause containing only the literal ${x_{i,j}}$ (resp. $\overline{x_{i,j}}$). This clause plays the role of the constraint imposed by \autoref{eq:1interesting-2}.
         \end{itemize}
        \item For every  type-1 leaf $v$ of $S$ with neighbor $u \in X$, if $G[X \cup B]$ contains both an increasing path from $u$ to $z$, and a decreasing path from $u$ to $z$, let $e_j$ (resp. $e_{\ell}$) be the last edge of this increasing (resp. decreasing) path. Now the goal is to add to $\varphi_{\rho}$ the clause playing the role of the disjunctive constraint imposed by \autoref{eq:1interesting-3}. But since the variables $x_{i,j}$ of $\varphi_{\rho}$ are only defined for $i < j$, we need to distinguish several cases:
        \begin{itemize}
            \item If $i < \min\{j,\ell\}$, then add to $\varphi_{\rho}$ the clause $(\overline{x_{i,j}}\vee x_{i,\ell})$.

            \item If $j < i < \ell$, then add to $\varphi_{\rho}$ the clause $(x_{j,i}\vee x_{i,\ell})$.

            \item If $\ell < i < j$, then add to $\varphi_{\rho}$ the clause $(\overline{x_{i,j}}\vee \overline{x_{\ell,i}})$.

            \item If $i > \max\{j,\ell\}$, then add to $\varphi_{\rho}$ the clause $(x_{j,i}\vee \overline{x_{\ell,i}})$.

        \end{itemize}

        Note that, for a type-1 leaf $v$, if at least one of the increasing and decreasing paths in $G[X \cup B]$ from $u$ to $z$ is missing, then we do not add any clause to $\varphi_{\rho}$. This makes sense, since if one of these paths is missing, it already yields a safe way to choose the direction of growth of the 2-edge path from $s$ to $u$.
     \end{itemize}
 \end{itemize}

 \subparagraph*{Guaranteeing the transitivity of the obtained labeling relation.}
 Before completing the definition of the 2-\SAT formula $\varphi_{\rho}$, we need to address the following issue.
 Recall that our goal is to extract, from a satisfying assignment of $\varphi_{\rho}$, a good standard labeling relation ${\sf rel}_{\rho}$ of $G$ that extends $\rho$.
 Note that a labeling relation, defined on pairs of edges of $G$, is required to satisfy the transitivity property, which is equivalent to saying that the metric on $E(G)$ defined by the growth of labels needs to satisfy the triangle inequality. 
 But, how is it guaranteed that the given assignment of the variables of ${\sf rel}_{\rho}$ indeed implies that the corresponding labeling satisfies the triangle inequality? A priori, the clauses that we added so far to $\varphi_{\rho}$ do not prevent the possibility that there exist three indices $i < j < \ell$ such that the corresponding variables are assigned the values
 \begin{equation}\label{eq:problem-triangle-inequality1}
     x_{i,j}=x_{j,\ell}=1 \ \text{ and }\ x_{i,\ell}=0.
 \end{equation}
 If we have an assignment satisfying \autoref{eq:problem-triangle-inequality1}, then clearly we will {\sl not} be able to extract a labeling relation from it, since it reads as `the label of the $i$-th edge is strictly greater than the label of the $j$-th edge, which is strictly greater than the label of the $\ell$-th edge, but the label of the $i$-th edge is strictly smaller than the label of the $\ell$-th edge'. Clearly, the other problematic assignment involving indices $i,j,\ell$ is
 \begin{equation}\label{eq:problem-triangle-inequality2}
     x_{i,j}=x_{j,\ell}=0 \ \text{ and }\ x_{i,\ell}=1.
 \end{equation}
A possible solution to this issue is to prevent ``by hand'' the assignments of \autoref{eq:problem-triangle-inequality1} and \autoref{eq:problem-triangle-inequality2}. This could be done by adding, for every three indices $i < j < \ell$, the following two clauses to $\varphi_{\rho}$:
\begin{equation}\label{eq:fixing-transitivity}
     (\overline{x_{i,j}} \vee \overline{x_{j,\ell}} \vee x_{i,\ell}) \wedge (x_{i,j} \vee x_{j,\ell} \vee \overline{x_{i,\ell}}).
 \end{equation}
Note that the first (resp. second) clause above prevents the assignment from \autoref{eq:problem-triangle-inequality1} (resp. \autoref{eq:problem-triangle-inequality2}). Of course, the problem of this approach is that the clauses in \autoref{eq:fixing-transitivity} involve three variables, which would result in a 3-\SAT formula instead of a 2-\SAT formula, as we need for being able to solve the satisfiability problem in polynomial time.

Fortunately, we can still impose the constraints of \autoref{eq:fixing-transitivity} by using clauses containing only two variables.
The idea is the following: in order to guarantee the transitivity of the obtained labeling relation, we do not need to add the clauses of \autoref{eq:fixing-transitivity} for {\sl every} triple $i < j < \ell$, but only for those that are bound by some of the clauses added so far to $\varphi_{\rho}$ corresponding to type-1 leaves of $1$-interesting stars.
Indeed, if there is no constraint at all among the pairs of indices of two variables $x_{i,j}$ and $x_{\ell,t}$, then we can choose any assignment for the variables $x_{a,b}$, with $a \in \{i,j\}$ and $b \in \{\ell,t\}$, so that the transitivity clauses of \autoref{eq:fixing-transitivity} are satisfied.

Thus, let us focus on these problematic clauses, namely those corresponding to a type-1 leaf $v$ of a $1$-interesting star $S$ of $G \setminus X$ as depicted in \autoref{fig:transitive} (which is a simplified version of \autoref{fig:01interesting-stars}(b) using the current notation), where assuming without loss of generality that $i < j < \ell$, the clause $(\overline{x_{i,j}}\vee x_{i,\ell})$ has been added to $\varphi_{\rho}$.
This clause $(\overline{x_{i,j}}\vee x_{i,\ell})$ involves three indices $i,j,\ell$, so we do need to add to $\varphi_{\rho}$ the clauses in \autoref{eq:fixing-transitivity} for guaranteeing the transitivity of the corresponding labeling relation. The crucial observation is that, since both edges $e_j$ and $e_{\ell}$ belong to $E(G[X \cup B])$, the value of $x_{j,\ell}$ has been already {\sl fixed} by the guess of $\rho$, which reduces the arity of the clauses in \autoref{eq:fixing-transitivity}  from three to two. Indeed, if $x_{j,\ell}=1$, then the clauses of \autoref{eq:fixing-transitivity} boil down to the single clause $(\overline{x_{i,j}} \vee x_{i,\ell})$ involving only two variables, while if
$x_{j,\ell}=0$, then the clauses of \autoref{eq:fixing-transitivity} boil down to $(x_{i,j} \vee \overline{x_{i,\ell}})$. Note that, in the former case, the newly added clause is redundant, since $(\overline{x_{i,j}} \vee x_{i,\ell})$ had been already added to $\varphi_{\rho}$.


\begin{figure}[h!tb]
\begin{center}
\vspace{-.25cm}
\includegraphics[scale=1.00]{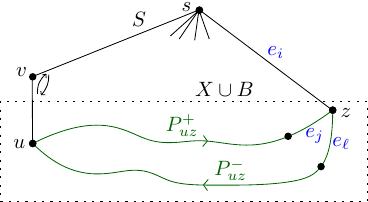}
\end{center}
\caption{Assuming that $i < j < \ell$, the clause $(\overline{x_{i,j}}\vee x_{i,\ell})$ has been added to $\varphi_{\rho}$. For guaranteeing transitivity, if $x_{j,\ell}=0$ according to $\rho$, we add to $\varphi_{\rho}$ the clause  $(x_{i,j} \vee \overline{x_{i,\ell}})$.}\label{fig:transitive}
\end{figure}

Thus, for guaranteeing the transitivity of the obtained labeling relation, it is enough, every time that we add to $\varphi_{\rho}$ a clause corresponding to a leaf of a $1$-interesting star, to add as well the clause described above (which is also described in the caption of \autoref{fig:transitive}). We call these newly added clauses the \textit{transitivity clauses} of $\varphi_{\rho}$.

This completes the construction of the 2-\SAT formula $\varphi_{\rho}$.

\subparagraph*{Equivalence between admitting a \gel and satisfying the 2-\SAT formula.} We now prove that the satisfiability of $\varphi_{\rho}$ indeed corresponds to the existence of a good standard labeling relation of $G$ that extends $\rho$.

\begin{lemma}\label{claim:2SAT-equivalence}
    $G$ admits a good standard labeling relation ${\sf rel}_{\rho}$ that extends $\rho$ if and only if the 2-\SAT formula $\varphi_{\rho}$ is satisfiable.
\end{lemma}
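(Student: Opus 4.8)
The plan is to prove the two implications separately via the dictionary $x_{i,j}=1\iff{\sf rel}_\rho(e_i,e_j)=1$ and $x_{i,j}=0\iff{\sf rel}_\rho(e_i,e_j)=2$, together with the characterization of good labeling relations by local minima and maxima of cycles (\autoref{obs:local_minima_for_relations}). I would repeatedly use three features of a standard labeling relation that extends $\rho$: the edges of $E(G)\setminus L$ are linearly ordered and restricted to $E(G[X\cup B])$ this order is $\rho$; every edge of $L_{\sf small}$ lies below, and every edge of $L_{\sf big}$ above, every other edge of $G$; and $\rho$ is good on $G[X\cup B]$. I would also use that every vertex of a $1$-interesting star lies outside $X\cup B$. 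The easy implication is $(\Rightarrow)$; the delicate one is $(\Leftarrow)$, and there the main obstacle is to convert an arbitrary satisfying assignment into a bona fide (transitive) labeling relation and then to certify its goodness — the transitivity clauses of $\varphi_\rho$ are precisely what makes the first part of this work.

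$(\Rightarrow)$ Suppose ${\sf rel}_\rho$ is a good standard labeling relation extending $\rho$, and set $x_{i,j}:={\sf rel}_\rho(e_i,e_j)$; this is well defined since ${\sf rel}_\rho$ linearly orders $E(G)\setminus L$. I would then check each clause. The unit clauses inherited from $\rho$ hold since ${\sf rel}_\rho$ extends $\rho$. For a type-$2$ pair $v_1,v_2$ of a $1$-interesting star $S$ with center $s$, neighbour $z\in X$, and common neighbour $w$: by standardness the $4$-cycle on $\{s,v_1,v_2,w\}$ makes one $(s,w)$-path increasing and the other decreasing, so $G$ already contains an increasing $(w,s)$-path and an increasing $(s,w)$-path through $S$; if $G[X\cup B]$ had an increasing $(w,z)$-path with last edge $e_j$ and $\lambda_{{\sf rel}_\rho}(e_j)\le\lambda_{{\sf rel}_\rho}(sz)$, concatenating it with $sz$ would produce a second increasing $(w,s)$-path (distinct from the one through the relevant leaf, which it avoids since that leaf lies outside $X\cup B$), contradicting goodness; hence the unit clause of \autoref{eq:1interesting-1} is satisfied, and \autoref{eq:1interesting-2} symmetrically. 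For a type-$1$ leaf $v$ of $S$ with neighbour $u$: standardness splits $\{sv,vu\}$ between $L_{\sf small}$ and $L_{\sf big}$, so the $2$-edge path on $\{s,v,u\}$ is strictly monotone, and in either direction the same ``second monotone path'' argument forces one of $\lambda_u^+>\lambda_{{\sf rel}_\rho}(sz)$ or $\lambda_u^-<\lambda_{{\sf rel}_\rho}(sz)$ to hold (whenever the relevant path exists), which is exactly one disjunct of the clause encoding \autoref{eq:1interesting-3}; the case split on the relative positions of the indices $i,j,\ell$ then matches the $2$-clause that was added. Finally, the transitivity clauses hold because a labeling relation is transitive, so $\lambda_{{\sf rel}_\rho}$ never realizes a forbidden pattern as in \autoref{eq:problem-triangle-inequality1} or \autoref{eq:problem-triangle-inequality2}.

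$(\Leftarrow)$ Given a satisfying assignment of $\varphi_\rho$, I would build a good standard labeling relation ${\sf rel}_\rho$ extending $\rho$ in three moves. First, I would argue the relation it induces on $E(G)\setminus L$ can be taken transitive, hence a linear order extending $\rho$: inside $E(G[X\cup B])$ it equals $\rho$ and is transitive; for triples involving the edge $sz$ of a $1$-interesting star, the transitivity clauses rule out the forbidden patterns on those triples that are ``bound'' by the $1$-interesting-star clauses, while any other triple still contains a variable that occurs in no clause and can be reassigned so as to respect transitivity without affecting satisfiability — making this bookkeeping precise is the technical heart of the proof. Second, I would fix the partition of $L$ star by star: for a type-$2$ pair with common neighbour $w$, the satisfied unit clauses give $\lambda_w^-<\lambda_s<\lambda_w^+$ whenever those paths exist and either orientation of the $4$-cycle works; for a type-$1$ leaf $v$ with neighbour $u$, the satisfied $2$-clause gives $\lambda_u^+>\lambda_s$ or $\lambda_u^-<\lambda_s$, and I would put $vu$ into $L_{\sf small}$ in the first case and $sv$ into $L_{\sf small}$ in the second (an arbitrary valid choice if a path is missing). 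Third, I would verify ${\sf rel}_\rho$ is good via \autoref{obs:local_minima_for_relations}: a cycle disjoint from all $1$-interesting stars lies in $G[X\cup B]$, where ${\sf rel}_\rho$ agrees with $\rho$; a cycle meeting such a star through two leaves, or meeting two distinct such stars, picks up two vertex-disjoint maximal constant runs of $L_{\sf small}$-edges (or of $L_{\sf big}$-edges), hence two local minima (resp.\ maxima), exactly as in the proof of \autoref{claim:always-standard}; and a cycle $C$ meeting exactly one $1$-interesting star $S$ through exactly one leaf $v$ with neighbour $u$ consists of $sv$, $vu$, a $(u,z)$-path $Q$ in $G[X\cup B]$, and $sz$, and a short case analysis on $Q$ (it has an internal valley or peak, or it is monotone and thus equals $P^+_{uz}$ or $P^-_{uz}$), combined with the direction chosen on $\{s,v,u\}$ and the inequality it was chosen to satisfy (or the type-$2$ inequalities if $v$ is a type-$2$ leaf), shows $C$ has two local minima or two local maxima. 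This proves the equivalence.
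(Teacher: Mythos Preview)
Your proposal follows essentially the same route as the paper's proof: in $(\Rightarrow)$ you read off the assignment from ${\sf rel}_\rho$ and check each family of clauses (unit clauses from $\rho$, type-$2$ unit clauses via the ``second increasing/decreasing path'' contradiction, type-$1$ disjunctions, and transitivity clauses), and in $(\Leftarrow)$ you build ${\sf rel}_\rho$ on $E(G)\setminus L$ from the assignment, pick the $L_{\sf small}/L_{\sf big}$ split leaf by leaf according to which disjunct is satisfied, and verify goodness by the same case analysis on cycles (inside $G[X\cup B]$; meeting two leaves or two stars; meeting one leaf plus the edge $sz$). Your handling of transitivity---arguing that unconstrained variables can be reassigned while the transitivity clauses handle the bound triples---is the same idea the paper sketches before the lemma and then asserts as ``easy to verify'' inside the proof, and your appeal to the argument of \autoref{claim:always-standard} for the two-leaf case is exactly what the paper reproduces inline.
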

\begin{proof}
Suppose first that $G$ admits a good standard labeling relation ${\sf rel}_{\rho}$ that extends $\rho$, and we proceed to define an assignment $\alpha$ of the variables occurring in $\varphi_{\rho}$ that satisfies all its clauses.

For every pair of edges $e_i,e_j \in E(G[X \cup B])$ with $i < j$ (that is, pairs of edges for which ${\sf rel}_{\rho}$ is already fixed), if ${\sf rel}_{\rho}(e_i,e_j)=1$ (resp. ${\sf rel}_{\rho}(e_i,e_j)=2$), then we set $\alpha(x_{i,j})=1$ (resp. $\alpha(x_{i,j})=0$). Clearly, the clauses of $\varphi_{\rho}$ corresponding to the values fixed by $\rho$ are satisfied by this assignment.

Consider now a $1$-interesting star $S$ of $G \setminus X$ with center $s$ and neighbor $z \in X$, and suppose that, for some index $i \in [m]$, $e_i$ is the edge between $s$ and $z$.

Suppose first that $v_1,v_2$ is a pair of type-2 leaves of $S$ with common neighbor $w \in X$. Since by hypothesis ${\sf rel}_{\rho}$ is a good standard labeling relation of $G$, $G[X \cup B]$ contains at most one increasing path (with respect to ${\sf rel}_{\rho}$) from $w$ to $z$.
If this is indeed the case, let $e_j$ be the last edge of this path.
If $i>j$ (resp. $i<j$), then we set $\alpha(x_{i,j})=0$ (resp. $\alpha(x_{i,j})=1$).
Symmetrically, $G[X \cup B]$ contains at most one decreasing path (with respect to ${\sf rel}_{\rho}$) from $w$ to $z$.
If this is indeed the case, let $e_j$ be the last edge of this path. If $i>j$ (resp. $i<j$), then we set $\alpha(x_{i,j})=1$ (resp. $\alpha(x_{i,j})=0$).
By definition of $\varphi_{\rho}$, it can be easily checked that the clauses corresponding to type-2 leaves are satisfied by this assignment.

Suppose now that $v$ is a type-1 leaf $v$ of $S$ with neighbor $u \in X$. Again, since ${\sf rel}_{\rho}$ is a good standard labeling relation of $G$, $G[X \cup B]$ contains at most one increasing path (with respect to ${\sf rel}_{\rho}$) from $u$ to $z$, and at most one decreasing path from $u$ to $z$. If both these paths exist, let $e_j$ (resp. $e_{\ell}$) be the last edge of this increasing (resp. decreasing) path. Since ${\sf rel}_{\rho}$ is good, necessarily ${\sf rel}_{\rho}(e_i,e_j)=2$ or ${\sf rel}_{\rho}(e_i,e_{\ell})=1$ (or both); see \autoref{fig:transitive}. Following the definition of the clauses added to $\varphi_{\rho}$, we distinguish several cases depending on the relative ordering of the indices $i,j,\ell$, and we define the assignments of the variables involving $i,j,\ell$ (recall that a variable $x_{a,b}$ only exists in $\varphi_{\rho}$ if $a < b$) as defined in \autoref{tab:assignment}, where the last row shows the clause of $\varphi_{\rho}$ satisfied by the corresponding assignment, using the fact that ${\sf rel}_{\rho}(e_i,e_j)=2$ or ${\sf rel}_{\rho}(e_i,e_{\ell})=1$ (or both).
\begin{table}[h]
\centering
\begin{tabular}{|c||c|c|c|c|}
\hline
relation $\ \ \setminus\ \ $ order of $i,j, \ell$ & $i < \min\{j,\ell\}$ & $j < i < \ell$ & $\ell < i < j$ & $i > \max\{j,\ell\}$\\
\hline
\hline
 if ${\sf rel}_{\rho}(e_i,e_j)=1$, we define & $\alpha(x_{i,j})=1$ & $\alpha(x_{j,i})=0$ & $\alpha(x_{i,j})=1$ & $\alpha(x_{j,i})=0$ \\
if ${\sf rel}_{\rho}(e_i,e_j)=2$, we define & $\alpha(x_{i,j})=0$ & $\alpha(x_{j,i})=1$ & $\alpha(x_{i,j})=0$ & $\alpha(x_{j,i})=1$ \\

\hline

if ${\sf rel}_{\rho}(e_i,e_\ell)=1$, we define & $\alpha(x_{i,\ell})=1$ & $\alpha(x_{i,\ell})=1$ & $\alpha(x_{\ell,i})=0$ & $\alpha(x_{\ell,i})=0$ \\
if ${\sf rel}_{\rho}(e_i,e_\ell)=2$, we define  & $\alpha(x_{i,\ell})=0$ & $\alpha(x_{i,\ell})=0$ & $\alpha(x_{\ell,i})=1$ & $\alpha(x_{\ell,i})=1$ \\
\hline
clause of $\varphi_{\rho}$ satisfied by $\alpha$ & $(\overline{x_{i,j}}\vee x_{i,\ell})$ & $(x_{j,i}\vee x_{i,\ell})$ & $(\overline{x_{i,j}}\vee \overline{x_{\ell,i}})$ & $(x_{j,i}\vee \overline{x_{\ell,i}})$\\
\hline
\end{tabular}
\caption{Assignment $\alpha$ of the variables involving indices $i,j,\ell \in [m]$ depending on their relative order and on the values of ${\sf rel}_{\rho}(e_i,e_j)$ and ${\sf rel}_{\rho}(e_i,e_{\ell})$. The last row shows the clause of $\varphi_{\rho}$ satisfied by the corresponding assignment, using the fact that, since ${\sf rel}_{\rho}$ is good, necessarily  ${\sf rel}_{\rho}(e_i,e_j)=2$ or ${\sf rel}_{\rho}(e_i,e_{\ell})=1$ (or both).
}
\label{tab:assignment}
\end{table}

To complete the definition of the assignment $\alpha$, we take its transitive closure, that is, for any three indices $i < j < \ell$, if
$\alpha(x_{i,j})=\alpha(x_{j,\ell})=1$, we set $\alpha(x_{i,\ell})=1$, and if $\alpha(x_{i,j})=\alpha(x_{j,\ell})=0$, we set $\alpha(x_{i,\ell})=0$. Finally, we choose any value for the variables of $\varphi_{\rho}$ that were not considered so far. This assignment satisfies the transitivity clauses of $\varphi_{\rho}$.

\medskip

Conversely, suppose now that the 2-\SAT formula $\varphi_{\rho}$ described above is satisfiable, let $\alpha$ be a satisfying assignment of the variables, and we proceed to define from it a good standard labeling relation ${\sf rel}_{\rho}$ of $G$ that extends $\rho$. By the symmetry of a labeling relation and the definition of a standard one, we only need to define the sets $L_{\sf small}$ and $L_{\sf big}$ that partition the set $L \subseteq E(G)$, and to define ${\sf rel}_{\rho}(e_i,e_j)$ for indices $i,j \in [m]$ with $i< j$. Let us first define the latter. For any pair $i,j \in [m]$ with $i< j$, we define ${\sf rel}_{\rho}(e_i,e_j)=1$ if $\alpha(x_{i,j})=1$, and ${\sf rel}_{\rho}(e_i,e_j)=2$ otherwise. Note that, if
 both $e_i$ and $e_j$ belong to $E(G[X \cup B])$, then the definition of $\varphi_{\rho}$  implies that ${\sf rel}_{\rho}(e_i,e_j)$ indeed coincides with $\rho$.

 Let us now define the partition of $L$ into $L_{\sf small}$ and $L_{\sf big}$. To this end, let $S$ be a $1$-interesting star of $G \setminus X$ with center $s$, and let $z \in X$ be the neighbor of $s$.
 Consider first a pair $v_1,v_2$ of type-2 leaves of $S$ with common neighbor $w \in X$; see \autoref{fig:01interesting-stars}(b).
 Then we add the edges $sv_1$ and $wv_2$ to $L_{\sf small}$, and the edges $sv_2$ and $wv_1$ to $L_{\sf big}$. Consider now a type-1 leaf $v$ of $S$ with neighbor $u \in X$; see \autoref{fig:transitive}.
 Since we can assume that ${\sf rel}_{\rho}$ restricted to $G[X \cup B]$ is good (as otherwise, this choice of $\rho$ can be discarded), $G[X \cup B]$ contains at most one increasing path (according to ${\sf rel}_{\rho}$) $P_{uz}^+$ from $u$ to $z$, and at most one decreasing path (according to ${\sf rel}_{\rho}$) $P_{uz}^-$ from $u$ to $z$. If at least one of the paths is missing, we add arbitrarily one of the edges $sv$ and $vu$ into $L_{\sf small}$, and the other one into $L_{\sf big}$.
 If both paths $P_{uz}^+$ and $P_{uz}^-$ exist, let $e_j$ (resp. $e_{\ell}$) be the edge of $P_{uz}^+$ (resp. $P_{uz}^-$) incident with $z$.
 Then the clauses that we added to $\varphi_{\rho}$ (inspired by \autoref{eq:1interesting-3}) imply that ${\sf rel}_{\rho}(e_i,e_j)=2$, or ${\sf rel}_{\rho}(e_i,e_{\ell})=1$, or both.
 If both hold, we add arbitrarily one of the edges $sv$ and $vu$ into $L_{\sf small}$, and the other one into $L_{\sf big}$. If only ${\sf rel}_{\rho}(e_i,e_j)=2$ holds, we add $sv$ into $L_{\sf small}$ and  $vu$ into $L_{\sf big}$. Finally, if only ${\sf rel}_{\rho}(e_i,e_{\ell})=1$ holds, we add $sv$ into $L_{\sf big}$ and  $vu$ into $L_{\sf small}$. This completes the definition of ${\sf rel}_{\rho}$.

 By construction, it is easy to verify that ${\sf rel}_{\rho}$ is a standard labeling relation of $G$ that extends $\rho$.
 To show that it is good, by \autoref{obs:local_minima_for_relations},
 it is equivalent to prove that every cycle $C$ of $G$ admits at least two local minima (with respect to ${\sf rel}_{\rho}$). The remaining of the proof uses arguments similar to those used in the proof of \autoref{claim:always-standard}, and we omit it here (for completeness, it can be found in the extended full version of the paper available online~\cite{GEL-arXiv}).
\end{proof}

\subparagraph*{Wrapping up the algorithm.} We finally have all the ingredients to formally state our algorithm in \autoref{alg:FPT-star-modulator}, where we include comments that justify each of the steps.

\begin{algorithm}[h!tb]

\SetKwInOut{Input}{Input} \SetKwInOut{Output}{Output}

\medskip

\Input{An $n$-vertex graph and a star-forest modulator $X \subseteq V(G)$ with $|X|\leq k$.}\Output{A \gel of $G$, if it exists, or the report that $G$ is bad.}

\medskip

Apply Rules~\ref{rule:1},~\ref{rule:2}, and~\ref{rule:3} exhaustively, in any order.

\tcc{Every star in $G \setminus X$ is well-behaved (\autoref{claim:star-assumptions})}

\tcc{$G \setminus X$ contains at most $k^2$ boring stars (\autoref{claim:few-stars})}

If $G \setminus X$ contains a $0$-interesting star, apply \autoref{rule:4} and delete it.

\tcc{The above line is safe by \autoref{claim:0interesting-stars}}

\tcc{By  \autoref{claim:good-labeling-relation} and \autoref{claim:always-standard}, we focus on deciding whether $G$ admits a good standard labeling relation}

Guess all possible $2^{\Ocal(k^4 \log k)}$ good standard labeling relations $\rho$ of $G[X \cup B]$.\label{algo-line-guessing}

For each such a $\rho$, build the corresponding 2-\SAT formula $\varphi_{\rho}$.

If for some $\rho$, $\varphi_{\rho}$ is satisfiable, output the corresponding \gel of $G$.\label{algo-line-find-gel}

Otherwise, report that $G$ is bad.

\tcc{The above two lines are correct by \autoref{claim:2SAT-equivalence}}

\caption{\FPT algorithm for {\sc GEL} parameterized by the size of a star-forest modulator.}
\label{alg:FPT-star-modulator}
\end{algorithm}

The correctness of \autoref{alg:FPT-star-modulator} is justified by the corresponding claims and lemmas. Note that, in \autoref{algo-line-find-gel}, we can indeed output the corresponding \gel of $G$. Indeed, suppose that $\varphi_{\rho}$ is satisfiable for some $\rho$. Then, following the proof of \autoref{claim:2SAT-equivalence}, the satisfying assignment of $\varphi_{\rho}$ allows to construct in polynomial time a good standard labeling relation ${\sf rel}_{\rho}$ of $G$ that extends $\rho$. Given ${\sf rel}_{\rho}$, we can construct in polynomial the corresponding good edge-labeling of $G$ as explained above \autoref{claim:good-labeling-relation} (as used in its proof).

As for the claimed running time, all steps of \autoref{alg:FPT-star-modulator}  run in polynomial time except the guess of $\rho$ in \autoref{algo-line-guessing}. Since by \autoref{eq:few-boring-edges} we have that
$|E(G[X \cup B])| = \Ocal(k^4)$, and guessing $\rho$ is equivalent to guessing a linear ordering of $E(G[X \cup B])$, there are indeed $2^{\Ocal(k^4 \log k)}$ possible guesses.

The proof of \autoref{thm:kernel_sfm} is now complete.

\subparagraph*{Possible simplification of the algorithm.} To conclude this section, we would like to mention that the above algorithm can be substantially simplified by considering the following observation, which we explain reusing  \autoref{fig:01interesting-stars}(b). Let us start by focusing on type-2 leaves, that is, on the red paths between $w$ and $z$, namely $P_{wz}^+$ that increases from $w$ to $z$, and $P_{wz}^-$ that increases from $z$ to $w$. In the discussion about 1-interesting starts below \autoref{fig:01interesting-stars}, we argued that $\lambda_s$ needs to satisfy two inequalities with respect to $\lambda_w^+$ and $\lambda_w^-$, namely that $\lambda_w^+ > \lambda_s$ (cf. \autoref{eq:1interesting-1}) and that $\lambda_w^- < \lambda_s$ (cf. \autoref{eq:1interesting-2}). The simplification is obtained by analyzing the relation between $\lambda_w^+$ and $\lambda_w^-$. Namely, suppose that $\lambda_w^- \geq \lambda_w^+$. In that case, it can be easily verified that the cycle enclosed by the union of $P_{wz}^+$ and $P_{wz}^-$ does not admit two local minima (or maxima), and therefore \autoref{obs:local_minima} implies that the partial edge-labeling that we guessed in $G[X \cup B]$ is {\sl not} good and can be safely discarded. Thus, we can always assume that $\lambda_w^- < \lambda_w^+$, hence if we combine \autoref{eq:1interesting-1} and \autoref{eq:1interesting-2}  into $\lambda_w^- < \lambda_s < \lambda_w^+$, it turns out that this constraint is always feasible. Therefore, in order to choose the label $\lambda_s$, in order to satisfy all the constraints imposed by the different type-2 leaves with a common neighbor $w_i \in X$  (provided that no $w_i$ is adjacent to $z$, as otherwise we would have a $K_{2,3}$), it suffices to verify that the common intersection of the intervals $[\lambda_{w_i}^- , \lambda_{w_i}^+]$ is non-empty, and we can choose  $\lambda_s$ as any number in that intersection.

Let us now focus on the more complicated case, namely on type-1 leaves of the considered star $S$, corresponding to vertices $v$ and $u$ illustrated in \autoref{fig:01interesting-stars}(b). Similarly as above, if we consider the two green paths $P_{uz}^+$ and $P_{uz}^-$ between $u$ and $z$, we can safely assume that $\lambda_u^- < \lambda_u^+$, as otherwise the guessed partial edge-labeling is not good. Then, the disjunctive constraint given by \autoref{eq:1interesting-3} boils down to $\lambda_u^- < \lambda_s < \lambda_u^+$ (that is, there is no disjunction anymore), which gives us a constraint on the choice of  $\lambda_s$ that is always feasible, and we can deal with the joint constraints given by the different type-1 leaves as we did above (namely, verifying that the corresponding intervals have a non-empty intersection).

Summarizing, in \autoref{alg:FPT-star-modulator}, for each of the guessed good edge-labelings of $G[X \cup B]$ (no need to translate them into good standard labeling relations, we can just guess a linear ordering of the edges), we can just verify whether it can be extended to the whole graph by dealing with each star $S$ {\sl separately} as discussed above.

Nevertheless, we decided not to include this simplification in the formal description of the algorithm, as we think that the approach that we presented of using a 2-SAT formulation via labeling relations (in particular, guaranteeing their transitivity) is interesting by itself, and we hope that it will find further applications beyond the parameterization by the size of a star-forest modulator. Indeed, we believe that our approach is robust enough so that it has the potential to be applicable whenever we have disjunctive constraints in order to attribute labels, which is a seemingly non-trivial task.

\section{\FPT algorithms parameterized by treewidth and something else}
\label{sec:FPT}

In this section our goal is to solve the {\sc GEL} problem parameterized by the treewidth of the input graph, denoted by $\tw$, together with an additional parameter, by using dynamic programming (DP) algorithms.
This additional parameter is $c$ in \autoref{sec:DPcolors} (thus, we solve {\sc $c$-GEL}), and the  maximum degree $\Delta$ of the input graph in \autoref{sec:DPdegree}.
Both DP algorithms are quite similar. Hence, we present the one parameterized by $\tw+c$ in full detail, and explain what changes for the one parameterized by $\tw+\Delta$.








\subsection{Treewidth and number of labels}
\label{sec:DPcolors}

When we consider as parameters both the treewidth $\tw$ of the input graph and the number $c$ of distinct labels, the $c$-\GEL problem can easily be expressed in \MSOL (for completeness, such a formulation can be found in the extended full version of the paper available online~\cite{GEL-arXiv}), and then the fact that it is \FPT parameterized by $\tw+c$ follows by Courcelle's theorem~\cite{Courcelle90}, but without an explicit parametric dependence, which is our contribution in this section.

\subparagraph{Dynamic programming algorithm.} In this section, we design a DP to solve the {\sc $c$-GEL} problem. Note that we only do so for $c>1$, since the case $c=1$ trivially reduces to finding a cycle in the input graph.

\begin{theorem}\label{th:tw+c}
For $c>1$, there is an algorithm that solves the {\sc $c$-GEL} problem in time $c^{\Ocal(\tw^2)}\cdot n$ on $n$-vertex graphs of treewidth at most $\tw$.
\end{theorem}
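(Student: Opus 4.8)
The plan is to solve {\sc $c$-GEL} by bottom-up dynamic programming over a nice tree decomposition $(T,\{B_t\mid t\in V(T)\})$ of $G$ of width at most $\tw$ with $B_r=\emptyset$, which we may assume is given (if not, one can be computed in \FPT time~\cite{Korhonen21} and made nice following~\cite{Kloks94}). For a node $x$, write $\widehat G_x:=G[V_x\cup B_x]$, and observe that $\widehat G_x$ contains every edge of $G$ with both endpoints in $V_x\cup B_x$, while every remaining edge of $G$ has no endpoint in $V_x$; hence $B_x$ separates the already-processed part of $G$ from the part still to come. The table at $x$ is a set of \emph{signatures}, each consisting of (i)~a labeling $\mu:E(G[B_x])\to[c]$ of the bag-internal edges, together with (ii)~for every ordered pair $(u,v)$ of distinct vertices of $B_x$, a bounded-size \emph{path type}: whether $\widehat G_x$ has an increasing path from $u$ to $v$ whose internal vertices avoid $B_x$ and, if so, the labels of its first and last edges (plus the small amount of extra data on interior local extrema needed to recognize bad cycles later). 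A signature is kept if and only if it is realized by some $c$-edge-labeling of $\widehat G_x$ that extends $\mu$ and is good on $\widehat G_x$ (goodness being checked via \autoref{obs:local_minima}). The key point for the running time is that in a good labeling there is at most one increasing path between each ordered pair, and that — using the partial order on path types of \autoref{obs:type}, which lets us keep only the minimal ones — a path type ranges over a set of size $c^{\Ocal(1)}$; therefore each table has at most $c^{\Ocal(\tw^2)}$ signatures.

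The transitions follow the four node types. For a leaf bag there is one trivial signature. At an \emph{introduce} node adding $v$ to the bag, $v$ has no neighbour in $V_y$, so we enumerate the $c^{\Ocal(\tw)}$ labelings of the new edges from $v$ to $B_y$, append them to $\mu$, update the path types that now involve or route through $v$, and discard the signature if this creates a second increasing path or a bad cycle. At a \emph{forget} node removing $v$, the graph $\widehat G_x$ and its labeling do not change, but $v$ becomes usable as an internal vertex, so we recompute the path types allowing $v$ internally, again discarding the signature if a second increasing path or a bad cycle through $v$ appears (using \autoref{obs:local_minima}). At a \emph{join} node with children $y_1,y_2$, we combine every pair of signatures that agree on $\mu$: since $V_{y_1}\cap V_{y_2}=\emptyset$, pieces from the two sides are automatically vertex-disjoint, and every relevant path of $\widehat G_x$ between bag vertices arises from alternately concatenating, at shared bag vertices, the path-type pieces of the two children that are label-compatible at the junctions; we compute the new path types by a reachability/path-algebra computation on the auxiliary multigraph on vertex set $B_x$ whose arcs are the children's piece types, discarding any combination that yields two increasing paths between some pair or a bad cycle. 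Finally, $G$ admits a $c$-\gel if and only if the table at the root, whose bag is empty, is non-empty. A nice tree decomposition has $\Ocal(\tw\cdot n)$ nodes, each table has $c^{\Ocal(\tw^2)}$ signatures, an introduce node costs an extra $c^{\Ocal(\tw)}$ factor, and a join node costs $c^{\Ocal(\tw^2)}$ (ranging over pairs of child signatures only doubles the constant in the exponent), so the total running time is $c^{\Ocal(\tw^2)}\cdot n$.

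The main obstacle is the correctness proof, and inside it the delicate points are: (a)~showing that path types, which only record interactions between pairs of \emph{bag} vertices, are enough to detect \emph{all} future violations — in particular two increasing paths sharing an already-forgotten endpoint that is reached from new vertices through the bag, and bad cycles spanning several bags — which one handles by using that a forgotten vertex is separated from the future by the bag, so that any offending path or cycle decomposes into bag-to-bag pieces already tracked in the table; (b)~proving that the join and forget transitions recompute the path types exactly, i.e.\ that monotone, label-compatible concatenation of the children's pieces produces precisely the increasing (and single-extremum) bag-to-bag paths of $\widehat G_x$ and nothing spurious, with simplicity of the resulting paths being automatic from $V_{y_1}\cap V_{y_2}=\emptyset$; and (c)~establishing, through the partial order of \autoref{obs:type}, that keeping only the minimal path type per ordered pair loses no information, which is what keeps each table of size $c^{\Ocal(\tw^2)}$. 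The leaf and introduce transitions, the final answer extraction, and the bookkeeping with \autoref{obs:local_minima} are then routine.
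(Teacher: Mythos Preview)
Your proposal is correct and follows essentially the same dynamic-programming approach as the paper: signatures record the labeling of the bag-internal edges together with, for each ordered pair of bag vertices, the minimal \emph{label-types} (first and last label plus the pattern of interior local extrema among $\emptyset,\mathsf{m},\mathsf{M},\mathsf{mM},\mathsf{Mm},\mathsf{good}$) of paths whose interior avoids the bag, with \autoref{obs:type} bounding each table to $c^{\Ocal(\tw^2)}$ entries and \autoref{obs:local_minima} used to detect bad cycles. The one presentational difference is that the paper maintains goodness only on $G_x$ (the graph induced by already-forgotten vertices) rather than on your $\widehat G_x$; since $G_x$ grows only at forget nodes, this confines all bad-cycle checking to forget (reading it off $f_y(v,v)$ and the concatenations through $v$), makes introduce and join purely bookkeeping, and sidesteps the composition of several bag-to-bag pieces that your introduce step would otherwise need in order to catch cycles passing through multiple bag vertices.
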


To prove this, let us first give a few more definitions.

\subparagraph{Bad cycles.}
We say that a cycle is \emph{bad} if it has at most one local minimum. Note that, by \autoref{obs:local_minima}, an edge-labeling is good if and only if it induces no bad cycle.

\subparagraph{Local extrema in walks.}
A \emph{walk} in $G$ is a sequence $v_1v_2\dots v_p$ of not necessarily distinct vertices, such that any $v_iv_{i+1}$ is an edge for $i\in[p-1]$.
In particular, paths are walks where all vertices are distinct.
Similarly to cycle, a \emph{local minimum} (resp. \emph{maximum}) in a walk $(v_1,\dots,v_p)$, with respect to $\lambda$, is a subwalk $P=(v_i,\dots,v_j)$ consisting of edges with the same label, such that the edges $v_{i-1}v_i$ and $v_jv_{j+1}$, if they exist, have labels strictly larger (resp. smaller) than that of $P$.
Note that if all labels in a walk are constant, then the walk is both a local minimum and a local maximum.
A \emph{local extremum} is either a local minimum or a local maximum.
A local extremum in the walk is said to be \emph{internal} if it does not contain $v_1v_2$ nor $v_{p-1}v_p$.

\medskip
The idea for the DP is the following.
We fix a nice tree decomposition $(T,\{B_x\mid x\in V(T)\})$ of $G$ and, for each $x\in V(T)$, from the leaves to the root, we want to store all edge-labelings $\lambda:E(G)\to[c]$ such that there is no bad cycle in $G_x$ for $\lambda$.
Remember that $G_x$ is the graph generated by set of all vertices in a bag that is a descendant of $x$ in $T$ and are not in $B_x$. 

Of course, there are too many such labelings, so we only store signatures of such labelings with only the essential information.
Hence, for $t\in V(T)$ and for a labeling $\lambda$, we only care about:
\begin{itemize}
    \item the restriction of $\lambda$ to $E(G[B_x])$, and
    \item the paths between vertices of $B_x$, and whose internal vertices are in $G_x$, that may potentially be contained in a bad cycle of $G$, with respect to $\lambda$.
\end{itemize}
Again, there are too many such paths, so we will assign a type to each path and keep track of each type of paths present for each pair $u,v$ of vertices of $B_x$.

\subparagraph{Types of paths.}
Given that a bad cycle has at most one local minimum and one local maximum, a subpath $P$ of a bad cycle with endpoints $u$ to $v$ can be of five different types depending on its internal local extrema. Note that the endpoints always belong to a local extremum, but that it is not internal, and thus not taken into account.
The possible types of $P$ are the following (see \autoref{fig:label-type} for an illustration):

\begin{itemize}
    \item {\bf Type $\emptyset$:} It has no internal local extremum, or
    \item {\bf Type {\sf m}:} It has one internal local minimum and no internal local maximum, or
    \item {\bf Type {\sf M}:} It has one internal local maximum and no internal local minimum, or
    \item {\bf Type {\sf mM}:} From $u$ to $v$, it has one internal local minimum and one internal local maximum, in this order, or
    \item {\bf Type {\sf Mm}:} From $u$ to $v$, it has one internal local maximum and one internal local minimum, in  this order.
\end{itemize}
\begin{figure}
    \centering
    \includegraphics{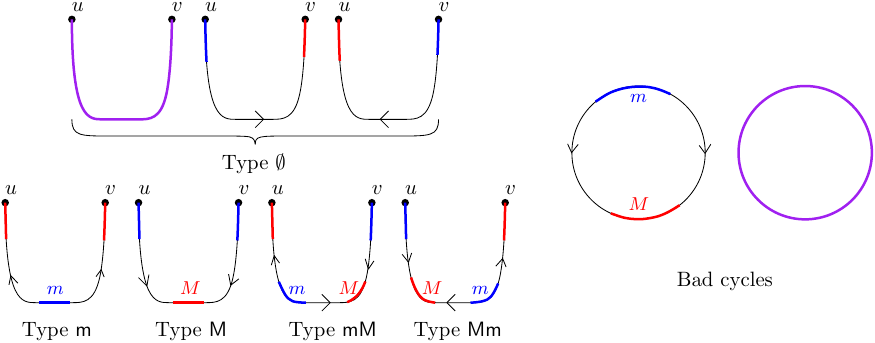}
    \caption{Types of paths that may be subpaths of a bad cycle. Local minima are represented in blue, and local maxima in red. The purple represent a subpath that is both a local minimum and a local maximum (which means that all labels are equal). Internal local minima (resp. maxima) are labeled $m$ (resp. M).
    An arrow between a local minima and a local maxima represents the existence of an increasing path between them.
    }
    \label{fig:label-type}
\end{figure}
More generally, for any walk $P$ in $G$ from a vertex $u$ to a vertex $v$, which may not be distinct, 
we classify it to have one of the types described above if it corresponds to the definition, and the {\bf type {\sf good}} otherwise, meaning that it has at least two internal local extrema that are either both minima or maxima, and thus cannot be a subpaths of a bad cycle.

\subparagraph{Label-types of paths.}
More precisely, we will consider here label-types.
The \emph{label-type} of a path $P$ in $G$ from a vertex $u$ to a vertex $v$ with respect to $\lambda$ is $l_1\tau l_2$, where $\tau\in\{\emptyset,{\sf m, M, mM, Mm, good}\}$ is the type of $P$ and $l_1$ (resp. $l_2$) is the label of the edge incident to $u$ (resp. $v$).

\subparagraph{Partial-order of types.}
We define a partial-order $\le$ on types of paths as follows.
Given two types $\tau,\tau'\in\{\emptyset,{\sf m, M, mM, Mm, good}\}$, we say that $\tau\le\tau'$ if and only if $\tau$ is a substring of $\tau'$ or $\tau'={\sf good}$.
In other words, $\emptyset\le\{{\sf m,M}\}\le\{{\sf mM,Mm}\}\le{\sf good}$.
This partial order is extended to label-types by saying that $l_1\tau l_2\le l_1'\tau l_2'$ if and only if $l_i=l_i'$ for $i\in[2]$ and $\tau\le\tau'$.

We do so because we can observe the following.

\begin{observation}\label{obs:type}
Let $G$ be a graph, $\lambda$ be an edge-labeling of $G$, $C$ be a bad cycle in $G$ with respect to $\lambda$, and $P$ be a subpath of $C$ from a vertex $u$ to a vertex $v$ of label-type $l_1\tau l_2$.
Then, for any type $\tau'\le\tau$, and for any path $P'$ from $u$ to $v$ of label-type $l_1\tau' l_2$ that is internally vertex-disjoint from $C$, the cycle $C'$ obtained from $C$ by replacing $P$ with $P'$ is also bad.
\end{observation}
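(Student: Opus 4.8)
The plan is to extract, from the hypotheses that $C$ is bad and that the subpath $P$ has type $\tau$, a rigid description of how $C$ behaves around $P$, and then, for each $\tau'\le\tau$, to produce an explicit certificate that $C'$ is bad as well. The starting point is the elementary reformulation, immediate from \autoref{obs:local_minima} and the fact that local minima and local maxima alternate cyclically around any cycle, that a bad cycle is exactly a cycle that is the union of two internally vertex-disjoint increasing paths between two of its vertices; equivalently, splitting a bad cycle at its (unique, if the cycle is non-constant) local minimum $R_m$ and its (unique) local maximum $R_M$ exhibits it as the union of a weakly increasing $R_m$--$R_M$ arc and a weakly decreasing one. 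I will call such a pair of arcs a \emph{monotone decomposition}.

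Next I would classify the position of $P$ in a monotone decomposition of $C$ according to $\tau$: since the only internal local extrema available in $C$ are $R_m$ and $R_M$, the type $\tau$ records precisely which of $R_m,R_M$ lie in the interior of $P$ and in which order (so $\emptyset$, ${\sf m}$, ${\sf M}$, ${\sf mM}$, ${\sf Mm}$ mean neither, just $R_m$, just $R_M$, $R_m$ then $R_M$, $R_M$ then $R_m$, respectively). In each case one reads off the shape of $Q:=C\setminus P$ together with a chain of inequalities relating $l_1$, $l_2$, and the labels $q_u,q_v$ of the two edges of $Q$ incident with $u$ and with $v$ --- these being nothing but the statement that $u$ and $v$ are not local extrema of $C$. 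For instance $\tau={\sf mM}$ forces $P$ to be a piece of the decreasing arc from $u$ down to $R_m$, then the whole increasing arc from $R_m$ to $R_M$, then a piece of the decreasing arc from $R_M$ to $v$; hence $Q$ is a weakly monotone sub-arc of the decreasing arc, and $l_2\ge q_v\ge q_u\ge l_1$.

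Finally, for each $\tau'\le\tau$, I would assemble a monotone decomposition of $C'$ (equivalently, write $C'$ as a union of two internally vertex-disjoint increasing paths) out of the weakly monotone stretches and the at most one internal extremum of $P'$ that $\tau'$ guarantees, glued to the monotone sub-arc(s) of $C$ forming $Q$; this fits together because $P'$ has the same endpoint-edge labels $l_1,l_2$ as $P$, so the inequalities recorded above remain valid. For example, when $\tau={\sf mM}$ and $\tau'={\sf M}$, letting $h$ be the label of the internal maximum of $P'$, the portion of $C'$ running from the maximum of $P'$ through $v$ and along $Q$ to $u$ is weakly decreasing (since $h>l_2\ge q_v\ge q_u\ge l_1$) while the portion of $P'$ from $u$ up to its maximum is weakly increasing, which is a monotone decomposition of $C'$; the case $\tau=\tau'$ is immediate, and the remaining strict cases --- ${\sf m}\to\emptyset$, ${\sf mM}\to{\sf m}$, ${\sf mM}\to\emptyset$, together with those obtained by flipping all labels (which swaps ${\sf m}\leftrightarrow{\sf M}$) and by reversing the path (which swaps $l_1\leftrightarrow l_2$ and ${\sf mM}\leftrightarrow{\sf Mm}$) --- are handled in the same way. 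The main obstacle is bookkeeping rather than a single hard step: one must track the interaction of $P'$ with $Q$ at the junctions $u$ and $v$ carefully, because --- as the case ${\sf mM}\to{\sf M}$ already shows --- an internal extremum of $P$ that is lost in passing to $P'$ can reappear as a local extremum of $C'$ at a junction, so the count of local minima is not monotone term by term under the replacement; what rescues the argument uniformly is the global monotone decomposition, which controls the whole count at once. (The hypothesis that $P'$ is internally vertex-disjoint from $C$ is used to ensure that $C'$ is genuinely a cycle.)
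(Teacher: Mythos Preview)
The paper states this result as an observation without providing a proof; it simply says ``We do so because we can observe the following'' and moves on. So there is nothing to compare against.

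Your argument is correct. The reformulation of a bad cycle as admitting a monotone decomposition (two weakly increasing arcs between a minimum plateau and a maximum plateau) is exactly the content of \autoref{obs:local_minima} in the bad case, and your classification of $\tau$ by which of $R_m,R_M$ lie in the interior of $P$ is accurate, since internal local extrema of a subpath $P$ of $C$ coincide with local extrema of $C$ that avoid the first and last edges of $P$. The junction inequalities you extract (e.g.\ $l_2\ge q_v\ge q_u\ge l_1$ when $\tau={\sf mM}$) are precisely what the monotone decomposition of $C$ gives, and they are exactly what is needed to glue the monotone pieces of $P'$ onto $Q$ to produce a monotone decomposition of $C'$. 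Your worked case $\tau={\sf mM}$, $\tau'={\sf M}$ checks out, and the symmetries you invoke (label reversal swapping ${\sf m}\leftrightarrow{\sf M}$, path reversal swapping ${\sf mM}\leftrightarrow{\sf Mm}$) genuinely reduce the remaining strict cases to a handful. Your remark that the local-extremum count is not termwise monotone under the replacement, but that the global monotone decomposition controls it, is the right diagnosis of why a naive ``fewer internal extrema in $P'$ means fewer in $C'$'' argument would fail.

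One small point worth making explicit if you write this up fully: when $\tau\in\{\emptyset,{\sf m},{\sf M}\}$, the extremum of $C$ not captured by $\tau$ may lie in the interior of $Q$ or may straddle a junction (overlapping the first or last edge of $P$), and the junction inequalities differ slightly between these subcases; they all still feed into a monotone decomposition of $C'$, but the bookkeeping you allude to does branch here.
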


Hence, for $x\in V(T)$ and for any $u,v\in B_x$, what we will store instead of every $(u,v)$-path that may be a subpath of a bad cycle, is every label-type of a $(u,v)$-path, or precisely, the minimal such label-types, since by \autoref{obs:type}, they are enough to detect bad cycles.

\subparagraph{Signature of a labeling.}
Let $\lambda$ be an edge-labeling of $G$ and $x$ be a node of $T$.
Let $\lambda_x:E(G[B_x])\to[c]$ be the restriction of $\lambda$ to $B_x$.
For $(u,v)\in B_x^2$ with $u,v$ distinct (resp. $u=v$), let $F_x^{u,v}$ be the set of label-types of the $(u,v)$-paths whose internal vertices are in $G_x$ (resp. cycles containing $v$ and whose other vertices are in $G_x$).
We stress that the paths that we consider do not contain any edge of $E(G[B_x])$, which simplifies the DP.
Let $f_x$ be the function that maps each pair $(u,v)\in B_x^2$ to the minimal elements of $F_x^{u,v}$.
Then, the \emph{signature} of $\lambda$ at the node $x$, denoted by $\sig_\lambda^x$, is the pair $(\lambda_x,f_x)$.

From the partial-order on types, we can see that, for any $\Tcal\subseteq \{\emptyset,{\sf m, M, mM, Mm, good}\}$, there are at most two minimal elements in $\Tcal$, and that a set of minimal elements must be one of the following: either $\{\tau\}$ for $\tau\in\{\emptyset,{\sf m, M, mM, Mm, good}\}$, or $\{\sf m, M\}$, or $\{\sf mM, Mm\}$. There are therefore eight possible sets of minimal elements (six singletons, and two sets of size two), and since we have at most $c^2$ choices for the labels $l_1,l_2$ of a label-type $l_1 \tau l_2$, for each pair $(u,v)$ the function $f_x$ can take up to $8c^2$ values.
Thus, the number of distinct signatures is at most $c^{\tw^2}\cdot(8c^2)^{\tw^2}$.

\subparagraph{Signature of a bag.}
Now, we define the \emph{signature} of a vertex $x$ of $T$, denoted by $\sig_x$ to be the set of all signatures $\sig_\lambda^x$ where $\lambda:E(G)\to[c]$ is such that there is no bad cycle in $G_x$.
{Given that we assume the root-bag $B_r$ to be empty, the only possible signature in $\sig_r$ is the empty signature, that we denote by $(\emptyset,\emptyset)$.
We have $(\emptyset,\emptyset)\in\sig_r$ (which is equivalent to saying that $\sig_r\neq\emptyset$) if and only if there is a labeling that induces no bad cycle in $G_r=G$, i.e., if and only if $G$ admits a $c$-\gel.
}

\smallskip
Let us show how to construct $\sig_x$ inductively from its children, if any.

\subsubsection{Leaf bag}
If $B_x$ is a leaf bag, then we trivially have $$\sig_x=\{(\emptyset,\emptyset)\}.$$

\subsubsection{Introduce bag}
If $B_x$ is an introduce bag, then there is an introduced vertex $v\in V(G)$ such that $B_x=B_y\cup\{v\}$, where $y$ is the unique child of $x$ in $T$.
Let $\Lambda$ be the set of at most $c^{tw-1}$ distinct labelings of $E_v$, where $E_v$ is the set of edges incident to $v$. Note that  $E_v$ is contained in $E(G[B_x])$. 
Let us prove that $$\sig_x=\{(\lambda_y\cup\ell,f_y')\mid (\lambda_y,f_y)\in\sig_y,\ell\in \Lambda\},$$
where $\lambda_y\cup\ell$ is the labeling of $G[B_x]$ that is equal to $\lambda_y$ restricted to $G[B_y]$ and to $\ell$ restricted to $E_v$,
and $f_y'$ is equal to $f_y$ on $B_y^2$ and equal to  $\emptyset$ on all $(v,u)$ or $(u,v)$ with $u\in B_x$.
Note that $I:=\{(\lambda_y\cup\ell,f'_y)\mid (\lambda_y,f_y)\in\sig_y,\ell\in \Lambda\}$ can be constructed from $\sig_y$ in time $\Ocal(c^{\tw-1}\cdot|\sig_y|)$. In the next two claims we prove that $\sig_x = I$.

\begin{claim}
$\sig_x\subseteq I$.
\end{claim}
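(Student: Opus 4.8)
The plan is to fix an arbitrary signature $\sigma\in\sig_x$ and produce from it a pair $(\lambda_y,f_y)\in\sig_y$ and a labeling $\ell\in L$ of $E_v$ with $\sigma=(\lambda_y\cup\ell,f_y)$, where, exactly as in the definition of $I$, $f_y$ is tacitly extended to $B_x^2$ by setting it to $\emptyset$ on every pair containing $v$; this extension is precisely the function $f_y'$ above. By the definition of $\sig_x$, we have $\sigma=\sig_\lambda^x$ for some edge-labeling $\lambda\colon E(G)\to[c]$ that induces no bad cycle in $G_x$.

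First I would record the two structural facts about the nice tree decomposition that drive the argument. Since $x$ is an introduce bag with child $y$ and $v\notin B_y$, the connectivity property (the bags containing $v$ induce a connected subtree of $T$) forces $v$ to lie in no bag of $T_y$; plugging this into the definition of $V_x$ gives $V_x=V_y$, hence $G_x=G_y$. In particular $\lambda$ induces no bad cycle in $G_y$, so $\sig_\lambda^y=:(\lambda_y,f_y)$ belongs to $\sig_y$. The second fact, already noted in the text, is $E_v\subseteq E(G[B_x])$; combined with $v\notin B_y$ it yields the disjoint decomposition $E(G[B_x])=E(G[B_y])\uplus E_v$.

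Next I would check the two coordinates of $\sigma=(\lambda_x,f_x)$ separately. For the first, $\lambda_x=\lambda|_{E(G[B_x])}$ restricts along the above decomposition to $\lambda|_{E(G[B_y])}\cup\lambda|_{E_v}=\lambda_y\cup\ell$ with $\ell:=\lambda|_{E_v}\in L$. For the second, I would argue $f_x=f_y'$. If $u,u'\in B_y$, then every $(u,u')$-path with internal vertices in $G_x$ (and, when $u=u'$, every cycle through $u$ whose other vertices are in $G_x$) avoids $E_v$, since $v$ lies neither among its endpoints, which are in $B_y$, nor among its internal vertices, which are in $V_x=V_y$ while $v\notin V_y$; hence for such pairs ``avoiding $E(G[B_x])$'' and ``avoiding $E(G[B_y])$'' describe the same family of paths, so $F_x^{u,u'}=F_y^{u,u'}$ and $f_x(u,u')=f_y(u,u')$. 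If instead the pair contains $v$, then any candidate $(v,u)$-path (or cycle through $v$) uses an edge of $E_v$ incident to $v$, which lies in $E(G[B_x])$ and is therefore forbidden; hence $F_x^{v,u}=\emptyset$ and $f_x(v,u)=\emptyset$. This shows $f_x=f_y'$, so $\sigma=(\lambda_y\cup\ell,f_y')\in I$, completing the argument.

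The only genuinely delicate step is the pair of structural facts $G_x=G_y$ and $E_v\subseteq E(G[B_x])$; both are essentially given (the latter verbatim, the former from the definition of $V_x$ together with the connectivity of the decomposition), so I expect the bookkeeping about which vertices and edges the relevant paths may touch, rather than any real obstacle, to be the bulk of the write-up. Note that this is the ``easy'' inclusion: since $G_x=G_y$, passing from $y$ to $x$ creates no new cycle, so no badness can appear. (The reverse inclusion, not needed here, additionally requires observing that the labeling witnessing a member of $\sig_y$ may be freely altered on $E_v$ without affecting bad cycles in $G_y=G_x$, as $E_v$ is disjoint from $E(G_y)$.)
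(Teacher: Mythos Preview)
Your proof is correct and follows essentially the same approach as the paper. Both arguments hinge on the two structural facts $G_x=G_y$ and the absence of neighbours of $v$ in $V_x$; you phrase the second via $E_v\subseteq E(G[B_x])$ together with the ``no edge of $E(G[B_x])$'' constraint, while the paper states $N_G(v)\cap V_x=\emptyset$ directly, but these are interchangeable.
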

\begin{proof}
Let $(\lambda_x,f_x)\in\sig_x$.
Then there is a labeling $\lambda$ such that $\sig_\lambda^x=(\lambda_x,f_x)$ and such that there is no bad cycle in $G_x$ with respect to $\lambda$.
Thus, $\sig_\lambda^y=(\lambda_y,f_y)\in\sig_y$.
We have $E(G[B_x])=E(G[B_y])\cup E_v$.
Therefore, $\lambda_x=\lambda_y\cup\ell$, where $\ell\in\Lambda$ is the restriction of $\lambda$ to $E_v$.
Additionally, $N_G(v)\cap V_x=\emptyset$ by the definition of tree decomposition and of $G_x$, so, for any $u\in B_x$, for any $(u,v)$-path or $(v,u)$-path $P$ of $G$, $P$ cannot have all its internal vertices in $G_x$.
Therefore, any path between vertices of $B_x$ whose internal vertices are in $G_x$ is also a path between vertices of $B_y$ whose internal vertices are in $G_y$.
Therefore, $f_x=f_y'$, and thus $(\lambda_x,f_x)\in I$.

\end{proof}

\begin{claim}
$\sig_x\supseteq I$.
\end{claim}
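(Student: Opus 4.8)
The plan is to prove the reverse inclusion by producing, for each element of $I$, an explicit witnessing labeling of $G$. So let $(\lambda_x,f_x)\in I$; by definition there are $(\lambda_y,f_y)\in\sig_y$ and $\ell\in\Lambda$ with $\lambda_x=\lambda_y\cup\ell$ and $f_x=f_y'$, and there is an edge-labeling $\lambda'\colon E(G)\to[c]$ with $\sig_{\lambda'}^y=(\lambda_y,f_y)$ and no bad cycle in $G_y$ with respect to $\lambda'$. I would define $\lambda\colon E(G)\to[c]$ by $\lambda(e)=\ell(e)$ for $e\in E_v$ and $\lambda(e)=\lambda'(e)$ otherwise (well defined since $\ell$ labels exactly $E_v$), and then show that $\sig_\lambda^x=(\lambda_x,f_x)$ and that $G_x$ has no bad cycle with respect to $\lambda$; this gives $(\lambda_x,f_x)\in\sig_x$.

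The two structural facts I would use are $G_x=G_y$ and $N_G(v)\cap V_x=\emptyset$, the latter already observed in the proof of the inclusion $\sig_x\subseteq I$. For $G_x=G_y$: by the connectivity property of tree decompositions $v\notin B_z$ for every $z\in V(T_y)$, hence $B_z\setminus B_x=B_z\setminus B_y$ there, and since $B_x\setminus B_x=\emptyset$ we get $V_x=\bigcup_{z\in V(T_y)}(B_z\setminus B_y)=V_y$, so $G_x=G_y$. Since $v\notin V_x$, no edge of $E_v$ lies in $E(G_x)=E(G_y)$, so $\lambda$ and $\lambda'$ agree on $E(G_x)$; therefore the bad cycles of $G_x$ with respect to $\lambda$ are exactly those of $G_y$ with respect to $\lambda'$, namely none. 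For the first coordinate of the signature, $E(G[B_x])=E(G[B_y])\uplus E_v$ (disjoint since $v\notin B_y$), and $\lambda$ equals $\lambda'$—hence $\lambda_y$—on $E(G[B_y])$ and equals $\ell$ on $E_v$, so $\lambda$ restricted to $E(G[B_x])$ is exactly $\lambda_y\cup\ell=\lambda_x$.

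The step I expect to be the main obstacle is verifying that the stored label-types are unchanged, i.e. that the second coordinate of $\sig_\lambda^x$ equals $f_y'$. For a pair $u,w\in B_y$, any $(u,w)$-path (or, if $u=w$, cycle through $u$) whose internal vertices lie in $G_x$ and which uses no edge of $E(G[B_x])$ has all its vertices in $\{u,w\}\cup V_x$, and since $v\notin\{u,w\}\cup V_x$ such a path avoids $E_v$; hence $\lambda$ and $\lambda'$ agree on all its edges, and—using $E(G[B_x])=E(G[B_y])\uplus E_v$—forbidding edges of $E(G[B_x])$ is for such paths equivalent to forbidding edges of $E(G[B_y])$. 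Thus $F_x^{u,w}$ with respect to $\lambda$ equals $F_y^{u,w}$ with respect to $\lambda'$, so their minimal label-types agree and the value at $(u,w)$ is $f_y(u,w)=f_y'(u,w)$. For a pair involving $v$, every path/cycle allowed in the definition of $F_x^{\cdot,\cdot}$ must start with an edge incident to $v$, i.e. an edge of $E_v\subseteq E(G[B_x])$, which is forbidden; and $N_G(v)\cap V_x=\emptyset$ rules out any such path having its internal vertices in $G_x$ as well. Hence $F_x^{\cdot,\cdot}=\emptyset$ for every pair involving $v$, so the value there is the empty set, matching $f_y'$ by definition. Combining the three coordinates gives $\sig_\lambda^x=(\lambda_x,f_x)$, and together with the previous inclusion this establishes the displayed formula for $\sig_x$.
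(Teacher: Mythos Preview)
Your proof is correct and follows essentially the same approach as the paper: take a witnessing labeling $\lambda'$ for the child signature, modify it on $E_v$ to match $\ell$, and use $G_x=G_y$ together with $N_G(v)\cap V_x=\emptyset$ to conclude both that no bad cycle appears and that the path data coincide with $f_y'$. Your write-up is in fact more careful than the paper's (you explicitly separate the cases $u,w\in B_y$ and pairs involving $v$, and justify why the forbidden-edge constraints match), but the underlying argument is the same.
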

\begin{proof}
Let $(\lambda_y\cup\ell,f_y')\in I$.
Given that $(\lambda_y,f_y)\in\sig_y$, there is a labeling $\lambda'$ such that $\sig_{\lambda'}^y=(\lambda_y,f_y)$ and such that there is no bad cycle in $G_y$ with respect to $\lambda'$.
Given that $v\notin V(G_x)$, we get that
$G_x=G_y$ and there is also no bad cycle in $G_x$ with respect to $\lambda'$; so $\sig_{\lambda'}^x\in\sig_x$.
But then, given that cycles in $G_x$ do not use any edge of $B_x$, the same can be said of any other labeling whose restriction to $G_x$ is the same as $\lambda'$.
This is in particular the case of the labeling $\lambda$ obtained from $\lambda'$ by replacing the labels in $E_v$ with the ones of $\ell$.
Hence, $\sig_\lambda^x\in\sig_x$.
Additionally, any path between vertices of $B_x$ whose internal vertices are in $G_x$ is also a path between vertices of $B_y$ whose internal vertices are in $G_y$, so $\sig_\lambda^x=(\lambda_y\cup\ell,f_y')$.
Therefore, $I\subseteq\sig_x$.
\end{proof}

\subsubsection{Join bag}
If $B_x$ is a join bag, then $B_x=B_{y_1}=B_{y_2}$, where $y_1$ and $y_2$ are the unique children of $x$ in $T$.
Let us show that $$\sig_x=\{(\lambda_y,\min(f_{y_1}\cup f_{y_2}))\mid \forall i\in[2], (\lambda_y,f_{y_i})\in\sig_{y_i}\},$$
where $\min(f_{y_1}\cup f_{y_2})$ is the function that maps $(u,v)$ to the minimum label-types of the union $f_{y_1}(u,v)\cup f_{y_2}(u,v)$.
Note that $J:=\{(\lambda_y,\min(f_{y_1}\cup f_{y_2}))\mid \forall i\in[2], (\lambda_y,f_{y_i})\in\sig_{y_i}\}$ can be constructed in time $\Ocal(c^2\tw^2\cdot|\sig_{y_1}|\cdot|\sig_{y_2}|)$: for each $i\in[2]$, for each $(\lambda_{y_i},f_{y_i})\in\sig_{y_i}$, for each $(u,v)\in B_x^2$, for each $(l_1,l_2)\in[c]^2$, we take the minimum over the label-types in $f_{y_1}(u,v)$ and $f_{y_2}(u,v)$. In the next two claims we prove that $\sig_x = J$.

\begin{claim}
$\sig_x\subseteq J$.
\end{claim}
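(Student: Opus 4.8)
The plan is to show that every signature in $\sig_x$ can be obtained by combining compatible signatures from the two children, which is exactly what membership in $J$ expresses. So suppose $(\lambda_x,f_x) \in \sig_x$, witnessed by a labeling $\lambda: E(G) \to [c]$ of $G$ such that $G_x$ contains no bad cycle with respect to $\lambda$. Since $B_x = B_{y_1} = B_{y_2}$, the restriction $\lambda_x$ is simultaneously the restriction of $\lambda$ to $E(G[B_{y_1}])$ and to $E(G[B_{y_2}])$, so I will take $\lambda_y := \lambda_x$ and show that $(\lambda_y, f_{y_i}) := \sig^{y_i}_{\lambda} \in \sig_{y_i}$ for $i \in [2]$. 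For this I need that $G_{y_i}$ contains no bad cycle with respect to $\lambda$; this follows because $G_{y_1}$ and $G_{y_2}$ are subgraphs of $G_x$ (by the standard properties of a nice tree decomposition, $V_{y_1}$ and $V_{y_2}$ are disjoint subsets of $V_x$, and every edge of $G_{y_i}$ is an edge of $G_x$ since it avoids $B_x$), and a bad cycle of $G_{y_i}$ would be a bad cycle of $G_x$, contradicting the choice of $\lambda$.

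The core of the argument is then to verify that $f_x = \min(f_{y_1} \cup f_{y_2})$, i.e., that for every $(u,v) \in B_x^2$ the minimal label-types of $(u,v)$-paths (or cycles, if $u=v$) with internal vertices in $G_x$ coincide with the minimal elements of $f_{y_1}(u,v) \cup f_{y_2}(u,v)$. First I would argue $\min(f_{y_1} \cup f_{y_2}) \le f_x$ in the natural sense: any $(u,v)$-path $P$ with internal vertices in $V_x = V_{y_1} \uplus V_{y_2}$ that uses no edge of $G[B_x]$ must actually have all its internal vertices in a single $V_{y_i}$. Indeed, if $P$ had an internal vertex in $V_{y_1}$ and another in $V_{y_2}$, then since there are no edges of $G$ between $V_{y_1} \setminus B_x$ and $V_{y_2} \setminus B_x$ (a consequence of the connectivity property of tree decompositions, as any such edge would force a bag containing both endpoints, which is impossible for descendants on different branches), $P$ would have to pass through $B_x$ at an internal vertex, but internal vertices of $P$ lie outside $B_x$ by definition of the sets $F^{u,v}_{y_i}$ — contradiction. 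Hence the label-type of $P$ appears in $f_{y_1}(u,v)$ or $f_{y_2}(u,v)$, so every minimal element of $f_x(u,v)$ dominates some element of $f_{y_1}(u,v) \cup f_{y_2}(u,v)$, and conversely every path counted by some $f_{y_i}$ is a path counted by $f_x$ (since $V_{y_i} \subseteq V_x$), giving the reverse domination; taking minimal elements on both sides yields the equality. Therefore $\sig^x_\lambda = (\lambda_y, \min(f_{y_1} \cup f_{y_2}))$ with $(\lambda_y, f_{y_i}) \in \sig_{y_i}$, so $(\lambda_x, f_x) \in J$.

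I expect the main obstacle — really the only subtle point — to be the case $u = v$, where the objects stored are cycles through $v$ with all other vertices in $G_x$ rather than paths. Such a cycle, viewed as a closed walk based at $v$, still cannot cross between $V_{y_1} \setminus B_x$ and $V_{y_2} \setminus B_x$ without revisiting $B_x$, and since $v \in B_x$ is its only allowed vertex in $B_x$, the cycle minus $v$ is connected in $G_x$ and hence lies entirely within one $G_{y_i}$; so the same dichotomy applies. A second minor point is to double-check the edge-disjointness bookkeeping: the label-types in $f_{y_i}$ concern paths avoiding $E(G[B_{y_i}]) = E(G[B_x])$, which is exactly the restriction imposed on $f_x$, so there is no mismatch. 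Modulo these observations, the inclusion $\sig_x \subseteq J$ follows as above.
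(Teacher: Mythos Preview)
Your proof is correct and follows essentially the same approach as the paper: take a witnessing labeling $\lambda$, observe that it yields signatures in both $\sig_{y_1}$ and $\sig_{y_2}$ (since $G_{y_1},G_{y_2}\subseteq G_x$), and use that $G_x$ is the disjoint union of $G_{y_1}$ and $G_{y_2}$ to conclude that every relevant $(u,v)$-path lies entirely in one side, whence $f_x=\min(f_{y_1}\cup f_{y_2})$. You in fact spell out several points (why $\sig_\lambda^{y_i}\in\sig_{y_i}$, why paths cannot cross branches, the $u=v$ case) that the paper leaves implicit.
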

\begin{proof}
Let $(\lambda_x,f_x)\in\sig_x$.
Then there is a labeling $\lambda$ such that $\sig_\lambda^x=(\lambda_x,f_x)$ and such that there is no bad cycle in $G_x$ with respect to $\lambda$.
Thus, for $i\in[2]$, $\sig_\lambda^{y_i}=(\lambda_{y_i},f_{y_i})\in\sig_{y_i}$.
Given that $E(G[B_x])=E(G[B_{y_1}])=E(G[B_{y_2}])$, it implies that $\lambda_x=\lambda_{y_1}=\lambda_{y_2}$.
Additionally, $G_x$ is the disjoint union of $G_{y_1}$ and $G_{y_2}$, which implies that any path between vertices of $B_x$ whose internal vertices are in $G_x$ is a path between vertices of $B_{y_i}$ whose internal vertices are in $G_{y_i}$ for some $i\in[2]$.
However, a label-type in $f_{y_1}(u,v)$ may be strictly smaller than a label-type in $f_{y_1}(u,v)$, or vice-versa, so we take the label-types in $f_x(u,v)$ to be the minimum label-types in the union of $f_{y_1}(u,v)$ and $f_{y_2}(u,v)$.
Therefore, $f_x=\min(f_{y_1}\cup f_{y_2})$.
Hence, $\sig_x\subseteq J$.
\end{proof}

\begin{claim}
$J\subseteq \sig_x$.
\end{claim}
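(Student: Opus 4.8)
The plan is to prove the reverse inclusion $J \subseteq \sig_x$, i.e.\ that every pair of the form $(\lambda_y, \min(f_{y_1}\cup f_{y_2}))$ with $(\lambda_y, f_{y_i}) \in \sig_{y_i}$ for $i \in [2]$ actually arises as the signature at $x$ of some edge-labeling $\lambda$ of $G$ that induces no bad cycle in $G_x$. First I would pick, for each $i \in [2]$, a labeling $\lambda^{(i)}$ witnessing $(\lambda_y, f_{y_i}) \in \sig_{y_i}$, so that $\sig_{\lambda^{(i)}}^{y_i} = (\lambda_y, f_{y_i})$ and $\lambda^{(i)}$ induces no bad cycle in $G_{y_i}$. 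Since $\lambda^{(1)}$ and $\lambda^{(2)}$ agree on $E(G[B_x]) = E(G[B_{y_1}]) = E(G[B_{y_2}])$ (both restrict to $\lambda_y$ there), and since $V_{y_1}$ and $V_{y_2}$ are disjoint and have no edges between them (standard property of a nice tree decomposition at a join node), I can glue them into a single labeling $\lambda$ of $G_x \cup G[B_x]$ by setting $\lambda$ equal to $\lambda^{(i)}$ on the edges incident to $V_{y_i}$ and to $\lambda_y$ on $E(G[B_x])$; this is well-defined. Finally, extend $\lambda$ arbitrarily (say all $1$'s) on the remaining edges of $G$ not yet labeled, which is harmless since those edges play no role in $G_x$.

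Next I would verify the two things needed: that $\lambda$ induces no bad cycle in $G_x$, and that $\sig_\lambda^x = (\lambda_y, \min(f_{y_1}\cup f_{y_2}))$. For the first, let $C$ be any cycle in $G_x$. Because $G_x$ is the disjoint union of $G_{y_1}$ and $G_{y_2}$, the cycle $C$ is entirely contained in $G_{y_1}$ or entirely in $G_{y_2}$, hence not bad by the choice of $\lambda^{(1)}$ or $\lambda^{(2)}$. For the signature, the $\lambda_x$-component is $\lambda_y$ by construction. For the $f_x$-component, consider $(u,v) \in B_x^2$: any $(u,v)$-path (or cycle through $v$, when $u=v$) whose internal vertices lie in $G_x$ and which uses no edge of $E(G[B_x])$ must have all its internal vertices in $V_{y_1}$ or all in $V_{y_2}$, again by disjointness and absence of cross-edges; hence its label-type with respect to $\lambda$ equals its label-type with respect to the corresponding $\lambda^{(i)}$, and lies in $F_{y_i}^{u,v}$. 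Conversely every element of $F_{y_i}^{u,v}$ is realized. Therefore the set of label-types $F_x^{u,v}$ equals $F_{y_1}^{u,v} \cup F_{y_2}^{u,v}$, and taking minimal elements gives $f_x(u,v) = \min\bigl(f_{y_1}(u,v) \cup f_{y_2}(u,v)\bigr)$, as the minimal elements of a union are among the minimal elements of the two parts. This yields $(\lambda_y, \min(f_{y_1}\cup f_{y_2})) = \sig_\lambda^x \in \sig_x$, completing the inclusion.

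I would also be slightly careful about one subtle point: $f_{y_i}$ only records the \emph{minimal} label-types of $F_{y_i}^{u,v}$, not all of them, whereas the argument above talks about $F_x^{u,v} = F_{y_1}^{u,v} \cup F_{y_2}^{u,v}$. But by \autoref{obs:type}, storing only minimal label-types loses no information for the purpose of detecting bad cycles, so it is legitimate to identify $f_{y_i}$ with (a succinct representation of) $F_{y_i}^{u,v}$; the minimal elements of $F_{y_1}^{u,v} \cup F_{y_2}^{u,v}$ are exactly the minimal elements of $f_{y_1}(u,v) \cup f_{y_2}(u,v)$, since every element of $F_{y_i}^{u,v}$ dominates some element of $f_{y_i}(u,v)$. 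This is the only place where care is needed; the rest is a routine gluing argument exploiting the disjointness of the two branches of the join node. The main (mild) obstacle is thus bookkeeping: making sure the label-types of concatenated paths are computed only for paths avoiding $E(G[B_x])$, which is precisely why the DP was set up to exclude such edges, so no concatenation across the bag can occur here and the types simply transfer from the children.
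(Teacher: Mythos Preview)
Your proof is correct and follows essentially the same approach as the paper: pick witnesses $\lambda^{(i)}$ for each child, glue them into a single labeling using that $V_{y_1}$ and $V_{y_2}$ are disjoint with no cross-edges, and deduce both the absence of bad cycles in $G_x$ and the correct signature from the fact that every relevant path lies entirely on one side. Your explicit treatment of the minimal-elements subtlety (that $\min(F_{y_1}^{u,v}\cup F_{y_2}^{u,v}) = \min(f_{y_1}(u,v)\cup f_{y_2}(u,v))$) is a nice touch that the paper leaves implicit.
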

\begin{proof}
Let $(\lambda_y,\min(f_{y_1}\cup f_{y_2}))\in J$.
Then, for $i\in[2]$, there is a labeling $\lambda_i$ of $G$ such that $\sig_{\lambda_i}^{y_i}=(\lambda_y,f_{y_i})\in\sig_{y_i}$.
Let $\lambda$ be a labeling of $G$ that is equal to $\lambda_y$ on $E(G[B_x])$, equal to $\lambda_{y_1}$ on $E(V_{y_1}\cup B_{y_1},V_{y_1})$, and equal to $\lambda_{y_2}$ on $E(V_{y_2}\cup B_{y_2},V_{y_2})$.
We can do so because $E(G[B_x])$, $E(V_{y_1}\cup B_{y_1},V_{y_1})$ and $E(V_{y_2}\cup B_{y_2},V_{y_2})$ are disjoint.
Any path between vertices of $B_x$ whose internal vertices are in $G_x$ is a path between vertices of $B_{y_i}$ whose internal vertices are in $G_{y_i}$ for some $i\in[2]$.
Therefore, $\sig_\lambda^x=(\lambda_y,\min(f_{y_1}\cup f_{y_2}))$.
Given that $G_x$ is the disjoint union of $G_{y_1}$ and $G_{y_2}$
and that there is no bad cycle in $G_{y_i}$ with respect to $\lambda_i$, it implies that there is no bad cycle in $G_x$ with respect to $\lambda$.
Hence, $\sig_x\supseteq J$.
Therefore, $\sig_\lambda^x\in\sig_x$.
\end{proof}

\subsubsection{Forget bag}

If $B_x$ is a forget bag, then there is a forgotten vertex $v\in V(G)$ such that $B_x=B_y\setminus\{v\}$, where $y$ is the unique child of $x$ in $T$.
This case is the hardest one given that $G_x=G[V_y\cup\{v\}]$.
Indeed, since $G_x$ is obtained from $G_y$ by adding vertex $v$ (recall our definition of $G_x$ below the definition of nice tree decomposition),
there might be new cycles in $G_x$ (containing $v$), so we need to remove signatures representing labelings that contain bad cycles. 
Additionally, for any $(u,w)\in B_x^2$, we need to add the label-type of $(u,w)$-paths
containing $v$.
We define $\sig_x$ from $\sig_y$ as follows.

\subparagraph{Restriction to $B_x$.}
We have $E(G[B_x])=E(G[B_y])\setminus E_v$, where $E_v$ is the set of edges incident to $v$.
Therefore, for any labeling $\lambda$ such that $\sig_\lambda^x=(\lambda_x,f_x)\in\sig_x$, we have $\sig_\lambda^y=(\lambda_y,f_y)\in\sig_y$, where $\lambda_x$ is obtained 
by restricting $\lambda_y$ to $E(G[B_y])\setminus E_v$, which we write as $\lambda_y|_{E(G[B_y])\setminus E_v}$.

Hence, we observe the following.
\begin{observation}\label{obs:F1}
For any labeling $\lambda$ 
with signature $(\lambda_x,f_x)$ at $x$ and $(\lambda_y,f_y)$ at $y$, we have $\lambda_x=\lambda_y|_{E(G[B_y])\setminus E_v}$.
\end{observation}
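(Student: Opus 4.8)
The plan is simply to unwind the definition of the signature at a forget node; there is essentially nothing to prove beyond a set-theoretic identity. Recall that, by definition, $\sig_\lambda^x=(\lambda_x,f_x)$ where $\lambda_x$ is the restriction of $\lambda$ to $E(G[B_x])$, and likewise $\lambda_y$ is the restriction of $\lambda$ to $E(G[B_y])$. Since $x$ is a forget bag with forgotten vertex $v$, we have $B_x=B_y\setminus\{v\}$, and hence $G[B_x]=G[B_y]\setminus v$.

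First I would observe the set identity $E(G[B_x])=E(G[B_y])\setminus E_v$: an edge $e\in E(G[B_y])$ fails to lie in $E(G[B_x])$ precisely when one of its endpoints is $v$, i.e. precisely when $e\in E_v$. This is exactly the displayed equality already argued in the paragraph immediately preceding the observation. Given this identity, the conclusion is immediate, since restricting $\lambda$ to $E(G[B_x])$ is the same as first restricting $\lambda$ to $E(G[B_y])$, which yields $\lambda_y$, and then restricting that further to the subset $E(G[B_y])\setminus E_v$. Symbolically, $\lambda_x=\lambda|_{E(G[B_x])}=\lambda|_{E(G[B_y])\setminus E_v}=\bigl(\lambda|_{E(G[B_y])}\bigr)\big|_{E(G[B_y])\setminus E_v}=\lambda_y|_{E(G[B_y])\setminus E_v}$, which is exactly the claimed equality.

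I do not expect any genuine obstacle here: \autoref{obs:F1} is a pure bookkeeping fact describing how the first coordinate of the signature transforms when passing from a child $y$ to its parent $x$ across a forget node, and its only role is to set up the substantive part of the forget-bag analysis (filtering out signatures of labelings that create a bad cycle through $v$, and updating the label-type function $f_x$). The single point meriting a word of care is purely notational: $E_v$ is to be read as the set of edges of $G$ incident to $v$, and whether or not one first intersects it with $E(G[B_y])$ is irrelevant inside the set difference $E(G[B_y])\setminus E_v$.
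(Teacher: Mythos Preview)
Your proposal is correct and mirrors the paper's own argument almost verbatim: the paper also derives the observation directly from the set identity $E(G[B_x])=E(G[B_y])\setminus E_v$ and the definition of $\lambda_x,\lambda_y$ as restrictions of $\lambda$. There is nothing to add.
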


\subparagraph{Bad cycles.}
We have $G_y=G_x\setminus\{v\}$.
Therefore, any cycle $C$ of $G_x$
is either a cycle of $G_y$, or a cycle containing $v$.
Let $\lambda$ be a labeling whose signature $(\lambda_y,f_y)$ at node $y$ is in $\sig_y$.
We get that $\lambda$ may induce a bad cycle in $G_x$, but this bad cycle $C$ must contain $v$.
Fortunately, this bad cycle can be detected using the label-type stored in $f_y(v,v)$.
Indeed, there is a bad cycle containing $v$ and vertices of $G_y$ if and only if there is a $(v,v)$-path whose internal vertices are in $G_y$ and whose label-type is one of the following:
\begin{itemize}
    \item $l_1\tau l_2$ with $\tau\in\{\emptyset,{\sf m, M}\}$, or
    \item $l_1{\sf mM}l_2$ with $l_1\le l_2$, or
    \item $l_1{\sf Mm}l_2$ with $l_1\ge l_2$.
\end{itemize}

\begin{figure}
    \centering
    \includegraphics{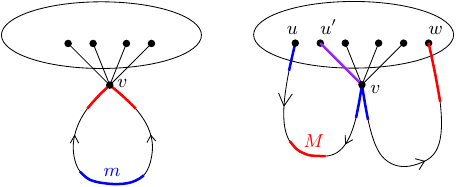}
    \caption{Here, $v$ is a forgotten vertex. Left: Example of a bad cycle. Right: the $(u,w)$-path obtained from the concatenation of a $(u,v)$-path of type {\sf M} and a $(v,w)$-path of type $\emptyset$, whose label on the edge incident to $v$ strictly smaller than its label on the edge incident to $w$, is of type {\sf good}. The $(u',v)$-path is either of type $\emptyset$ or {\sf m} depending on whether the label of $u'v$ is smaller or strictly bigger than the label of the other edge incident to $v$.}
    \label{fig:forget}
\end{figure}
See \autoref{fig:forget} for an example of a bad cycle.
Note that $l_1{\sf mM}l_2$ with $l_1> l_2$ does not give a bad cycle, because then $v$ belongs to a local minimum of the cycle, and there are hence two of them.
Same for $l_1{\sf Mm}l_2$ with $l_1 < l_2$.
For the other cases presented above however, there are at most one local minimum and one local maximum in the cycle, which is thus bad.
Therefore, by \autoref{obs:type}, $\lambda$ induces a bad cycle in $G_x$ if and only if one of the minimal label-type stored in $f_y(v,v)$ is one of those described above, which can be checked in time $\Ocal(c^2)$.
Let $\sig_y'$ be the set  obtained from $\sig_y$ by removing all signatures such that $G_x$ contains a bad cycle.

Therefore, we observe the following.
\begin{observation}\label{obs:F2}
For any labeling $\lambda$, if $\sig_\lambda^y\in\sig_y$,
then $\lambda$ induces no bad cycle in $G_x$ if and only if $\sig_\lambda^y\in\sig_y'$.
\end{observation}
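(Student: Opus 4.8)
The plan is to exploit the structure $G_y=G_x\setminus\{v\}$, where $v$ is the forgotten vertex. Every cycle of $G_x$ either lies entirely in $G_y$ or passes through $v$. Hence, for any labeling $\lambda$ with $\sig_\lambda^y\in\sig_y$ — where, by the definition of $\sig_y$, we are entitled to assume (and it is the only case in which the observation is applied) that $\lambda$ induces no bad cycle in $G_y$ — the statement ``$\lambda$ induces a bad cycle in $G_x$'' is equivalent to ``$G_x$ contains a bad cycle through $v$''. First I would argue that a cycle of $G_x$ through $v$ has all its remaining vertices in $V_y$ and uses no edge of $G[B_x]$ (since $v\notin B_x$ and $V_y\cap B_x=\emptyset$), so it is exactly one of the ``$(v,v)$-paths with internal vertices in $G_y$'' whose label-types populate $F_y^{v,v}$; conversely, each label-type in $F_y^{v,v}$ is realized by such a cycle. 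Recalling the case analysis carried out just before the statement, such a cycle with $(v,v)$-label-type $l_1\tau l_2$ is bad precisely when $\tau\in\{\emptyset,{\sf m},{\sf M}\}$, or $\tau={\sf mM}$ with $l_1\le l_2$, or $\tau={\sf Mm}$ with $l_1\ge l_2$; I would collect these ``bad'' label-types into a set $\Bcal$, noting that ${\sf good}$-labeled paths never lie in $\Bcal$.

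The key technical step is to show that inspecting $f_y(v,v)$ — the \emph{minimal} label-types — suffices in place of all of $F_y^{v,v}$. For this I would verify that $\Bcal$ is a down-set for the partial order on label-types: if $l_1\tau'l_2\le l_1\tau l_2$ and $l_1\tau l_2\in\Bcal$, then $\tau\neq{\sf good}$, and either $\tau'\in\{\emptyset,{\sf m},{\sf M}\}$, in which case $l_1\tau'l_2\in\Bcal$ automatically, or $\tau'\in\{{\sf mM},{\sf Mm}\}$, in which case the only possibility is $\tau'=\tau$, so the constraint relating $l_1$ and $l_2$ is inherited; in all cases $l_1\tau'l_2\in\Bcal$. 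Since in a finite poset every element dominates a minimal one and $f_y(v,v)$ consists of the minimal elements of $F_y^{v,v}$, the down-set property yields $F_y^{v,v}\cap\Bcal\neq\emptyset$ if and only if $f_y(v,v)\cap\Bcal\neq\emptyset$. (Equivalently, this can be derived directly from \autoref{obs:type}, applied with the bad cycle through $v$ playing the role of both $C$ and the subpath $P$, replaced by the cycle realizing a dominated minimal label-type, which is then itself a bad cycle.) Combining the last two facts, $\lambda$ induces a bad cycle in $G_x$ if and only if $f_y(v,v)\cap\Bcal\neq\emptyset$.

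Finally, I would close the loop by observing that the condition $f_y(v,v)\cap\Bcal\neq\emptyset$ depends only on the signature $(\lambda_y,f_y)=\sig_\lambda^y$ and not on $\lambda$ itself, so it is exactly the criterion used to pass from $\sig_y$ to $\sig_y'$ (delete the signatures for which $G_x$ contains a bad cycle). Therefore $\lambda$ induces no bad cycle in $G_x$ if and only if $f_y(v,v)\cap\Bcal=\emptyset$ if and only if $\sig_\lambda^y\in\sig_y'$, which is the desired equivalence. I expect the main obstacle to be not a deep step but bookkeeping precision: establishing the clean correspondence between the objects recorded in $F_y^{v,v}$ and the cycles of $G_x$ through $v$, and checking that $\Bcal$ is downward closed so that restricting attention to minimal label-types loses nothing; the apparent tension between ``$\lambda$ induces no bad cycle in $G_x$'' (about a fixed labeling) and ``$\sig_\lambda^y\in\sig_y'$'' (about a signature) is dissolved once one notes, as above, that the relevant test factors through the signature, given that $\lambda$ has no bad cycle in $G_y$.
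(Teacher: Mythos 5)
Your proposal is correct and follows essentially the same route as the paper: the paper's argument (the ``Bad cycles'' paragraph preceding the observation) likewise notes that any new bad cycle in $G_x$ must pass through the forgotten vertex $v$, characterizes badness via the $(v,v)$ label-types in $f_y(v,v)$ with exactly the case analysis you cite, and invokes \autoref{obs:type} to justify restricting to minimal label-types, concluding that the test depends only on the signature. Your explicit verification that the set of ``bad'' label-types is downward closed is just a slightly more careful spelling-out of the paper's appeal to \autoref{obs:type}, not a different argument.
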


\subparagraph{New paths.}
Let $\lambda$ be a labeling whose signature at $y$ is $(\lambda_y,f_y)$ and whose signature at $x$ is $(\lambda_x,f_x)$.
Let $u,w\in B_x$ and $P$ be a $(u,w)$-path whose internal vertices are in $G_x$. 
Then either
\begin{itemize}
    \item $P$ is a path whose internal vertices are in $G_y$, or
    \item $v$ is a vertex of $P$.
\end{itemize}
In the second case, $P$ is composed of a $(u,v)$-path $P_u$ and a $(v,w)$-path $P_w$ such that $P_u$ (resp. $P_w$) is either the edge $uv$ (resp. $vw$), or a path whose internal vertices are in $G_y$.
If $P_u$ (resp. $P_w$) is an edge, then it has label-type $\ell\emptyset\ell$, where $\ell=\lambda(uv)$ (resp. $\ell=\lambda(vw)$).
When $v\in V(P)$, observe that the label-type of $P$ can easily be deduced from the label-types $l_1\tau l_2$ of $P_u$ and $l_1'\tau' l_2'$ of $P_w$.
For instance, if $\tau={\sf m}$ and $\tau'={\sf M}$, then the label-type of $P$ is $l_1{\sf mM}l_2'$ if $l_2\le l_1'$, and $l_1{\sf good}l_2'$ if $l_2> l_1'$ because then the edge incident to $v$ in $P_u$ (resp. $P_w$) belongs to a local maxima (resp. minima).
We say that the label-type $l_1\tau^*l_2'$ of $P$ is the \emph{concatenation} of $l_1\tau l_2$ and $l_1'\tau' l_2'$.
See \autoref{fig:forget} for other examples of concatenations of label-types.
Therefore, the signature of $\lambda$ at $y$ is enough to compute the label-type of $P$.

This gives us an idea of how to construct $f_x$ from $f_y$.
Let $L_{u,w}$ be the set of label-types that are either:
\begin{itemize}
    \item in $f_y(u,w)$
    \item or obtained from the concatenation of
\begin{itemize}
    \item a label-type that is either in $f_y(u,v)$, or is the label-type of $uv$, if it is an edge, and
    \item a label-type that is either in $f_y(v,w)$, or is the label-type of $vw$, if it is an edge.
\end{itemize}
\end{itemize}
Note that $L_{u,w}$ can be computed in time $\Ocal(c^4)$.
Then we define $f_y^2:B_x^2\to[c]^2\times[8]$ to be the function that, to any pair $u,w\in B_x$, maps the minimal elements of $L_{u,w}$.

Let us prove the following.
\begin{claim}\label{obs:F3}
For any labeling $\lambda$ with $sig_\lambda^x=(\lambda_x,f_x)$ and $sig_\lambda^y=(\lambda_y,f_y)$, we have $f_x=f_y^2$.
\end{claim}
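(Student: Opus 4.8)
The plan is to fix an ordered pair $u,w\in B_x$ and to show that $F_x^{u,w}$ and $L_{u,w}$ have exactly the same set of minimal elements with respect to $\le$; since $f_x(u,w)$ and $f_y^2(u,w)$ are by definition precisely these two sets, and $B_x$ is finite, this yields $f_x=f_y^2$. Throughout I will use that $G_x=G[V_y\cup\{v\}]$, $G_y=G_x\setminus\{v\}$, that $V_y\cap B_x=\emptyset$ (as $V_y\cap B_y=\emptyset$ and $B_x\subseteq B_y$), and that $E(G[B_x])=E(G[B_y])\setminus E_v$.

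The first step is a structural dichotomy. Any $(u,w)$-path $P$ realizing a label-type of $F_x^{u,w}$ either (i) avoids $v$, in which case all its internal vertices lie in $V_y$ and $P$ uses no edge of $E(G[B_y])$, so its label-type lies in $F_y^{u,w}$; or (ii) passes through $v$, and then, since $P$ is a simple path and $v\notin\{u,w\}$, it splits uniquely as $P=P_u\cdot P_w$ with $P_u$ a $(u,v)$-path, $P_w$ a $(v,w)$-path, $V(P_u)\cap V(P_w)=\{v\}$, and each of $P_u,P_w$ being either the edge $uv$ (resp.\ $vw$) or a path with internal vertices in $V_y$ using no edge of $E(G[B_y])$. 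In case (ii) the label-type of $P$ is exactly the concatenation of the label-types of $P_u$ and $P_w$, by the definition of the concatenation operation (and the fact, discussed before the claim, that this concatenation depends only on the two label-types). Conversely, every label-type in $f_y(u,w)$ is realized by a $(u,w)$-path with internal vertices in $V_y$, admissible for $F_x^{u,w}$ since $V_y\subseteq V_x$ and $E(G[B_x])\subseteq E(G[B_y])$; and each concatenation occurring in $L_{u,w}$ is the label-type of a walk of the shape $P_u\cdot P_w$ above. The case $u=w$ is handled identically, with ``cycle'' in place of ``path'' and a splitting into two $(u,v)$-paths.

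Given this, I would establish two domination statements, which together imply that $F_x^{u,w}$ and $L_{u,w}$ have the same minimal elements: \textbf{(A)} every element of $F_x^{u,w}$ is $\ge$ some element of $L_{u,w}$, and \textbf{(B)} every element of $L_{u,w}$ is $\ge$ some element of $F_x^{u,w}$. For \textbf{(A)}: a label-type realized by a path avoiding $v$ lies in $F_y^{u,w}$, hence dominates an element of $f_y(u,w)\subseteq L_{u,w}$; a label-type realized by a path through $v$ is the concatenation of some $\sigma_u$ and $\sigma_w$, where $\sigma_u$ is the label-type of $uv$ or an element of $F_y^{u,v}$ and symmetrically for $\sigma_w$, and since $\sigma_u$ (resp.\ $\sigma_w$) dominates an element $\sigma_u'$ of $\{\text{label-type of }uv\}\cup f_y(u,v)$ (resp.\ $\sigma_w'$), the monotonicity of concatenation with respect to $\le$ shows that the concatenation of $\sigma_u'$ and $\sigma_w'$, which lies in $L_{u,w}$, is below our label-type. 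For \textbf{(B)}: an element of $f_y(u,w)$ already lies in $F_x^{u,w}$; an element of $L_{u,w}$ of the form ``concatenation of $\sigma_u'$ and $\sigma_w'$'' is the label-type of a walk $W=P_u'\cdot P_w'$ with internal vertices in $V_x$ using no edge of $E(G[B_x])$, so if $W$ is a simple path it is already in $F_x^{u,w}$; otherwise, letting $z$ be the vertex of $V(P_u')\cap V(P_w')$ closest to $w$ along $P_w'$ (so the part of $P_w'$ from $z$ to $w$ meets $V(P_u')$ only in $z$, and one checks $z\notin\{u,v,w\}$ using $u\notin V(P_w')$, $w\notin V(P_u')$, and that $v$ is the endpoint of $P_w'$ farthest from $w$), the shortcut $P:=P_u'[u\to z]\cdot P_w'[z\to w]$ is a simple $(u,w)$-path avoiding $v$, hence with its label-type in $F_x^{u,w}$, and $W$ is obtained from $P$ by reinserting the closed sub-walk through $v$ while keeping the first and last edges of $P$.

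The main obstacle, I expect, will be the two combinatorial facts about label-types used above, each requiring a finite case analysis on the behaviour of labels around a distinguished vertex: the monotonicity of the concatenation operation with respect to $\le$, and the statement that inserting a closed sub-walk into the interior of a walk (keeping its first and last edges) cannot decrease its label-type. The latter is morally the content of \autoref{obs:type}: passing to a shortcut can only destroy internal local extrema or merge consecutive ones, never create a more intricate pattern; concretely one shows that every internal local extremum of the shortcut $P$ is witnessed by an internal local extremum of $W$ of the same sign, so that the extremum pattern of $P$ is a substring of (or is dominated by) that of $W$. Once these two facts are in place, everything else — the disjointness bookkeeping in the dichotomy, and the deduction of ``equal minimal sets'' from \textbf{(A)} and \textbf{(B)} — is routine.
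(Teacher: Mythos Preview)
Your proposal is correct and follows essentially the same approach as the paper. Both arguments rest on the same dichotomy (does the $(u,w)$-path pass through $v$ or not), and both invoke the same two combinatorial facts you isolate: monotonicity of concatenation with respect to $\le$, and that extracting a simple $(u,w)$-path from a $(u,w)$-walk cannot increase the label-type. The paper also uses these without proof (writing e.g.\ ``the label-type of $P$ is $\sigma'\le\sigma$''), so you are not skipping more than the paper does. Your organization via mutual domination (A)+(B) is in fact slightly cleaner than the paper's: the paper proves the stronger statement $f_x(u,w)\subseteq L_{u,w}$ (not mere domination) by a contradiction argument using \emph{strict} monotonicity, and then separately shows every element of $f_y^2(u,w)$ is exactly realized in $F_x^{u,w}$, before deducing the two inclusions; your symmetric formulation avoids these detours.

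Two minor remarks. First, your specific shortcut construction (take $z$ closest to $w$ along $P_w'$) breaks down when $u=w$, since then $u=w\in V(P_u')\cap V(P_w')$ and your argument that $u\notin V(P_w')$ fails; the fix is trivial (take instead the first vertex of $P_u'$ after $u$ lying in $V(P_w')\setminus\{u\}$), so ``handled identically'' is only slightly optimistic. Second, your appeal to \autoref{obs:type} for the shortcutting fact is a bit off-target: that observation concerns replacing a subpath of a bad cycle, not extracting a path from a walk; the underlying reason is the same (removing a contiguous chunk of the label sequence cannot enlarge the internal-extremum pattern), but it is a separate statement.
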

\begin{proof}
Let $u,w\in B_x$.

Let us first prove that $f_x(u,w)\subseteq L_{u,w}$.
Let $\sigma\in f_x(u,w)$ be a label-type.
Then there is a $(u,w)$-path $P$ whose internal vertices are in $G_x$ and whose label-type is $\sigma$.
\emph{If the internal vertices of $P$ are in $G_y$}, then we claim that $\sigma\in f_y(u,w)$.
Indeed, if that is not the case, then there is a $(u,w)$-path $P'$ whose internal vertices are in $G_y$ and whose label-type is $\sigma'<\sigma$.
But then, given that $G_y\subseteq G_x$, $\sigma'\in f_x(u,w)$, a contradiction.
So $\sigma\in f_y(u,w)\subseteq L_{u,v}$.
\emph{Otherwise, $v$ is a vertex of $P$.}
Thus, $P$ is composed of a $(u,v)$-path $P_u$ and a $(v,w)$-path $P_w$ such that $P_u$ (resp. $P_w$) is either the edge $uv$ (resp. $vw$), or a path whose internal vertices are in $G_y$.
If $P_u$ is not an edge, let us show that the label-type $\sigma_u$ of $P_u$ is in $f_y(u,v)$.
Assume that is not the case.
Then there is a $(u,v)$-path $P_u'$ whose internal vertices are in $G_y$ and whose label-type is $\sigma_u'<\sigma_u$.
The concatenation of $P_u'$ and $P_w$ is a walk $W$ from $u$ to $w$ (since their internal vertices may intersect) whose label-type $\sigma'$ is strictly smaller than $\sigma$.
But then, there is a subwalk of $W$ that is a path $P'$ from $u$ to $w$ whose label-type $\sigma''$ is smaller than $\sigma'$.
Therefore, $\sigma''<\sigma$, contradicting the minimality of $\sigma$.
Similarly, if $P_w$ is not an edge, then the label-type $\sigma_w$ of $P_w$ is in $f_y(v,w)$.
So $\sigma\in L_{u,v}$.

Let us now prove that, for each $\sigma\in f_y^2(u,w)$, there is a $(u,w)$-path $P$ whose internal vertices are in $G_x$ and whose label-type is $\sigma$.
\emph{If $\sigma\in f_y(u,w)$}, then there is a $(u,w)$-path $P$ whose internal vertices are in $G_y$ and whose label-type is $\sigma$.
Given that $G_y\subseteq G_x$, we thus have what we want.
\emph{Otherwise}, $\sigma$ is the concatenation of $\sigma_u$ and $\sigma_w$, where $\sigma_u$ (resp. $\sigma_w$) is either the label-type of the edge $uv$ (resp. $vw$) or belongs to $f_y(u,v)$ (resp. $f_y(v,w)$).
Thus, there is $P_u$ (resp. $P_w$) that is a $(u,v)$-path (resp. $(v,w)$-path) whose label-type is $\sigma_u$ (resp. $\sigma_w$), and that is either the edge $uv$ (resp. $vw$), or whose internal vertices are in $G_y$.
\emph{If $P_u$ and $P_w$ are internally vertex-disjoint}, then the concatenation of $P_u$ and $P_w$ is a $(u,w)$-path $P$ whose label-type is $\sigma$.
\emph{If $P_u$ and $P_w$ are not internally vertex-disjoint however},
which may happen only when their internal vertices are in $G_y$, then
their concatenation is a walk $W$ from $u$ to $w$.
But then, there is a subwalk of $W$ that is a $(u,w)$-path $P$ which does not contain $v$, and thus whose internal vertices are in $G_y$.
In particular, the label-type of $P$ is $\sigma'\le\sigma$.
Then there is a label $\sigma''\le\sigma'$ that belongs to $f_y(u,w)\subseteq L(u,w)$.
But then, given that $\sigma''\le\sigma$ and that $\sigma,\sigma''\in L(u,w)$, we conclude by minimality of $\sigma$ that $\sigma''=\sigma'=\sigma$.
Thus, $P$ is a $(u,w)$-path whose internal vertices are in $G_y\subseteq G_x$ and whose label-type is $\sigma$.

We now prove that $f_x(u,w)\subseteq f_y^2(u,w)$.
Let $\sigma\in f_x(u,w)\subseteq L(u,v)$.
There is $\sigma'\le \sigma$ such that $\sigma'\in f_y^2(u,w)$.
Thus, there is a $(u,w)$-path $P$ whose internal vertices are in $G_x$ and whose label-type is $\sigma$.
Hence, by minimality of $\sigma$, $\sigma=\sigma'\in f_y^2(u,w)$.

We finally prove that $f_x(u,w)\supseteq f_y^2(u,w)$.
Let $\sigma\in f_y^2(u,w)$.
Thus, there is a $(u,w)$-path $P$ whose internal vertices are in $G_x$ and whose label-type is $\sigma$.
There is $\sigma'\le \sigma$ such that $\sigma'\in f_x(u,w)$.
Hence, by minimality of $\sigma$, $\sigma=\sigma'\in f_x(u,w)$.
\end{proof}

From \autoref{obs:F1}, \autoref{obs:F2}, and \autoref{obs:F3}, we immediately deduce that
$$\sig_x=\{(\lambda_y|_{E(G[B_y])\setminus E_v},f_y^2)\mid (\lambda_y,f_y)\in\sig'_y\}.$$
Note that $\{(\lambda_y|_{E(G[B_y])\setminus E_v},f_y^2)\mid (\lambda_y,f_y)\in\sig'_y\}$ can be obtained in time $\Ocal(\tw^2\cdot c^4\cdot|\sig_y|)$.

\subsubsection{Complexity}

As said above, the number of distinct signatures is at most $c^{\tw^2}\cdot(8c^2)^{\tw^2}$.
Given that we can assume the nice tree decomposition to have $\Ocal(n)$ nodes, the operations on a leaf bag, introduce bag, join bag, and forget bag take respectively time $\Ocal(1)$, $\Ocal(c^{\tw-1}\cdot|\sig_y|)$, $\Ocal(c^2\tw^2\cdot|\sig_{y_1}|\cdot|\sig_{y_2}|)$, and $\Ocal(\tw^2\cdot c^4\cdot|\sig_y|)$, we conclude that the DP solves {\sc $c$-GEL} on $G$ in time $c^{\Ocal(\tw^2)}\cdot n$.

Note that, using standard backtracking techniques, a $c$-\gel can be computed within the same running time.

\subsection{Treewidth and maximum degree}
\label{sec:DPdegree}

 Let us first note that, in contrast to \autoref{sec:DPcolors}, it is not evident how to express the \GEL problem in \MSOL when parameterized by treewidth and the maximum degree (instead of the number of distinct labels). We define {\sc Minimum GEL} as the problem of, given an input graph $G$, finding the minimum $c\in\mathbb{N}$ such that $G$ is $c$-good. We now wish to design a DP to prove the following.

\begin{theorem}\label{th:tw+D}
There is an algorithm that solves {\sc Minimum GEL} in time $2^{\Ocal(\tw\Delta^2+\tw^2\log\Delta)}\cdot n$ on $n$-vertex graphs of maximum degree $\Delta$ and of treewidth at most $\tw$.
\end{theorem}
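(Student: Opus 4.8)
The plan is to adapt the dynamic programming algorithm of \autoref{th:tw+c} to the unbounded-label setting, replacing in the signatures the explicit $c$-labeling of $E(G[B_x])$ and the absolute endpoint labels $l_1,l_2$ of the recorded label-types by \emph{relative-order information}, encoded as a partial orientation of a portion of the line graph $L(G)$ — exactly as labeling relations replaced edge-labelings in \autoref{sec:FPT-stars}. Fix a nice tree decomposition $(T,\{B_x\})$ of width $\tw$, and for $x\in V(T)$ let $E^x:=E(G[V_x\cup B_x])\setminus E(G_x)$ be the set of edges with an endpoint in $B_x$ and both endpoints in $V_x\cup B_x$; note $|E^x|\leq(\tw+1)\Delta$.

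The signature of an edge-labeling $\lambda$ at $x$ has two parts. The first is the restriction $\rho_x$ of the labeling relation ${\sf rel}_\lambda$ to pairs of edges of $E^x$ \emph{sharing an endpoint in $B_x$}, i.e.\ a choice in $\{<,=,>\}$ for each such pair; since each of the $\leq\tw+1$ vertices of $B_x$ contributes at most ${\Delta \choose 2}$ pairs, there are $3^{\Ocal(\tw\Delta^2)}=2^{\Ocal(\tw\Delta^2)}$ possibilities. The second part mirrors the function $f_x$ of \autoref{th:tw+c}: for every ordered pair $(u,v)\in B_x^2$ it records, among the $(u,v)$-paths (or cycles through $v$ when $u=v$) whose internal vertices lie in $V_x$ and which avoid $E(G[B_x])$, the minimal \emph{label-types}, where a label-type now consists of a type $\tau\in\{\emptyset,{\sf m},{\sf M},{\sf mM},{\sf Mm},{\sf good}\}$ together with the \emph{identities} in $E^x$ of the first and last edges of the path (replacing the absolute labels $l_1,l_2$), with minimality taken for the refinement order on $\tau$ with the two edges fixed. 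The key point, a version of \autoref{obs:type} interpreted through $\rho_x$, is that $\Ocal(1)$ such minimal label-types per pair $(u,v)$ suffice to detect all bad cycles of $G$; since each involves one of $6$ types and two edges chosen among the $\leq\Delta$ edges of $E^x$ at the respective endpoint, this amounts to $\Ocal(\log\Delta)$ bits per pair, so the second part contributes $2^{\Ocal(\tw^2\log\Delta)}$. Altogether there are $2^{\Ocal(\tw\Delta^2+\tw^2\log\Delta)}$ signatures.

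The four recurrences are the analogues of those of \autoref{th:tw+c}, with arithmetic on labels replaced by lookups in $\rho$. A leaf bag gets the empty signature. At an introduce bag for $v$, one enumerates the $2^{\Ocal(\Delta^2)}$ ways of inserting the new edges of $E^x$ at $v$ into $\rho_y$ (their mutual relations, and their positions among the already-ordered edges at the other endpoints, which also lie in the bag), keeps those consistent with transitivity, and sets the path-records of pairs involving $v$ to empty, since $v$ has no neighbour in $V_x$. At a join bag one keeps pairs of signatures agreeing on $E(G[B_x])$, forms the common extension of $\rho_{y_1},\rho_{y_2}$ over $E^x$ (checking transitivity), and takes for each pair the minimal elements of the union of the two record-sets. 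At a forget bag for $v$ — the delicate case, as $G_x=G[V_y\cup\{v\}]$ — one first tests, from the recorded label-types of $(v,v)$-cycles and the relation $\rho_y$ gives between the two edges incident to $v$, whether a bad cycle through $v$ appears (exactly the test of the forget step of \autoref{th:tw+c}, with ``$l_1\leq l_2$'' now read off $\rho_y$), discarding the signature if so; then, for every $(u,w)\in B_x^2$, one updates the record-set by adding the concatenations through $v$ of a $(u,v)$-record with a $(v,w)$-record (or with an incident edge $uv$ or $vw$), the new type being computed from the two types and from whether $\rho_y$ makes the edges meeting at $v$ increasing, decreasing or equal — the new ``good'' outcomes being detected through $\rho_y$. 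Correctness rests on the invariant that a signature occurs at $x$ if and only if some edge-labeling of $G[V_x\cup B_x]$ restricts to $\rho_x$, realizes precisely the recorded minimal label-types, and induces no bad cycle in $G_x$; together with the equivalence between \gels and good labeling relations (\autoref{claim:good-labeling-relation}) and the transitivity bookkeeping of \autoref{sec:FPT-stars}, a surviving signature at the root $B_r=\emptyset$ yields a genuine \gel. Each of the $\Ocal(n)$ nodes is handled in time polynomial in the number of signatures, for a total of $2^{\Ocal(\tw\Delta^2+\tw^2\log\Delta)}\cdot n$.

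Finally, to solve {\sc Minimum GEL} rather than merely \GEL, the DP is augmented to track, alongside each signature, the minimum number of labels used so far, essentially by propagating across each bag the lengths of the longest increasing ${\sf rel}_\lambda$-chains at the edges of $E^x$ as in a longest-path-in-a-DAG computation (only their relative order, plus the minimum achievable overall chain length, needs to be stored), which we show keeps the state within the claimed budget; the optimum is read at the root. The step I expect to be hardest is establishing the bounded-state representation of the path-records — the refinement of \autoref{obs:type} showing that $\Ocal(1)$ label-types per pair are enough once their endpoints are edge identities, and that the forget-bag concatenation rules remain exact when absolute labels are replaced by $\rho$-classes — together with maintaining throughout the DP the global transitivity of the partial orientations, so that a surviving root signature really does certify a good edge-labeling.
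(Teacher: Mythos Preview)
Your proposal is correct and follows essentially the same approach as the paper. The only difference is terminological: you encode the relative order of labels via labeling relations (as in \autoref{sec:FPT-stars}), whereas the paper recasts the same information as a partial orientation of the line graph and proves the equivalence in \autoref{claim:good-orientation}; the paper itself remarks that these two viewpoints coincide. Your ``degree-type'' idea of recording the first and last edge of each boundary path in place of their labels is exactly what the paper does, and your state count matches. A minor variant is that you restrict $\rho_x$ to pairs of edges of $E^x$ sharing an endpoint in $B_x$, while the paper stores the orientation on all of $L(F_x)$; both give $2^{\Ocal(\tw\Delta^2)}$ states and both suffice for the transitions, since transitivity only needs to be enforced locally at each bag vertex. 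On the passage from \GEL to {\sc Minimum GEL}, you and the paper are equally brief: you gesture at propagating longest-chain lengths at the boundary edges, while the paper leaves this step implicit.
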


The idea is actually very similar to the one 
for \autoref{th:tw+c}.
We fix a nice tree decomposition $(T,\{B_x\mid x\in V(T)\}$ of $G$.
For each vertex $x$ of $T$, we store the signature of the labelings $\lambda$ of $G$ that do not induce a bad cycle in $G_x$.
However, contrary to before, we do not have a bound on the number of values taken by $\lambda$.
Therefore, we cannot cannot use the same signature as in \autoref{th:tw+c}.

To get around this problem, instead of labelings, we shall consider a partial orientation of the line graph of $G$ (see \autoref{sec:line_graph}).
Indeed, we prove that a graph admits a \gel if and only if its line graph admits a "good" partial orientation (\autoref{claim:good-orientation}).
The signature of a partial orientation at a node $x$ is:
\begin{itemize}
    \item the partial orientation of the line graph induced $B_x$ and its neighbours in $G_x$, and
    \item the minimal "degree-type" of $(u,v)$-paths for $(u,v)\in B_x^2$ whose internal vertices are in $G_x$.
\end{itemize}
Here, the degree-type is similar to the label-type, but instead of storing the label $l_1$ and $l_2$ of the extremal edges $e_1$ and $e_2$, we store $e_1$ and $e_2$ directly.

Then, the signature of the bag of the node $x$ is essentially the set of signature of partial orientations whose restriction to $G_x$ is good.

\subsubsection{Line graph}\label{sec:line_graph}

\subparagraph{Line graphs.}
Let $G$ be a graph.
The \emph{line graph} of $G$ is the graph $L$ with vertex set $E(G)$ and such that, for $e,f\in E(G)$, $ef\in E(L)$ if and only if $e$ and $f$ have a common endpoint.
Note that the line graph of a cycle is a cycle 
of same length.

\subparagraph{Partial orientations.}
Let $G$ be a graph.
A \emph{partial orientation} $O$ of $G$ is a set of couples of vertices $(u,v)$ of $G$ such that $uv\in E(G)$ and $(v,u)\notin O$.
For $uv\in E(G)$ such that $(u,v),(v,u)\notin O$, we say that the edge $uv$ is \emph{not oriented}.
Otherwise, we say that $uv$ is \emph{oriented}.
A \emph{source} (resp. \emph{sink}) in $(G,O)$ is an induced subgraph $H$ of $G$ whose edges are not oriented and such that, for each edge $uv\in E(G)$ such that $u\in V(H)$ and $v\notin V(H)$, $(u,v)\in O$ (resp. $(v,u)\in O$).
A \emph{partial dicycle} in $(G,O)$ is a cycle $C$ of $G$ such that $(C,O')$ has no source nor sink, where $O'$ is the restriction of $O$ to $C$, that is, it has at least one oriented edge, and all the oriented edges go in the same direction according to a cyclic ordering of its vertices. Note that a partial dicycle may have edges that are not oriented, but that a cycle with no oriented edges is not a partial dicycle.
\begin{figure}
    \centering
    \includegraphics{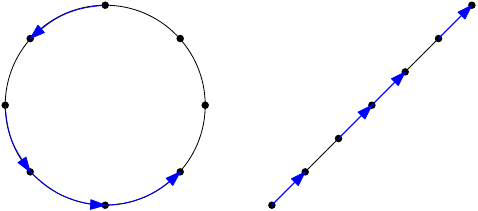}
    \caption{A partial dicycle and a partial dipath of order 5.}
    \label{fig:di}
\end{figure}
A partial orientation $O$ of $G$ is said to be \emph{transitive} if $(G,O)$ has no partial dicycle. We use this term by analogy with the transitivity of labeling relations.
A \emph{partial  dipath} in $(G,O)$ is a path $P=p_1,\dots,p_t$ 
such that, for $i\in[t-1]$, either the edge $p_ip_{i+1}$ is not oriented, or $(p_i,p_{i+1})\in O$.
The \emph{order} of $P$ is the number of oriented edges plus one.
See \autoref{fig:di} for an illustration of a partial dicycle and a partial dipath.

\subparagraph{Good partial orientation of a line graph.}
Let $G$ be a graph and $L$ be its line graph.
Let $O$ be a partial orientation of $L$.
$O$ is said to be \emph{good} if it is transitive and, for every cycle $C$ of $G$, the corresponding cycle in $L$ has at least two sources, or at least two sinks.

\begin{observation}
If $(L,O)$ admits a partial dicycle, then either there is a cycle of $G$ whose edges induce a partial dicycle of $(L,O)$, or there is a vertex $v$ of $G$ and three edges adjacent to $v$ that induce a partial dicycle of $(L,O)$.
\end{observation}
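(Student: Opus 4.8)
The plan is to show that a \emph{shortest} partial dicycle of $(L,O)$ is already of one of the two claimed forms; this suffices, because if $(L,O)$ admits a partial dicycle at all, then it admits one with a minimum number of vertices. So I would fix a partial dicycle $C=e_1e_2\cdots e_me_1$ of $(L,O)$ with the fewest vertices, where the $e_i$ are distinct edges of $G$ (vertices of $L$), consecutive ones share an endpoint of $G$, and the cyclic traversal is oriented so that every oriented edge $e_ie_{i+1}$ of $C$ satisfies $(e_i,e_{i+1})\in O$ (indices modulo $m$).

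The first step is to prove that $C$ is chordless in $L$. Suppose $e_pe_q\in E(L)$ for some $e_p,e_q$ that are non-consecutive on $C$. Splitting $C$ along this chord gives two cycles $C_1=e_pe_{p+1}\cdots e_qe_p$ and $C_2=e_qe_{q+1}\cdots e_pe_q$ of $L$, each of length at least $3$ (since $e_p,e_q$ are non-consecutive), both containing the chord, and such that every oriented edge of $C$ lies in exactly one of $C_1,C_2$. If the chord is unoriented, then at least one of $C_1,C_2$ inherits an oriented edge of $C$; its oriented edges form a subset of those of $C$, hence stay consistent with the traversal inherited from $C$, so it is a partial dicycle. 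If the chord is oriented, say $(e_p,e_q)\in O$, then in $C_2$ the chord is traversed from $e_p$ to $e_q$ while the original edges are traversed forward, so all oriented edges of $C_2$ agree with this traversal and the chord itself witnesses at least one, making $C_2$ a partial dicycle; the case $(e_q,e_p)\in O$ is symmetric (use $C_1$). In all cases we obtain a strictly shorter partial dicycle, contradicting minimality; hence $C$ is an induced cycle of $L$.

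The second step is to classify chordless cycles of a line graph. Fix, for each $i$, a vertex $v_i\in e_i\cap e_{i+1}$. If $m=3$, then either $e_1,e_2,e_3$ have a common endpoint $v$, in which case $C$ is precisely the triangle of $L$ on three edges of $G$ incident to $v$ (the second alternative); or they pairwise meet at distinct vertices $v,a,b$, forcing $e_1=va$, $e_2=vb$, $e_3=ab$, so $vab$ is a triangle of $G$ whose three edges induce $C$ (the first alternative). If $m\ge 4$, I would first observe that $v_i\neq v_{i+1}$ for all $i$: otherwise $e_i$ and $e_{i+2}$ share the vertex $v_i$, giving a chord. Hence $e_{i+1}=\{v_i,v_{i+1}\}$ for all $i$. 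Next, if $v_i=v_j$ with $e_i,e_j$ non-consecutive, these two vertices of $C$ share an endpoint, again a chord; so all $v_i$ are pairwise distinct, $v_1v_2\cdots v_m$ is a cycle of $G$, and its edge set is exactly $\{e_1,\dots,e_m\}$, which induces $C$ in $L$. This is the first alternative of the statement.

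The main obstacle is the chord-splitting step: one must verify that one of the two sub-cycles genuinely remains a partial dicycle — keeping at least one oriented edge and global consistency — and that no degenerate split occurs. The point that makes it go through uniformly is that a chord joins non-consecutive vertices, so both sub-cycles have length at least $3$ and are strictly shorter than $C$, and when the chord is oriented it by itself certifies a partial dicycle in the appropriate sub-cycle, regardless of where the other oriented edges fall.
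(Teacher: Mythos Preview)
The paper states this observation without proof, so there is no argument to compare against. Your proof is correct and is the natural way to establish it: take a shortest partial dicycle, show it is chordless in $L$ via the chord-splitting step (which works exactly as you describe, in both the unoriented and oriented chord cases), and then use the well-known classification of induced cycles in a line graph—triangles of $L(G)$ arise either from a triangle of $G$ or from three edges incident to a common vertex of $G$, while induced cycles of length $\ge 4$ in $L(G)$ always come from cycles of $G$. One minor expository point: when you argue that all $v_i$ are pairwise distinct you only explicitly treat the case where $e_i,e_j$ are non-consecutive on $C$; the remaining possibility $j=i\pm1$ is already excluded by your earlier claim $v_i\neq v_{i+1}$, so the argument closes without further work.
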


Therefore, $O$ is good if and only if:
\begin{itemize}
\item for any cycle $C$ of $G$, the corresponding cycle in $L$ has at least two sources and
\item for any vertex $v$ of $G$, there are no three edges adjacent to $v$ that induce a partial dicycle of $(L,O)$.
\end{itemize}

\subparagraph{From labelings to orientations.}
Let $G$ be a graph, $\lambda$ be a \gel of $G$, and $L$ be the line graph of $G$.
We set $O_\lambda$ to be the partial orientation of $L$ such that $(e,f)\in O_\lambda$ if and only if
$ef\in E(L)$ and $\lambda(e)<\lambda(f)$.

\begin{lemma}\label{claim:l_to_o}
Let $G$ be a graph, $\lambda$ be a $c$-\gel of $G$, and $L$ be the line graph of $G$.
Then $O_\lambda$ is good and the partial dipaths of $(L,O_\lambda)$ have order at most $c$.
\end{lemma}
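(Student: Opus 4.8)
The plan is to check directly the two assertions of the statement from the definition of $O_\lambda$, relying on the characterization of good edge-labelings by local extrema in cycles (\autoref{obs:local_minima}). First, note that $O_\lambda$ is a well-defined partial orientation of $L$: if $\lambda(e)<\lambda(f)$ then $\lambda(f)<\lambda(e)$ fails, so we never put both $(e,f)$ and $(f,e)$ into $O_\lambda$. The single fact driving everything is the following observation: if $q_1,\dots,q_t$ is a sequence of edges of $G$ in which consecutive edges are adjacent and, for each $i$, either $q_iq_{i+1}$ is not oriented in $O_\lambda$ or $(q_i,q_{i+1})\in O_\lambda$, then $\lambda(q_1)\le\lambda(q_2)\le\cdots\le\lambda(q_t)$, with $\lambda(q_i)<\lambda(q_{i+1})$ exactly when $q_iq_{i+1}$ is oriented. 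This is immediate, since unoriented adjacent edges have equal labels and $(q_i,q_{i+1})\in O_\lambda$ means $\lambda(q_i)<\lambda(q_{i+1})$. As a first consequence, $O_\lambda$ is transitive: a partial dicycle would, read in the cyclic order witnessing that all its oriented edges agree, yield $\lambda(e_1)\le\lambda(e_2)\le\cdots\le\lambda(e_t)\le\lambda(e_1)$ with at least one strict inequality (a partial dicycle has at least one oriented edge), a contradiction.

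It remains, for goodness, to verify the cycle condition. Let $C$ be a cycle of $G$ and let $C_L$ be the corresponding cycle of $L$, whose vertex set is $E(C)$; write $O'_\lambda$ for the restriction of $O_\lambda$ to $C_L$. Since $\lambda$ is a \gel, $C$ is not monochromatic, and by \autoref{obs:local_minima} it admits at least two local minima or at least two local maxima; assume the former, the other case being symmetric. Each local minimum of $C$ is in fact a maximal monochromatic subpath of $C$ (a proper subpath of such a run has an endpoint edge adjacent in $C$ to an edge of the same label), so two distinct local minima $P_1,P_2$ are edge-disjoint proper nonempty arcs of $C_L$. I claim that $E(P_j)$ induces a source of $(C_L,O'_\lambda)$ for $j\in\{1,2\}$. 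Indeed, all edges of $P_j$ share a label, so every edge of $C_L$ inside $E(P_j)$ is unoriented; and each (one or two) edge of $C_L$ leaving $E(P_j)$ joins an endpoint edge $e$ of $P_j$ to an edge $g\in E(C)\setminus E(P_j)$ incident to it in $C$, which by definition of a local minimum satisfies $\lambda(g)>\lambda(e)$, hence $(e,g)\in O_\lambda$ — i.e.\ the edge is oriented away from $E(P_j)$. This is exactly the definition of a source, so we obtain two distinct sources of $C_L$; symmetrically, two local maxima of $C$ give two sinks of $C_L$. Hence $O_\lambda$ is good.

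Finally, for the order bound, let $p_1,\dots,p_t$ be a partial dipath of $(L,O_\lambda)$. By the observation above, $\lambda(p_1)\le\cdots\le\lambda(p_t)$ and the oriented edges of the dipath are precisely the indices $i$ with $\lambda(p_i)<\lambda(p_{i+1})$; since $\lambda$ takes at most $c$ distinct values, a non-decreasing sequence of such values has at most $c-1$ strict ascents, so the dipath has at most $c-1$ oriented edges, i.e.\ order at most $c$. The only point requiring a little care — and the main (modest) obstacle — is matching the line-graph notion of a source with the notion of a local minimum of a cycle, in particular the degenerate cases where a local extremum is a single edge or spans all but one edge of the cycle, so that the corresponding induced subgraph of $C_L$ is indeed the expected arc; once the translation between $O_\lambda$-consistent walks and non-decreasing label sequences is in place, every step is routine.
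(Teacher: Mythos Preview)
Your proof is correct and follows essentially the same approach as the paper's own proof: derive transitivity from the impossibility of a strict inequality in a cyclic chain of $\lambda$-values, translate local minima (resp.\ maxima) of a cycle $C$ into sources (resp.\ sinks) of the corresponding line-graph cycle, and bound the order of a partial dipath by the number of distinct labels. You simply spell out more of the details (non-monochromatic cycles, maximality and edge-disjointness of local extrema, the explicit non-decreasing chain along a partial dipath) that the paper leaves implicit.
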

\begin{proof}
Let $C$ be a partial dicycle in $L$ with vertices $c_1,\dots,c_p$.
Given that a partial dicycle has at least one oriented edge, we can assume without loss of generality that $(c_1,c_2)\in O_\lambda$.
Then, by definition, $\lambda(c_1)<\lambda(c_2)\le\dots\le\lambda(c_p)\le\lambda(c_1)$.
This is not possible, so $O_\lambda$ is transitive.

Let $C$ be a cycle of $G$ and $C'$ be the corresponding cycle in $L$.
As $\lambda$ is a $c$-\gel, by \autoref{obs:local_minima}, $C$ has at least two local minima.
By definition of $O_\lambda$, the edges of a local minimum of $C$ are vertices of a source of $C'$.
Therefore, $C'$ has two sources.
So $O_\lambda$ is good.

If a partial dipath $P$ of $(L,O)$ with vertex set $p_1,\dots,p_t$ has order $d$, then there are indices $i_1<\dots< i_d$ such that $\lambda(p_{i_1})<\dots<\lambda(p_d)$.
Given that $\lambda$ takes its values in $[c]$, we conclude that the partial dipaths of $(L,O_\lambda)$ have order at most $c$.
\end{proof}

\subparagraph{From orientations to labelings.}
Let $G$ be a graph, $L$ be the line graph of $G$, and $O$ be a transitive partial orientation of $L$.
Let $c$ be the maximum order of a partial dipath in $(L,O)$.
We set $\lambda_O:E(G)\to[c]$ to be the edge-labeling of $G$ obtained as follows.
Let $L_0:=L$ and $O_0:=O$.
For $i\in[c]$ in an increasing order, we do the following.
Let $E_i$ be the set of $e\in V(L_{i-1})$ that belong to a source of $(L_{i-1},O_{i-1})$.
For $e\in E_i$, we set $\lambda_O(e)=i$.
Then, we set $L_i:=L_{i-1}\setminus E_i$ and $O_i$ to be the restriction of $O_{i-1}$ to $L_i$.
Given that a maximal partial dipath goes from a source to a sink, the maximum order of a path in $(L_i,O_i)$ is $c-i$.
Hence, $L_{c-1}$ has partial dipaths of order at most one, that is, no oriented edges, and thus $L_c$ is the empty graph.
Therefore, $\lambda_O$ is indeed a labeling of $G$ taking its values in $[c]$.

\begin{lemma}\label{claim:l_to_o}
Let $G$ be a graph, $L$ be the line graph of $G$, and $O$ be a good partial orientation of $L$.
Then, $\lambda_O$ is \gel of $G$.
Moreover, if $c$ is the maximum order of a partial dipath in $(L,O)$, then $\lambda_O$ is $c$-\gel.
\end{lemma}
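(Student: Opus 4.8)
The plan is to mirror the structure of \autoref{claim:l_to_o} (the first one, from labelings to orientations) and essentially prove the converse. I want to show that $\lambda_O$ is a \gel of $G$, i.e.\ that no cycle of $G$ is bad with respect to $\lambda_O$, and then show that if $c$ is the maximum order of a partial dipath in $(L,O)$ then $\lambda_O$ uses only labels in $[c]$, so it is a $c$-\gel. The second part was already essentially established in the construction of $\lambda_O$ above: by the argument given when defining $\lambda_O$, after $i$ removal rounds the maximum order of a partial dipath in $(L_i,O_i)$ is $c-i$, so all vertices of $L$ receive a label in $[c]$, hence $\lambda_O:E(G)\to[c]$. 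So the heart of the proof is the first claim.

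For the first part, I would use \autoref{obs:local_minima}: it suffices to show that every cycle $C$ of $G$ has at least two local minima (or at least two local maxima) with respect to $\lambda_O$. Fix a cycle $C$ of $G$ and let $C'$ be the corresponding cycle in $L$ (which has the same length). Since $O$ is good, $C'$ has at least two sources, or at least two sinks, in $(L,O)$; by symmetry assume it has two sources $H_1,H_2$. Each source $H_j$ of $C'$ is, by definition, an induced subpath of $C'$ all of whose edges are not oriented and such that the two edges of $C'$ leaving $H_j$ are oriented \emph{towards} $H_j$. I claim that the set of $G$-edges corresponding to $V(H_j)$ forms exactly a local minimum of $C$ with respect to $\lambda_O$. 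The key technical point is to relate the orientation $O$ on the relevant edges of $L$ with the labels assigned by $\lambda_O$: I would prove, as a lemma, that for any two edges $e,f \in E(G)$ sharing an endpoint, if $(e,f)\in O$ then $\lambda_O(e) < \lambda_O(f)$, and if the edge $ef$ of $L$ is not oriented then $\lambda_O(e) = \lambda_O(f)$. This follows from the round-by-round construction of $\lambda_O$: an edge $e$ is labeled in round $i$ exactly when $e$ is in a source of $(L_{i-1},O)$, and I need that $e,f$ in a common ``not oriented'' relation end up in the same round, while $(e,f)\in O$ forces $e$ into a strictly earlier round than $f$ (because no source can contain both the tail and head of an oriented edge, and removing sources can only decrease the round in which the head becomes available after the tail is gone). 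With this lemma in hand, the edges of $C$ in $V(H_j)$ all get the same $\lambda_O$-label, while the two edges of $C$ adjacent to the ends of $H_j$ (outside $H_j$) get strictly larger labels, since those edges of $C'$ are oriented towards $H_j$ — so $V(H_j)$ is a local minimum of $C$. As $H_1,H_2$ are disjoint, $C$ has two local minima, and we are done by \autoref{obs:local_minima}.

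One subtlety to handle carefully: the definition of a source in $(L,O)$ quantifies over \emph{all} edges of $L$ leaving $V(H)$, not only those in $C'$; but for the argument we only need the behaviour along $C$, and a source of $C'$ need not be a source of $L$. This is fine — I only use that within $C'$, both edges leaving the subpath $H_j$ are oriented towards it, which is exactly the definition of a source of the cycle $C'$, and that suffices to conclude that in $C$ the two boundary edges have strictly larger labels than the constant block. I also need to make sure the local-minimum block is taken maximal, i.e.\ that $H_j$ being an \emph{induced} connected subgraph of the cycle $C'$ with no oriented internal edges corresponds to a maximal run of equal labels in $C$; since the two boundary edges are strictly larger, the block is automatically a local minimum in the sense of the definition preceding \autoref{obs:local_minima}.

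The main obstacle I expect is the bookkeeping in the round-by-round lemma relating $O$ to $\lambda_O$, in particular verifying the ``not oriented $\Rightarrow$ same label'' direction. One has to rule out that two edges $e,f$ with $ef$ not oriented in $L$ get labeled in different rounds; this requires arguing that sources of $(L_{i-1},O)$ behave coherently, e.g.\ that if $e$ is labeled in round $i$ but $f$ is not, then there is an oriented edge of $L$ separating them, and transitivity (goodness of $O$, hence no partial dicycle) is needed to turn a local chain of orientations into a consistent global ordering so that the rounds are well-defined and the ``$(e,f)\in O \Rightarrow$ strictly smaller round'' implication holds without cyclic contradictions. Once this lemma is cleanly stated and proved — essentially it is the analogue of the passage from labeling relations to edge-labelings in \autoref{claim:good-labeling-relation} — the rest of the proof is a short translation of ``two sources of $C'$'' into ``two local minima of $C$'' via \autoref{obs:local_minima}.
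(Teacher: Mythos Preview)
Your approach is the same as the paper's --- translate ``two sources of $C'$'' into ``two local minima of $C$'' and invoke \autoref{obs:local_minima} --- but considerably more fleshed out: the paper's proof is three lines and simply asserts the correspondence, without isolating or proving the edge-level lemma you identify.

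One correction: you have the direction of a source backwards. By the paper's definition, a source $H$ has all boundary edges oriented \emph{away} from $H$ (that is, $(u,v)\in O$ with $u\in V(H)$, $v\notin V(H)$), not towards it. Combined with your own lemma ``$(e,f)\in O\Rightarrow\lambda_O(e)<\lambda_O(f)$'', this gives the edges inside $H_j$ strictly smaller labels than the two neighbouring edges of $C$, hence a local minimum --- so your conclusion is right, but as written (``oriented towards $H_j$'') the reasoning is internally inconsistent: by your own lemma that would make the neighbouring edges \emph{smaller}, i.e.\ a local maximum. Separately, your worry about ``not oriented $\Rightarrow$ same label'' resolves more easily than you suggest and does not require transitivity: if $e$ lies in a source $H$ of $(L_{i-1},O_{i-1})$ and $ef$ is unoriented with $f$ still in $L_{i-1}$, then necessarily $f\in H$ as well (otherwise the unoriented edge $ef$ would have to satisfy $(e,f)\in O$), so $e$ and $f$ are always removed in the same round.
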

\begin{proof}
Let $C$ be a cycle in $G$.
Let $C'$ be the corresponding cycle in $L$.
$C'$ has at least two sources with respect to $O$, which equivalently means that $C$ has at least two local minima, where the vertices of a source in $C'$ are the edges of a local minimum in $C$.
\end{proof}

%
%

\begin{corollary}\label{claim:good-orientation}
A graph $G$ admits a $c$-\gel if and only if its line graph admits a good partial orientation whose partial dipaths have order at most $c$.
\end{corollary}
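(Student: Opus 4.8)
The plan is to obtain the corollary directly from the two preceding lemmas that relate $c$-\gels of $G$ to good partial orientations of its line graph $L$; since all the real work is already contained in those lemmas, I do not expect any genuine obstacle here, only a minor monotonicity remark at the end.

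First I would prove the forward implication. Assume $G$ admits a $c$-\gel. By \autoref{obs:restrict_label_gel} we may assume that this is an edge-labeling $\lambda\colon E(G)\to[c]$. Applying the lemma that builds the partial orientation $O_\lambda$ of $L$ from $\lambda$, we get that $O_\lambda$ is good and that every partial dipath of $(L,O_\lambda)$ has order at most $c$. This is precisely a good partial orientation of $L$ whose partial dipaths have order at most $c$, as required.

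For the converse, assume $L$ admits a good partial orientation $O$ all of whose partial dipaths have order at most $c$. Since $L$ is finite, the maximum order $c'$ of a partial dipath of $(L,O)$ is well-defined, and by hypothesis $c'\le c$. Applying the lemma that builds the edge-labeling $\lambda_O\colon E(G)\to[c']$ from $O$, we obtain that $\lambda_O$ is a \gel of $G$, and in fact a $c'$-\gel. It then remains to invoke the (immediate) monotonicity observation that a $c'$-\gel with $c'\le c$ is in particular a good edge-labeling using at most $c$ distinct labels, so that $G$ is $c$-good. Combining the two implications yields the claimed equivalence, and this last monotonicity step is the only point that is not already established by the two lemmas.
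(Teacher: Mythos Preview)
Your proof is correct and follows exactly the approach the paper intends: the corollary is stated without proof precisely because it is meant to be read off from the two preceding lemmas, one for each direction, together with the trivial monotonicity $c'\le c\Rightarrow c'\text{-good}\Rightarrow c\text{-good}$. The only superfluous step is your appeal to \autoref{obs:restrict_label_gel}, since the first lemma already applies to any $c$-\gel $\lambda$.
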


\subsubsection{Dynamic program}

The DP is now basically the same as in \autoref{sec:DPcolors} but here we exchange labelings with partial orientation of the line graph, since they are equivalent by, and label-types with degree-types, that is, we remember the first and last edges of paths instead of their label.
We can afford to do so given that the degree of the graph is bounded by $\Delta$.

\subparagraph{Degree-types of paths.}
Since the number of labels may be unbounded, so is the number of label-types.
Instead, we consider degree-types.
Let $G$ be a graph, $u,v\in V(G)$, $P$ be a path from $u$ to $v$ and $O$ be a partial orientation of the line graph of $G$.
Then the \emph{degree-type} of $P$ is $e_1\tau e_2$, where $e_1$ (resp. $e_2$) is the edge of $P$ adjacent to $u$ (resp. $v$) and $\tau\in\{\emptyset, {\sf m, M, mM, Mm}\}$ is the type of $P$ with respect to $O$, i.e. with respect to $\lambda_O$.

Again, we say that $e_1\tau e_2\le e_1'\tau' e_2'$ if and only if $e_1=e_1'$, $e_2=e_2'$, and $\tau\le\tau'$.
Note that \autoref{obs:type} still holds when taking the degree-type instead of the label-type.

\subparagraph{Signature of a partial orientation.}
Let $O$ be a partial orientation of the line graph of $G$ and $x$ be a node of $T$.
Let $F_x$ be the graph with vertex set the union of $B_x$ and its neighbors in $G_x$, and with edge set the edges with one endpoint in $B_x$ and the other in $B_x\cup V_x$.
Let $O_x$ be the restriction of $O$ to the line graph of $F_x$.
For $(u,v)\in B_x^2$, let $F_x^{u,v}$ be the set of degree-types, with respect to $O$, of the $(u,v)$-paths whose internal vertices are in $G_x$.
Let $f_x$ be the function that maps each pair $(u,v)\in B_x^2$ to the minimal elements of $F_x^{u,v}$.
Then the \emph{signature} of $O$ at $x$, denoted by $\sig_O^x$, is the pair $(O_x, f_x)$.

Given that $F_x$ has $\tw\cdot\Delta$ edges, and that each of them is incident to at most $2\Delta$ other edges,
there are $\tw\cdot\Delta^2$ edges in the line graph of $F_x$.
Therefore, there are $3^{\tw\cdot\Delta^2}$ choices for $O_x$.
Additionally, there are again eight possible sets of minimal elements, so there are $(8\Delta^2)^{\tw^2}$ choices for $f_x$.
Therefore, there at most $3^{\tw\cdot\Delta^2}\cdot(8\Delta^2)^{\tw^2}$ distinct signatures.

\subparagraph{Signature of a bag.}
Now, the \emph{signature} of a vertex $x$ of $T$, denoted by $\bar\sig_x$, is the set of all signatures $\sig_O^x$ where $O$ is a partial orientation of the line graph of $G$ such that:
\begin{itemize}
\item for any cycle $C$ of $G_x$, the corresponding cycle in the line graph of $G_x$ has at least two sources and
\item for any vertex $v$ of $G_x$, there are no three edges adjacent to $v$ that induce a partial dicycle with respect to $O$.
\end{itemize}
%
Hence, for any partial orientation $O$ at the root $r$ of $T$ whose signature is in $\bar\sig_r$, $O$ is a good partial orientation of $G$.

The forget, join, and introduction operations are almost exactly the same as in \autoref{sec:DPcolors}.
Therefore, we describe how to obtain $\bar\sig_x$ from its children for each vertex $x$ of $T$, but without the formal proof, which is left as an exercise to the reader.

\subparagraph{Leaf bag.}
If $B_x$ is a leaf bag, then we trivially have $$\bar\sig_x=\{(\emptyset,\emptyset)\}.$$

\subparagraph{Introduce bag.}
If $B_x$ is an introduce bag, then there is an introduced vertex $v\in V(G)$ such that $B_x=B_y\cup\{v\}$, where $y$ is the unique child of $x$ in $T$.
Let $E_v$ be the set of vertices of $B_x$ adjacent to $v$.
Given a partial orientation $O_y$ on the line graph of $F_y$,
we define $D_{O_y}$ to be the set of all partial orientations $O_x$ on the line graph of $F_x$ whose restriction to the line graph of $F_y$ is $O_y$.
We can do so because $F_y=F_x\setminus E_v$.
Given that the maximum degree is $\Delta$ and that $|B_x|\le\tw$,
we have $|E(F_x)|\le\Delta\cdot \tw$, and thus
$|D_{O_y}|\le3^{\Delta^2\cdot \tw}$ (each edge of $F_x$ is incident to at most $2\Delta$ other edges).

Then $$\bar\sig_x=\{(O_x,f_y')\mid (O_y,f_y)\in\bar\sig_y,O_x\in D_{O_y}\},$$
where $f_y'$ is equal to $f_y$ on $B_y^2$ and equal to the $\emptyset$ on all $(v,u)$ or $(u,v)$ with $u\in B_x$,
and can be constructed from $\bar\sig_y$ in time $\Ocal(3^{\Delta^2\cdot \tw}\cdot|\bar\sig_y|)$.

\subparagraph{Join bag.}
If $B_x$ is a join bag, then $B_x=B_{y_1}=B_{y_2}$, where $y_1$ and $y_2$ are the unique children of $x$ in $T$.
Then $\bar\sig_x$ is equal to
$$\{(O_{y_1}\cup O_{y_2},\min(f_{y_1}\cup f_{y_2}))\mid \forall i\in[2], (O_{y_i},f_{y_i})\in\bar\sig_{y_i} \text{ and } O_{y_1}\cap E(G[B_x])=O_{y_2}\cap E(G[B_x])\},$$
where $\min(f_{y_1}\cup f_{y_2})$ is defined as in \autoref{sec:DPcolors},
and can be constructed in time $\Ocal(\Delta^2\cdot \tw^2\cdot|\bar\sig_{y_1}|\cdot|\sig_{y_2}|)$.

\subparagraph{Forget bag.}
If $B_x$ is a forget bag then there is an introduced vertex $v\in V(G)$ such that $B_x=B_y\cup\{v\}$, where $y$ is the unique child of $x$ in $T$.

Given $(O_y,f_y)\in\sig_y$, let $O_y\setminus E_v$ denote the restriction of $O_y$ to the line graph of $F_x=F_y\setminus E_v$.

As in \autoref{sec:DPcolors}, we need to reject some of the elements $(O_y,f_y)\in\bar\sig_y$.
Here, we want that there is a partial orientation $O$ of the line graph of $G$ whose restriction at node $y$ is $(O_y,f_y)$ and
\begin{enumerate}
\item for any cycle $C$ of $G_x$, the corresponding cycle in the line graph of $G_x$ has at least two sources and
\item for any vertex $u$ of $G_x$, there are no three edges adjacent to $u$ that induce a partial dicycle with respect to $O$.
\end{enumerate}
By induction on $y$, given that $V_x=V_y\cup\{y\}$,
it is enough for item (1) to check that
 for any cycle $C$ of $G_x$ intersecting $v$, the corresponding cycle in the line graph of $G_x$ has at least two sources.
 Additionally, it is enough for item (2) to check that there are no three edges adjacent to $v$ that induce a partial dicycle with respect to $O$.
Hence, for item (1), as in \autoref{sec:DPcolors}, we reject the elements $(O_y,f_y)\in\bar\sig_y$ such that $e_1\tau e_2\in f_y(v,v)$, where:
\begin{itemize}
    \item $\tau\in\{\emptyset,{\sf m, M}\}$, or
    \item $\tau={\sf mM}$ and $(e_2,e_1)\notin O_y$, or
    \item $\tau={\sf Mn}$ and $(e_1,e_2)\notin O_y$.
\end{itemize}
As for item (2), we reject the elements $(O_y,f_y)\in\bar\sig_y$ such that there are three edges $e_1,e_2,e_3$ adjacent to $v$ such that $(e_1,e_2)\in O_y$, $(e_3,e_2)\notin O_y$, and $(e_1,e_3)\notin O_y$.
Then let $\bar\sig_y'$ be the set of unrejected elements of $\bar\sig_y$.

Finally, we define $f_y^2$ as in \autoref{sec:DPcolors}, but with degree-types instead of label-types.
Then $$\bar\sig_x=\{(O_y\setminus E_v, f_y^2)\mid(O_y,f_y)\in\bar\sig_y'\},$$
and can be computed in time $\Ocal(\tw^2\cdot\Delta^4\cdot|\bar\sig_y|)$.

\subparagraph{Complexity.}
Given that the number of distinct signatures is at most $3^{\tw\cdot\Delta^2}\cdot(8\Delta^2)^{\tw^2}$,
the running time is thus $2^{\Ocal(\tw\Delta^2+\tw^2\log\Delta)}\cdot n$.


\medskip \noindent \textbf{Acknowledgement}. We would like to thank the reviewers for helpful and thorough comments that improved the presentation of the manuscript.

\bibliography{references}

\newpage

\appendix

\section{\NP-completeness of deciding the existence of a UPP-orientation}
\label{sec:UPP-NPh}


In this section we answer the open question raised by Bermond et al.~\cite{BermondCP13} about the complexity of finding UPP-orientations (recall \autoref{footnote1}, where it was mentioned that very recently Dohnalová et al.~\cite{another-proof-UPP} independently proved the \NP-hardness of a related problem). We first need some definitions, some of which were already given in the introduction.
Given an undirected graph $G$, an \emph{orientation of $G$} is a directed graph $D$ obtained from $G$ by replacing each edge $uv\in E(G)$ by exactly one of the arcs $(u,v)$ or $(v,u)$. We say that $D$ is a \emph{DAG} if it does not contain any directed cycle. Recall that a digraph $D$ is called an \textit{UPP-digraph} if it satisfies the \emph{unique path property} (UPP): for every ordered pair $x,y\in V(G)$, there is at most one directed $(x,y)$-path in $D$ (called from now on \emph{$(x,y)$-dipath}). We call an orientation of $G$ that is also a UPP-digraph an \emph{UPP-orientation}. Note that if $D$ is a UPP-digraph, then $H$ is also a UPP-digraph for every subgraph $H$ of $D$. It is also easy to see that, up to isomorphism, the only UPP-orientation of the triangle is the directed cycle. A cycle $C = (x,y,z)$ in $G$ has the nice property that any UPP-orientation of $C$ contains already an $(x,y)$-dipath and a $(y,x)$-dipath. Hence it works in $G$ as if an edge $xy\in E(G)$ could be, and was forced to be, oriented in both ways in all UPP-orientations of $G$. In other words, if $G$ contains a cycle $(x,y,z)$, then no $(x,y)$-dipath and no $(y,x)$-dipath not contained in $D[\{x,y,z\}]$ can be created. This is going to be exploited in our constructions. In \autoref{fig:UPP_variable_block} and \autoref{fig:UPP_variable} double-oriented edges actually represent this situation, i.e. there is a triangle (the third vertex is not depicted) and there must be in any UPP-orientation an $(x,y)$-dipath and a $(y,x)$-dipath, represented by the double-oriented edge, if $x$ and $y$ are the endpoints.



\begin{theorem}\label{thm:UPP-hard}
    Given a graph $G$, deciding whether $G$ admits a UPP-orientation is $\NP$-complete even if $G$ has maximum degree at most five.
\end{theorem}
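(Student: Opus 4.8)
First I would observe that the problem is in $\NP$: given an orientation $D$ of $G$, we can verify in polynomial time that it is a UPP-digraph by checking, for every ordered pair $(x,y)$ of vertices, that there is at most one $(x,y)$-dipath. This can be done, for instance, by a variant of breadth-first search from each vertex that detects whether two distinct dipaths reach a common vertex, analogously to the algorithm behind \autoref{thm:NP}.

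\textbf{The reduction.} The plan is to reduce from \NAE 3-\SAT, as announced, in the restricted form where each clause has exactly three literals and each variable occurs exactly four times \cite{DarmannD20}. Given a formula $\varphi$ with variables $x_1,\ldots,x_n$ and clauses $c_1,\ldots,c_m$, I would build a graph $G$ with three kinds of components. The key primitive is the triangle trick already highlighted in the excerpt: whenever $G$ contains a triangle on $\{x,y,z\}$, any UPP-orientation of $G$ is forced to realize both an $(x,y)$-dipath and a $(y,x)$-dipath inside $\{x,y,z\}$, so that no other dipath between $x$ and $y$ may exist; this lets us install ``double-oriented'' edges (drawn as such in \autoref{fig:UPP_variable_block} and \autoref{fig:UPP_variable}) that behave like hard constraints forbidding any alternative directed path between their endpoints. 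First I would design a \emph{variable gadget} for each $x_i$: a small cyclic structure built out of double-oriented edges and one or two ``free'' edges, so that in any UPP-orientation the free edge(s) can be oriented in exactly two symmetric ways, which we interpret as the truth value of $x_i$; the gadget exposes one output edge per occurrence of $x_i$ (four of them), all of whose orientations are slaved to the chosen truth value. Next, a \emph{propagation mechanism} carries the orientation of an occurrence edge to the corresponding clause, again using chains of double-oriented edges together with triangles so that no unwanted dipath is created along the way and so that the maximum degree stays bounded. Finally, for each clause $c_j=(l_{j,1}\vee l_{j,2}\vee l_{j,3})$ I would build a \emph{clause gadget} on a short cycle whose three distinguished edges receive the orientations dictated by the three literals, designed so that the cycle admits a UPP-orientation \emph{if and only if} the three literal-edges are not all oriented ``the same way'' around the cycle — precisely the \NAE condition. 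Concretely, a directed cycle violates UPP (it already has an $(x,y)$-dipath both ways for consecutive $x,y$, forcing spurious paths), while a cycle whose edges do not all agree in cyclic direction has a ``source'' and a ``sink'' and admits a valid orientation; this is the clause-side analogue of the two-local-minima condition used throughout the \GEL part of the paper.

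\textbf{Correctness and degree bound.} With the gadgets in place I would prove the two directions. If $\varphi$ has a \NAE-satisfying assignment, orient each variable gadget according to the assigned truth value, propagate along all occurrence chains, and complete the clause gadgets; since each clause is NAE-satisfied, its distinguished edges are not all cyclically aligned, so the clause cycle can be completed to a UPP-orientation, and one checks that the only dipaths created in the whole graph are those already forced inside triangles, so $G$ is a UPP-digraph. Conversely, from a UPP-orientation of $G$, read off the truth value of each $x_i$ from the orientation of its free edge; the double-oriented edges force consistency of the four occurrence edges with that value, and for every clause the three distinguished edges cannot be all cyclically aligned (else the clause cycle would fail UPP), whence the assignment is \NAE-satisfying. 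Throughout, each gadget is constructed from paths and triangles in which every vertex meets at most five edges — the variable gadget vertices incident to several occurrence chains being the bottleneck — so $G$ has maximum degree at most five, as claimed.

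\textbf{Main obstacle.} The delicate part is the simultaneous control of \emph{all} ordered pairs: a UPP-orientation forbids a second dipath between \emph{every} pair of vertices, not just the ``intended'' ones, so I must argue that the long propagation chains and the interaction between different gadgets do not accidentally create a second dipath between two faraway vertices. The triangle trick is exactly what tames this — a double-oriented edge ``uses up'' the unique dipath budget between its endpoints locally, blocking any global path from threading through — but verifying that the whole graph has no unintended dipath, and simultaneously keeping the maximum degree at five, is where the careful case analysis lies; I expect this bookkeeping (rather than any single clever idea) to be the bulk of the work, and I would organize it by arguing gadget-by-gadget that every $(x,y)$-dipath is confined to a single gadget or a single propagation chain whose internal structure already pins it down uniquely.
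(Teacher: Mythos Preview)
Your high-level plan (reduce from \NAE 3-\SAT, use triangles as ``double-oriented'' edges) matches the paper's, and your identification of the main obstacle is exactly right. However, the clause-gadget argument contains a genuine error that breaks the reduction as you describe it.

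You assert that ``a directed cycle violates UPP''. This is false: a directed cycle is a UPP-digraph, since for every ordered pair $(x,y)$ there is \emph{exactly one} $(x,y)$-dipath (going around the cycle in the unique direction). Consequently, a clause gadget consisting of a single cycle whose three literal edges are forced to agree cyclically does \emph{not} by itself produce a UPP violation. The paper's clause gadget is therefore structurally different from what you propose: it creates two new vertices $f_j,l_j$ and two \emph{parallel} $(f_j,l_j)$-paths, each threading (via triangles) through copies of the three literal edges. If all three literals receive the same truth value, both paths become $(f_j,l_j)$-dipaths, and it is this pair of dipaths---not a directed cycle---that yields the contradiction. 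When the literals are not-all-equal, one checks (\autoref{fig:UPP_clause_orientation}) that no pair of vertices in the gadget is joined by two dipaths.

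On the variable side, your description (``a small cyclic structure with one or two free edges'') is too vague to carry the argument. The paper's building block is an $8$-cycle with triangles on four alternating edges; a short case analysis (\autoref{lemma:UPP_octagone}) shows that in any UPP-orientation the four free edges must strictly alternate direction. Several such octagons are glued together so that \emph{all} ``odd'' copies of a designated edge are forced to share a single orientation, which encodes the truth value and is what the clause gadget reads. Without this alternation property you have no mechanism forcing the occurrence edges of a variable to be mutually consistent, and your ``propagation chains of double-oriented edges'' do not by themselves supply it (a chain of triangles transmits nothing about the orientation of a free edge at one end to a free edge at the other). This, together with the clause-gadget issue, is the gap you would need to close.
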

\begin{proof}
    Given an orientation $D$ of $G$, one can decide whether $D$ is a UPP-orientation simply by computing the maximum number of disjoint $s,t$-paths for every pair of vertices $s,t$, which is well-known to be doable in polynomial time. Hence, the problem is in \NP.

    To prove $\NP$-hardness, similarly to what we did in \autoref{sec:NPhard}, we present a reduction from \NAE 3-{\SAT} where each clause contains exactly three literals~\cite{DarmannD20}. Our basic building block is the graph presented in \autoref{fig:UPP_variable_block}, about which we prove the following important property. With a slight abuse of notation, in the sequel when we refer to an arc $(x,y)$ corresponding to a double-oriented edge in~\autoref{fig:UPP_variable_block} and~\autoref{fig:UPP_variable}, we mean the $(x,y)$-dipath contained in the corresponding triangle.

\begin{figure}[h!tb]
\begin{center}
\begin{tikzpicture}[scale=.8]
  \pgfsetlinewidth{1pt}

  \tikzset{vertex/.style={circle, minimum size=0.8cm, draw, inner sep=1pt}}

    \node[vertex] (a) at (22.5:2) {$a$};
    \node[vertex] (b) at (-22.5:2) {$b$};
    \node[vertex] (c) at (-67.5:2) {$c$};
    \node[vertex] (d) at (-112.5:2) {$d$};
    \node[vertex] (e) at (-157.5:2) {$e$};
    \node[vertex] (f) at (-202.5:2) {$f$};
    \node[vertex] (g) at (-247.5:2) {$g$};
    \node[vertex] (h) at (-292.5:2) {$h$};

       \draw[line width=0.5mm,<->] (b)--(c);
       \draw[line width=0.5mm,<->] (d)--(e);
       \draw[line width=0.5mm,<->] (f)--(g);
       \draw[line width=0.5mm,<->] (h)--(a);

       \draw[thick,->] (a)--(b);
       \draw[thick,<-] (c)--(d);
       \draw[thick,->] (e)--(f);
       \draw[thick,<-] (g)--(h);
       

    \node[vertex,xshift=5cm] (ap) at (22.5:2) {$a$};
    \node[vertex,xshift=5cm] (bp) at (-22.5:2) {$b$};
    \node[vertex,xshift=5cm] (cp) at (-67.5:2) {$c$};
    \node[vertex,xshift=5cm] (dp) at (-112.5:2) {$d$};
    \node[vertex,xshift=5cm] (ep) at (-157.5:2) {$e$};
    \node[vertex,xshift=5cm] (fp) at (-202.5:2) {$f$};
    \node[vertex,xshift=5cm] (gp) at (-247.5:2) {$g$};
    \node[vertex,xshift=5cm] (hp) at (-292.5:2) {$h$};

       \draw[line width=0.5mm,<->] (bp)--(cp);
       \draw[line width=0.5mm,<->] (dp)--(ep);
       \draw[line width=0.5mm,<->] (fp)--(gp);
       \draw[line width=0.5mm,<->] (hp)--(ap);

       \draw[thick,<-] (ap)--(bp);
       \draw[thick,->] (cp)--(dp);
       \draw[thick,<-] (ep)--(fp);
       \draw[thick,->] (gp)--(hp);
     
  \end{tikzpicture}
\end{center}
\caption{Building block for the variable gadget together with possible orientations. Thick double-oriented edges represent triangles.
}
\label{fig:UPP_variable_block}
\end{figure}
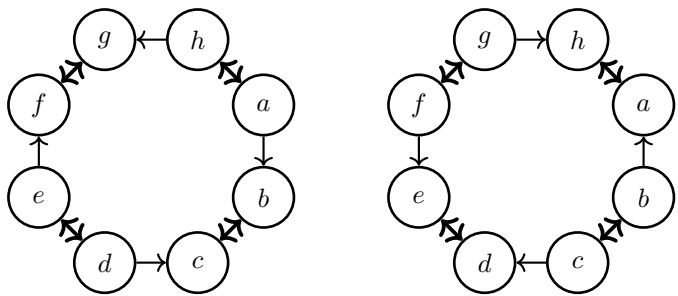

\begin{claim}\label{lemma:UPP_octagone}
Let $G$ be the cycle on eight vertices together with four additional vertices that form disjoint triangles with edges of the cycle. Then in any UPP-orientation of $G$ the edges that do not belong to triangles can only be oriented as depicted in~\autoref{fig:UPP_variable_block}.
\end{claim}
\begin{proof}
We use the notation of \autoref{fig:UPP_variable_block}.
We prove that no two consecutive edges among $\{ab,cd,ef,gh\}$ can be oriented in the same direction. Suppose by contradiction that $ab$ and $cd$ are both oriented in the clockwise direction. Then at least one between $ef$  and $gh$ must be oriented in the counter-clockwise direction as otherwise we would have the $(a,h)$-dipaths: $(a,h)$ and $(a,b,c,d,e,f,g,h)$.

Now suppose that exactly one of them is oriented in counter-clockwise direction. If it is $ef$, then we have the two $(f,e)$-dipaths: $(f,e)$ and $(f,g,h,a,b,c,d,e)$. And if it is $gh$, then we have the $(h,g)$-dipaths: $(h,g)$ and $(h,a,b,c,d,e,f,g)$.

Finally, if both $ef$ and $gh$ are oriented in the counter-clockwise direction, then we get the $(h,e)$-dipaths: $(h,g,f,e)$ and $(h,a,b,c,d,e)$.

It follows that there cannot be two consecutive edges that do not belong to triangles oriented in the same direction, as we wanted to prove. Starting by orienting $ab$, one can see that the two given orientations are the only possible left ones. Finally, observe that the longest dipath in any of the orientations is of length three. This means that if $D$ is an orientation containing an $(x,y)$-dipath for some pair $x,y\in \{a,\ldots,h\}$, then the other $(x,y)$-path in $G$ defined by the graph depicted in~\autoref{fig:UPP_variable_block} has length at least 5 and thus cannot be a dipath in $D$.
\end{proof}

We now construct the variable gadgets. So let $\varphi$ be a formula on variables $\{x_1,\ldots, x_n\}$ and clauses $\{c_1,\ldots,c_m\}$. For each $x_i$, 
``glue'' together $4m$ copies of our building block depicted in~\autoref{fig:UPP_variable_block} through edges $ab$ and $ef$ as represented in~\autoref{fig:UPP_variable}, each 4 copies of which will be related to a clause. In fact, we could add $4p_i$ copies instead, where $p_i$ is the number of appearances of variable $x_i$. However, adding $4m$ copies will make presentation simpler.
To simplify notation, we index vertices related to the appearance of $x_i$ in clause $c_j$ by the underscript $i,j$. Observe that the following property holds:

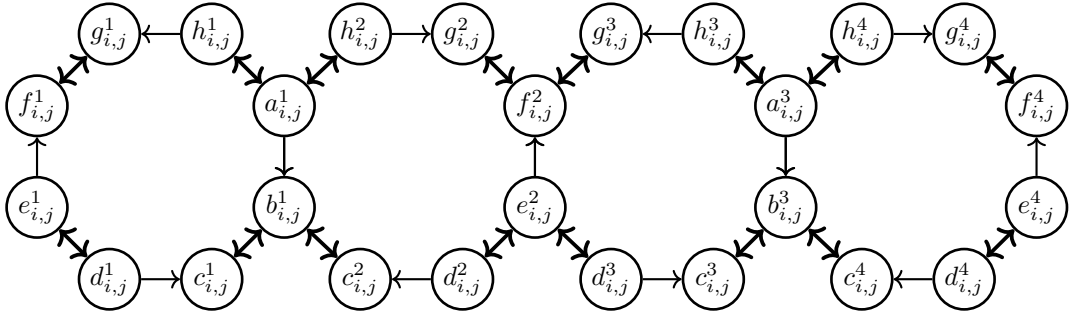
\begin{figure}
\begin{center}
\begin{tikzpicture}[scale=0.88]
  \pgfsetlinewidth{1pt}
  \pgfdeclarelayer{bg}    
   \pgfsetlayers{bg,main}  

  \tikzset{vertex/.style={circle, minimum size=0.8cm, draw, inner sep=1pt}}

    \node[vertex] (a1) at (22.5:2) {$a_{i,j}^1$};
    \node[vertex] (b1) at (-22.5:2) {$b_{i,j}^1$};
    \node[vertex] (c1) at (-67.5:2) {$c_{i,j}^1$};
    \node[vertex] (d1) at (-112.5:2) {$d_{i,j}^1$};
    \node[vertex] (e1) at (-157.5:2) {$e_{i,j}^1$};
    \node[vertex] (f1) at (-202.5:2) {$f_{i,j}^1$};
    \node[vertex] (g1) at (-247.5:2) {$g_{i,j}^1$};
    \node[vertex] (h1) at (-292.5:2) {$h_{i,j}^1$};

    \node[vertex,xshift=3.3cm] (c2) at (-112.5:2) {$c_{i,j}^2$};
    \node[vertex,xshift=3.3cm] (d2) at (-67.5:2) {$d_{i,j}^2$};
    \node[vertex,xshift=3.3cm] (e2) at (-22.5:2) {$e_{i,j}^2$};
    \node[vertex,xshift=3.3cm] (f2) at (22.5:2) {$f_{i,j}^2$};
    \node[vertex,xshift=3.3cm] (g2) at (-292.5:2) {$g_{i,j}^2$};
    \node[vertex,xshift=3.3cm] (h2) at (-247.5:2) {$h_{i,j}^2$};

    \node[vertex,xshift=6.6cm] (a3) at (22.5:2) {$a_{i,j}^3$};
    \node[vertex,xshift=6.6cm] (b3) at (-22.5:2) {$b_{i,j}^3$};
    \node[vertex,xshift=6.6cm] (c3) at (-67.5:2) {$c_{i,j}^3$};
    \node[vertex,xshift=6.6cm] (d3) at (-112.5:2) {$d_{i,j}^3$};
    \node[vertex,xshift=6.6cm] (g3) at (-247.5:2) {$g_{i,j}^3$};
    \node[vertex,xshift=6.6cm] (h3) at (-292.5:2) {$h_{i,j}^3$};

    \node[vertex,xshift=9.9cm] (c4) at (-112.5:2) {$c_{i,j}^4$};
    \node[vertex,xshift=9.9cm] (d4) at (-67.5:2) {$d_{i,j}^4$};
    \node[vertex,xshift=9.9cm] (e4) at (-22.5:2) {$e_{i,j}^4$};
    \node[vertex,xshift=9.9cm] (f4) at (22.5:2) {$f_{i,j}^4$};
    \node[vertex,xshift=9.9cm] (g4) at (-292.5:2) {$g_{i,j}^4$};
    \node[vertex,xshift=9.9cm] (h4) at (-247.5:2) {$h_{i,j}^4$};

    \begin{pgfonlayer}{bg}    
       \draw[line width=0.5mm,<->] (b1)--(c1);
       \draw[line width=0.5mm,<->] (d1)--(e1);
       \draw[line width=0.5mm,<->] (f1)--(g1);
       \draw[line width=0.5mm,<->] (h1)--(a1);

       \draw[thick,->] (a1)--(b1);
       \draw[thick,<-] (c1)--(d1);
       \draw[thick,->] (e1)--(f1);
       \draw[thick,<-] (g1)--(h1);

       \draw[line width=0.5mm,<->] (b1)--(c2);
       \draw[line width=0.5mm,<->] (d2)--(e2);
       \draw[line width=0.5mm,<->] (f2)--(g2);
       \draw[line width=0.5mm,<->] (h2)--(a1);

       \draw[thick,->] (a1)--(b1);
       \draw[thick,<-] (c2)--(d2);
       \draw[thick,->] (e2)--(f2);
       \draw[thick,<-] (g2)--(h2);

       \draw[line width=0.5mm,<->] (b3)--(c3);
       \draw[line width=0.5mm,<->] (d3)--(e2);
       \draw[line width=0.5mm,<->] (f2)--(g3);
       \draw[line width=0.5mm,<->] (h3)--(a3);

       \draw[thick,->] (a3)--(b3);
       \draw[thick,<-] (c3)--(d3);
       \draw[thick,->] (e2)--(f2);
       \draw[thick,<-] (g3)--(h3);

       \draw[line width=0.5mm,<->] (b3)--(c4);
       \draw[line width=0.5mm,<->] (d4)--(e4);
       \draw[line width=0.5mm,<->] (f4)--(g4);
       \draw[line width=0.5mm,<->] (h4)--(a3);

       \draw[thick,->] (a3)--(b3);
       \draw[thick,<-] (c4)--(d4);
       \draw[thick,->] (e4)--(f4);
       \draw[thick,<-] (g4)--(h4);
       
    \end{pgfonlayer}

  \end{tikzpicture}
\end{center}
\caption{Part of variable gadget related to the appearance of $x_i$ in clause $c_j$. Thick double-oriented edges represent triangles. We oriented the edges supposing that the first $cd$ is oriented from $d$ to $c$.}\label{fig:UPP_variable}
\end{figure}

\begin{observation}
\label{lemma:variable_orientation}
If $D$ is a UPP-orientation of a variable gadget, then all odd copies of the edge $cd$ are oriented in the same way, i.e., either they are all from $c$ to $d$ or all from $d$ to~$c$.
\end{observation}
The above property will be used to signal the truth value of $x_i$. If the odd copies of $cd$ are oriented from $c$ to $d$, then we interpret as $x_i$ being \T; otherwise, it is interpreted as being \F.

Now, we show how to build the clause gadgets. So consider clause $c_m$ on literals $\ell_1,\ell_2,\ell_3$. We add two new vertices $f_j$ and $l_j$ and link them and the odd copies of $cd$ within the variable gadgets in a way that not all the edges can be oriented in the same direction. Formally (see \autoref{fig:UPP_clause} to follow the construction), for each $h\in[3]$, let $x_{i_h}$ be the variable related to $\ell_h$ and, for each $p\in[2]$, let $(\alpha^p_h,\beta^p_h)$ be equal to $(c^p_{i_h,j},d^p_{i_h,j})$ if $\ell_h$ is equal to $x_{i_h}$; and if $\ell_h$ is equal to $\overline x_{i_h}$, let $(\alpha^p_h,\beta^p_h)$ be equal to $(d^p_{i_h,j},c^p_{i_h,j})$.  In what follows, by ``link $x$ to $y$'' we mean linking through a triangle and each triangle is created by adding a new vertex. So, link $f_j$ to $\alpha^1_1$ and $\alpha^2_1$; for each $p\in[2]$, link $\beta^p_1$ to $\alpha^p_2$; for each $p\in[2]$, link $\beta^p_2$ to $\alpha^p_3$; finally, link $\beta^1_3$ and $\beta^2_3$ to $l_j$. See \autoref{fig:UPP_clause_example} for an example.

\begin{figure}[h!tb]
\begin{center}
\begin{tikzpicture}[scale=0.7]
  \pgfsetlinewidth{1pt}
  \pgfdeclarelayer{bg}    
   \pgfsetlayers{bg,main}  

  \tikzset{vertex/.style={circle, minimum size=0.8cm, draw, inner sep=1pt}}

    \node[vertex] (s) at (0,0) {$f_j$};
    \node[vertex] (a) at (2,2) {$\alpha^1_1$};
    \node[vertex] (b) at (4,2) {$\beta^1_1$};
    \node[vertex] (c) at (6,2) {$\alpha^1_2$};
    \node[vertex] (d) at (8,2) {$\beta^1_2$};
    \node[vertex] (e) at (10,2) {$\alpha^1_3$};
    \node[vertex] (f) at (12,2) {$\beta^1_3$};
    \node[vertex] (ap) at (2,-2) {$\alpha^2_1$};
    \node[vertex] (bp) at (4,-2) {$\beta^2_1$};
    \node[vertex] (cp) at (6,-2) {$\alpha^2_2$};
    \node[vertex] (dp) at (8,-2) {$\beta^2_2$};
    \node[vertex] (ep) at (10,-2) {$\alpha^2_3$};
    \node[vertex] (fp) at (12,-2) {$\beta^2_3$};
    \node[vertex] (l) at (14,0) {$l_j$};

    \begin{pgfonlayer}{bg}    
       \draw[line width=0.7mm] (s)--(a);
       \draw[thick] (a)--(b) node [above, midway] {$\ell_1$};
       \draw[line width=0.7mm] (b)--(c);
       \draw[thick] (c)--(d)  node [above, midway] {$\ell_2$};
       \draw[line width=0.7mm] (d)--(e);
       \draw[thick] (e)--(f)  node [above, midway] {$\ell_3$};
       \draw[line width=0.7mm] (f)--(l);

       \draw[line width=0.7mm] (s)--(ap);
       \draw[thick] (ap)--(bp)  node [above, midway] {$\ell_1$};
       \draw[line width=0.7mm] (bp)--(cp);
       \draw[thick] (cp)--(dp) node [above, midway] {$\ell_2$};
       \draw[line width=0.7mm] (dp)--(ep);
       \draw[thick] (ep)--(fp) node [above, midway] {$\ell_3$};
       \draw[line width=0.7mm] (fp)--(l);
    \end{pgfonlayer}

  \end{tikzpicture}
\end{center}
\caption{Gadget related to clause $c_j$ on literals $\ell_1$, $\ell_2$ and $\ell_3$. Edges labeled with $\ell_i$ represent the first and third copies of $cd$ in the part of the gadget of $\ell_i$ related to $c_j$. Thick edges represent triangles.}\label{fig:UPP_clause}
\end{figure}
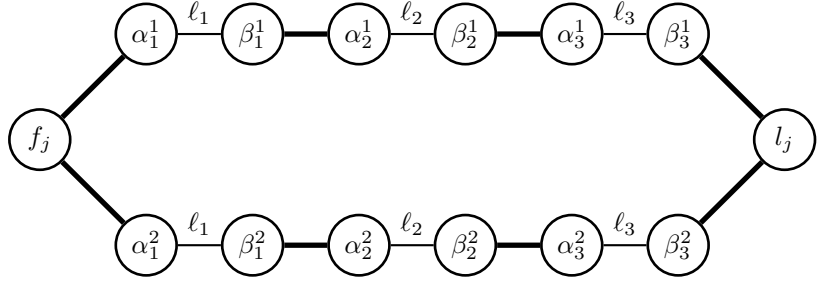

\begin{figure}[h!tb]
\begin{center}
\begin{tikzpicture}[scale=0.7]
  \pgfsetlinewidth{1pt}
  \pgfdeclarelayer{bg}    
   \pgfsetlayers{bg,main}  

  \tikzset{vertex/.style={circle, minimum size=0.8cm, draw, inner sep=1pt}}

    \node[vertex] (s) at (0,0) {$f_m$};
    \node[vertex] (a) at (2,2) {$c^1_{1,j}$};
    \node[vertex] (b) at (4,2) {$d^1_{1,j}$};
    \node[vertex] (c) at (6,2) {$d^1_{2,j}$};
    \node[vertex] (d) at (8,2) {$c^1_{2,j}$};
    \node[vertex] (e) at (10,2) {$c^1_{3,j}$};
    \node[vertex] (f) at (12,2) {$d^1_{3,j}$};
    
    \node[vertex] (ap) at (2,-2) {$c^3_{1,j}$};
    \node[vertex] (bp) at (4,-2) {$d^3_{1,j}$};
    \node[vertex] (cp) at (6,-2) {$d^3_{2,j}$};
    \node[vertex] (dp) at (8,-2) {$c^3_{2,j}$};
    \node[vertex] (ep) at (10,-2) {$c^3_{3,j}$};
    \node[vertex] (fp) at (12,-2) {$d^3_{3,j}$};
    \node[vertex] (l) at (14,0) {$l_m$};

    \begin{pgfonlayer}{bg}    
       \draw[line width=0.7mm] (s)--(a);
       \draw[thick] (a)--(b);
       \draw[line width=0.7mm] (b)--(c);
       \draw[thick] (c)--(d);
       \draw[line width=0.7mm] (d)--(e);
       \draw[thick] (e)--(f);
       \draw[line width=0.7mm] (f)--(l);

       \draw[line width=0.7mm] (s)--(ap);
       \draw[thick] (ap)--(bp);
       \draw[line width=0.7mm] (bp)--(cp);
       \draw[thick] (cp)--(dp);
       \draw[line width=0.7mm] (dp)--(ep);
       \draw[thick] (ep)--(fp);
       \draw[line width=0.7mm] (fp)--(l);
    \end{pgfonlayer}

  \end{tikzpicture}
\end{center}
\caption{Gadget related to clause $c_j = (x_1\vee \overline x_2\vee x_3)$. Thick edges represent triangles.}\label{fig:UPP_clause_example}
\end{figure}

Denote by $G$ the obtained graph. First observe that every vertex in $\{f_j,l_j\mid j\in [m]\}$ has degree exactly four, while every vertex in $\{x^p_{i,j}\mid x\in \{g,h\},p\in [4],i\in[n],j\in[m]\}$ has degreee exactly three.
Additionally, vertices in $\{x^p_{i,j}\mid x\in\{a,b\},p\in[4],i\in[n],j\in[m]\}$ have degree five and vertices in $\{x^p_{i,j}\mid x\in\{e,f\},p\in[4],i\in[n],j\in[m]\}$ have degree either equal to three or to five.
Finally, for every $i\in [n]$ and every $j\in [m]$, we have that either variable $x_i$ does not appear in $c_j$, in which case the vertices in $S_{i,j}= \{c^1_{i,j},c^3_{i,j},d^1_{i,j},d^3_{i,j}\}$ also have degree three, or $x_i$ does appear in $c_j$, in which case exactly two edges incident to each vertex in $S_{i,j}$ were added to $G$ (either by linking $f_j$ to $\alpha_h^p$, by linking $l_j$ to $\beta_h^p$ or by linking $\beta_h^p$ to $\alpha_{h+1}^p$), in which case the vertices in $S_{i,j}$ have degree five.

We now prove that $\varphi$ has a \NAE satisfying assignment if and only if $G$ has a UPP-orientation. First consider a \NAE satisfying assignment to $\varphi$ and, for each variable $x_i$, orient all the odd occurrences of $cd$ in the gadget of $x_i$ from $c$ to $d$ if $x_i$ is \T, and from $d$ to $c$, otherwise. Observe that the orientations of all other edges are implied by \autoref{lemma:UPP_octagone}; denote by $D$ the obtained orientation. We need to argue that there are no two vertices $x,y$ such that $D$ contains two $(x,y)$-dipaths. Suppose otherwise. We first argue that it cannot happen that both paths are contained in a variable gadget.
Again, just notice that the longest possible dipath in $D$ constrained to a variable gadget has endpoints either $h^1_{i,j}$ and $g^3_{i,j}$ or $c^1_{i,j}$ and $d^3_{i,j}$, depending on the orientation of the edge between $a^1_{i,j}$ and $b^1_{i,j}$, and in this case the endpoints are in disjoint $C_8$'s; it also contains a dipath ending in neighboring $C_8$'s. Furthermore, note that there is a dipath passing through the edge between $a^1_{i,j}$ and $b^1_{i,j}$, but it ends in neighboring $C_8$'s as well. In each of the possible cases, any other path within the same pair of endpoints is too long and cannot be a dipath.
Now, observe \autoref{fig:UPP_clause_orientation} to see that the paths linking $x$ and $y$ also cannot be within a clause gadget. Finally, note that if $x$ and $y$ are not within the same clause gadget and are not within the same variable gadget, then they are too far apart to be connected through any dipath. To see that recall that all vertices are related to some clause, even if they are not within that clause.

\begin{figure}[h!tb]
\begin{center}
\begin{tikzpicture}[scale=0.38]
  \pgfsetlinewidth{1pt}
  \pgfdeclarelayer{bg}    
   \pgfsetlayers{bg,main}  

  \tikzset{vertex/.style={circle, minimum size=0.2cm, draw, inner sep=1pt}}

    \node[vertex] (s) at (0,0) {};
    \node[vertex] (a) at (2,2) {};
    \node[vertex] (b) at (4,2) {};
    \node[vertex] (c) at (6,2) {};
    \node[vertex] (d) at (8,2) {};
    \node[vertex] (e) at (10,2) {};
    \node[vertex] (f) at (12,2) {};
    \node[vertex] (ap) at (2,-2) {};
    \node[vertex] (bp) at (4,-2) {};
    \node[vertex] (cp) at (6,-2) {};
    \node[vertex] (dp) at (8,-2) {};
    \node[vertex] (ep) at (10,-2) {};
    \node[vertex] (fp) at (12,-2) {};
    \node[vertex] (l) at (14,0) {};
    \node at (7,0) {$TTF$};
    
    \begin{pgfonlayer}{bg}    
       \draw[line width=0.5mm,<->] (s)--(a);
       \draw[thick,->] (a)--(b);
       \draw[line width=0.5mm,<->] (b)--(c);
       \draw[thick,->] (c)--(d);
       \draw[line width=0.5mm,<->] (d)--(e);
       \draw[thick,<-] (e)--(f);
       \draw[line width=0.5mm,<->] (f)--(l);

       \draw[line width=0.5mm,<->] (s)--(ap);
       \draw[thick,->] (ap)--(bp);
       \draw[line width=0.5mm,<->] (bp)--(cp);
       \draw[thick,->] (cp)--(dp);
       \draw[line width=0.5mm,<->] (dp)--(ep);
       \draw[thick,<-] (ep)--(fp);
       \draw[line width=0.5mm,<->] (fp)--(l);
    \end{pgfonlayer}

    \node[xshift=7cm,vertex] (s) at (0,0) {};
    \node[xshift=7cm, vertex] (a) at (2,2) {};
    \node[xshift=7cm, vertex] (b) at (4,2) {};
    \node[xshift=7cm, vertex] (c) at (6,2) {};
    \node[xshift=7cm, vertex] (d) at (8,2) {};
    \node[xshift=7cm, vertex] (e) at (10,2) {};
    \node[xshift=7cm, vertex] (f) at (12,2) {};
    \node[xshift=7cm, vertex] (ap) at (2,-2) {};
    \node[xshift=7cm, vertex] (bp) at (4,-2) {};
    \node[xshift=7cm, vertex] (cp) at (6,-2) {};
    \node[xshift=7cm, vertex] (dp) at (8,-2) {};
    \node[xshift=7cm, vertex] (ep) at (10,-2) {};
    \node[xshift=7cm, vertex] (fp) at (12,-2) {};
    \node[xshift=7cm, vertex] (l) at (14,0) {};
    \node[xshift=7cm] at (7,0) {$TFT$};
    
    \begin{pgfonlayer}{bg}    
       \draw[line width=0.5mm,<->] (s)--(a);
       \draw[thick,->] (a)--(b);
       \draw[line width=0.5mm,<->] (b)--(c);
       \draw[thick,<-] (c)--(d);
       \draw[line width=0.5mm,<->] (d)--(e);
       \draw[thick,->] (e)--(f);
       \draw[line width=0.5mm,<->] (f)--(l);

       \draw[line width=0.5mm,<->] (s)--(ap);
       \draw[thick,->] (ap)--(bp);
       \draw[line width=0.5mm,<->] (bp)--(cp);
       \draw[thick,<-] (cp)--(dp);
       \draw[line width=0.5mm,<->] (dp)--(ep);
       \draw[thick,->] (ep)--(fp);
       \draw[line width=0.5mm,<->] (fp)--(l);
    \end{pgfonlayer}

    \node[yshift=-3cm,vertex] (s) at (0,0) {};
    \node[yshift=-3cm, vertex] (a) at (2,2) {};
    \node[yshift=-3cm, vertex] (b) at (4,2) {};
    \node[yshift=-3cm, vertex] (c) at (6,2) {};
    \node[yshift=-3cm, vertex] (d) at (8,2) {};
    \node[yshift=-3cm, vertex] (e) at (10,2) {};
    \node[yshift=-3cm, vertex] (f) at (12,2) {};
    \node[yshift=-3cm, vertex] (ap) at (2,-2) {};
    \node[yshift=-3cm, vertex] (bp) at (4,-2) {};
    \node[yshift=-3cm, vertex] (cp) at (6,-2) {};
    \node[yshift=-3cm, vertex] (dp) at (8,-2) {};
    \node[yshift=-3cm, vertex] (ep) at (10,-2) {};
    \node[yshift=-3cm, vertex] (fp) at (12,-2) {};
    \node[yshift=-3cm, vertex] (l) at (14,0) {};
    \node[yshift=-3cm] at (7,0) {$TFF$};
    
    \begin{pgfonlayer}{bg}    
       \draw[line width=0.5mm,<->] (s)--(a);
       \draw[thick,->] (a)--(b);
       \draw[line width=0.5mm,<->] (b)--(c);
       \draw[thick,<-] (c)--(d);
       \draw[line width=0.5mm,<->] (d)--(e);
       \draw[thick,<-] (e)--(f);
       \draw[line width=0.5mm,<->] (f)--(l);

       \draw[line width=0.5mm,<->] (s)--(ap);
       \draw[thick,->] (ap)--(bp);
       \draw[line width=0.5mm,<->] (bp)--(cp);
       \draw[thick,<-] (cp)--(dp);
       \draw[line width=0.5mm,<->] (dp)--(ep);
       \draw[thick,<-] (ep)--(fp);
       \draw[line width=0.5mm,<->] (fp)--(l);
    \end{pgfonlayer}

    \node[xshift=7cm,yshift=-3cm,vertex] (s) at (0,0) {};
    \node[xshift=7cm,yshift=-3cm, vertex] (a) at (2,2) {};
    \node[xshift=7cm,yshift=-3cm, vertex] (b) at (4,2) {};
    \node[xshift=7cm,yshift=-3cm, vertex] (c) at (6,2) {};
    \node[xshift=7cm,yshift=-3cm, vertex] (d) at (8,2) {};
    \node[xshift=7cm,yshift=-3cm, vertex] (e) at (10,2) {};
    \node[xshift=7cm,yshift=-3cm, vertex] (f) at (12,2) {};
    \node[xshift=7cm,yshift=-3cm, vertex] (ap) at (2,-2) {};
    \node[xshift=7cm,yshift=-3cm, vertex] (bp) at (4,-2) {};
    \node[xshift=7cm,yshift=-3cm, vertex] (cp) at (6,-2) {};
    \node[xshift=7cm,yshift=-3cm, vertex] (dp) at (8,-2) {};
    \node[xshift=7cm,yshift=-3cm, vertex] (ep) at (10,-2) {};
    \node[xshift=7cm,yshift=-3cm, vertex] (fp) at (12,-2) {};
    \node[xshift=7cm,yshift=-3cm, vertex] (l) at (14,0) {};
    \node[xshift=7cm,yshift=-3cm] at (7,0) {$FTF$};
    
    \begin{pgfonlayer}{bg}    
       \draw[line width=0.5mm,<->] (s)--(a);
       \draw[thick,<-] (a)--(b);
       \draw[line width=0.5mm,<->] (b)--(c);
       \draw[thick,->] (c)--(d);
       \draw[line width=0.5mm,<->] (d)--(e);
       \draw[thick,<-] (e)--(f);
       \draw[line width=0.5mm,<->] (f)--(l);

       \draw[line width=0.5mm,<->] (s)--(ap);
       \draw[thick,<-] (ap)--(bp);
       \draw[line width=0.5mm,<->] (bp)--(cp);
       \draw[thick,->] (cp)--(dp);
       \draw[line width=0.5mm,<->] (dp)--(ep);
       \draw[thick,<-] (ep)--(fp);
       \draw[line width=0.5mm,<->] (fp)--(l);
    \end{pgfonlayer}

    \node[yshift=-6cm,vertex] (s) at (0,0) {};
    \node[yshift=-6cm, vertex] (a) at (2,2) {};
    \node[yshift=-6cm, vertex] (b) at (4,2) {};
    \node[yshift=-6cm, vertex] (c) at (6,2) {};
    \node[yshift=-6cm, vertex] (d) at (8,2) {};
    \node[yshift=-6cm, vertex] (e) at (10,2) {};
    \node[yshift=-6cm, vertex] (f) at (12,2) {};
    \node[yshift=-6cm, vertex] (ap) at (2,-2) {};
    \node[yshift=-6cm, vertex] (bp) at (4,-2) {};
    \node[yshift=-6cm, vertex] (cp) at (6,-2) {};
    \node[yshift=-6cm, vertex] (dp) at (8,-2) {};
    \node[yshift=-6cm, vertex] (ep) at (10,-2) {};
    \node[yshift=-6cm, vertex] (fp) at (12,-2) {};
    \node[yshift=-6cm, vertex] (l) at (14,0) {};
    \node[yshift=-6cm] at (7,0) {$FFT$};
    
    \begin{pgfonlayer}{bg}    
       \draw[line width=0.5mm,<->] (s)--(a);
       \draw[thick,<-] (a)--(b);
       \draw[line width=0.5mm,<->] (b)--(c);
       \draw[thick,<-] (c)--(d);
       \draw[line width=0.5mm,<->] (d)--(e);
       \draw[thick,->] (e)--(f);
       \draw[line width=0.5mm,<->] (f)--(l);

       \draw[line width=0.5mm,<->] (s)--(ap);
       \draw[thick,<-] (ap)--(bp);
       \draw[line width=0.5mm,<->] (bp)--(cp);
       \draw[thick,<-] (cp)--(dp);
       \draw[line width=0.5mm,<->] (dp)--(ep);
       \draw[thick,->] (ep)--(fp);
       \draw[line width=0.5mm,<->] (fp)--(l);
    \end{pgfonlayer}

    \node[xshift=7cm,yshift=-6cm,vertex] (s) at (0,0) {};
    \node[xshift=7cm,yshift=-6cm, vertex] (a) at (2,2) {};
    \node[xshift=7cm,yshift=-6cm, vertex] (b) at (4,2) {};
    \node[xshift=7cm,yshift=-6cm, vertex] (c) at (6,2) {};
    \node[xshift=7cm,yshift=-6cm, vertex] (d) at (8,2) {};
    \node[xshift=7cm,yshift=-6cm, vertex] (e) at (10,2) {};
    \node[xshift=7cm,yshift=-6cm, vertex] (f) at (12,2) {};
    \node[xshift=7cm,yshift=-6cm, vertex] (ap) at (2,-2) {};
    \node[xshift=7cm,yshift=-6cm, vertex] (bp) at (4,-2) {};
    \node[xshift=7cm,yshift=-6cm, vertex] (cp) at (6,-2) {};
    \node[xshift=7cm,yshift=-6cm, vertex] (dp) at (8,-2) {};
    \node[xshift=7cm,yshift=-6cm, vertex] (ep) at (10,-2) {};
    \node[xshift=7cm,yshift=-6cm, vertex] (fp) at (12,-2) {};
    \node[xshift=7cm,yshift=-6cm, vertex] (l) at (14,0) {};
    \node[xshift=7cm,yshift=-6cm] at (7,0) {$FTT$};
    
    \begin{pgfonlayer}{bg}    
       \draw[line width=0.5mm,<->] (s)--(a);
       \draw[thick,<-] (a)--(b);
       \draw[line width=0.5mm,<->] (b)--(c);
       \draw[thick,->] (c)--(d);
       \draw[line width=0.5mm,<->] (d)--(e);
       \draw[thick,->] (e)--(f);
       \draw[line width=0.5mm,<->] (f)--(l);

       \draw[line width=0.5mm,<->] (s)--(ap);
       \draw[thick,<-] (ap)--(bp);
       \draw[line width=0.5mm,<->] (bp)--(cp);
       \draw[thick,->] (cp)--(dp);
       \draw[line width=0.5mm,<->] (dp)--(ep);
       \draw[thick,->] (ep)--(fp);
       \draw[line width=0.5mm,<->] (fp)--(l);
    \end{pgfonlayer}

  \end{tikzpicture}
\end{center}
\caption{Possible orientations of a clause gadget. Thick double-oriented edges represent triangles. The labels inside the cycles represent the truth assignments of the clause's literals ($T$ meaning \T and $F$ meaning \F).}\label{fig:UPP_clause_orientation}%
\end{figure}
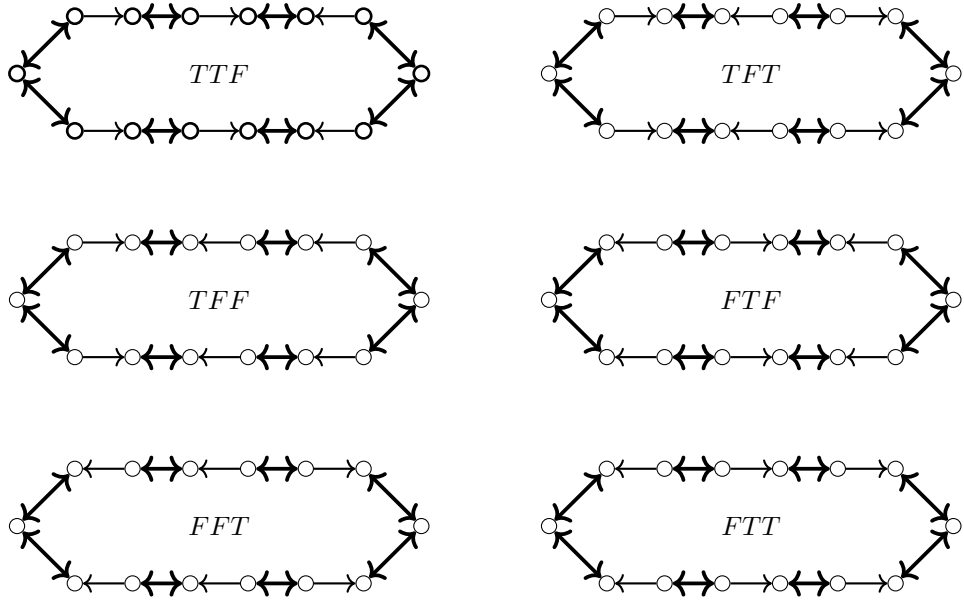

Now, let $D$ be a UPP-orientation of $G$. By \autoref{lemma:variable_orientation}, we know that all odd copies of the edges $cd$ within the same variable gadget are oriented in the same way. We then set a variable $x_i$ to \T if these edges are oriented from $c$ to $d$ and to \F, otherwise. Again by \autoref{lemma:variable_orientation}, given a clause $c_j$, we also get that the edges related to the same literal are oriented in the same direction within the two $(f_j,l_j)$-paths contained in the gadget related to $c_j$. It thus follows that the literal edges within such a gadget 
cannot all be oriented in the same direction as otherwise $D$ would contain either two $(f_j,l_j)$-dipaths or two $(l_j,f_j)$-dipaths. This means that not all literals have the same truth value and hence we have a \NAE truth assignment for $\varphi$. 
\end{proof}

\end{document}